\newcolumntype{C}[1]{>{\centering\arraybackslash}p{#1}}
\def\abs#1{\ensuremath{\lvert #1\rvert}} 
\def\norm#1{\ensuremath{\lVert #1\rVert}}
\DeclareRobustCommand\sfrac[1]{\@ifnextchar/{\@sfrac{#1}}%
                                            {\@sfrac{#1}/}}
\def\@sfrac#1/#2{\leavevmode\scalebox{.9}{\kern.1em\raise.5ex
         \hbox{$\m@th\mbox{\fontsize\sf@size\z@
                           \selectfont#1}$}\kern-.1em
         /\kern-.15em\lower.25ex
          \hbox{$\m@th\mbox{\fontsize\sf@size\z@
                            \selectfont#2}$}}}
\DeclareRobustCommand\numfrac[1]{\@ifnextchar/{\@numfrac{#1}}%
                                            {\@numfrac{#1}}}
\def\@numfrac#1{\leavevmode \hbox{$\m@th\mbox{\fontsize\sf@size\z@
                           \selectfont#1}$}}
\def\red{\textcolor{red}}
\def\blue{\textcolor{blue}}
\newcommand{\nat}{\mathbb N}
\newcommand{\tuple}[1]{\langle #1 \rangle}
\newcommand{\dist}{{\cal D}}
\newcommand{\q}{\hat{q}}
\newcommand{\p}{\hat{p}}
\newcommand{\M}{{\cal M}}      
\newcommand{\N}{{\cal N}}
\newcommand{\A}{{\cal A}}      
\newcommand{\B}{{\cal B}}
\newcommand{\C}{{\cal C}}      
\newcommand{\F}{{\cal F}}      
\newcommand{\LL}{{\cal L}}     
\newcommand{\Supp}{{\sf Supp}}
\newcommand{\Pref}{{\sf Pref}}
\newcommand{\Paths}{{\sf Path}}
\newcommand{\Last}{{\sf Last}}
\newcommand{\sink}{{\sf sink}}
\newcommand{\Outcomes}{\mathit{Outcome}}
\newcommand{\fsum}{\mathit{sum}}
\newcommand{\fmax}{\mathit{max}}
\newcommand{\win}[2]{\langle \! \langle 1 \rangle \! \rangle_{\mathit{#2}}^{\mathit{#1}}}
\newcommand{\winsure}[1]{\langle \! \langle 1 \rangle \! \rangle_{\mathit{sure}}^{\mathit{#1}}}
\newcommand{\winas}[1]{\langle \! \langle 1 \rangle \! \rangle_{\mathit{almost}}^{\mathit{#1}}}
\newcommand{\winlim}[1]{\langle \! \langle 1 \rangle \! \rangle_{\mathit{limit}}^{\mathit{#1}}}
\newcommand{\Bool}{{\sf B}^+}
\newcommand{\true}{{\sf true}}
\newcommand{\false}{{\sf false}}
\newcommand{\sync}{{\sf sync}}
\newcommand{\init}{{\sf init}}
\newcommand{\Act}{{\sf A}}
\newcommand{\Pre}{{\sf Pre}}
\newcommand{\post}{{\sf post}}
\newcommand{\mem}{{\sf Mem}}
\let\epsilon\varepsilon
\let\emptyset\varnothing
\newtheorem{theorem}{Theorem}
\newtheorem{lemma}{Lemma}
\newtheorem{corollary}[lemma]{Corollary}
\newtheorem{remark}[lemma]{Remark}
\newenvironment{proof}{Proof.}{}
\newenvironment{exclude}{}{}
\begin{document}

\title{{\bf The Complexity of Synchronizing \\ Markov Decision Processes}\tnoteref{t1,t2}}

\tnotetext[t1]{A preliminary version of this article appeared in the 
\emph{Proceedings of the 17th International Conference on Foundations of Software 
Science and Computation Structures} (FoSSaCS), 
Lecture Notes in Computer Science 8412, Springer, 2014, pp. 58-72, and
in the
\emph{Proceedings of the 25th International Conference on Concurrency Theory} (CONCUR),
Lecture Notes in Computer Science 8704, Springer, 2014, pp. 234-248.}

\tnotetext[t2]{
This work was partly supported by 
the Belgian Fonds National de la Recherche Scientifique (FNRS), and by 
the PICS project \emph{Quaverif} funded by the French Centre National de la Recherche Scientifique (CNRS).
}

\author[LSV]{Laurent~Doyen\corref{cor}}
\ead{doyen@lsv.fr}

\author[ULB]{Thierry Massart}
\ead{tmassart@ulb.ac.be}

\author[LSV,ULB]{Mahsa Shirmohammadi}
\ead{mahsa.shirmohammadi@gmail.com}

\cortext[cor]{Corresponding author: Laurent Doyen, LSV, CNRS UMR 8643 \& ENS Cachan, 61 avenue du Pr\'esident Wilson, 94235  Cachan Cedex, France.}

\address[LSV]{LSV, ENS Cachan \& CNRS, France}

\address[ULB]{Université Libre de Bruxelles, Belgium}

\begin{abstract}
We consider Markov decision processes (MDP) as generators of sequences of 
probability distributions over states.
A probability distribution is $p$-synchronizing if the probability
mass is at least $p$ in a single state, or in a given set of states.
We consider four temporal synchronizing modes:
a sequence of probability distributions is always $p$-synchronizing, eventually \mbox{$p$-synchronizing}, 
weakly $p$-synchronizing, or strongly $p$-synchronizing if, respectively, all, some, 
infinitely many, or all but finitely many distributions in the sequence 
are $p$-synchronizing.

For each synchronizing mode, an MDP can be 
$(i)$ \emph{sure} winning if there is a strategy that produces a $1$-synchronizing sequence; 
$(ii)$ \emph{almost-sure} winning if there is a strategy that produces 
a sequence that is, for all $\epsilon > 0$, a (1-$\epsilon$)-synchronizing sequence; 
$(iii)$ \emph{limit-sure} winning if for all $\epsilon > 0$, 
there is a strategy that produces a (1-$\epsilon$)-synchronizing sequence.

We provide fundamental results on the expressiveness, decidability, and complexity
of synchronizing properties for MDPs.
For each synchronizing mode, we consider the problem of deciding whether 
an MDP is sure, almost-sure, or limit-sure winning, 
and we establish matching upper and lower complexity bounds of the problems: 
for all winning modes, we show that the problems are PSPACE-complete
for eventually and weakly synchronizing, and PTIME-complete for always and strongly synchronizing.
We establish the memory requirement for winning strategies, and
we show that all winning modes coincide for always 
synchronizing, and that the almost-sure and limit-sure winning modes coincide for
weakly and strongly synchronizing.
\end{abstract}

\maketitle


\section{Introduction}\label{sec:intro}
Markov decision processes (MDP) are finite-state stochastic models
of dynamic systems studied in many applications such as planning~\cite{MHC03}, 
randomized algorithms~\cite{AspnesH90,PSL00}, 
communication protocols~\cite{FokkinkP06},
and in many problems related to reactive system design and verification~\cite{FV97,BBS06,EKVY08}.
MDPs exhibit both stochastic and nondeterministic behavior, as
in the control problem for reactive systems: nondeterminism
represents the possible choice of actions of the controller, and
stochasticity represents the uncertainties about the system response (see \figurename~\ref{fig:almost-limit-eventually-differ}).
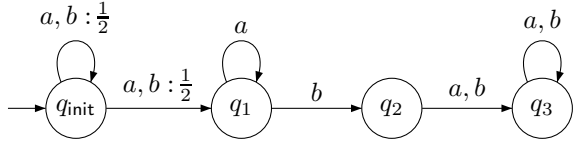
\begin{figure}[t]
\begin{center}
    \begin{picture}(80,18)(0,0)

\node[Nmarks=i,iangle=180](n0)(10,4){$q_{\init}$}
\node[Nmarks=n](n1)(32,4){$q_1$}
\node[Nmarks=n](n2)(52,4){$q_2$}
\node[Nmarks=n](n3)(72,4){$q_3$}

\drawedge(n0,n1){$a,b:\!\frac{1}{2}$}
\drawloop[ELside=l,loopCW=y, loopangle=90, loopdiam=5](n0){$a,b:\!\frac{1}{2}$}

\drawedge(n1,n2){$b$}
\drawloop[ELside=l,loopCW=y, loopangle=90, loopdiam=5](n1){$a$}

\drawedge(n2,n3){$a,b$}
\drawloop[ELside=l,loopCW=y, loopangle=90, loopdiam=5](n3){$a,b$}

\end{picture}
\end{center}
 \caption{An MDP with four states and set of actions $\{a,b\}$.
All transitions are deterministic except from $q_{\init}$ where on all actions, 
the successors are $q_{\init}$ and $q_1$ with probability $\frac{1}{2}$. 
An initial Dirac distribution (that assigns probability~$1$ to $q_{\init}$)
is depicted by the incoming arrow in $q_{\init}$.
%
\label{fig:almost-limit-eventually-differ}}
\end{figure}
%
The controller synthesis problem is to compute the largest probability with which
a control strategy  can ensure  that the system satisfies a given specification,
and to construct an optimal strategy~\cite{BA95,FV97}. The qualitative variant of the problem
is to decide if the system can satisfy the specification with probability~$1$.
Fundamental well-studied specifications are state-based and describe correct behaviors 
as infinite sequences of states of the MDP, including safety and liveness properties 
such as reachability, B\"uchi, and co-B\"uchi conditions, which require the system to 
visit a set of target states once, infinitely often, and ultimately always, 
respectively~\cite{automata,AHK07}.

In contrast to this traditional approach, we consider a distribution-based semantics
where the specification describes correct behaviors of MDPs as infinite sequences 
of probability distributions $d_i : Q \to [0,1]$ over the finite state space~$Q$ of the system, 
where $d_i(q)$ is the probability that the MDP is in state $q \in Q$ after $i$ 
execution steps. 
The distribution-based semantics is adequate in large-population models, 
such as systems biology~\cite{HMW09}, robot planning~\cite{BBMR08}, distributed systems~\cite{GGB12}, 
etc.~where the system consists of several copies of the same process (molecules, robots, sensors, etc.),
 and the relevant information along the execution of the
system is the number of processes in each state, or the relative frequency 
(i.e., the probability) of each state.
In the context of several identical processes, the same control strategy is 
used in every process, but the internal state of each process need not be the
same along the execution, since probabilistic transitions may have a
different outcome in each process. Therefore, the global execution of the
system (consisting of all the processes) is better described by the sequence 
of probability distributions over states along the execution. 
However, the control strategy is local to each process and can select
control actions depending on the full history of the process execution,
which corresponds to general perfect-information strategies that we consider in 
this work.


Previously, the special case of
blind strategies has been considered, which in each step select the
same control action at all states, and thus only depend on the number
of execution steps of the system. In automata theory, a blind strategy 
corresponds simply to an input word.  In MDPs with blind strategies, 
also known as probabilistic automata~\cite{Rabin63,PAZBook},
several basic problems are undecidable such as deciding if there exists a blind strategy that ensures
a coB\"uchi condition with probability~$1$~\cite{BGB12}, 
or deciding if a reachability condition can be ensured with probability
arbitrarily close to~$1$~\cite{GO10}.

The main contribution of this article is to establish the decidability and optimal complexity of deciding
\emph{synchronizing} properties for the distribution-based semantics of MDPs under
general strategies.  Synchronizing properties require that the sequence of
probability distributions accumulate all the probability mass
in a single state, or in a given set of states. They generalize
synchronizing properties of finite automata~\cite{Volkov08,DMS11b}.
Formally, for $0  \leq p \leq 1$ let a probability distribution $d : Q \to [0,1]$ 
be $p$-synchronized if it assigns probability at least~$p$ to some state. 
A sequence $\bar{d} = d_0 d_1 \dots$ of probability distributions is
\begin{itemize}
\item[$(a)$] \emph{always $p$-synchronizing} if $d_i$ is $p$-synchronized for all~$i$;
\item[$(b)$] \emph{eventually $p$-synchronizing} if $d_i$ is $p$-synchronized for some~$i$;
\item[$(c)$] \emph{weakly $p$-synchronizing} if $d_i$ is $p$-synchronized for infinitely many~$i$'s;
\item[$(d)$] \emph{strongly $p$-synchronizing} if $d_i$ is $p$-synchronized for all but finitely many~$i$'s.
\end{itemize}

We present a consistent and comprehensive theory of the qualitative synchronizing 
properties, corresponding to the case where either $p=1$, or $p$ tends to $1$, 
which are analogous to the traditional safety, 
reachability, B\"uchi, and coB\"uchi conditions~\cite{ConcOmRegGames}.

\begin{table}[!t]
\begin{center}
\caption{
Winning modes for always, strongly, weakly, and eventually synchronizing objectives 
(where $\M^{\alpha}_n(T)$ denotes the probability that under strategy $\alpha$,
after $n$ steps the MDP $\M$ is in a state of $T$). \label{tab:def-modes}}{
\begin{tabular}{|l@{\ } |*{2}{c@{\;}c@{\;}c@{\;}|}}
\hline        
 \large{\strut}         & \multicolumn{3}{c|}{Always} & \multicolumn{3}{c|}{Strongly} \\
\hline
 Sure \large{\strut}    & $\exists \alpha$ & $\forall n$ & $\M^{\alpha}_n(T)=1$ 
        	        & $\exists \alpha$ & $\exists N \: \forall n \geq N$   & $\M^{\alpha}_n(T)=1$ 
 \\
\hline
 Almost-sure \  \large{\strut} 	& $\exists \alpha$ & $\inf_{n}$  & $\M^{\alpha}_n(T)=1$
				& $\exists \alpha$& $\liminf_{n\to\infty}$     & $\M^{\alpha}_n(T)=1$ 
 \\
\hline
 Limit-sure \large{\strut}	& $\sup_{\alpha}$ & $\inf_{n}$   & $\M^{\alpha}_n(T)=1$
   				& $\sup_{\alpha}$& $\liminf_{n\to\infty}$    & $\M^{\alpha}_n(T)=1$  
 \\
\hline
\hline                        
 \large{\strut}   &\multicolumn{3}{c|}{Weakly} & \multicolumn{3}{c|}{Eventually} \\
\hline
 Sure \large{\strut}	  
			& $\exists \alpha$ & $\forall N \: \exists n \geq N$   & $\M^{\alpha}_n(T)=1$ 
			& $\exists \alpha$ & $\exists n$ & $\M^{\alpha}_n(T)=1$ 
 \\
\hline
 Almost-sure \  \large{\strut}	  
				& $\exists \alpha$& $\limsup_{n\to\infty}$     & $\M^{\alpha}_n(T)=1$ 
				& $\exists \alpha$ & $\sup_{n}$  & $\M^{\alpha}_n(T)=1$ 
 \\
\hline
 Limit-sure \large{\strut}      
		& $\sup_{\alpha}$& $\limsup_{n\to\infty}$    & $\M^{\alpha}_n(T)=1$  
		& $\sup_{\alpha}$ & $\sup_{n}$   & $\M^{\alpha}_n(T)=1$ 
 \\
\hline
\end{tabular}  
}
\end{center}
\end{table}

\paragraph{Applications}
A typical application scenario of synchronizing properties is the design of a 
control program for 
a group of mobile robots running in a stochastic environment~\cite{MSG10}. 
The possible behaviors of the robots and the stochastic response of 
the environment (such as obstacle encounters) are represented by an MDP,
and a synchronizing strategy corresponds to a control program that
can be embedded in every robot to ensure that they meet (or synchronize)
all the time, eventually once, infinitely often, or eventually forever.

Synchronization properties are central in large-population models in biology, 
such as yeast, where experimental synchronization methods have been developed 
to get a population of yeast in the same cell cycle stage~\cite{BDGG17, Futcher99}.
A simple abstraction of large populations of identical finite-state stochastic 
agents is to consider a continuum of agents, described by the relative fraction 
of agents in each possible state, i.e. by a distribution. For example, consider
a model of cells where at each time instant half of the cells get activated,
and once activated we can block them for a while, or release them to reach a 
fluorescent state. In the MDP of \figurename~\ref{fig:almost-limit-eventually-differ},
the state $q_{1}$ corresponds to activation, action $a$ is blocking, and action $b$ is releasing.
The fluorescent state is $q_2$. Probability mass arbitrarily close to $1$ can be 
accumulated in $q_2$, thus for all $\epsilon > 0$ we can generate an eventually $(1-\epsilon)$-synchronizing
sequence in $q_2$, but not an eventually $1$-synchronizing sequence. 
If it was possible to reset the cell state after fluorescence, such as in the MDP 
of \figurename~\ref{fig:inf-mem}, then we can obtain a sequence of distribution
that is weakly $(1-\epsilon)$-synchronizing, for all $\epsilon > 0$.
\medskip

We consider the following qualitative winning modes, summarized in 
Table~\ref{tab:def-modes}: 
$(i)$ \emph{sure} winning, if there is a strategy that generates
an \{always, eventually, weakly, strongly\} $1$-synchronizing sequence;
$(ii)$ \emph{almost-sure}
winning, if there is a strategy that generates a sequence that is, for all $\epsilon>0$,
\{always, eventually, weakly, strongly\} $(1-\epsilon)$-synchronizing;
$(iii)$ \emph{limit-sure} winning, if for all $\epsilon>0$, there is a strategy
that generates an \{always, eventually, weakly, strongly\} $(1-\epsilon)$-synchronizing sequence.

\smallskip
\noindent \emph{Contribution.} The contributions of this article are summarized as follows: 
\begin{itemize}
\item \emph{Expressiveness.} We show that the three winning modes form a strict hierarchy for
eventually synchronizing: there are limit-sure winning
MDPs that are not almost-sure winning, and 
there are almost-sure winning MDPs that are not sure winning. 
This is in contrast with the traditional state-based reachability objectives 
for which the notions of almost-sure and limit-sure winning coincide in MDPs. 
In this context, a more unexpected and difficult result is that the almost-sure and 
limit-sure modes coincide for weakly and strongly synchronizing. Thus those
two synchronizing modes are more robust than eventually synchronizing, although
we show that almost-sure weakly synchronizing strategies can be constructed from 
the analysis of eventually synchronizing (in limit-sure winning mode).
Finally, for always synchronizing the three winning modes coincide, and we show that 
they coincide with a traditional safety objective.\smallskip

\item \emph{Complexity.} 
For each synchronizing and winning mode, we consider the problem of deciding 
if a given initial distribution is winning. The complexity results are shown 
in Table~\ref{tab:complexity} (p.~\pageref{tab:complexity}).
We establish the decidability and optimal complexity bounds for all
winning modes.  Under general strategies, the decision problems have
much lower complexity than with blind strategies. We show that all
decision problems are decidable, in polynomial time for always and strongly 
synchronizing, and PSPACE-complete for eventually and weakly synchronizing.  
This is also in contrast with almost-sure winning in the traditional
semantics of MDPs, which is solvable in polynomial time for both
safety and reachability~\cite{deAlfaro97}.
\smallskip

\item \emph{Memory bounds.} We complete the picture by proving optimal memory bounds for 
winning strategies, summarized in Table~\ref{tab:memory} (p.~\pageref{tab:memory}).
Memoryless strategies are sufficient for always synchronizing (like
for safety objectives). 
We show that linear memory is sufficient for strongly synchronizing,
and we identify a variant of strongly synchronizing for which memoryless strategies 
are sufficient.
For eventually and weakly synchronizing, exponential memory is sufficient 
and may be necessary for sure winning strategies, and in general infinite memory 
is necessary for almost-sure winning. 
\end{itemize}

Some results in this article rely on insights about  games and 
alternating automata that are of independent interest. 
Firstly, the sure-winning problem for eventually synchronizing is equivalent
to a two-player game with a synchronized reachability objective, where the goal 
for the first player is to ensure that a target state is reached 
after a number of steps that is independent of the strategy of the opponent 
(and thus this number can be fixed in 
advance by the first player). This condition is stronger than plain reachability,
and while the winner in two-player reachability games can be decided in polynomial time,
deciding the winner for synchronized reachability is PSPACE-complete. This result
is obtained by turning the synchronized reachability game into a one-letter alternating
automaton for which the emptiness problem (i.e., deciding if there exists a word accepted
by the automaton) is PSPACE-complete~\cite{Holzer95,AFA1}.
Secondly, our PSPACE lower bound for the limit-sure winning problem in eventually synchronizing
uses a PSPACE-completeness result that we establish for the \emph{universal finiteness problem},
which is to decide, given a one-letter alternating automata, whether from every 
state the accepted language is finite.

\paragraph{Related Works}
The traditional state-based semantics of MDPs
has been studied extensively~\cite{Puterman,CY95,FV97} and 
plays a central role in recent 
developments of system verification and controller synthesis,
including expressiveness and complexity analysis of various classes 
of properties~\cite{FKNP11}, using techniques such as symbolic 
algorithms for B\"uchi objectives~\cite{CHJS11}, game-based abstraction 
techniques~\cite{KKNP10}, and multi-objective analysis for assume-guarantee 
model-checking~\cite{EKVY08}.

On the other hand, the distribution-based 
semantics has received a greater interest only recently,
as it is shown that relevant key properties of MDPs can only be 
expressed in a distribution-based logical framework~\cite{BRS02,KVAK10} 
and that a new useful notion of probabilistic 
bisimulation can be obtained in the distribution-based semantics~\cite{HKK14}.
Several recent works have investigated this new approach showing that 
the verification of quantitative properties of the distribution-based 
semantics is undecidable~\cite{KVAK10,FKS16}, and decidability can be obtained 
for special subclasses of systems~\cite{CKVAK11}, or through approximations~\cite{AAGT12}.
In this context, a challenging goal is to identify useful decidable properties for 
the distribution-based semantics.

Synchronization problems were first considered for deterministic finite automata (DFA)
where a \emph{synchronizing word} is a finite sequence of control actions that can be 
executed from any state of an automaton and leads to the same state 
(see~\cite{Volkov08} for a survey of results and applications). 
While the existence of a synchronizing word can be decided in NLOGSPACE 
for DFA, extensive research effort is devoted to establishing a tight bound
on the length of the shortest synchronizing word~\cite{Ber16,Nic16,Szy18}, which is
conjectured to be $(n-1)^2$ for automata with $n$ states~\cite{Cer64}.
Various extensions of the notion of synchronizing word have been proposed 
for non-deterministic and probabilistic automata~\cite{Burkhard76a,IS99,Kfo70,Shi14},
leading to results of PSPACE-completeness~\cite{Martyugin14}, or even 
undecidability~\cite{Kfo70}. 

For probabilistic systems, it is natural to consider infinite input words (i.e., blind strategies)
in order to study synchronization at the limit. 
In particular, almost-sure weakly and strongly synchronizing
with blind strategies has been studied~\cite{DMS11a} and the main result is that
the problem of deciding the existence of a blind almost-sure winning strategy 
is undecidable for weakly synchronizing, and PSPACE-complete for strongly 
synchronizing~\cite{DMS11b,DMS11Err}.
In contrast in this article, for general strategies, we establish the PSPACE-completeness and 
PTIME-completeness of deciding almost-sure weakly and strongly
synchronizing respectively.

Synchronization has been studied recently in a variety of classical computation models that extend
finite automata, such as timed automata~\cite{DJLMS14}, weighted automata~\cite{Iva14,DJLMS14},
visibly pushdown automata~\cite{CMS16}, and register automata~\cite{BQS16}. Automata with partial observability
have been considered to model systems that disclose information along their execution,
which can help the synchronizing strategy~\cite{LLS14,KLLS15}. An elegant extension 
of the computation tree logic (CTL) has been proposed to express synchronizing properties~\cite{CD16}.

\section{Markov Decision Processes and Synchronizing Properties}\label{sec:def}

A \emph{probability distribution} over a finite set~$S$ is a
function $d : S \to [0, 1]$ such that $\sum_{s \in S} d(s)= 1$. 
The \emph{support} of~$d$ is the set $\Supp(d) = \{s \in S \mid d(s) > 0\}$. 
We denote by $\dist(S)$ the set of all probability distributions over~$S$. 
Given a set $T\subseteq S$, let 
$$d(T)=\sum_{s\in T}d(s) \quad\text{ and }\quad \norm{d}_T = \fmax_{s\in T} d(s).$$

For $T \neq \emptyset$, the \emph{uniform distribution} on $T$ assigns probability 
$\frac{1}{\abs{T}}$ to every state in $T$.
Given $s \in S$, we denote by $\xi^s$ the \emph{Dirac distribution} on~$s$ that 
assigns probability~$1$ to~$s$.  

A \emph{Markov decision process} (MDP) is a tuple $\M = \tuple{Q,\Act,\delta}$ 
where $Q$ is a finite set of states, $\Act$ is a finite set of actions, 
and $\delta: Q \times \Act \to \dist(Q)$ is a probabilistic transition function.   
A state $q$ is \emph{absorbing} if $\delta(q,a)$ is the Dirac distribution on $q$
for all actions $a \in \Act$. 
Given state $q\in Q$ and action $a \in \Act$, the successor state of $q$
under action $a$ is $q'$ with probability $\delta(q,a)(q')$.
Denote by $\post(q,a)$ the set $\Supp(\delta(q,a))$, 
and given $T \subseteq Q$ let 
$$\Pre(T)= \{q \in Q \mid \exists a \in \Act: \post(q, a) \subseteq T\}$$
be the set of states from which there is an action to ensure that the
successor state is in~$T$.  For $k>0$, let $\Pre^k(T) =
\Pre(\Pre^{k-1}(T))$ with $\Pre^0(T)=T$.

Note that the sequence $\Pre^k(T)$ of iterated predecessors is ultimately periodic,
precisely there exist $k < k' < 2^{\abs{Q}}$ such that $\Pre^k(T) = \Pre^{k'}(T)$.

A \emph{path} in $\M$ is an infinite sequence $\pi = q_0 a_0 q_1 a_1 \dots$
such that $q_{i+1} \in \post(q_{i},a_{i})$ 
for all $i \geq 0$. A finite prefix $\rho = q_0 a_0 q_1 a_1 \dots q_n$
of a path (or simply
a finite path) has length $\abs{\rho}=n$ and last state $\Last(\rho)=q_{n}$.
We denote by $\Paths(\M)$ and $\Pref(\M)$ the set of all
paths and finite paths in~$\M$ respectively.

For the decision problems considered in this article, only the support
of the probability distributions in the transition function is
relevant (i.e., the exact value of the positive probabilities does not
matter); therefore, we can assume that MDPs are encoded as $\Act$-labelled
transition systems $(Q,R)$ with $R \subseteq Q \times \Act \times Q$
such that $(q,a,q') \in R$ is a transition if $q' \in \post(q,a)$.

\paragraph{Strategies}
A \textit{randomized strategy} for $\M$ (or simply a strategy) is a function 
$$\alpha: \Pref(\M) \to \dist(\Act)$$
 that, given a finite path $\rho$,
returns a probability distribution $\alpha(\rho)$ over the action set,
used to select a successor state $q'$ of $\rho$ with probability 
$\sum_{a \in \Act} \alpha(\rho)(a) \cdot \delta(q,a)(q')$ where $q=\Last(\rho)$.

A strategy $\alpha$ is
\emph{pure} if for all $\rho \in \Pref(\M)$, 
there exists an action $a \in \Act$ such that $\alpha(\rho)(a)=1$; and 
\emph{memoryless} if $\alpha(\rho) = \alpha(\rho')$
for all $\rho, \rho'$ such that $\Last(\rho) = \Last(\rho')$.
We view pure strategies as functions $\alpha: \Pref(\M)\to \Act$, and 
memoryless strategies as functions $\alpha: Q \to \dist(\Act)$.

Finally, a strategy $\alpha$ uses \emph{finite-memory} if it can be represented 
by a finite-state transducer $T = \tuple{\mem,m_0, \alpha_u, \alpha_n}$ where
$\mem$ is a finite set of modes (the memory of the strategy), 
$m_0 \in \mem$ is the initial mode,
$\alpha_u: \mem \times (\Act \times Q) \to \mem$ is an update function 
that, given the current memory, last action, and state updates the memory,
and $\alpha_n: \mem \times Q \to \dist(\Act)$ is a next-move function
that selects the probability distribution $\alpha_n(m,q)$ over
actions when the current mode is $m$ and the current state of $\M$ is $q$.
For pure strategies, we assume that $\alpha_n: \mem \times Q \to \Act$.
The \emph{memory size} of the strategy is the number $\abs{\mem}$ of modes.
For a finite-memory strategy $\alpha$, let $\M(\alpha)$ be the Markov chain
obtained as the product of $\M$ with the transducer defining $\alpha$.

\subsection{State-based semantics}
In the traditional state-based semantics, 
given an initial distribution $d_0 \in \dist(Q)$ and a strategy $\alpha$ in an MDP $\M$,
a \emph{path-outcome} is a path
$\pi = q_0  a_0 q_1 a_1 \dots$ in $\M$
such that $q_0 \in \Supp(d_0)$ and $a_i \in \Supp(\alpha(q_0 a_0 \dots q_i))$ 
for all $i \geq 0$. The probability of a finite prefix 
$\rho = q_0 a_0 q_1 a_1 \dots q_n$ of $\pi$ is 
$$d_0(q_0) \cdot
\prod_{j=0}^{n-1} \alpha(q_0 a_0\dots q_j)(a_j) \cdot
\delta(q_j,a_j)(q_{j+1}).$$
We denote by $\Outcomes(d_0,\alpha)$ the set of all path-outcomes from $d_0$ under strategy $\alpha$.
An \emph{event} $\Omega \subseteq \Paths(\M)$ is a measurable set of paths, and given an initial distribution 
$d_0$ and a strategy $\alpha$, the probability $Pr^{\alpha}(\Omega)$ of
$\Omega$ is uniquely defined~\cite{Vardi-focs85}.
We consider the following classical winning modes. 
Given an initial distribution $d_0$ and an event $\Omega$, we say that $\M$ is:
\emph{sure winning} if there exists a strategy $\alpha$ such that $\Outcomes(d_0,\alpha) \subseteq \Omega$;
\emph{almost-sure winning} if there exists a strategy $\alpha$ such that $\Pr^{\alpha}(\Omega) = 1$;
\emph{limit-sure winning} if $\sup_{\alpha} \Pr^{\alpha}(\Omega) = 1$, that is 
the event $\Omega$ can be realized with probability arbitrarily close to~$1$.
Given a set $T \subseteq Q$ of target states, and $k \in \nat$, 
we define the following events: 
\begin{itemize}
\item $\Box T = \{q_0 a_0 q_1 \dots \in \Paths(\M) \mid \forall i: q_i \in T\}$ 
the safety event of always staying in~$T$; 
\item $\Diamond T = \{q_0 a_0 q_1 \dots \in \Paths(\M) \mid \exists i: q_i \in T\}$
the event of reaching~$T$; 
\item $\Diamond^{k}\, T = \{q_0 a_0 q_1 \dots \in \Paths(\M) \mid q_k \in T \}$ the event
of reaching~$T$ after exactly $k$~steps;
\item $\Diamond^{\leq k}\, T = \bigcup_{j \leq k} \Diamond^{j}\, T$ the event
of reaching~$T$ within at most $k$~steps.
\end{itemize}

\noindent For example, if $\Pr^{\alpha}(\Diamond T)=1$ then almost-surely  a state in~$T$ is reached 
under strategy $\alpha$.

It is known for reachability objectives $\Diamond T$, that an MDP is  
almost-sure winning if and only if it is limit-sure winning,
and the set of initial distributions for which an MDP 
is sure (resp., almost-sure or limit-sure) winning can 
be computed in polynomial time~\cite{deAlfaro97}.

\subsection{Distribution-based semantics}\label{sec:distribution-based-semantice}
In contrast to the state-based semantics, we consider a symbolic outcome of MDPs 
viewed as generators of sequences of probability distributions over states~\cite{KVAK10}.
Given an initial distribution $d_0 \in \dist(Q)$ and a strategy $\alpha$ in $\M$,
the \emph{symbolic outcome} of $\M$ from $d_0$ is 
the sequence $(\M^{\alpha}_n)_{n\in \nat}$ of probability distributions defined by 
$\M^{\alpha}_k(q) = Pr^{\alpha}(\Diamond^{k}\, \{q\})$ for all $k \geq 0$ and $q \in Q$.
Hence, $\M^{\alpha}_k$ is the probability distribution over states after $k$ steps
under strategy $\alpha$.
Note that $\M^{\alpha}_0 = d_0$ and the symbolic outcome is a deterministic sequence 
of distributions: each distribution $\M^{\alpha}_k$ has a unique (deterministic) 
successor.

Informally, synchronizing objectives require that the probability of some state 
(or some group of states) tends to $1$ in the sequence $(\M^{\alpha}_n)_{n\in \nat}$,
either always, once, infinitely often, or always after some point. 
Given a set $T \subseteq Q$, consider the functions 
$$
\begin{array}{l}
\fsum_T: \dist(Q) \to [0,1] \ \text{ defined by } \fsum_T(d) = \sum_{q \in T} d(q), \text{ and} \\[3pt]
\fmax_T: \dist(Q) \to [0,1] \ \text{ defined by } \fmax_T(d) = \max_{q \in T} d(q).\\
\end{array}
$$
For $f \in \{\fsum_T,\fmax_T\}$ and $p \in [0,1]$, 
we say that a probability distribution $d$ is $p$-synchronized according to $f$ if $f(d) \geq p$,
and that a sequence $\bar{d} = d_0 d_1 \dots$ of probability distributions is:
\begin{itemize}
\item[$(a)$] \emph{always $p$-synchronizing} if $d_i$ is $p$-synchronized for all $i \geq 0$;
\item[$(b)$] \emph{event} (or \emph{eventually}) \emph{$p$-synchronizing} if $d_i$ is $p$-synchronized for some $i \geq 0$;
\item[$(c)$] \emph{weakly $p$-synchronizing} if $d_i$ is $p$-synchronized for infinitely many $i$'s;
\item[$(d)$] \emph{strongly $p$-synchronizing} if $d_i$ is $p$-synchronized for all but finitely many $i$'s.
\end{itemize}

For $p=1$, these definitions are analogous to the traditional safety, reachability, 
B\"uchi, and coB\"uchi conditions~\cite{ConcOmRegGames}. In this article, we consider the following 
winning modes where either $p=1$, or $p$ tends to~$1$ (we do not consider $p < 1$, see the
discussion in Section~\ref{sec:conclusion}). 
Given an initial distribution $d_0$ and a function $f \in \{\fsum_T,\fmax_T\}$, 
we say that for the objective of \{always, eventually, weakly, strongly\} synchronizing
from~$d_0$, the MDP $\M$ is:
\begin{itemize}
\item \emph{sure winning} if there exists a strategy $\alpha$ such that
the symbolic outcome of $\alpha$ from $d_0$ 
is \{always, eventually, weakly, strongly\} $1$-synchronizing according to~$f$;
\item \emph{almost-sure winning} if there exists a strategy $\alpha$ such that 
for all $\epsilon>0$ the symbolic outcome of $\alpha$ from $d_0$ is 
\{always, eventually, weakly, strongly\} $(1-\epsilon)$-synchronizing according to $f$;
\item \emph{limit-sure winning} if for all $\epsilon>0$, there exists a strategy $\alpha$
such that the symbolic outcome of $\alpha$ from $d_0$ is 
\{always, eventually, weakly, strongly\} $(1-\epsilon)$-synchronizing according to $f$;
\end{itemize}

Note that the winning modes for synchronizing objectives differ
from the traditional winning modes in MDPs: 
synchronizing objectives specify sequences of distributions,
in a deterministic transition system 
with infinite state space (the states are the probability distributions).
Since the transitions are deterministic and the probabilities are embedded in the state space,
the behavior of the system is non-stochastic and the specification is simply a set of sequences
(of distributions).
In contrast, the traditional almost-sure and limit-sure winning modes of MDPs
specify probability measures over sequences of states (called paths) 
in a probabilistic system with finite state space. 
Since the probabilities influence the transitions, the behavior of the system is stochastic 
and the specification is a set of probability measures over paths.
For instance almost-sure reachability requires that the probability measure 
of \emph{all} paths that visit a target state is $1$, while almost-sure eventually 
synchronizing requires that the \emph{single} symbolic outcome belongs
to the set of sequences of distributions that are $(1-\epsilon)$-synchronizing
for all $\epsilon>0$.

We often write $\norm{d}_T$ instead
of $\fmax_T(d)$ (and we omit the subscript when $T=Q$) 
and $d(T)$ instead of $\fsum_T(d)$, as 
in Table~\ref{tab:def-modes} 
where the definitions of the various winning modes and synchronizing objectives
for $f=\fsum_T$ are summarized. 


\subsection{Membership problem}

For $f \in \{\fsum_T,\fmax_T\}$ and $\lambda \in \{always, event, weakly, strongly\}$, 
the \emph{winning region} $\winsure{\lambda}(f)$ is the set of initial distributions such that
$\M$ is sure winning for $\lambda$-synchronizing (we assume that $\M$ is clear from the context). 
We define analogously the sets $\winas{\lambda}(f)$ and $\winlim{\lambda}(f)$
of almost-sure and limit-sure winning distributions.

By an abuse of notation, if a Dirac distribution $\xi^q$ belongs to $\win{\lambda}{\mu}(f)$,
we often write $q \in \win{\lambda}{\mu}(f)$ instead of $\xi^q \in \win{\lambda}{\mu}(f)$.  
For a singleton $T = \{q\}$ we have $\fsum_{T} = \fmax_{T}$,
and we simply write $\win{\lambda}{\mu}(q)$ (where $\mu \in \{sure, almost, limit\}$).
We are interested in the algorithmic complexity of the \emph{membership problem}, 
which is to decide, given a probability distribution $d_0$ and a function~$f$, 
whether $d_0 \in \win{\lambda}{\mu}(f)$.

We show that the winning region is identical for always synchronizing 
in the three winning modes (Lemma~\ref{lem:always}), whereas for eventually synchronizing, 
the winning regions of the three winning modes are in general different (Lemma~\ref{lem:dif-in-def}). 

\begin{remark}\label{rem:expressiveness}
First, note that it follows from the definitions that for all $f \in \{\fsum_T,\fmax_T\}$, 
for all $\lambda \in \{always, event, weakly, strongly\}$, and all $\mu \in \{sure, almost, limit\}$:
\begin{itemize}
\item $\win{always}{\mu}(f) \subseteq \win{strongly}{\mu}(f) \subseteq \win{weakly}{\mu}(f) \subseteq \win{event}{\mu}(f)$, and 
\item $\winsure{\lambda}(f) \subseteq \winas{\lambda}(f) \subseteq \winlim{\lambda}(f)$.
\end{itemize}
\end{remark}

\begin{lemma}\label{lem:always}
Let $T$ be a set of states. For all functions $f \in \{\fmax_T, \fsum_T\}$,  
we have $\winsure{always}(f) = \winas{always}(f) = \winlim{always}(f)$.
\end{lemma}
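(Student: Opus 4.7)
By Remark~\ref{rem:expressiveness} we already have $\winsure{always}(f) \subseteq \winas{always}(f) \subseteq \winlim{always}(f)$, so the plan is to prove the reverse inclusion $\winlim{always}(f) \subseteq \winsure{always}(f)$ for both $f = \fsum_T$ and $f = \fmax_T$. I first characterize the sure-winning sets combinatorially. For $f = \fsum_T$, let $W$ be the greatest fixpoint of $S \mapsto T \cap \Pre(S)$, computed by iterating $W_0 = T$, $W_{k+1} = T \cap \Pre(W_k)$ until stabilization at some depth $k^* \leq |T|$. A memoryless strategy that, at each $q \in W$, plays an action $a$ with $\post(q,a) \subseteq W$ witnesses $\{d \in \dist(Q) : \Supp(d) \subseteq W\} \subseteq \winsure{always}(\fsum_T)$, while the reverse inclusion follows because the union of supports visited under a sure-winning strategy is a subset of $T$ closed under $\Pre$, hence contained in $W$. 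For $f = \fmax_T$, let $W'$ be the largest $S \subseteq T$ such that every $q \in S$ admits an action $a$ with $\delta(q,a) = \xi^{q'}$ for some $q' \in S$; since sure $1$-synchronization according to $\fmax_T$ forces every $d_i$ to be Dirac on a state of $T$, one analogously obtains $\winsure{always}(\fmax_T) = \{\xi^q : q \in W'\}$.

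Any $d_0 \in \winlim{always}(f)$ must satisfy $f(d_0) = 1$, since $f(d_0) \geq 1-\epsilon$ holds for every $\epsilon > 0$; hence $\Supp(d_0) \subseteq T$ when $f = \fsum_T$, and $d_0 = \xi^{q_0}$ with $q_0 \in T$ when $f = \fmax_T$. Let $p_{\min} > 0$ denote the smallest positive transition probability in $\M$. For the case $f = \fsum_T$, suppose for contradiction that $d_0 \in \winlim{always}(\fsum_T)$ while $\Supp(d_0) \not\subseteq W$, and set $\beta := d_0(T \setminus W) > 0$. A direct induction on $k$ establishes that from every $q \in T \setminus W_k$ and every strategy $\alpha$, $\Pr^\alpha(\Diamond^{\leq k}(Q \setminus T) \mid q_0 = q) \geq p_{\min}^k$: in the inductive step, $q \notin \Pre(W_{k-1})$ forces each action to put mass at least $p_{\min}$ on $(Q \setminus T) \cup (T \setminus W_{k-1})$, and the induction hypothesis applies (states in $Q \setminus T$ trivially satisfy $\Diamond^{\leq k-1}(Q \setminus T)$). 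Specializing to $k = k^*$ and summing over initial states gives $\sum_{i=0}^{k^*} \M^\alpha_i(Q \setminus T) \geq \beta\, p_{\min}^{k^*}$ for every $\alpha$, so some $i \leq k^*$ satisfies $\M^\alpha_i(T) \leq 1 - \beta\, p_{\min}^{k^*}/(k^*+1)$, a strategy-independent positive shortfall that contradicts limit-sure winning.

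For $f = \fmax_T$, the goal is to show $q_0 \in W'$. Pick any $\epsilon < p_{\min}/2$, a witness strategy $\alpha_\epsilon$, and for each $i \geq 0$ a state $q_i^* \in T$ with $\M^{\alpha_\epsilon}_i(q_i^*) \geq 1 - \epsilon$ (so $q_0^* = q_0$). Let $\beta_i$ be the marginal distribution over actions played by $\alpha_\epsilon$ at $q_i^*$ at step $i$, averaged over the histories ending there. Decomposing $\M^{\alpha_\epsilon}_{i+1}(q_{i+1}^*)$ by predecessor state and bounding the contribution of states other than $q_i^*$ by $\epsilon$ gives $M := \sum_a \beta_i(a)\,\delta(q_i^*, a)(q_{i+1}^*) \geq 1 - 2\epsilon$. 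Any action $a$ with $\delta(q_i^*, a) \neq \xi^{q_{i+1}^*}$ satisfies $\delta(q_i^*, a)(q_{i+1}^*) \leq 1 - p_{\min}$ (either the distribution is Dirac on a different state, or it is non-Dirac and hence of maximal value at most $1-p_{\min}$). Plugging this into the convex bound $M \leq \beta_i(A_D) + (1 - \beta_i(A_D))(1 - p_{\min})$ with $A_D := \{a : \delta(q_i^*,a) = \xi^{q_{i+1}^*}\}$ forces $\beta_i(A_D) \geq 1 - 2\epsilon/p_{\min} > 0$, so $A_D \neq \emptyset$ for every $i$. Thus $(q_i^*)_{i\geq 0}$ is an infinite sequence in $T$ whose consecutive elements are linked by Dirac transitions; by finiteness of $T$ its image is closed under these Dirac transitions, hence contained in $W'$, so $q_0 \in W'$ as required.

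The main obstacle is the $\fmax_T$ case, where the quantitative ``nearly Dirac'' property of the sequence of distributions must be converted into the qualitative existence of Dirac transitions in $\M$ despite the witness strategies possibly varying with $\epsilon$; the threshold $\epsilon < p_{\min}/2$ is exactly what sharpens the approximation into the desired structural membership in $W'$.
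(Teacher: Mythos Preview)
Your proof is correct and follows the paper's approach: close the chain of inclusions by showing $\winlim{always}(f) \subseteq \winsure{always}(f)$, choosing $\epsilon$ below the minimum positive transition probability so as to force combinatorial structure (safety in $T$ for $\fsum_T$, a path of deterministic transitions inside $T$ for $\fmax_T$). Your execution is in fact tighter than the paper's sketch: for $\fsum_T$ you work directly with the safety fixpoint $W$ and a uniform escape bound, and for $\fmax_T$ your convex estimate on $\beta_i(A_D)$ correctly handles randomized strategies, whereas the paper's inductive claim that every $\M^{\alpha}_k$ is Dirac need not hold literally when $\alpha$ randomizes over Dirac actions.
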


\begin{proof}
By Remark~\ref{rem:expressiveness}, we obtain a cyclic chain of inclusions (thus 
an overall equality) if we show that $$\winlim{always}(f) \subseteq \winsure{always}(f),$$
that is for all distributions $d_0$, if $\M$ is limit-sure always synchronizing from $d_0$, then 
$\M$ is sure always synchronizing from $d_0$. 
For $f=\fmax_T$, consider $\epsilon$ smaller than the
smallest positive probability in the initial distribution $d_0$ and in the 
transitions of the MDP $\M = \tuple{Q,\Act,\delta}$.
If $\M$ is limit-sure always synchronizing, then by definition there exists an always $(1-\epsilon)$-synchronizing 
strategy~$\alpha$, and it is easy to show by induction on~$k$ 
that the distributions $\M^{\alpha}_k$ are Dirac for all $k \geq 0$. 
In particular $d_0$ is Dirac, and let $q_{\init} \in T$ be such that $d_0(q_{\init})=1$.
It follows that there is an infinite path from $q_{\init}$
in the graph $\tuple{T,E}$ where $(q,q') \in E$ if there exists an action $a \in \Act$
such that $\delta(q,a)(q') = 1$. The existence of this
path entails that there is a loop reachable from 
$q_{\init}$ in the graph $\tuple{T,E}$, and this naturally defines a sure-winning 
always synchronizing strategy in $\M$.
A similar argument for $f=\fsum_T$ shows that for sufficiently  small 
$\epsilon$, an always $(1-\epsilon)$-synchronizing strategy~$\alpha$ must 
produce a sequence of distributions with support contained in $T$, until
some support repeats in the sequence. This naturally induces 
an always $1$-synchronizing strategy. 
\qed
\end{proof}\medskip

The results established in this article will entail that the almost-sure
and limit-sure modes coincide for weakly and strongly synchronizing (see
Theorem~\ref{theo:weakly-ls-is-as}, Corollary~\ref{col: almost-limit-strong}, and Corollary~\ref{col: almost-limit-strong-sum}). 
The other winning regions are distinct, as shown in the following lemma.

\begin{lemma}\label{lem:dif-in-def}
There exists an MDP $\M$ and states $q_1,q_2$ such that:\smallskip

\begin{tabular}{l@{\,}r@{\,}l}
& $(i)$  & $\winsure{\lambda}(q_1) \subsetneq \winas{\lambda}(q_1)$ for all $\lambda \in \{event, weakly, strongly\}$, and\\[4pt]
& $(ii)$ & $\winas{event}(q_2) \subsetneq \winlim{event}(q_2)$.
\end{tabular}
\end{lemma}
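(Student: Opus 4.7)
The plan is to use the MDP $\M$ depicted in Figure~\ref{fig:almost-limit-eventually-differ}, taking $q_1, q_2$ to be the states labelled there and the initial distribution to be $\xi^{q_{\init}}$; the same example will witness both parts. Two structural facts about $\M$ are key: $(a)$ since every action at $q_{\init}$ puts mass $1/2$ on $q_{\init}$ itself, any strategy $\alpha$ satisfies $\M^{\alpha}_n(q_{\init}) = 1/2^n > 0$ for every $n \geq 0$; $(b)$ since $q_3$ is absorbing and its only predecessor is $q_2$, every strategy $\alpha$ satisfies $\M^{\alpha}_{n+1}(q_3) = \M^{\alpha}_n(q_3) + \M^{\alpha}_n(q_2)$.

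For part $(i)$, I would use the memoryless strategy $\alpha_a$ that always plays $a$. A one-line induction using $(a)$ gives $\M^{\alpha_a}_n(q_1) = 1 - 1/2^n$, so $\liminf_n \M^{\alpha_a}_n(q_1) = 1$, placing $\xi^{q_{\init}}$ in $\winas{strongly}(q_1)$, and hence also in $\winas{weakly}(q_1)$ and $\winas{event}(q_1)$ by Remark~\ref{rem:expressiveness}. Conversely, $(a)$ implies $\M^{\alpha}_n(q_1) \leq 1 - 1/2^n < 1$ for every strategy $\alpha$ and every $n \geq 0$, so the value $1$ is never attained; hence $\xi^{q_{\init}} \notin \winsure{event}(q_1)$, and by Remark~\ref{rem:expressiveness} the same non-membership holds for $\winsure{weakly}(q_1)$ and $\winsure{strongly}(q_1)$.

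For part $(ii)$, the limit-sure direction is straightforward: given $\epsilon > 0$, pick $N$ with $2^{-N} < \epsilon$ and use the pure strategy that plays $a$ for the first $N$ steps and then $b$. This accumulates mass $1 - 2^{-N}$ at $q_1$ by step $N$, which playing $b$ then transfers to $q_2$, so $\M_{N+1}(q_2) = 1 - 2^{-N} > 1 - \epsilon$, showing $\xi^{q_{\init}} \in \winlim{event}(q_2)$. The main obstacle is showing $\xi^{q_{\init}} \notin \winas{event}(q_2)$, i.e., that no \emph{single} strategy $\alpha$ achieves $\sup_n \M^{\alpha}_n(q_2) = 1$. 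For this, let $y_n = \M^{\alpha}_n(q_2)$ and $z_n = \M^{\alpha}_n(q_3)$, and argue by contradiction: if $\sup_n y_n = 1$, some $n_0$ satisfies $y_{n_0} > 1/2$; by $(b)$ this forces $z_m \geq 1/2$ and thus $y_m \leq 1/2$ for all $m > n_0$, while $(a)$ ensures $y_m \leq 1 - 1/2^m < 1$ for each of the finitely many $m \leq n_0$. Combining these two bounds yields $\sup_n y_n < 1$, the desired contradiction.
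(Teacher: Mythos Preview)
Your proof is correct and follows essentially the same approach as the paper: the same MDP from Figure~\ref{fig:almost-limit-eventually-differ} is used, part~$(i)$ is handled identically (always play $a$ for almost-sure strongly synchronizing in $q_1$; the persistent mass $1/2^n$ in $q_{\init}$ rules out sure eventually synchronizing), and for part~$(ii)$ your fact~$(b)$ is exactly the paper's observation that once probability $p$ reaches $q_2$ it is permanently absorbed into $q_3$, bounding all later $q_2$-mass by $1-p$. Your presentation of the non--almost-sure argument via the explicit contradiction with $y_{n_0} > 1/2$ is slightly more formalized than the paper's prose, but the content is the same.
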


\begin{proof}
Consider the MDP $\M$ with states $q_{\init}, q_1, q_2, q_3$ and actions $a,b$ as shown in 
\figurename~\ref{fig:almost-limit-eventually-differ}. 
All transitions are deterministic except from $q_{\init}$ where on all actions, 
the successors are $q_{\init}$ and $q_1$ with probability $\frac{1}{2}$. 

To establish $(i)$, it is sufficient to prove that 
$$q_{\init} \in \winas{strongly}(q_1) \ \text{ and }\  q_{\init} \not \in \winsure{event}(q_1),$$ 
because, by Remark~\ref{rem:expressiveness}, it implies that 
$$q_{\init} \in \winas{\lambda}(q_1) \ \text{ and }\  q_{\init} \not \in \winsure{\lambda}(q_1) 
\text{ for all } \lambda \in \{event, weakly, strongly\},$$
establishing all strict inclusions at once.
To prove that $q_{\init} \in \winas{strongly}(q_1)$, consider the pure 
strategy that always plays $a$. The outcome is such that the probability
to be in $q_1$ after $k$ steps is $1-\frac{1}{2^k}$, showing that $\M$
is almost-sure winning for the strongly synchronizing objective in $q_1$ (from $q_{\init}$).
On the other hand, $q_{\init} \not \in \winsure{event}(q_1)$ because
for all strategies~$\alpha$, the probability in $q_{\init}$ remains always positive,
and thus in $q_1$ we have $\M^{\alpha}_n(q_1) < 1$ for all $n \geq 0$,
showing that $\M$ is not sure winning for the eventually synchronizing 
objective in $q_1$ (from $q_{\init}$).

To establish $(ii)$, we first show that $\M$ is limit-sure winning 
for the eventually synchronizing objective in $q_2$ (from $q_{\init}$):
for $k \geq 0$ consider a strategy that plays $a$ 
for $k$ steps, and then plays $b$. Then the probability
to be in $q_2$ after $k+1$ steps is $1-\frac{1}{2^k}$, showing that this strategy
is eventually $(1-\frac{1}{2^k})$-synchronizing in $q_2$. 

Second, we show that almost-sure eventually synchronizing is impossible
because, to get probability $1-\epsilon$ in $q_2$, the probability mass needs 
to accumulate for more and more steps in $q_1$ as $\epsilon$ gets smaller,
which cannot be achieved by a single strategy.
Formally, for all strategies, since the probability in $q_{\init}$ remains always 
positive, the probability in $q_2$ is always smaller than $1$. Moreover,
if the probability $p$ in $q_2$ is positive after $n$ steps ($p>0$), 
then after any number $m > n$ of steps, the probability in $q_2$ is bounded by
$1-p < 1$. It follows that the probability in $q_2$ is never equal to $1$ and 
cannot tend to $1$ for $m \to \infty$, showing that $\M$ is not almost-sure winning for 
the eventually synchronizing objective in $q_2$ (from $q_{\init}$).
\qed
\end{proof}\medskip

Finally, for eventually and weakly synchronizing we present in 
Lemma~\ref{lem:weakly-max-sum} a reduction of the membership problem 
with function $\fmax_T$ to the membership problem with function $\fsum_{T'}$
for a singleton ${T'}$. It follows that the complexity results established
in this article for eventually and weakly synchronizing with function $\fsum_{T}$ 
also hold with function $\fmax_{T}$
(this is trivial for the upper bounds, and for the lower bounds it follows
from the fact that our hardness results hold for $\fsum_{T}$ with singleton $T$,
and thus for $\fmax_{T}$ as well since in this case $\fsum_{T} = \fmax_{T}$).

\begin{lemma}\label{lem:weakly-max-sum}
For eventually and weakly synchronizing, in each winning mode
the following problems are polynomial-time equivalent:
\begin{itemize}
\item the membership problem with a function $\fmax_T$ where $T$ is an arbitrary subset of the state space, and
\item the membership problem with a function $\fsum_{T'}$ where $T'$ is a singleton.
\end{itemize}
\end{lemma}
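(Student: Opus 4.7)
The plan is to prove both reductions directly. The backward direction, from $\fsum_{T'}$ with a singleton $T'$ to $\fmax_T$, is immediate: if $T' = \{q\}$ then $\fsum_{T'}(d) = d(q) = \fmax_{T'}(d)$, so the same instance of the membership problem answers both questions, taking $T = T'$. The substantive direction is the reduction from $\fmax_T$ to $\fsum$-singleton, and I will realize it as a polynomial-time Turing reduction: given $(\M, T, d_0)$, query the $\fsum_{\{q\}}$-oracle for each $q \in T$ (polynomially many queries) and return the disjunction of the answers. The correctness claim I aim to establish is that, in each of the six winning modes covered (sure, almost-sure, and limit-sure for either eventually or weakly synchronizing), $\M$ is winning for $\fmax_T$ from $d_0$ if and only if there exists $q^* \in T$ such that $\M$ is winning for $\fsum_{\{q^*\}}$ from $d_0$ in the same mode.

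The key technical ingredient exploits the finiteness of $T$: for any sequence $(d_n)_{n \in \nat}$ of probability distributions, $\limsup_n \max_{q \in T} d_n(q) = \max_{q \in T} \limsup_n d_n(q)$, and similarly with $\sup_n$ or with the quantifier patterns ``$\exists n$'' and ``$\forall N\,\exists n \geq N$'' in place of $\limsup_n$. The $\geq$ inequality is trivial; the nontrivial $\leq$ direction is proved by pigeonhole: along a subsequence $(n_k)$ witnessing the outer $\limsup$, at each $n_k$ the inner $\max$ is attained at some $q_k \in T$, and since $T$ is finite some $q^* \in T$ occurs as $q_k$ for infinitely many $k$, so $\limsup_n d_n(q^*) \geq \limsup_n \max_{q \in T} d_n(q)$. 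A symmetric commutation $\sup_{\alpha} \max_{q \in T} = \max_{q \in T} \sup_{\alpha}$ is immediate for finite $T$.

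Combining these commutations with the definitions in Table~\ref{tab:def-modes} yields the equivalence case by case. For example, in the limit-sure weakly mode I would write $\sup_{\alpha} \limsup_n \fmax_T(\M^{\alpha}_n) = \sup_{\alpha} \max_{q \in T} \limsup_n \M^{\alpha}_n(q) = \max_{q \in T} \sup_{\alpha} \limsup_n \M^{\alpha}_n(q)$, which equals~$1$ iff $\sup_{\alpha} \limsup_n \M^{\alpha}_n(q^*) = 1$ for some $q^* \in T$, i.e., iff $\M$ is limit-sure weakly winning for $\fsum_{\{q^*\}}$. The sure and almost-sure cases for both eventually and weakly synchronizing are handled by swapping $\max_{q \in T}$ with the corresponding $n$-quantifier (using the same pigeonhole argument whenever that quantifier is $\limsup_n$ or ``$\forall N\,\exists n \geq N$''), and the converse implication in every case follows immediately from $d(q^*) \leq \fmax_T(d)$ whenever $q^* \in T$, with the very same witnessing strategy transferring from $\fsum_{\{q^*\}}$ to $\fmax_T$. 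Iterating over the $|T|$ singleton queries therefore yields the polynomial-time Turing reduction, and combined with the trivial backward direction this establishes polynomial-time equivalence in all six modes. The only delicate point is the pigeonhole commutation of $\limsup_n$ with $\max_{q \in T}$, needed for the weakly synchronizing modes; no further obstacle is expected.
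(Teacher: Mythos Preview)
Your proposal is correct. The forward direction---reducing $\fmax_T$ to $|T|$ many $\fsum_{\{q\}}$-queries via the identity $\win{\lambda}{\mu}(\fmax_T) = \bigcup_{q \in T} \win{\lambda}{\mu}(q)$---is exactly the paper's argument, except that you spell out the pigeonhole commutation of $\max_{q\in T}$ with $\limsup_n$ (resp.\ $\sup_n$, resp.\ $\forall N\,\exists n\geq N$) that the paper leaves implicit.

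Your backward direction is \emph{simpler} than the paper's. You observe that for a singleton $T'=\{q\}$ one has $\fsum_{T'}=\fmax_{T'}$, so the reduction is the identity with $T=T'$. The paper instead constructs a new MDP $\M'$ by duplicating every state other than $q$ (splitting all incoming probability equally between the two copies), which reduces the $\fsum_{\{q\}}$-question on $\M$ to the $\fmax_{Q'}$-question on $\M'$ where $Q'$ is the \emph{entire} state space of $\M'$. This yields a slightly stronger conclusion---even the unconstrained $\fmax$ problem is as hard as $\fsum$-singleton---and the paper later reuses the duplication gadget to reduce $\fmax_T$ to $\fmax_Q$ in the strongly synchronizing setting (Section~\ref{sec:strongly-max}). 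For the lemma as stated, however, your trivial backward reduction is entirely sufficient.
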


\begin{figure}[t]
\begin{center}
    \begin{picture}(120,50)(0,0)

\node[Nmarks=n](l)(5,25){$r$}
\node[Nmarks=n](p)(25,38){$s$}
\node[Nmarks=n](q)(25,12){$q$}

\drawedge[ELside=r,ELdist=0](l,p){$a:\frac{1}{5}$}
\drawedge[ELside=l,ELdist=0](l,q){$a:\frac{4}{5}$}

\drawedge[ELside=l, curvedepth=6](q,l){$a$}
\drawloop[ELside=l,loopCW=y, loopangle=0, loopdiam=5](p){$a$}

\node[Nframe=n](arrow)(45,24){{\Large $\Rightarrow$}}

\node[Nmarks=n](l1)(70,34){$r_1$}
\node[Nmarks=n](l2)(70,16){$r_2$}
\node[Nmarks=n](p1)(105,45){$s_1$}
\node[Nmarks=n](p2)(105,27){$s_2$}
\node[Nmarks=n](q)(105,5){$q$}

\drawedge[ELside=l,ELpos=70, ELdist=0](l1,p1){$\frac{1}{10}$}
\drawedge[ELside=l,ELpos=73, ELdist=.5](l1,p2){$\frac{1}{10}$}
\drawedge[ELside=l,ELpos=73, ELdist=0](l1,q){$\frac{4}{5}$}

\drawedge[ELside=l,ELpos=16, ELdist=0](l2,p1){$\frac{1}{10}$}
\drawedge[ELside=r,ELpos=25, ELdist=0](l2,p2){$\frac{1}{10}$}
\drawedge[ELside=l,ELpos=40, ELdist=.5](l2,q){$\frac{4}{5}$}

\drawbpedge[ELpos=70,ELside=l](q,210,30,l1,200,30){$\frac{1}{2}$} 
\drawedge[ELside=l, ELpos=75, ELdist=0, curvedepth=5](q,l2){$\frac{1}{2}$}

\drawloop[ELside=l,loopCW=y, loopangle=0, loopdiam=5](p1){$\frac{1}{2}$}
\drawedge[ELside=l,ELpos=52, curvedepth=3](p1,p2){$\frac{1}{2}$}
\drawedge[ELside=l,ELpos=48, curvedepth=3](p2,p1){$\frac{1}{2}$}
\drawloop[ELside=l,loopCW=y, loopangle=0, loopdiam=5](p2){$\frac{1}{2}$}

\end{picture}
\end{center} 
 \caption{
State duplication ensures that the  probability mass can never 
be accumulated in a single state except in~$q$ (we omit action $a$ for readability).
\label{fig:twin}}
\end{figure}
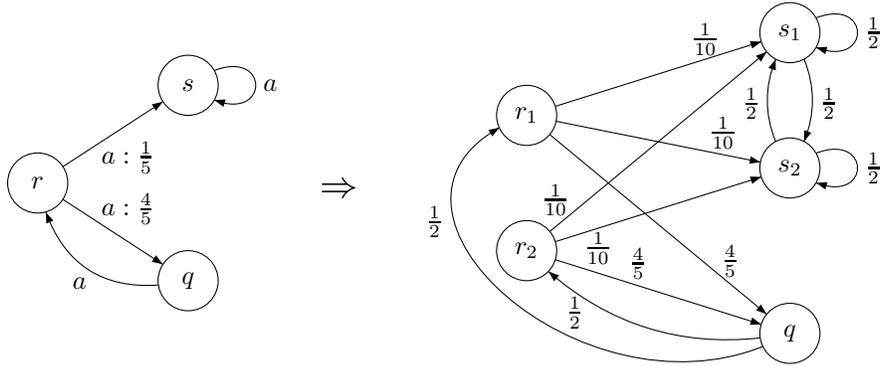

\begin{proof}
Let $\mu \in \{sure, almost, limit\}$ and $\lambda \in \{event, weakly\}$.
First we have $\win{\lambda}{\mu}(\fmax_T) = \bigcup_{q \in T} \win{\lambda}{\mu}(q)$,
showing that the membership problems for $\fmax$ and $\fmax_T$
are polynomial-time reducible to the corresponding membership problem 
for $\fsum_{T'}$ with singleton~$T'$. 

The reverse reduction is as follows. 
Given an MDP $\M$, a state $q$ and an initial distribution $d_0$,
we can construct an MDP $\M'$ and initial distribution $d'_0$
such that $d_0 \in \win{\lambda}{\mu}(q)$ iff $d'_0 \in \win{\lambda}{\mu}(\fmax_{Q'})$
where $Q'$ is the state space of $\M'$ (thus $\fmax_{Q'}$ is simply the function $\max$). 
The idea is to construct $\M'$ and $d'_0$
as a copy of~$\M$ and~$d_0$ where all states except $q$ are duplicated, and the 
initial and transition probabilities are equally distributed between 
the copies (see \figurename~\ref{fig:twin}). 
Therefore, if the probability tends to~$1$ in some state,
it has to be in~$q$ (since the probability in any other state of $\M'$ can be at most $1/2$).
\qed
\end{proof}\medskip

The rest of this article is devoted to the solution of the membership problem. 
By definition, a sequence of probability distribution is sure always synchronizing 
according to $\fsum_T$ if all supports are contained in $T$, i.e., all states in
all path-outcomes are in $T$, that is all path-outcomes are contained in $\Box T$.
Hence, it follows from the proof of Lemma~\ref{lem:always} that the winning region for  
always synchronizing according to $\fsum_T$ coincides with the set of winning initial 
distributions for the safety objective $\Box T$ in the traditional semantics,
which can be computed in polynomial time~\cite{CH12}. 
Moreover, always synchronizing according to $\fmax_T$ 
is equivalent to the existence of an infinite path staying in $T$
in the transition system $\tuple{Q,R}$ of the MDP restricted 
to transitions $(q,a,q') \in R$ such that $\delta(q,a)(q')=1$,
which can also be decided in polynomial time. 
In both cases, pure memoryless strategies are sufficient.

\begin{theorem}\label{theo:always}
The membership problem for always synchronizing can be solved in polynomial
time, and pure memoryless strategies are sufficient.
\end{theorem}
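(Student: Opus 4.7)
The plan is to leverage Lemma~\ref{lem:always}, which collapses the three winning modes for always synchronizing, so that it suffices to solve the sure-winning membership problem, and then to treat $f = \fsum_T$ and $f = \fmax_T$ separately by reducing each to a classical polynomial-time graph problem on the underlying transition system.

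For $f = \fsum_T$, I would observe that a strategy $\alpha$ makes the symbolic outcome always $1$-synchronizing according to $\fsum_T$ from $d_0$ exactly when $\M^{\alpha}_k(T) = 1$ for every $k$, which is equivalent to $\Supp(\M^{\alpha}_k) \subseteq T$ for every $k$, and this in turn is equivalent to $\Outcomes(d_0,\alpha) \subseteq \Box T$. Thus the sure-winning region is exactly the safety winning region for $\Box T$ in the traditional state-based semantics. This region is computed as the greatest fixed point $W$ of the operator $X \mapsto T \cap \Pre(X)$, which takes polynomial time, and $d_0$ is winning iff $\Supp(d_0) \subseteq W$. The pure memoryless strategy that at each $q \in W$ fixes one witness action $a$ with $\post(q,a) \subseteq W$ is sure winning; memoryless suffices.

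For $f = \fmax_T$, the key observation (already used in the proof of Lemma~\ref{lem:always}) is that $\norm{d}_T = 1$ forces $d$ to be Dirac on some state of $T$. Hence along any sure always $1$-synchronizing outcome every distribution $\M^{\alpha}_k$ is Dirac on some $q_k \in T$, so $d_0 = \xi^{q_0}$ for some $q_0 \in T$, and at each step the strategy must pick (with total probability $1$) actions $a$ for which $\delta(q_k,a)$ is Dirac on some state of $T$. I would therefore define the directed graph $G = (T,E)$ with $(q,q') \in E$ iff there exists $a \in \Act$ with $\delta(q,a) = \xi^{q'}$ and $q' \in T$. Sure always $1$-synchronizing from $\xi^{q_0}$ is then equivalent to the existence of an infinite path from $q_0$ in $G$, i.e., $q_0$ reaches a cycle of $G$. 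This is decidable in polynomial time by standard reachability and SCC computation, and a pure memoryless strategy that on each such state plays the corresponding witness action is sure winning.

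The only non-routine step is the Dirac-collapse argument in the $\fmax_T$ case, and this is exactly the reasoning inherited from Lemma~\ref{lem:always}, so no new technical obstacle arises; the rest amounts to recognizing that each remaining subproblem is a standard polynomial-time graph computation (greatest fixed point for safety, respectively reachability to a cycle) with memoryless witnesses.
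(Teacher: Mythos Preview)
Your proposal is correct and follows essentially the same approach as the paper: the paper also invokes Lemma~\ref{lem:always} to reduce to sure winning, identifies the $\fsum_T$ case with the safety objective $\Box T$ (solved via the standard fixed-point computation), and handles the $\fmax_T$ case via infinite paths in the graph of deterministic transitions restricted to $T$. Your write-up is in fact more detailed than the paper's, which dispatches the theorem in a brief paragraph preceding its statement.
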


\begin{remark}\label{rmk:Dirac-initial-suffices}
For the other synchronizing modes (eventually, weakly, and strongly synchronizing),
it is sufficient to consider Dirac initial distributions 
(i.e., assuming that MDPs have a single initial state) because the answer to 
the general membership problem for an MDP $\M$ with initial distribution $d_0$ 
can be obtained by solving the membership problem for a copy of $\M$ with a
new initial state from which the successor distribution 
on all actions is $d_0$. 
\end{remark}

\begin{table}[t]
\begin{center}
\caption{Computational complexity of the membership problem.\label{tab:complexity}}{
\begin{tabular}{|l@{\ }|c|c|c|c|}
\hline                        
\large{\strut}           &  Always & Eventually & Weakly   & Strongly \\
\hline                        
Sure \large{\strut}      &         &  \;PSPACE-C\;   & \;PSPACE-C\; & \;PTIME-C\;  \\
\cline{1-1} \cline{3-5}
Almost-sure \large{\strut} & PTIME-C & PSPACE-C  & \multirow{2}{*}{PSPACE-C} & \multirow{2}{*}{PTIME-C}  \\
\cline{1-1} \cline{3-3} 
Limit-sure \large{\strut}  &       & PSPACE-C &   &   \\
\hline
\end{tabular}  
}
\end{center}
\end{table}

In the rest of the article, we present algorithms to decide the membership problem
and we establish matching upper and lower bounds for the complexity of the problem:
we show that eventually and weakly synchronizing are PSPACE-complete,
whereas strongly synchronizing is PTIME-complete (like always synchronizing).
We also establish optimal memory bounds for the memory needed by strategies to win.
Our results will show that pure strategies are sufficient in all modes.
The complexity results are summarized in Table~\ref{tab:complexity},
and we present the memory requirement for winning strategies in Table~\ref{tab:memory}.

\begin{table}[t]
\begin{center}
\caption{Memory requirement.\label{tab:memory}}{
\scalebox{0.90}{
\begin{tabular}{|l@{\ }|c|c|c|C{22mm}|C{22mm}|}
\hline                        
\large{\strut}           &  \multirow{2}{*}{Always} &  \multirow{2}{*}{Eventually} & \multirow{2}{*}{Weakly}   & \multicolumn{2}{c|}{Strongly} \\
\cline{5-6}
\large{\strut}           &                          &                              &                           & $\fsum_T$      & $\fmax_T$ \\
\hline                        
Sure \large{\strut}      &                          & \;exponential\;              & \multirow{1}{*}{\;exponential\;}          & \multirow{1}{*}{\;memoryless\;} & \multirow{1}{*}{\;linear\;}  \\
\cline{1-1}\cline{3-6}
Almost-sure \large{\strut} & \;memoryless\;         & infinite                     & \multirow{2}{*}{infinite} & \multirow{2}{*}{memoryless} & \multirow{2}{*}{linear}  \\
\cline{1-1}\cline{3-3}
Limit-sure \large{\strut}  &                        & unbounded   &  &  &  \\
\hline
\end{tabular}  
}
}
\end{center}
\end{table}

\subsection{One-Letter Alternating Automata}\label{sec:1L-AFA}

In this section, we consider one-letter alternating automata (1L-AFA) and show 
a tight connection with MDP. We present complexity results for 1L-AFA that are useful to
establish the PSPACE lower bounds for eventually and weakly synchronizing in MDPs
(in Theorem~\ref{theo:sure-eventually-pspace-c}, Lemma~\ref{lem:as-event-pspace-hard},~\ref{lem:limit-event-pspace-hard}, and~\ref{lem: sure-weakly-psapce-hradness}).

In 1L-AFA, the alphabet is a singleton, and thus only the length of a word is relevant.
The transitions of an alternating automaton are described by Boolean formulas
over the set of automaton states using only $\land$ and $\lor$ (but no negation). 
For example, if the formula $(q_2 \land q_3) \lor q_4$ describes the transitions
from a state $q_1$, then the word of length $n$ is accepted from $q_1$ if the word 
of length $n-1$ is accepted either from both $q_2$ and $q_3$, or from $q_4$.

\paragraph{One-letter alternating automata}
Let $\Bool(Q)$ be the set of positive Boolean formulas over a set~$Q$, i.e. Boolean
formulas built from elements in $Q$ using $\land$ and $\lor$ (but no negation). A set 
$S \subseteq Q$ \emph{satisfies} a formula $\varphi \in \Bool(Q)$ (denoted $S \models \varphi$)
if $\varphi$ is satisfied when replacing in $\varphi$ the elements in $S$ by \true, 
and the elements in $Q \setminus S$ by \false.

A \emph{one-letter alternating finite automaton}  is a tuple 
$\A=\tuple{Q,\delta_{\A},\F}$ where 
$Q$ is a finite set of states, 
$\delta_{\A}: Q \to \Bool(Q)$ is the transition function, 
and $\F \subseteq Q$ is the set of accepting states. 
We assume that the formulas in transition function are in disjunctive normal form.
Note that the alphabet of the automaton is omitted, as it consists of a single letter. 
As in the language of a 1L-AFA, only the length of words is relevant, define 
for all $n\geq 0$, the set $Acc_{\A}(n,\F) \subseteq Q$ of states from which 
the word of length $n$ is accepted by $\A$ as follows:
\begin{itemize}
	\item $Acc_{\A}(0,\F) = \F$;
	\item $Acc_{\A}(n,\F) = \{q \in Q \mid Acc_{\A}(n-1,\F) \models \delta(q) \}$ for all $n > 0$.
\end{itemize}


The set $\LL(\A_q) = \{n \in \nat \mid q \in Acc_{\A}(n,\F)\}$ is the 
\emph{language} accepted by $\A$ from state~$q$ (called \emph{initial} state 
in this context).

\begin{figure}[!t]
  \centering
  \begin{minipage}[b]{0.46\textwidth}
    \hrule
    \subfigure[1L-AFA.]{\label{fig:afa-AFA}\begin{picture}(63,36)(0,5)

\node[Nmarks=i](n0)(9,25){$q_{\init}$}
\node[Nmarks=r](n1)(24,35){$q_1$}
\node[Nmarks=n](n2)(24,15){$q_2$}
\node[Nmarks=r](n3)(44,35){$q_3$}
\node[Nmarks=r](n4)(44,15){$q_4$}
\node[Nmarks=n](n5)(59,25){$q_5$}

\node[Nframe=n, Nmarks=n](label)(9,37){$\A$}

\drawedge(n0,n1){}
\drawedge(n0,n2){}
\drawarc[](9,25,6.5,326.3,33.7)

\drawedge[sxo=-2, syo=-2, eyo=1, curvedepth=3](n1,n0){}
\drawedge[sxo=0, syo=-2, exo=2, curvedepth=-3](n1,n2){}
\drawarc[](24,35,6.5,235,255)

\drawedge[sxo=1.6, syo=-1.5, exo=1, curvedepth=3](n1,n2){}
\drawedge[sxo=1, syo=-1.5, eyo=-1, curvedepth=-3](n1,n3){}
\drawarc[](24,35,6.5,303,328)

\drawedge(n2,n4){}
\drawedge(n3,n1){}
\drawedge(n3,n4){}
\drawedge(n3,n5){}
\drawarc[](44,35,6.5,180,270)
\drawedge(n4,n5){}
\drawloop[ELside=l,loopCW=n, loopangle=270, loopdiam=5](n2){}
\drawloop[ELside=l,loopCW=y, loopangle=90, loopdiam=5](n5){}

\end{picture}}
    \subfigure[MDP.]{\label{fig:afa-mdp}\begin{picture}(63,44)(0,3)

\node[Nmarks=i](n0)(9,25){$q_{\init}$}
\node[Nmarks=r](n1)(24,35){$q_1$}
\node[Nmarks=n](n2)(24,15){$q_2$}
\node[Nmarks=r](n3)(44,35){$q_3$}
\node[Nmarks=r](n4)(44,15){$q_4$}
\node[Nmarks=n](n5)(59,25){$q_5$}

\node[Nframe=n, Nmarks=n](label)(9,38){$\M$}

\drawedge[ELside=l, ELdist=-2](n0,n1){\rotatebox{33.7}{{\scriptsize $a,b\!:\!\frac{1}{2}$}}}
\drawedge[ELside=r, ELdist=-2](n0,n2){\rotatebox{326.3}{{\scriptsize $a,b\!:\!\frac{1}{2}$}}}

\drawedge[ELpos=40, ELside=l, ELdist=0, sxo=-2, syo=-2, eyo=1, curvedepth=3](n1,n0){\scriptsize $b\!:\!\frac{1}{2}$}
\drawedge[ELpos=55, ELside=l, curvedepth=0](n1,n2){{\scriptsize $a,b\!:\!\frac{1}{2}$}}
\drawedge[ELpos=50, ELside=r, sxo=1, syo=-1.5, eyo=-1, curvedepth=-3](n1,n3){\scriptsize $a\!:\!\frac{1}{2}$}

\drawedge[ELside=r](n2,n4){\scriptsize $b$}
\drawedge[ELside=r](n3,n1){\scriptsize $a\!:\!\frac{1}{2}$}
\drawedge[ELside=l](n3,n4){\scriptsize $a\!:\!\frac{1}{2}$}
\drawedge[ELside=l](n3,n5){\scriptsize $b$}
\drawedge[ELside=r, ELdist=.5](n4,n5){\scriptsize $a,b$}
\drawloop[ELside=r,loopCW=n, loopangle=270, loopdiam=5](n2){\scriptsize $a$}
\drawloop[ELside=l, ELdist=.5, loopCW=y, loopangle=90, loopdiam=5](n5){\scriptsize $a,b$}

\end{picture}}
    \hrule
    \caption{1L-AFA and MDP. \label{fig:AFA}}
  \end{minipage}
  \hfill
  \begin{minipage}[b]{0.53\textwidth}
    \hrule
    \subfigure[Execution tree.]{\label{fig:afa-tree}\begin{picture}(72,60)(0,0)

\gasset{Nframe=y}

\node[Nmarks=n](n)(37,53){$q_{\init}$}

\node[Nmarks=r](n0)(27,38){$q_1$}
\node[Nmarks=n](n1)(47,38){$q_2$}

\node[Nmarks=r](n00)(22,23){$q_3$}
\node[Nmarks=n](n01)(34,23){$q_2$}
\node[Nmarks=n](n10)(47,23){$q_2$}

\node[Nmarks=r](n000)(17,8){$q_1$}
\node[Nmarks=r](n001)(27,8){$q_4$}
\node[Nmarks=r](n010)(37,8){$q_4$}
\node[Nmarks=r](n100)(47,8){$q_4$}

\drawedge(n,n0){}
\drawedge(n,n1){}

\drawedge(n0,n00){}
\drawedge(n0,n01){}
\drawedge(n1,n10){}

\drawedge(n00,n000){}
\drawedge(n00,n001){}
\drawedge(n01,n010){}
\drawedge(n10,n100){}

\end{picture}}
    \subfigure[Predecessor sequence (determinization of $\A$).]{\label{fig:afa-pred}\begin{picture}(72,20)(0,1)

\gasset{Nframe=y}

\node[Nmarks=i, iangle=0, Nadjust=h, Nw=7, Nmr=2](n0)(64,10){$\begin{array}{l}q_1\\q_3\\q_4\end{array}$}
\node[Nmarks=n, Nadjust=h, Nw=7, Nmr=2](n1)(43,10){$\begin{array}{l}q_2\\q_3\end{array}$}
\node[Nmarks=n, Nadjust=h, Nw=7, Nmr=2](n2)(24,10){$\begin{array}{l}q_1\\q_2\end{array}$}
\node[Nmarks=r, Nadjust=h, Nw=8, Nmr=2](n3)(6,10){$\begin{array}{l}q_{\init}\\q_2\end{array}$}
\drawedge[ELside=l, ELdist=1.5](n0,n1){\scriptsize $\Pre_{\M}(\cdot)$}
\drawedge[ELside=r](n0,n1){\scriptsize $Acc_{\A}(1,\cdot)$}
\drawedge(n1,n2){}
\drawedge(n2,n3){}
\drawedge[syo=2, eyo=2, curvedepth=5](n3,n2){}

\end{picture}}
    \hrule
    \caption{Execution tree and predecessor sequence.\label{fig:frequency-not-connected}} 
  \end{minipage}
\end{figure}

\paragraph{Example}
Consider the 1L-AFA $\A$ in \figurename~\ref{fig:afa-AFA} with initial state
$q_{\init}$. Transition function is defined by 
$\delta_{\A}(q_{\init}) = q_1 \land q_2$ and 
$\delta_{\A}(q_{1}) = (q_{\init} \land q_2) \lor (q_2 \land q_3)$, etc.
The word of length~$3$ is accepted by $\A$, as witnessed by
the execution tree in \figurename~\ref{fig:afa-tree}: for every node
of the tree (let $q$ be its label), 
the labels of the successor nodes form a set that satisfies the 
transition function at $q$. The root of the tree is labeled by $q_{\init}$
and all leaves are accepting. Note that all branches are of the same length,
namely $3$, the length of the input word.
\smallskip

For every 1L-AFA with $n$ states, there is an equivalent deterministic automaton
with at most $2^n$ states (that accepts the same language), which can be constructed 
as follows~\cite{CKS81}. 
It is easier to think that the deterministic automaton accepts the \emph{reverse} image
of the words in the language of $\A$, which is the same as the language of $\A$.
For all $n \geq 1$, we view $Acc_{\A}(n,\cdot)$ as an operator on $2^Q$
that, given a set $\F \subseteq Q$ computes the set $Acc_{\A}(n,\F)$. 
Note that $Acc_{\A}(n,\F) = Acc_{\A}(1,Acc_{\A}(n-1,\F))$ for all $n \geq 1$.
Call $Acc_{\A}(1,s)$ the \emph{predecessor} of $s$.
The deterministic automaton has state space $\{s_0,\dots,s_{2^n} \}$
where $s_i = Acc_{\A}(i,\F)$, and a deterministic transition from $s_i$
to its predecessor $Acc_{\A}(1,s_i)$. The sequence of predecessors 
for the 1L-AFA of \figurename~\ref{fig:afa-AFA} is shown in \figurename~\ref{fig:afa-pred}.
It is easy to see that this sequence is always ultimately periodic (for all $k > 2^n$
there exists $k' \leq  2^n$ such that $s_k = s_{k'}$), and therefore
the transitions of the deterministic automaton are well defined.
Let $s_0 = \F$ be the initial state, and let all $s_i$ such that $q_{\init} \in s_i$
be the accepting states, then its language is $\LL(\A_{q_{\init}})$.
Considering the sequence of predecessors in \figurename~\ref{fig:afa-pred} it 
is easy to see that the language of $\A$ is the set $\{n > 1 \mid n \text{ is odd}\}$
of odd numbers greater than $1$. Note that the language of $\A$ is nonempty
because in the sequence of predecessors there is a state $s$ such that $q_{\init} \in s$,
and the language of $\A$ is infinite 
because there is such a state $s$ in the periodic part of the sequence.

\paragraph{Relation with MDPs}
Consider the MDP $\M$ in \figurename~\ref{fig:afa-mdp}, obtained from $\A$ by
transforming the transition function, which is a disjunction of conjunctions
of states as follows: each conjunction is replaced by a uniform probability distribution
over its elements, and the elements of the disjunction are labeled by a 
letter from the alphabet of $\M$.

The correspondence between 1L-AFA and MDPs is based on the observation
that they have the same underlying structure of alternating graph (or AND-OR graph):

\begin{itemize}
\item the combinatorial structure of an MDP $\tuple{Q,\Act,\delta}$ can be described as an alternating
graph, with \emph{existential} vertices $q \in Q$ with successors $(q,a)$ for all
actions $a \in \Act$, and \emph{universal} vertices $(q,a) \in Q \times \Act$ 
with successors $q' \in  \Supp(\delta(q,a))$; 
\item the structure of a 1L-AFA $\tuple{Q,\delta_{\A},\F}$ is also an alternating graph,
where states $q \in Q$ are \emph{existential} vertices with successors the clauses $c_1,\dots,c_m$
such that $\delta_{\A}(q) = c_1 \lor \dots \lor c_m$ where each $c_i$ is a conjunctive clause,
and the conjunctive clauses $c_i$ are \emph{universal} vertices with successors
the states that belong to $c_i$. 
\end{itemize}

The common structure of 1L-AFA and MDP is illustrated in \figurename~\ref{fig:AFA} 
where we show only the existential vertices.
The correspondence is defined formally as follows.
Given a 1L-AFA $\A = \tuple{Q, \delta_{\A}, \F}$, assume without loss of generality
that the transition function $\delta_{\A}$ is such that 
$\delta_{\A}(q) = c_1 \lor \dots \lor c_m$ has the same number $m$ of conjunctive clauses 
for all $q \in Q$ (we can duplicate clauses if necessary). 
From $\A$, construct
the MDP $\M_{\A} = \tuple{Q, \Act, \delta_{\M}}$ where $\Act = \{a_1, \dots, a_m\}$
and $\delta_{\M}(q,a_k)$ is the uniform distribution over the states occurring
in the $k$-th clause $c_k$ in $\delta_{\A}(q)$, for all $q \in Q$ and $a_k \in \Act$.
Then, we have $Acc_{\A}(n,\F) = \Pre_{\M}^n(\F)$ for all $n \geq 0$.

Similarly, from an MDP $\M$ and a set $T$ of states, we can construct
a 1L-AFA $\A = \tuple{Q, \delta_{\A}, \F}$ with $\F = T$ such that 
$Acc_{\A}(n,\F) = \Pre_{\M}^n(T)$ for all $n \geq 0$: let $\delta_{\A}(q) = \bigvee_{a \in \Act}
\bigwedge_{q' \in \post(q,a)} q'$ for all $q \in Q$. 
For example, for $\Act = \{a,b\}$ if $\delta_{\M}(q_3,a)(q_1) = \delta_{\M}(q_3,a)(q_4) = \frac{1}{2}$
and $\delta_{\M}(q_3,b)(q_5) = 1$, then $\delta_{\A}(q_3) = (q_1 \land q_4) \lor q_5$ (see \figurename~\ref{fig:AFA}).

It follows 
that, up to the correspondence between 1L-AFA and MDPs established above,
$Acc_{\A}(n,T) = \Pre_{\M}^n(T)$. In the sequel we denote the operator $Acc_{\A}(1,\cdot)$
by $\Pre_{\A}(\cdot)$ (note the subscript). 
Then for all $n \geq 0$ the operator $Acc_{\A}(n,\cdot)$ coincides
with $\Pre^n_{\A}(\cdot)$, the $n$-th iterate of $\Pre_{\A}(\cdot)$.

\paragraph{Decision problems}
Questions related to MDPs 
have a corresponding formulation in terms of alternating automata. We show that
such connections exist between synchronizing problems for MDPs and language-theoretic
questions for alternating automata. 

Several decision problems for 1L-AFA can be solved by computing 
the sequence $Acc_{\A}(n,\F)$ (i.e., $\Pre_{\A}^n(\F)$), and analogously we show that synchronizing 
problems for MDPs can also be solved by computing the sequence $\Pre_{\M}^n(\F)$. 
Therefore, the above relationship between 1L-AFA and MDPs provides a tight 
connection that we use in Section~\ref{sec:eventually} to transfer complexity 
results between 1L-AFA and MDPs. 

We review classical decision problems
for 1L-AFA, namely the emptiness and finiteness problems, 
and establish the complexity of a new problem, the 
\emph{universal finiteness problem} which is to decide if from every initial
state the language of a given 1L-AFA is finite. 
These results of independent interest are useful to establish the PSPACE lower 
bounds for eventually and weakly synchronizing in MDPs.


\begin{itemize}
\item 
The \emph{emptiness problem} for 1L-AFA is to decide, given a 1L-AFA $\A$
and an initial state~$q$, whether $\LL(\A_q)=\emptyset$. 
The emptiness problem can be solved by checking whether $q \in \Pre^n_{\A}(\F)$
for some $n \geq 0$. 
It is known that the emptiness problem is PSPACE-complete,
even for transition functions in disjunctive normal form~\cite{Holzer95,AFA1}.  

\item 
The \emph{finiteness problem} is to decide, given a 1L-AFA $\A$
and an initial state $q$, whether $\LL(\A_q)$ is finite.
The finiteness problem can be solved in (N)PSPACE by 
guessing $n,k \leq 2^{\abs{Q}}$ such that $\Pre^{n+k}_{\A}(\F) = \Pre^n_{\A}(\F)$
and $q \in \Pre^n_{\A}(\F)$. The finiteness problem is PSPACE-complete
by a simple reduction from the emptiness problem: from an instance $(\A,q)$
of the emptiness problem, construct $(\A',q')$ where $q'=q$ and 
$\A' = \tuple{Q,\delta',\F}$ is a copy of $\A = \tuple{Q,\delta,\F}$ 
with a self-loop on~$q$ (formally, $\delta'(q) = q \lor \delta(q)$
and $\delta'(r) = \delta(r)$ for all $r \in Q \setminus \{q\}$).
It is easy to see that $\LL(\A_q)=\emptyset$ iff $\LL(\A'_{q'})$ is finite.

\item 
The \emph{universal finiteness problem} 
is to decide, given a 1L-AFA $\A$, whether $\LL(\A_q)$ is finite
for all states $q$. This problem can be solved by checking whether 
$\Pre^n_{\A}(\F) = \emptyset$ for some $n \leq 2^{\abs{Q}}$, 
and thus it is in PSPACE.
Note that if $\Pre^n_{\A}(\F) = \emptyset$, then $\Pre^m_{\A}(\F) = \emptyset$
for all $m \geq n$.
\end{itemize}

Given the PSPACE-hardness proofs of the emptiness and finiteness
problems, it is not easy to see that the universal finiteness  
problem is PSPACE-hard.

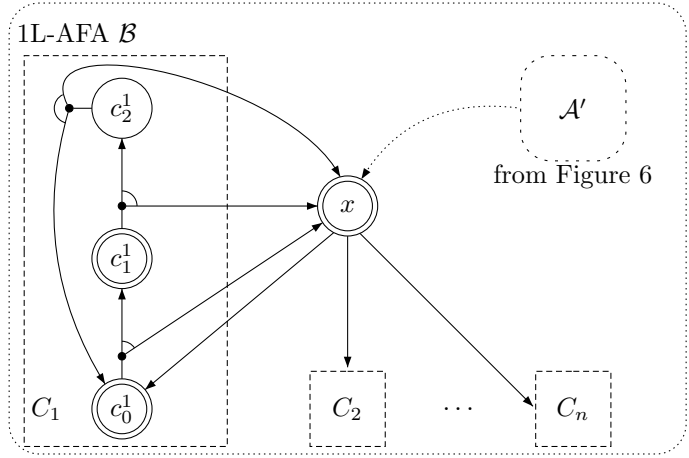
\begin{figure}[t]
\begin{center}
    \begin{picture}(90,60)(0,0)

\node[Nmarks=n, Nw=90, Nh=60, dash={0.2 0.5}0](m1)(45,30){}
\node[Nframe=n](label)(9,56){1L-AFA $\B$}

\node[Nmarks=r](n1)(45,33){$x$}

\drawpolygon[dash={0.8 0.5}0](40,11)(40,1)(50,1)(50,11)
\node[Nframe=n,Nw=1, Nh=1](n3)(45,11){}
\node[Nframe=n,Nw=1, Nh=1](nm3)(47,11){}
\node[Nframe=n](label)(45,6){$C_2$}

\node[Nframe=n](label)(60,6){$\dots $}

\drawpolygon[dash={0.8 0.5}0](70,11)(70,1)(80,1)(80,11)
\node[Nframe=n,Nw=1, Nh=1](n4)(70,6){}
\node[Nframe=n,Nw=1, Nh=1](nm4)(75,11){}
\node[Nframe=n](label)(75,6){$C_n$}

\node[Nmarks=n, Nw=14, Nh=14, dash={0.4 1}0](nm1)(75,46){}
\node[Nframe=n,Nw=1, Nh=1](aaa)(68,46){}
\node[Nframe=n](label)(75,46){$\A'$}
\node[Nframe=n](label)(75,37){from \figurename~\ref{fig:pre-empty-reduction}}

\drawedge(n1,n3){}
\drawedge[syo=-2](n1,n4){}

\drawpolygon[dash={0.8 0.5}0](2,53)(29,53)(29,1)(2,1)
\node[Nframe=n](n2)(5,6){$C_1$}

\node[Nmarks=r](c0)(15,6){$c_0^1$}
\node[Nw=1, Nh=1,Nfill=y](cc0)(15,13){}
\node[Nmarks=r](c1)(15,26){$c_1^1$}
\node[Nw=1, Nh=1,Nfill=y](cc1)(15,33){}
\node[Nmarks=n](c2)(15,46){$c_2^1$}
\node[Nw=1, Nh=1,Nfill=y](cc2)(8,46){}
\drawedge[AHnb=0](c0,cc0){}
\drawedge[AHnb=0](c1,cc1){}
\drawedge[AHnb=0](c2,cc2){}

\drawedge[ELpos=50, ELside=r, syo=-2](n1,c0){}
\drawedge[ELpos=50, ELside=r](cc0,n1){}
\drawedge[ELside=r](cc0,c1){}
\drawarc[](15,13,2,33.7,90)

\drawedge[ELpos=50, ELside=r](cc1,n1){}
\drawedge[ELside=r](cc1,c2){}
\drawarc[](15,33,2,0,90)

\drawbpedge[ELpos=70,ELside=r](cc2,120,15,n1,90,15){} 
\drawedge[ELpos=50, ELside=r,   curvedepth=-5](cc2,c0){}
\drawarc[](8,46,2,116,257)

\drawedge[ELpos=50, ELside=r, dash={0.2 0.5}0,  curvedepth=-5](aaa,n1){}

\end{picture}
\end{center}
 \caption{Sketch of reduction to show PSPACE-hardness of the universal finiteness problem for
1l-AFA. \label{fig:pre-empty-reduction2} }
\end{figure}

\begin{lemma}\label{lem:universal-finiteness-pspace-hard}
The universal finiteness problem for 1L-AFA is PSPACE-hard.
\end{lemma}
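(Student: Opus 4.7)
I would reduce in polynomial time from the emptiness problem for 1L-AFA, which is PSPACE-complete. Given an instance $(\A',q_0)$ of emptiness---for concreteness, the 1L-AFA $\A'$ produced by the PSPACE-hardness reduction for emptiness (\figurename~\ref{fig:pre-empty-reduction})---I build the 1L-AFA $\B$ sketched in \figurename~\ref{fig:pre-empty-reduction2}. I introduce a distinguished state $x$ identified with $q_0$ and attach cycle gadgets $C_1,\ldots,C_n$ to it. Each gadget $C_i$ consists of states $c_0^i,\ldots,c_{p_i-1}^i$ whose transitions have the form $\delta_{\B}(c_j^i)=x\land c_{(j+1)\bmod p_i}^i$, with a designated choice of accepting states as in the figure, so that each $c_j^i$ can appear in the iterated predecessor sequence $\Pre_{\B}^{n}(\F_{\B})$ only while $x$ has been present at the relevant previous steps. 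The transition at $x$ is a conjunction coupling the $\A'$-side acceptance condition with the liveness of the gadgets, for instance $\delta_{\B}(x)=\delta_{\A'}(q_0)\land c_0^1\land\cdots\land c_0^n$.

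The correctness argument hinges on computing $\Pre_{\B}^{n}(\F_{\B})$ stepwise and tracking the interaction between $x$ and the gadget cycles. If $\LL(\A'_{q_0})\neq\emptyset$, then some $N$ satisfies $q_0\in\Pre_{\A'}^{N}(\F')$; the coupled dynamics then keep $x$ in $\Pre_{\B}^{n}(\F_{\B})$ for infinitely many $n$, so that $\LL(\B_x)$ is infinite and universal finiteness fails. Conversely, if $\LL(\A'_{q_0})=\emptyset$, then $q_0\notin\Pre_{\A'}^{n}(\F')$ for every $n$; hence the $\delta_{\A'}(q_0)$ conjunct in $\delta_{\B}(x)$ eventually fails, $x$ permanently leaves the predecessor sequence, and---because $x$ occurs as a conjunct in every $\delta_{\B}(c_j^i)$---the gadgets starve. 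Consequently $\Pre_{\B}^{n}(\F_{\B})$ collapses to a subset of states of $\A'$ which itself fades out, so $\Pre_{\B}^{n}(\F_{\B})=\emptyset$ for all sufficiently large $n$ and every state of $\B$ has a finite language.

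The main obstacle is the mutual recursion between $x$ and the cycle gadgets: I must guarantee that the gadgets cannot self-sustain in the absence of $x$ and that $x$ cannot self-sustain in the absence of a genuine accepting path in $\A'$. The former is handled by the conjunctive presence of $x$ in every $\delta_{\B}(c_j^i)$; the latter by placing $\delta_{\A'}(q_0)$ as a conjunct inside $\delta_{\B}(x)$. A subsidiary delicate point is to exclude spurious infinite languages at states of $\A'$ other than $q_0$; since the instance $\A'$ is under our control, I would choose the emptiness-hardness construction so that the predecessor sequence restricted to $\A'$-states mirrors $\Pre_{\A'}^{n}(\F')$ and drains to $\emptyset$ whenever $\LL(\A'_{q_0})$ is empty. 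Together with the polynomial size of the gadgets, this yields a polynomial-time reduction and the claimed PSPACE-hardness.
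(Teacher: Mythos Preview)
Your proposal has the right architecture (reduce from emptiness, attach prime-length cycle gadgets) but the specific wiring you describe does not give the correctness you claim.

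The critical problem is the direction ``$\LL(\A'_{q_0})\neq\emptyset$ implies some state of $\B$ has infinite language.'' You set $\delta_{\B}(x)=\delta_{\A'}(q_0)\land c_0^1\land\cdots\land c_0^n$ and assert that if $q_0\in\Pre_{\A'}^{N}(\F')$ for \emph{some} $N$, then $x$ stays in $\Pre_{\B}^{n}(\F_{\B})$ for infinitely many $n$. This does not follow: nonemptiness gives you a single $N$ at which $\delta_{\A'}(q_0)$ is satisfied, not infinitely many. With your conjunctive transition at $x$, you would need the $\A'$-conjunct and all gadget conjuncts to be satisfied simultaneously at infinitely many steps, and nothing in the construction forces this. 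You also identify $x$ with $q_0$ and give $q_0$ no self-loop, so even when $q_0$ is reached there is no mechanism that makes it persist.

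The paper's construction fixes exactly these points. The state $x$ is \emph{separate} from $q_0$, and $\delta_{\B}(x)=c_0^1\lor\cdots\lor c_0^n$ is a \emph{disjunction}: by the Chinese Remainder Theorem, $x$ then remains in $\Pre_{\B}^{i}(\F')$ for all $i<\prod_j p_j>2^{|Q|}$, \emph{independently of $\A$}. Every $\A$-transition gets the conjunct $x$ (so $\Pre_{\B}$ mimics $\Pre_{\A}$ while $x$ is alive, and everything in $\A$ dies once $x$ disappears), and $q_0$ gets a self-loop disjunct $\delta_{\B}(q_0)=q_0\lor(x\land\delta_{\A}(q_0))$, so that a single hit on $q_0$ makes its language infinite. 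This cleanly separates the two roles your construction conflates: the gadget keeps $x$ alive long enough to simulate $\A$ for $2^{|Q|}$ steps, and the self-loop on $q_0$ turns a single acceptance into an infinite language. Your handling of the ``subsidiary delicate point'' (other $\A'$-states having infinite language when $\LL(\A'_{q_0})=\emptyset$) is also circular as stated: asking that the $\A'$-predecessor sequence drain to $\emptyset$ whenever $q_0$ is not reached is precisely universal finiteness of $\A'$. In the paper this is handled uniformly by the $x$-conjunct in every $\A$-transition.
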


\begin{proof}
We show the result by a reduction from the emptiness problem for 1L-AFA,
which is PSPACE-complete~\cite{Holzer95,AFA1}. We first present a basic
fact about 1L-AFA, then an overview of the reduction, and a detailed
description of the reduction and the correctness argument.

\smallskip\noindent{\emph{Basic result.}} The language of a
1L-AFA~$\A = \tuple{Q, \delta, \F}$ from initial state $q_0$ is non-empty if 
$q_0 \in \Pre_{\A}^i(\F)$ for some $i\geq 0$.  Since the sequence
$\Pre_{\A}^i(\F)$ is ultimately periodic, it is sufficient to compute
$\Pre_{\A}^i(\F)$ for every $i\leq 2^{\abs{Q}}$ to decide emptiness.

\smallskip\noindent{\emph{Overview of the reduction.}}
From $\A$, we construct a 1L-AFA~$B = \tuple{Q', \delta', \F'}$ with
set $\F'$ of accepting states such that the sequence $\Pre_B^i(\F')$
in $B$ mimics the sequence $\Pre_{\A}^i(\F)$ in $\A$ for $2^{\abs{Q}}$
steps.  The automaton~$B$ contains the state space of $\A$, i.e. $Q
\subseteq Q'$.  The goal is to have 
$$\Pre_B^i(\F') \cap Q = \Pre_{\A}^i(\F) \text{ for all } i \leq 2^{\abs{Q}}, \text{ as long as } q_0 \not\in \Pre_{\A}^i(\F).$$ Moreover, if $q_0 \in \Pre_{\A}^i(\F)$ for some
$i\geq 0$, then $\Pre_B^j(\F')$ will contain $q_0$ for all $j\geq i$
(the state $q_0$ has a self-loop in $B$), and if $q_0 \not\in
\Pre{\A}^i(\F)$ for all $i\geq 0$, then $B$ is constructed such that
$\Pre_B^j(\F') = \emptyset$ for sufficiently large~$j$ (roughly for $j
> 2^{\abs{Q}}$).  Hence, the language of $\A$ is non-empty if and only
if the sequence $\Pre_B^j(\F')$ is not ultimately empty, that is if
and only if the language of $B$ is infinite from some state (namely
$q_0$).

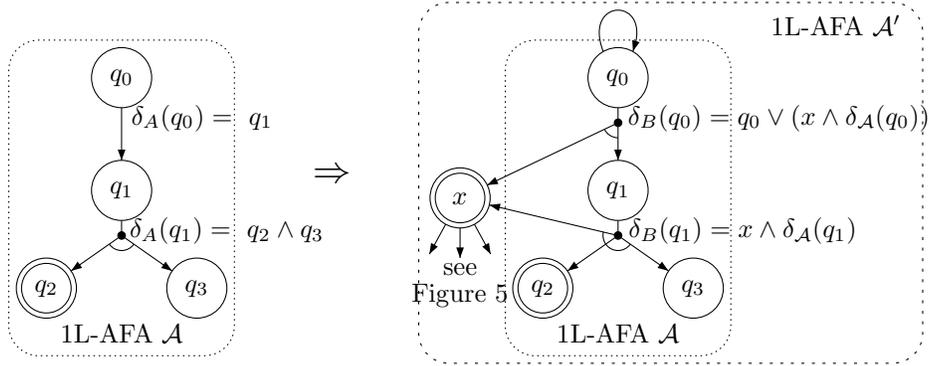
\begin{figure}[t]
\begin{center}
    \begin{picture}(122,49)(0,0)

\node[Nmarks=n, Nw=30, Nh=42, dash={0.2 0.5}0](m1)(15,22){}
\node[Nframe=n](label)(15,4){1L-AFA $\A$}
\node[Nmarks=n](n0)(15,38){$q_0$}
\node[Nframe=n](label)(25.7,32.7){$\delta_{A}(q_0) =~q_1$}
\node[Nmarks=n](n1)(15,23){$q_1$}
\node[Nframe=n](label)(29,17.7){$\delta_{A}(q_1) =~q_2 \land q_3$}
\node[Nw=1, Nh=1,Nfill=y](m1)(15,17){}
\node[Nmarks=r](n2)(5,10){$q_2$}
\node[Nmarks=n](n3)(25,10){$q_3$}
\drawedge(n0,n1){}
\drawedge[AHnb=0](n1,m1){}
\drawedge[ELpos=50](m1,n2){}
\drawedge[ELpos=50](m1,n3){}
\drawarc[](15,17,2,215,325)

\node[Nframe=n](arrow)(43,25){{\Large $\Rightarrow$}}

\node[Nmarks=n, Nw=30, Nh=42, dash={0.2 0.5}0](nm1)(81,22){}
\node[Nmarks=n, Nw=67, Nh=48, dash={0.4 1}0](m2)(88,24){}
\node[Nframe=n](label)(110,45){1L-AFA $\A'$}
\node[Nframe=n](label)(81,4){1L-AFA $\A$}
\node[Nmarks=n](nn0)(81,38){$q_0$}
\node[Nframe=n](label)(102.4,32.7){$\delta_{B}(q_0) = q_0 \lor (x \land \delta_{\A}(q_0))$}
\node[Nw=1, Nh=1,Nfill=y](mm0)(81,32){}
\node[Nmarks=n](nn1)(81,23){$q_1$}
\node[Nframe=n](label)(97.6,17.7){$\delta_{B}(q_1) = x \land \delta_{\A}(q_1)$}
\node[Nw=1, Nh=1,Nfill=y](mm1)(81,17){}
\node[Nmarks=r](nn2)(71,10){$q_2$}
\node[Nmarks=n](nn3)(91,10){$q_3$}
\drawedge[AHnb=0](nn0,mm0){}
\drawedge[AHnb=0](nn1,mm1){}

\node[Nmarks=r](nn4)(60,22){$x$}

\node[Nframe=n](dummy)(54,11){}
\drawedge[ELpos=50](nn4,dummy){}
\node[Nframe=n](dummy)(60,10){}
\drawedge[ELpos=50](nn4,dummy){}
\node[Nframe=n](dummy)(66,11){}
\drawedge[ELpos=50](nn4,dummy){}
\node[Nframe=n](dummy)(60,12.5){see}
\node[Nframe=n](dummy)(60,9){\figurename~\ref{fig:pre-empty-reduction2}}

\drawedge(mm0,nn1){}
\drawedge[ELpos=50, ELside=r](mm0,nn4){}
\drawarc[](81,32,2,205.5,270)

\drawedge[ELpos=50](mm1,nn2){}
\drawedge[ELpos=50](mm1,nn3){}
\drawedge[ELpos=50, ELside=r](mm1,nn4){}
\drawarc[](81,17,2,166.6,325)
\drawloop[ELside=l,loopCW=y, loopangle=90, loopdiam=5](nn0){}

\end{picture}
\end{center}
 \caption{Detail of the copy $\A'$ obtained from $\A$ in the reduction
of \figurename~\ref{fig:pre-empty-reduction2}.}\label{fig:pre-empty-reduction}
\end{figure}

\smallskip\noindent{\emph{Detailed reduction.}}
The key is to let $B$ simulate $\A$ for exponentially many steps, and 
to ensure that the simulation stops if and only if $q_0$ is not reached within $2^{\abs{Q}}$ steps.
We achieve this by defining $B$ as the gadget in~\figurename~\ref{fig:pre-empty-reduction2}
connected to a modified copy $\A'$ of $\A$ with the same state space. 
The transitions in $\A'$ are defined as follows, where $x$ is the entry state
of the gadget (see \figurename~\ref{fig:pre-empty-reduction}):
for all $q \in Q$ let $(i)$ $\delta_{B}(q) = x \land \delta_{\A}(q)$ if $q \neq q_0$,
and $(ii)$ $\delta_{B}(q_0) = q_0 \lor (x \land \delta_{\A}(q_0))$.
Thus, $q_0$ has a self-loop, and given a set $S \subseteq Q$ in the automaton $\A$, 
if $q_0 \not\in S$, then $\Pre_{\A}(S) = \Pre_B(S \cup \{x\})$ that is 
$\Pre_B$ mimics $\Pre_{\A}$ when $x$ is in the argument (and $q_0$ has not been reached yet).
Note that if $x \not\in S$ (and $q_0 \not\in S$), 
then $\Pre_B(S) = \emptyset$, that is unless $q_0$ has been reached, the
simulation of $\A$ by $B$ stops. 
Since we need that $B$ mimics $\A$ for $2^{\abs{Q}}$ steps, we define 
the gadget and the set $\F'$ to ensure that $x \in \F'$ and 
if $x \in \Pre_B^i(\F')$, then $x \in \Pre_B^{i+1}(\F')$ for all $i \leq 2^{\abs{Q}}$. 

In the gadget (\figurename~\ref{fig:pre-empty-reduction2}), the state $x$ has nondeterministic 
transitions $$\delta_{B}(x) = c^1_0 \lor c^2_0 \lor \dots \lor c^n_0$$
to $n$ components with state space 
$C_i = \{c^i_0, \dots, c^i_{p_i-1} \}$ where
$p_i$ is the $(i+1)$-th prime number,
and the transitions\footnote{In expression $c^i_j$,
we assume that $j$ is interpreted modulo $p_i$.} 
$\delta_{B}(c^i_j) = x \land c^i_{j+1}$ ($i=1,\dots,n$) form a loop in each component.
We choose $n$ such that $p^{\#}_n = \prod_{i=1}^{n} p_i > 2^{\abs{Q}}$ (take $n=\abs{Q}$). 
Note that the number of states in the gadget is $1 + \sum_{i=1}^{n} p_i \in O(n^2 \log n)$~\cite{BS96}
and thus the construction is polynomial in the size of $\A$. 

By construction, for all sets $S$, we have $x \in \Pre_B(S)$ whenever the first state $c^i_0$
of some component $C_i$ is in $S$, and if $x \in S$, then  
$c^i_j \in S$ implies $c^i_{j-1} \in \Pre_B(S)$.
Thus, if $x \in S$, the operator $\Pre_B(S)$ 
`shifts' backward the states in each component;
and, $x$ is in the next iteration (i.e., $x \in Pre_B(S)$) 
as long as $c^i_0 \in S$ for some component $C_i$.

Now, define the set of accepting states $\F'$ in $B$ in such a way that 
all states $c^i_0$ disappear simultaneously only after $p^{\#}_n$ iterations.
Let $\F' = \F \cup \{x\} \cup \bigcup_{1 \leq i \leq n} (C_{i} \setminus \{c^i_{p_i-1}\})$, 
thus $\F'$ contains all states of the gadget except the last state of each
component. 

\smallskip\noindent{\emph{Correctness argument.}}
It is easy to check that, irrespective of the transition relation in $\A$,
we have $x \in \Pre_B^{i}(\F')$ if and only if $0 \leq i < p_n^{\#}$. 
Therefore, if $q_0 \in \Pre_{\A}^{i}(\F)$ for some $i$, then $q_0 \in \Pre_B^{j}(\F')$ 
for all $j \geq i$ by the self-loop on $q_0$. On the other hand, 
if $q_0 \not\in \Pre_{\A}^{i}(\F)$ for all $i \geq 0$, then since $x \not\in 
\Pre_B^{i}(\F')$ for all $i > p_n^{\#}$, we have $\Pre_B^{i}(\F') = \emptyset$ 
for all $i > p_n^{\#}$. This shows that the language of $\A$ is non-empty 
if and only if the language of $B$ is infinite from some state (namely $q_0$), 
and establishes the correctness of the reduction. 
\qed
\end{proof}\medskip

\section{Eventually Synchronizing}\label{sec:eventually}

In this section, we show the PSPACE-completeness of the
membership problem for eventually synchronizing objectives
and the three winning modes. By Lemma~\ref{lem:weakly-max-sum} and 
Remark~\ref{rmk:Dirac-initial-suffices}, we consider without loss of generality 
the membership problem with function $\fsum$ and Dirac initial distributions (i.e., single
initial state).

The eventually synchronizing objective is reminiscent of a reachability 
objective in the distribution-based semantics: it requires that 
in the sequence of distributions of an MDP $\M$ under strategy $\alpha$
we have $\sup_{n} \M^{\alpha}_n(T) = 1$ (and that the $\sup$ is reached in 
the case of sure winning, that is $\M^{\alpha}_n(T) = 1$ for some $n \geq 0$).

The sure winning mode can be solved by a reachability analysis 
in the alternating graph underlying the MDP (Section~\ref{sec:sure-event-sync}). We show that the almost-sure winning 
mode can be solved by a reduction to the limit-sure winning mode (Section~\ref{sec:almost-eventually}). 
We solve the limit-sure winning mode by a reduction to a
reachability question in a modified MDP of exponential size that ensures
the probability mass reaches the target set synchronously (Section~\ref{sec:limit-sure}).
We present reductions to show PSPACE-hardness of each winning mode, 
matching our PSPACE upper bounds.


\subsection{Sure eventually synchronizing}\label{sec:sure-event-sync}
Given a target set $T$, the membership problem for sure-winning eventually 
synchronizing objective in $T$ can be solved by computing the sequence 
$\Pre^n(T)$ of iterated predecessors,  like in 1L-AFA, as shown in the 
following lemma.

\begin{lemma}\label{lem:sure-ss-pre}
Let $\M$ be an MDP and $T$ be a target set. 
For all states $q_{\init}$, we have $q_{\init} \in \winsure{event}(\fsum_T)$ 
if and only if there exists $n \geq 0$ such that $q_{\init} \in \Pre_{\M}^{n}(T)$. 
\end{lemma}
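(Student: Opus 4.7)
The plan is to prove both directions by induction on $n$, using the observation that $\M^{\alpha}_n(T)=1$ holds exactly when every path-outcome of length $n$ under $\alpha$ ends in $T$ (since a single path-outcome landing outside $T$ would subtract positive probability mass).

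For the direction ($\Leftarrow$), I would build a sure-winning strategy by induction on $n$. The base case $n=0$ gives $q_{\init}\in T$, so any strategy trivially yields $\M^{\alpha}_0(T)=1$. For the inductive step, if $q_{\init}\in\Pre^{n}(T)$, pick an action $a$ witnessing membership, so $\post(q_{\init},a)\subseteq\Pre^{n-1}(T)$. Define the strategy to play $a$ first, and from each successor $q'\in\post(q_{\init},a)$ use (by induction hypothesis, with history-based branching) a sure-winning strategy achieving $\M^{\beta_{q'}}_{n-1}(T)=1$ from $q'$. Since strategies are history-dependent, the combined strategy is well-defined, and all path-outcomes of length $n$ from $q_{\init}$ end in $T$.

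For the direction ($\Rightarrow$), suppose $\alpha$ is a strategy with $\M^{\alpha}_n(T)=1$ from $q_{\init}$. Again I proceed by induction on $n$. If $n=0$, then the support of the initial Dirac distribution must lie in $T$, so $q_{\init}\in T=\Pre^{0}(T)$. If $n>0$, pick any $a\in\Supp(\alpha(q_{\init}))$ (which exists since $\alpha(q_{\init})$ is a distribution over actions). For each $q'\in\post(q_{\init},a)$, the shifted strategy $\alpha_{q'}(\rho)=\alpha(q_{\init}\,a\,\rho)$ satisfies $\M^{\alpha_{q'}}_{n-1}(T)=1$ from $q'$: any path-outcome from $q'$ under $\alpha_{q'}$ of length $n-1$ avoiding $T$ would, prepended with $q_{\init}\,a$, contradict $\M^{\alpha}_{n}(T)=1$. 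By the induction hypothesis $q'\in\Pre^{n-1}(T)$, hence $\post(q_{\init},a)\subseteq\Pre^{n-1}(T)$ and thus $q_{\init}\in\Pre^{n}(T)$.

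I do not foresee a serious obstacle: the only subtlety is correctly handling history in the $(\Leftarrow)$ construction (the strategy must branch on which $q'$ was actually reached after the first step, which is possible because strategies have access to the full path prefix) and confirming in the $(\Rightarrow)$ direction that the argument still works with randomized strategies, where it suffices to observe that only supports of action distributions matter for the event $\M^{\alpha}_n(T)=1$.
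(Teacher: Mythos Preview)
Your proposal is correct and follows essentially the same induction-on-$n$ approach as the paper, establishing that sure eventual synchronization in $n$ steps from $q_{\init}$ is equivalent to $q_{\init}\in\Pre^n(T)$. The only cosmetic difference is that the paper phrases the $(\Rightarrow)$ inductive step as a contradiction argument (if for every action some successor were not sure-winning, then $q_{\init}$ could not be), whereas you give the equivalent direct argument via the shifted strategy from a chosen $a\in\Supp(\alpha(q_{\init}))$.
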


\begin{proof}
We prove the following equivalence by induction (on the length $i$):
for all initial states $q_{\init}$, there exists a strategy $\alpha$ sure-winning
in $i$ steps from $q_{\init}$ (i.e., such that $\M^{\alpha}_i(T) = 1$)
if and only if $q_{\init} \in \Pre^{i}(T)$. The case $i=0$ trivially holds 
since for all strategies $\alpha$, we have $\M^{\alpha}_0(T)=1$ if and only if $q_{\init} \in T$.

Assume that the equivalence holds for all $i < n$. 
For the induction step, show that $\M$ is sure eventually synchronizing from $q_{\init}$ (in $n$ steps) 
if and only if there exists an action $a$ such that $\M$ is sure eventually 
synchronizing (in $n-1$ steps) from all states $q' \in \post(q_{\init}, a)$ (equivalently,
$\post(q_{\init}, a) \subseteq \Pre^{n-1}(T)$ by the induction hypothesis, that is 
$q_{\init} \in \Pre^{n}(T)$ by definition of $\Pre$). First, if all successors $q'$ of $q_{\init}$ under some action $a$ 
are sure eventually synchronizing, then so is $q_{\init}$ by playing~$a$ followed by
a winning strategy from each successor $q'$. 
For the other direction, assume towards contradiction that $\M$ is sure eventually 
synchronizing from $q_{\init}$ (in $n$ steps), but for each action~$a$, 
there is a state $q' \in \post(q_{\init}, a)$ that is not sure eventually synchronizing. 
Then, from $q'$ there is a positive probability to reach a state 
not in $T$ after $n-1$ steps, no matter the strategy played. 
Hence from~$q_{\init}$, for all strategies, the probability mass in $T$ cannot be $1$ 
after $n$ steps, in contradiction with the fact that $\M$ is sure eventually 
synchronizing from $q_{\init}$ in $n$ steps.  
It follows that the induction step holds, and the proof is complete.
\qed
\end{proof}\medskip

The following theorem summarizes the results for sure eventually synchronizing. 

\begin{theorem}\label{theo:sure-eventually-pspace-c}
For sure eventually synchronizing  in MDPs:

\begin{enumerate}
\item (Complexity). The membership problem is PSPACE-complete.

\item (Memory). Exponential memory is necessary and sufficient for both pure 
and randomized strategies, and pure  strategies are sufficient. 
\end{enumerate}
\end{theorem}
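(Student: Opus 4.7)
\medskip
\noindent\textbf{Proof plan.}
My starting point is Lemma~\ref{lem:sure-ss-pre}, which reduces sure eventually synchronizing to checking $q_{\init} \in \Pre_{\M}^{n}(T)$ for some $n \geq 0$. Since the sequence $(\Pre^n(T))_{n\geq 0}$ is ultimately periodic with pre-period and period bounded by $2^{\abs{Q}}$, it suffices to search $n \leq 2^{\abs{Q}}$. For the PSPACE upper bound I iteratively compute $\Pre^n(T)$ keeping only the current set (polynomial space) and the binary counter $n$ (polynomial bits), halting as soon as $q_{\init}$ appears or the sequence repeats. For PSPACE-hardness I invoke the tight correspondence between 1L-AFA and MDPs established in Section~\ref{sec:1L-AFA}: given a 1L-AFA $\A=\tuple{Q,\delta_{\A},\F}$ and initial state $q_0$, build $\M_{\A}$ as in Section~\ref{sec:1L-AFA} so that $\Pre_{\M_{\A}}^n(\F)=Acc_{\A}(n,\F)$ for all $n$; then $\LL(\A_{q_0})\neq\emptyset$ iff $q_0 \in \Pre_{\M_{\A}}^n(\F)$ for some~$n$ iff $q_0 \in \winsure{event}(\fsum_{\F})$ in $\M_{\A}$. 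Since 1L-AFA emptiness is PSPACE-complete \cite{Holzer95,AFA1}, hardness follows.

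For the memory upper bound, given that $q_{\init}\in\Pre^{n}(T)$ for some smallest $n\leq 2^{\abs{Q}}$, I define a strategy that, at step $k<n$ from state $q$, plays a pure action $a$ such that $\post(q,a)\subseteq\Pre^{n-k-1}(T)$ (which exists by the very definition of $\Pre$, and witnesses membership in the next predecessor set). Such a strategy needs only to track the counter $k$ ranging over $\{0,1,\dots,n\}$, yielding a finite-memory transducer of size $n+1\leq 2^{\abs{Q}}+1$, i.e., exponential memory. This argument also shows that pure strategies suffice, both because at each step a single pure action witnesses the inclusion $\post(q,a)\subseteq\Pre^{n-k-1}(T)$, and because if a randomized strategy is sure winning, then every action in the support of its decision must lead only to states that are themselves sure winning for the remaining horizon, so a pure refinement also wins.

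For the memory lower bound, I will build an MDP family $(\M_n)_{n\geq 1}$ of polynomial size where every sure eventually synchronizing strategy needs exponentially many steps (and hence exponential memory). The idea is the standard prime-cycle construction already used in Lemma~\ref{lem:universal-finiteness-pspace-hard}: take $n$ deterministic cycles of coprime lengths $p_1,\dots,p_n$ with a single distinguished state $c_0^i$ per cycle, and a global ``dispatcher'' state from which all tokens are driven around their respective cycles in lockstep by a single action. Arrange the target so that synchronization requires all tokens to be simultaneously at their $c_0^i$, which only happens after a multiple of $p_1^{\#}=\prod_i p_i$ steps, a number exponential in $\abs{Q}=O(n^2\log n)$ by the prime number theorem. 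Any pure or randomized strategy must therefore count up to this exponential value to know when to commit to the final synchronizing action, forcing exponential memory.

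The main obstacle will be the memory lower bound: I need to argue carefully that no cleverer strategy can avoid the prime-cycle counter, i.e., that every sure eventually synchronizing strategy in $\M_n$ must produce its ``synchronizing moment'' at a step that is a common multiple of all $p_i$, so that the reachable behaviors up to step $t$ distinguish $\Theta(p_1^{\#})$ many residues modulo the cycle lengths, and hence the strategy's transducer must have at least $p_1^{\#}$ modes. The PSPACE upper bound, the reduction from 1L-AFA emptiness, and the exponential memory upper bound are routine once Lemma~\ref{lem:sure-ss-pre} and Section~\ref{sec:1L-AFA} are in hand.
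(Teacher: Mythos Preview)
Your proposal is correct and follows essentially the same approach as the paper's proof: both derive PSPACE membership from iterating the $\Pre$ operator (via Lemma~\ref{lem:sure-ss-pre}), obtain PSPACE-hardness from the 1L-AFA emptiness connection of Section~\ref{sec:1L-AFA}, give the exponential-memory upper bound by a counter-based pure strategy playing $\Pre$-safe actions, and establish the lower bound via the prime-cycle construction. The paper's concrete lower-bound MDP (\figurename~\ref{fig:exp-mem}) differs slightly in presentation---an initial probabilistic fan-out to $n$ cycles of prime lengths, with action $a$ advancing each cycle and action $b$ sending the last state of each cycle to a common target $q_T$ (and all other states to a sink)---so that synchronizing in the singleton $\{q_T\}$ forces playing $a$ exactly $\prod_i p_i$ times before $b$; your ``dispatcher'' description is a minor variant of the same idea, and the pumping argument you sketch is what the paper invokes as ``standard pumping arguments.''
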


\begin{figure}[!t]
\begin{center}
\def\fsize{\normalsize}

\begin{picture}(93,63)(0,0)

{\fsize

\node[Nmarks=i, iangle=180](q0)(9,33){$q_{\init}$}
\node[Nmarks=n](q1)(29,54){$q^1_1$}
\node[Nmarks=n](q2)(49,54){$q^1_2$}

\node[Nmarks=n](q3)(29,16){$q^2_1$}
\node[Nmarks=n](q4)(49,26){$q^2_2$}
\node[Nmarks=n](q5)(49,6){$q^2_3$}

\node[Nmarks=r](safe)(69,33){$q_T$}
\node[Nmarks=n](bad)(89,33){$q_{\bot}$}

\node[Nmarks=n, Nw=30, Nh=18, dash={1.5}0, ExtNL=y, NLangle=22, NLdist=1](A1)(39,54){$H_1$}
\node[Nmarks=n, Nw=30, Nh=32, dash={1.5}0, ExtNL=y, NLangle=38, NLdist=1](A2)(39,16){$H_2$}


\drawedge[ELpos=43, ELside=l, ELdist=1, curvedepth=0](q0,q1){$a,b: \frac{1}{2}$}
\drawedge[ELpos=40, ELside=r, ELdist=1, curvedepth=0](q0,q3){$a,b: \frac{1}{2}$}

\drawedge[ELpos=50, ELside=l, ELdist=1, curvedepth=4](q1,q2){$a$}
\drawedge[ELpos=50, ELside=l, ELdist=1, curvedepth=4](q2,q1){$a$}

\drawedge[ELpos=50, ELside=l, ELdist=1, curvedepth=4](q3,q4){$a$}
\drawedge[ELpos=50, ELside=r, ELdist=1.5, curvedepth=4](q4,q5){$a$}
\drawedge[ELpos=40, ELside=l, ELdist=1, curvedepth=4](q5,q3){$a$}

\drawedge[ELpos=50, ELside=l, ELdist=1, curvedepth=0](q2,safe){$b$}
\drawedge[ELpos=50, ELside=r, ELdist=1, curvedepth=0, syo=-3](q5,safe){$b$}


\drawedge[ELpos=48, ELside=l, ELdist=1, curvedepth=0](safe,bad){$a,b$}
\drawloop[ELside=r,loopCW=n, loopdiam=6, loopangle=90](bad){$a,b$}



}
\end{picture}
\caption{The MDP $\M_2$.}\label{fig:exp-mem}
\end{center}
\end{figure}

\begin{proof}
By Lemma~\ref{lem:sure-ss-pre}, 
the membership problem for sure eventually synchronizing is
equivalent to the emptiness problem of 1L-AFA, and thus PSPACE-complete.
Moreover, if $q_{\init} \in \Pre_{\M}^{n}(T)$, a finite-memory strategy with~$n$ modes
that at mode~$i$ in a state~$q$ plays an action~$a$ such that
$\post(q,a)\subseteq \Pre^{i-1}(T)$ is sure winning for eventually synchronizing.  
Note that this strategy is pure.

We present a family of MDPs $\M_n$ ($n \in \nat$) over alphabet $\{a,b\}$
that are sure winning for eventually synchronizing, and where the sure winning strategies 
require exponential memory. The MDP $\M_2$ is shown in \figurename~\ref{fig:exp-mem}.
The structure of $\M_n$ is an initial uniform probabilistic transition
to $n$ components $H_1, \dots, H_n$ where $H_i$ is a cycle of length $p_i$
the $i$th prime number. On action $a$, the next state in the cycle is reached,
and on action $b$ the target state $q_T$ is reached, only from the last
state in the cycles. From other states, 
the action $b$ leads to $q_{\bot}$ (transitions not depicted).
A sure winning strategy for eventually synchronizing in $\{q_T\}$ is to 
play $a$ in the first $p^{\#}_n = \prod_{i=1}^{n} p_i$ steps, and then play $b$.
This requires memory of size $p^{\#}_n > 2^n$ while the size of $\M_n$
is in $O(n^2 \log n)$~\cite{BS96}.
It can be proved by standard pumping arguments that no strategy of size
smaller than $p^{\#}_n$ is sure winning.
\qed
\end{proof}


\subsection{Almost-sure eventually synchronizing}\label{sec:almost-eventually}

We show an example where infinite memory is necessary to win for
almost-sure eventually synchronizing. Consider the MDP
in~\figurename~\ref{fig:inf-mem} with initial state $q_{\init}$.
  We construct a strategy that is almost-sure
  eventually synchronizing in $q_2$, showing that $q_{\init} \in
  \winas{event}(q_2)$.  First, observe that for all $\epsilon > 0$ we
  can have probability at least $1 - \epsilon$ in $q_2$ after finitely
  many steps: playing $n$ times $a$ and then
  $b$ leads to probability $1- \frac{1}{2^n}$ in $q_2$ (and $\frac{1}{2^n}$ in $q_{\init}$) .  Choosing $n$
  sufficiently large (namely, $n > \log_2(\frac{1}{\epsilon})$) shows that the MDP
  is limit-sure eventually synchronizing in $q_2$.  Moreover, the
  remaining probability mass is in~$q_{\init}$.  By playing $a$ we get again
  support $\{q_{\init}\}$, thus from any (initial) distribution with 
  support $\{q_{\init},q_2\}$, the MDP is again limit-sure eventually synchronizing in $q_2$, and with support
  in $\{q_{\init},q_2\}$. Therefore, we can take a smaller value of
  $\epsilon$ and play a strategy to have probability at least $1 - \epsilon$ in $q_2$
  in finitely many steps, then reaching back support $\{q_{\init}\}$, and we can repeat 
  this for $\epsilon \to 0$. This strategy ensures probability mass $1 - \epsilon$ in $q_2$
  for all $\epsilon > 0$, hence it is almost-sure eventually synchronizing in $q_2$.  The
  next result shows that infinite memory is necessary for almost-sure winning in this example.

\begin{figure}[t]
\begin{center}
\begin{picture}(60,28)

\node[Nmarks=i,iangle=180](n0)(10,11){$q_{\init}$}
\node[Nmarks=n](n1)(35,11){$q_1$}
\node[Nmarks=r](n2)(55,11){$q_2$}

\drawedge[ELdist=1](n0,n1){$a: \frac{1}{2}$}
\drawloop[ELside=l, loopCW=y, loopangle=90, loopdiam=5](n0){$a:\frac{1}{2}$}
\drawloop[ELside=r, loopCW=n, loopangle=-90, loopdiam=5](n0){$b$}

\drawedge(n1,n2){$b$}
\drawloop[ELside=r,loopCW=n, loopangle=-90, loopdiam=5](n1){$a$}

\drawedge[ELpos=50, ELdist=.5, ELside=r, curvedepth=-10](n2,n0){$a,b$}

\end{picture}
\caption{An MDP where infinite memory is necessary for 
almost-sure eventually and almost-sure weakly synchronizing strategies.}\label{fig:inf-mem}
\end{center}
\end{figure}
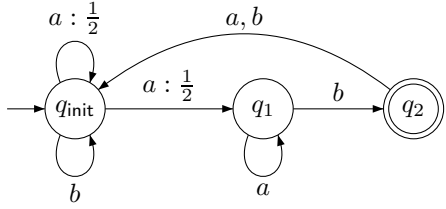

\begin{lemma}\label{lem:inf-mmeory-almost-event}
There exists an almost-sure eventually synchronizing MDP for which 
all almost-sure eventually synchronizing strategies require infinite memory.
\end{lemma}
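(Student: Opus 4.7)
My plan is a proof by contradiction. I would suppose $\alpha$ is an almost-sure eventually synchronizing strategy for this MDP with finite memory $\mem$, form the finite product Markov chain $\M(\alpha)$ on the state space $Q \times \mem$, and show that $\sup_n \M^{\alpha}_n(q_2) < 1$, contradicting the almost-sure hypothesis.

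First I would record a purely local observation: $\M^{\alpha}_n(q_2)<1$ at every finite time. A one-line induction suffices, since from $q_{\init}$ action $a$ retains half the mass and action $b$ all of it, so $\M^{\alpha}_{n+1}(q_{\init}) \geq \tfrac12 \M^{\alpha}_n(q_{\init})$, giving $\M^{\alpha}_n(q_{\init})>0$ and hence $\M^{\alpha}_n(q_2)<1$ for all $n$. Consequently the supremum, if it equals $1$, must be attained only as a limit along some subsequence.

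The main step is a structural claim about the recurrent classes of $\M(\alpha)$. Decomposing into recurrent classes $R_1,\dots,R_k$ with respective periods $d_j$ and cyclic layers $L_j^{0},\dots,L_j^{d_j-1}$, I would argue that no layer is entirely composed of states whose first coordinate is $q_2$. If such a layer $L_j^i$ existed, then since $q_2$ leads deterministically to $q_{\init}$ the next layer $L_j^{i+1}$ would be all $q_{\init}$-states; but from $q_{\init}$ every action leaves nonzero mass at $q_{\init}$, so some $q_{\init}$-state is present in every subsequent layer, and cycling around we find a $q_{\init}$-state back in $L_j^i$, a contradiction. Since each phase stationary distribution $\pi_j^{i}$ is strictly positive on every state of $L_j^{i}$, this yields $\pi_j^{i}(q_2)<1$ for every $j,i$; let $\delta=\min_{j,i}(1-\pi_j^{i}(q_2))>0$.

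I would then finish by invoking the standard asymptotic theory for finite Markov chains: up to an error that vanishes as $n\to\infty$, the distribution $\M^{\alpha}_n$ equals $\sum_j p_j\, \pi_j^{\phi_j(n)}$, where $p_j$ is the absorption probability into $R_j$ (so $\sum_j p_j=1$) and $\phi_j(n)$ is the phase in $R_j$ at time $n$. This gives $\limsup_n \M^{\alpha}_n(q_2)\leq 1-\delta$, and combined with the strict inequality at every finite time (which bounds the maximum over the finitely many early steps strictly below~$1$), we conclude $\sup_n \M^{\alpha}_n(q_2)<1$, the desired contradiction. The principal obstacle I foresee is the structural claim on cyclic layers — it relies on the specific ``leaky'' self-loop at $q_{\init}$ together with the absence of any incoming transition to $q_2$ other than the single edge from $q_1$ under action $b$; once the claim is in hand, the surrounding Markov-chain asymptotics are routine.
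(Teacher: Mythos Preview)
Your proof is correct and follows essentially the same approach as the paper's: both form the finite product Markov chain $\M(\alpha)$ and exploit the fact that from $q_{\init}$ every action leaves positive mass at $q_{\init}$, so every periodic class (cyclic layer) of every recurrent component contains a $q_{\init}$-state, which bounds the stationary $q_2$-mass strictly below~$1$. Your write-up is more explicit about the phase stationary distributions, but the structural core---the ``leaky'' self-loop at $q_{\init}$ propagating a $q_{\init}$-state through all layers---is identical to the paper's argument.
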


\begin{proof}
Consider the MDP $\M$ shown in~\figurename~\ref{fig:inf-mem}. We argued 
before the lemma 
that $q_{\init} \in \winas{event}(q_2)$ 
and we now show that infinite memory is necessary from $q_{\init}$ 
for almost-sure eventually synchronizing in $q_2$. 
 Note
that $\M$ is not sure eventually synchronizing in $q_2$ since
the probability in $q_{\init}$ is positive at all times (for all strategies). 

Assume towards contradiction that there exists a (possibly randomized) finite-memory
strategy $\alpha$ that is almost-sure eventually synchronizing in
$q_2$. Consider the Markov chain $\M(\alpha)$ (the product of the MDP
$\M$ with the finite-state transducer defining $\alpha$). A state
$(q,m)$ in $\M(\alpha)$ is called a \emph{$q$-state}. Since $\alpha$
is almost-sure eventually synchronizing (but is not sure eventually
synchronizing) in $q_2$, there is a $q_2$-state in the recurrent states
of $\M(\alpha)$.  Since on all actions $q_{\init}$ is a successor of $q_2$,
and $q_{\init}$ is a successor of itself, it follows that there is a
recurrent $q_{\init}$-state in $\M(\alpha)$, and that all periodic classes
of recurrent states in $\M(\alpha)$ contain a $q_{\init}$-state. Hence, in
each stationary distribution there is a $q_{\init}$-state with a positive
probability, and therefore the probability mass in $q_{\init}$ is bounded
away from zero. It follows that the probability mass in $q_2$ is
bounded away from $1$ thus $\alpha$ is not almost-sure eventually
synchronizing in $q_2$, a contradiction.  
\qed
\end{proof}\medskip

The membership problem for almost-sure eventually synchronizing can be reduced
to other winning modes since an almost-sure eventually synchronizing strategy 
is either sure eventually synchronizing or almost-sure weakly synchronizing. 
Nevertheless we give a direct proof that the problem is decidable in PSPACE, using
a characterization that will be useful later for almost-sure weakly synchronizing. 

It turns out that in general, 
almost-sure eventually synchronizing strategies can be constructed from a family
of limit-sure eventually synchronizing strategies if we can also ensure that the
probability mass remains in the winning region (as in the MDP
in~\figurename~\ref{fig:inf-mem}).
We present a characterization of the winning region for almost-sure
winning based on an extension of the limit-sure eventually
synchronizing objective \emph{with exact support}.  This objective
requires to ensure probability arbitrarily close to $1$ in the target
set $T$, and moreover that after the same number of
steps the support of the probability distribution is contained in the
given set $U$.  Formally, given an MDP $\M$, let
$\winlim{event}(\fsum_T,U)$ for $T \subseteq U$ be the set of all
initial distributions such that for all $\epsilon>0$ there exists a
strategy~$\alpha$ and $n \in \nat$ such that $\M^{\alpha}_n(T) \geq
1-\epsilon$ and $\M^{\alpha}_n(U)=1$.
We say that $\alpha$ is limit-sure eventually synchronizing in $T$ with support in $U$
(consider the example at the beginning of Section~\ref{sec:almost-eventually}
with $T = \{q_2\}$ and $U= \{q_{\init},q_2\}$).

We will present an algorithmic solution to limit-sure eventually
synchronizing objectives with exact support in Section~\ref{sec:limit-sure}.
Our characterization of the winning region for almost-sure winning is
as follows.

\begin{lemma}\label{lem: almost-limit-reduce-limit-event}
Let $\M$ be an MDP and $T$ be a target set. For all states $q_{\init}$, 
we have $q_{\init}\in \winas{event}(\fsum_T)$
if and only if there exists a set~$U$ of states such that:
  \begin{itemize}
    \item $q_{\init} \in \winsure{event}(\fsum_U)$, and \smallskip
    \item $d_U \in \winlim{event}(\fsum_{T},U)$ where $d_U$ is the uniform distribution over~$U$.
  \end{itemize}
\end{lemma}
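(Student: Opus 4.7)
The plan is to prove both implications of the biconditional.

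For the ``if'' direction, I plan to construct an almost-sure eventually synchronizing strategy $\alpha^*$ from $q_{\init}$ in infinitely many phases. Phase~$0$ plays the sure strategy witnessing $q_{\init} \in \winsure{event}(\fsum_U)$, reaching at some time $n_0$ a distribution whose support is contained in $U$. Each subsequent phase $i \geq 1$ plays the limit-sure witness $\beta_{\epsilon_i}$ for $d_U \in \winlim{event}(\fsum_T,U)$ at precision $\epsilon_i = 2^{-i}/|U|$. Correctness rests on a monotonicity observation: writing $p_q$ for the distribution at the relevant time $m$ under $\beta_{\epsilon_i}$ starting from $\xi^q$, the equality $\M^{\beta_{\epsilon_i}}_m(U)=1$ from $d_U$ forces $p_q(U)=1$ for every $q \in U$, and the bound $\M^{\beta_{\epsilon_i}}_m(T) \geq 1-\epsilon_i$ gives $\sum_{q \in U}(1-p_q(T)) \leq |U|\epsilon_i$. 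Consequently, from any $d$ with $\Supp(d) \subseteq U$, the same $\beta_{\epsilon_i}$ at the same time $m$ preserves support in $U$ and gives $T$-mass $\geq 1-|U|\epsilon_i = 1 - 2^{-i}$. Hence each phase ends with support in $U$ and $T$-mass $\geq 1-2^{-i}$, enabling the next phase. Concatenating all phases gives an (infinite-memory) strategy which is almost-sure eventually synchronizing.

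For the ``only if'' direction, given an almost-sure eventually synchronizing strategy $\alpha$ from $q_{\init}$, fix for each $k \geq 1$ a time $n_k$ with $\M^\alpha_{n_k}(T) \geq 1-1/k$ and let $U_k = \Supp(\M^\alpha_{n_k})$. Since $Q$ has only finitely many subsets, pigeonhole yields an infinite $K \subseteq \nat$ and a set $U$ with $U_k = U$ for every $k \in K$. The first condition $q_{\init} \in \winsure{event}(\fsum_U)$ follows at once: $\M^\alpha_{n_k}(U)=1$ is witnessed by $\alpha$ itself.

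For the second condition $d_U \in \winlim{event}(\fsum_T, U)$ I would proceed in two steps. \emph{Step~1:} show $d_k := \M^\alpha_{n_k} \in \winlim{event}(\fsum_T, U)$ for any fixed $k \in K$. Construct a ``shifted'' strategy $\beta$ from $d_k$ that, at the initial state $q$, randomly selects a finite path $h$ of length $n_k$ ending at $q$ with probability $\Pr^\alpha(h)/d_k(q)$, and thereafter plays the continuation $\alpha^{(h)}$ defined by $\alpha^{(h)}(h') = \alpha(h \cdot h')$. A direct expansion in terms of path probabilities shows $\M^\beta_m = \M^\alpha_{n_k+m}$ for every $m$; choosing $k' \in K$ with $k' > k$, $1/k' < \epsilon$, and setting $m = n_{k'}-n_k$ then yields $\M^\beta_m(T) \geq 1-\epsilon$ and $\M^\beta_m(U)=1$. \emph{Step~2:} transfer from $d_k$ to $d_U$ via a reverse monotonicity. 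With $\gamma_k = \min_{q \in U} d_k(q) > 0$, the pointwise ratio $d_U(q)/d_k(q)$ is bounded by $1/(|U|\gamma_k)$ on $U$. Invoking the $d_k$-witness at precision $\epsilon = \epsilon'|U|\gamma_k$ and redoing the monotonicity calculation (now using $\sum_q (1-p_q(T)) \leq \epsilon/\gamma_k$ to control $\M^\beta_m(T)$ from $d_U$) yields, from $d_U$, $T$-mass $\geq 1-\epsilon'$ and support still in $U$.

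The main technical obstacle is Step~1 of the backward direction: verifying that the randomized initial choice of history $h$ in $\beta$, followed by the corresponding continuation $\alpha^{(h)}$, faithfully reproduces the marginals of $\alpha$ at every time $n_k + m$. This requires unpacking the history-dependent definition of $\Pr^\alpha$ and checking that the joint sampling weight $d_k(q) \cdot (\Pr^\alpha(h)/d_k(q))$ recovers $\Pr^\alpha(h)$ for each path $h$, so that summing over $h$ gives the identity $\M^\beta_m = \M^\alpha_{n_k+m}$.
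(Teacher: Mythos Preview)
Your overall approach matches the paper's: the ``if'' direction concatenates a sure-winning phase with a sequence of limit-sure phases at shrinking precision, and the ``only if'' direction extracts $U$ by pigeonhole on the supports $\Supp(\M^\alpha_{n_k})$. The difference lies in how the two proofs pass between distributions with the same support. The paper forward-references Corollary~\ref{col:uniform-dist-limit}, which asserts that membership in $\winlim{event}(\fsum_T,U)$ depends only on the support of the initial distribution; that corollary is derived from the structural characterisation of limit-sure eventually synchronizing developed later (Lemma~\ref{lem:lse-pre} and Lemma~\ref{lem:lssr-assr}). You instead prove the needed support-invariance directly and quantitatively, once in the ``if'' direction via $\sum_{q \in U}(1-p_q(T)) \leq |U|\epsilon_i$ and once in Step~2 via $\gamma_k = \min_{q \in U} d_k(q)$. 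Your route is more self-contained and elementary; the paper's route yields a reusable general fact but leans on machinery established only afterwards.

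Two points on your ``only if'' direction. First, Step~1 needs $n_{k'} \geq n_k$ for $m = n_{k'}-n_k$ to be a valid time, which is not automatic from $\sup_n \M^\alpha_n(T)=1$. The paper handles this by a case split: if $\M^\alpha_n(T)=1$ for some $n$ then take $U=T$ and the second bullet is trivial at $m=0$; otherwise the supremum is not attained, so for every $\epsilon$ infinitely many $n$ satisfy $\M^\alpha_n(T)\geq 1-\epsilon$ and the $n_k$ can be chosen strictly increasing. You should insert this split. Second, your $\beta$ as described (``randomly selects a path $h$, then plays $\alpha^{(h)}$'') is not literally a function $\Pref(\M)\to\dist(\Act)$, since the random choice of $h$ must persist along the play. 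The standard fix is to define $\beta(\rho)$ as the posterior-weighted mixture $\sum_h w(h\mid\rho)\,\alpha(h\cdot\rho)$ with $w(h\mid\rho)$ proportional to $\Pr^\alpha(h\cdot\rho)$; then $\M^\beta_m=\M^\alpha_{n_k+m}$ follows by induction on $m$. The paper avoids this construction altogether by invoking Corollary~\ref{col:uniform-dist-limit}.
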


\begin{proof}
First, if $q_{\init} \in \winas{event}(\fsum_T)$,
then there is a strategy $\alpha$ such that $\sup_{n \in \nat} \M^{\alpha}_n(T)=1$.
Then either $\M^{\alpha}_n(T) = 1$ for some $n \geq 0$,
or $\limsup_{n\to \infty} \M^{\alpha}_n(T)=1$. 
If $\M^{\alpha}_n(T) = 1$, then $q_{\init}$ is sure winning 
for eventually synchronizing in $T$, thus $q_{\init} \in \winsure{event}(\fsum_T)$
and we can take $U = T$. 
Otherwise,
for all $i>0$ there exists $n_i \in \nat$ such that 
$\M^{\alpha}_{n_i}(T) \geq 1-2^{-i}$, and moreover $n_{i+1} > n_i$ for all $i>0$.
Let $s_i = \Supp(\M^{\alpha}_{n_i})$ be the support of~$\M^{\alpha}_{n_i}$.
Since the state space is finite, there is a set~$U$ that 
occurs infinitely often in the sequence~$s_0 s_1 \dots$,
thus for all $k>0$ there exists $m_k \in \nat$ such that 
$\M^{\alpha}_{m_k}(T) \geq 1-2^{-k}$ and 
$\M^{\alpha}_{m_k}(U) = 1$.
It follows that $\alpha$ is sure eventually synchronizing in~$U$ 
from $q_{\init}$, hence $q_{\init} \in \winsure{event}(\fsum_U)$.
Moreover, $\M$ with initial distribution $d_1 = \M^{\alpha}_{m_1}$
is limit-sure eventually synchronizing in $T$ with exact support in $U$.
Since $\Supp(d_1) = U = \Supp(d_U)$, it follows 
by Corollary~\ref{col:uniform-dist-limit} that 
$d_U \in \winlim{event}(\fsum_{T},U)$.

To establish the converse, note that since $d_U \in
\winlim{event}(\fsum_{T},U)$, it follows from
Corollary~\ref{col:uniform-dist-limit} that from all initial
distributions with support in $U$, for all $\epsilon > 0$ there exists
a strategy $\alpha_{\epsilon}$ and a position $n_{\epsilon}$ such that
$\M^{\alpha_{\epsilon}}_{n_{\epsilon}}(T) \geq 1-\epsilon$ and
$\M^{\alpha_{\epsilon}}_{n_{\epsilon}}(U) = 1$.  We construct an
almost-sure limit eventually synchronizing strategy $\alpha$ as
follows.  Since $q_{\init} \in \winsure{event}(\fsum_U)$, play according to
a sure eventually synchronizing strategy from $q_{\init}$ until all the
probability mass is in $U$.  Then for $i=1,2, \dots$ and $\epsilon_i =
2^{-i}$, repeat the following procedure: given the current probability distribution, 
select the corresponding strategy $\alpha_{\epsilon_i}$
and play according to $\alpha_{\epsilon_i}$ for $n_{\epsilon_i}$
steps, ensuring probability mass at least $1-2^{-i}$ in $T$, and since
after that the support of the probability mass is again in $U$,
play according to $\alpha_{\epsilon_{i+1}}$ for $n_{\epsilon_{i+1}}$
steps, etc.  This strategy $\alpha$ ensures that $\sup_{n\in\nat}
\M^{\alpha}_n(T)=1$ from $q_{\init}$, hence $q_{\init}\in \winas{event}(\fsum_T)$.
Note that $\alpha$ is a pure strategy.
\qed
\end{proof}\medskip


As we show in Section~\ref{sec:limit-sure} that the membership problem
for limit-sure eventually synchronizing with exact support can be solved
in PSPACE, it follows from the characterization in Lemma~\ref{lem: almost-limit-reduce-limit-event} 
that the membership problem for almost-sure eventually synchronizing
is in PSPACE, using the following (N)PSPACE algorithm: 
guess the set~$U$, and check that $q_{\init} \in \winsure{event}(\fsum_U)$, and that 
$d_U \in \winlim{event}(\fsum_{T},U)$
where $d_U$ is the uniform distribution over~$U$ (this can be done
in PSPACE by Theorem~\ref{theo:sure-eventually-pspace-c}
and Theorem~\ref{theo:limit-sure-eventually}).
We present a matching lower bound.
 

\begin{lemma}\label{lem:as-event-pspace-hard}
The membership problem for $\winas{event}(\fsum_T)$ is PSPACE-hard
even if $T$ is a singleton.
\end{lemma}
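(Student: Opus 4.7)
The plan is to prove PSPACE-hardness by a polynomial-time reduction from a PSPACE-complete problem on 1L-AFA --- either emptiness (PSPACE-complete by~\cite{Holzer95,AFA1}) or universal finiteness (Lemma~\ref{lem:universal-finiteness-pspace-hard}). The tight correspondence between 1L-AFA and MDPs of Section~\ref{sec:1L-AFA} makes such a reduction natural: the sure eventually synchronizing problem is already equivalent to 1L-AFA emptiness via Lemma~\ref{lem:sure-ss-pre}, so the challenge is to engineer an MDP whose distinction between sure and almost-sure winning is governed by the chosen combinatorial property of the 1L-AFA.

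Given a 1L-AFA $\A$ with initial state $q_0$, the construction will produce an MDP $\M$ with singleton target $T = \{q_*\}$ and distinguished initial state $q_{\init}$, combining the MDP $\M_{\A}$ associated with $\A$ (via the correspondence of Section~\ref{sec:1L-AFA}) with three auxiliary gadgets. First, a target-transition gadget sending accepting states of $\A$ to $q_*$ on a dedicated action. Second, a reset gadget from $q_*$ back to $q_{\init}$, inspired by the MDP of Figure~\ref{fig:inf-mem} where the reset mechanism is exactly what enables infinite-memory almost-sure strategies. Third, an accumulator gadget at $q_{\init}$ that distributes probability mass over time onto the relevant states of $\A$. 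The construction will be tuned so that almost-sure eventually synchronizing in $\{q_*\}$ holds iff the 1L-AFA satisfies the targeted property.

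The forward direction will be the easier one: assuming the 1L-AFA satisfies the required property, an infinite-memory almost-sure strategy can be exhibited which, at its $k$-th round, uses a sure-winning strategy along the lines of Lemma~\ref{lem:sure-ss-pre} to push a fraction of at least $1 - 2^{-k}$ of the mass onto $q_*$ at some time $n_k$, then lets the reset gadget return the residual mass to $q_{\init}$ and restarts with a smaller $\epsilon$, exactly as in the discussion preceding Lemma~\ref{lem:inf-mmeory-almost-event}.

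The main obstacle --- where the reduction will require the most care --- is the reverse direction: assuming the 1L-AFA fails the required property, no strategy in $\M$ is almost-sure eventually synchronizing in $q_*$. The difficulty is twofold. First, the accumulator gadget creates distributions whose support is spread across many ``temporal copies'' of the 1L-AFA's state space, so synchronizing them simultaneously at $q_*$ places strong combinatorial demands on $\LL(\A_{q_0})$; the proof must exhibit a positive lower bound on the probability mass perpetually trapped outside $q_*$. Second, strategies may be randomized and may interleave synchronization attempts across the reset gadget in intricate ways. The plan is to address these via a periodicity argument on the ultimately periodic sequence $\Pre_{\A}^n(\F)$ (of period bounded by $2^{\abs{Q}}$) together with a pumping argument in the style of the proof of Lemma~\ref{lem:inf-mmeory-almost-event}, concluding that any putative almost-sure strategy would force the missing 1L-AFA property to hold.
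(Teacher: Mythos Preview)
Your approach is genuinely different from the paper's and considerably more elaborate. The paper reduces directly from the \emph{sure} eventually synchronizing problem (already PSPACE-complete by Theorem~\ref{theo:sure-eventually-pspace-c}), not from a 1L-AFA problem. The construction is a two-line gadget: given $\M$ with target $\hat q$, add a fresh state $\hat p$ reachable only from $\hat q$ via a new action $\sharp$, and send every other state to an absorbing $\sink$ on $\sharp$; both $\hat p$ and $\sink$ are absorbing into $\sink$. The crux is that $\hat p$ is \emph{transient}: probability mass cannot accumulate there across time, so $\sup_n \N^{\alpha}_n(\hat p) = 1$ forces $\N^{\alpha}_n(\hat p) = 1$ for some $n$, which in turn forces $\N^{\alpha}_{n-1}(\hat q) = 1$. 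Hence almost-sure eventually synchronizing in $\hat p$ collapses to sure eventually synchronizing in $\hat q$, and the reverse direction is immediate. No reset, no accumulator, no periodicity analysis.

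Your plan goes the opposite way: the reset gadget $q_*\to q_{\init}$ deliberately makes the target recurrent so that almost-sure winning is achieved by an infinite-memory strategy rather than by collapsing to sure winning. This is essentially the scaffolding the paper uses for the \emph{limit-sure} lower bound (Lemma~\ref{lem:limit-event-pspace-hard}) and for almost-sure \emph{weakly} synchronizing (Lemma~\ref{lem:almost-weakly-pspace-hard}), where universal finiteness is the right source problem and the reverse direction genuinely needs the periodicity of $\Pre^n(\F)$. That machinery can be made to work here too (almost-sure weakly implies almost-sure eventually, and almost-sure eventually implies limit-sure eventually, so the same equivalence with $\Pre^n_{\M}(T)\neq\emptyset$ for all $n$ goes through), but your proposal leaves the hard direction at the level of ``periodicity plus pumping,'' and the interaction between your accumulator gadget and the reset is underspecified (e.g.\ whether the reset is only from $q_*$ or from all states, as in Figure~\ref{fig:lim-sure-reduction}, matters for correctness). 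The paper's transience trick sidesteps all of this; it is worth internalizing, since the same idea---forcing a synchronizing target to be visited at most once---recurs when separating winning modes.
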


\begin{proof}
We show the result by a reduction from the membership problem for sure eventually
synchronizing, which is PSPACE-complete by Theorem~\ref{theo:sure-eventually-pspace-c}.
Given an MDP $\M=\tuple{Q, \Act,\delta}$, an initial state $q_{\init} \in Q$,
and a state $\q \in Q$, we construct an MDP $\N=\tuple{Q',\Act',\delta'}$ with $Q \subseteq Q'$ 
and a state $\p \in Q'$ such that $q_{\init} \in \winsure{event}(\q)$ in $\M$
if and only if $q_{\init} \in \winas{event}(\p)$ in $\N$.
The MDP $\N$ is a copy
of $\M$ with two new states $\p$ and $\sink$ reachable only by a new action $\sharp$
(see \figurename~\ref{fig:almost-ss-reduction}).
Formally, $Q' = Q \cup \{\p, \sink\}$ and $\Act'= \Act \cup \{\sharp\}$,
and the transition function $\delta'$ is defined as follows, for all $q \in Q$
and $a \in \Act$:
$\delta'(q,a) = \delta(q,a)$, and 
$\delta'(q,\sharp)(\sink) = 1$ if $q \neq \q$, and $\delta'(\q,\sharp)(\p) = 1$;
finally, for all $a \in \Act'$, let $\delta'(\p,a)(\sink) = \delta'(\sink,a)(\sink) = 1$.

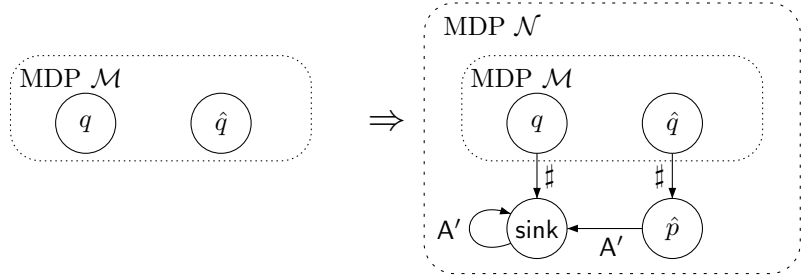
\begin{figure}[t]
\begin{center}
    \begin{picture}(105,36)(0,0)

\node[Nmarks=n, Nw=40, Nh=14, dash={0.2 0.5}0](m1)(20,22){}
\node[Nframe=n](label)(8,26){MDP $\M$}
\node[Nmarks=n](n1)(28,20){$\q$}
\node[Nmarks=n](n2)(10,20){$q$}
\node[Nframe=n](arrow)(50,20){{\Large $\Rightarrow$}}

\node[Nmarks=n, Nw=40, Nh=14, dash={0.2 0.5}0](nm1)(80,22){}
\node[Nmarks=n, Nw=50, Nh=36, dash={0.4 1}0](m2)(80,18){}
\node[Nframe=n](label)(64,33){MDP $\N$}

\node[Nframe=n](label)(68,26){MDP $\M$}
\node[Nmarks=n](n1)(88,20){$\q$}
\node[Nmarks=n](n2)(70,20){$q$}

\node[Nmarks=n](end)(70,6){$\sink$}
\node[Nmarks=n](qq)(88,6){$\p$} 

\drawloop[ELside=l,loopCW=y, loopangle=180, loopdiam=5](end){$\Act'$}

\drawedge[ELpos=50, ELside=l, curvedepth=0](n2,end){$\sharp$}
\drawedge[ELpos=50, ELside=r, curvedepth=0](n1,qq){$\sharp$}
\drawedge[ELpos=45](qq,end){$\Act'$}

\end{picture}
\end{center}
 \caption{Sketch of the reduction to show PSPACE-hardness of 
 the membership problem for almost-sure eventually synchronizing.}\label{fig:almost-ss-reduction}
\end{figure}

The goal is that $\N$ simulates $\M$ until the action $\sharp$ is played in $\q$
to move the probability mass from $\q$ to $\p$,
ensuring that if $\M$ is sure-winning for eventually synchronizing in $\q$, then
$\N$ is also sure-winning (and thus almost-sure winning) for eventually synchronizing in $\p$. Moreover,
the only way to be almost-sure eventually synchronizing in $\p$ is to have
probability~$1$ in $\p$ at some point, because the state $\p$ is transient
under all strategies, thus the probability mass cannot accumulate and tend to $1$
in $\p$ in the long run. Therefore (from all initial states
$q_{\init}$) $\M$ is sure-winning   for eventually synchronizing in $\q$ if and only if $\N$ is almost-sure winning
for eventually synchronizing in $\p$. It follows from this reduction that
the membership problem for almost-sure eventually synchronizing objective
is PSPACE-hard.
\qed
\end{proof}\medskip

\noindent The results of this section are summarized as follows.
 
\begin{theorem}\label{theo:almost-sure-eventually}
For almost-sure eventually synchronizing  in MDPs:

\begin{enumerate}
\item (Complexity). The membership problem is PSPACE-complete.

\item (Memory). Infinite memory is necessary in general for both pure 
and randomized strategies, and pure strategies are sufficient.
\end{enumerate}

\end{theorem}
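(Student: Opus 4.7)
The plan is to glue together the machinery that has already been prepared in this section. For the complexity claim, the PSPACE upper bound will follow from the characterization in Lemma~\ref{lem: almost-limit-reduce-limit-event}: I would describe the following non-deterministic polynomial-space procedure. Guess a subset $U \subseteq Q$, then verify the two conditions (i) $q_{\init} \in \winsure{event}(\fsum_U)$ and (ii) $d_U \in \winlim{event}(\fsum_T, U)$. Condition (i) is checkable in PSPACE by Theorem~\ref{theo:sure-eventually-pspace-c}, and condition (ii) will be checkable in PSPACE once Theorem~\ref{theo:limit-sure-eventually} is established in Section~\ref{sec:limit-sure}. Closure of PSPACE under non-determinism (Savitch) then gives a PSPACE algorithm. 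The matching PSPACE lower bound is Lemma~\ref{lem:as-event-pspace-hard}, which already yields hardness even for a singleton target $T$, so nothing more is needed for that direction.

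For the memory claim, the lower bound is supplied directly by Lemma~\ref{lem:inf-mmeory-almost-event}, which exhibits an almost-sure eventually synchronizing MDP in which every winning strategy — randomized or not — requires infinite memory. The positive part, that pure strategies suffice, is to be read off from the construction in the proof of Lemma~\ref{lem: almost-limit-reduce-limit-event}: the strategy $\alpha$ built there first plays a pure sure eventually synchronizing strategy into $U$, and then iteratively concatenates, for $\epsilon_i = 2^{-i}$, the strategies $\alpha_{\epsilon_i}$ obtained from the definition of limit-sure eventually synchronizing with exact support in~$U$. I would therefore note that each $\alpha_{\epsilon_i}$ can, without loss of generality, be taken to be pure (since the algorithmic construction underlying Theorem~\ref{theo:limit-sure-eventually} produces pure witnesses), and the overall concatenation of pure strategies, driven deterministically by the current epoch counter $i$ and the observed path, remains a pure (infinite-memory) strategy.

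The only non-routine obstacle is the dependency on Section~\ref{sec:limit-sure}: both the PSPACE upper bound and the purity of the epoch strategies $\alpha_{\epsilon_i}$ hinge on the algorithmic solution of limit-sure eventually synchronizing with exact support, i.e.\ on Theorem~\ref{theo:limit-sure-eventually}. So the proof of the present theorem is mostly a packaging argument, and I would keep it short: cite Lemma~\ref{lem: almost-limit-reduce-limit-event} for the characterization, Theorem~\ref{theo:sure-eventually-pspace-c} and (the forward reference) Theorem~\ref{theo:limit-sure-eventually} for the two PSPACE subroutines, Lemma~\ref{lem:as-event-pspace-hard} for PSPACE-hardness, and Lemma~\ref{lem:inf-mmeory-almost-event} together with the explicit construction in Lemma~\ref{lem: almost-limit-reduce-limit-event} for the memory bounds and purity.
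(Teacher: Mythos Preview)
Your proposal is correct and mirrors the paper's approach exactly: the theorem is a summary statement, and the paper derives the PSPACE upper bound from Lemma~\ref{lem: almost-limit-reduce-limit-event} together with Theorems~\ref{theo:sure-eventually-pspace-c} and~\ref{theo:limit-sure-eventually}, the lower bound from Lemma~\ref{lem:as-event-pspace-hard}, the infinite-memory lower bound from Lemma~\ref{lem:inf-mmeory-almost-event}, and purity from the construction in Lemma~\ref{lem: almost-limit-reduce-limit-event} (which explicitly notes that the assembled strategy $\alpha$ is pure). Your observation about the forward dependency on Section~\ref{sec:limit-sure} is also accurate and matches how the paper organizes the argument.
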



\subsection{Limit-sure eventually synchronizing}  \label{sec:limit-sure}

In this section, we present the algorithmic solution for 
limit-sure eventually synchronizing with exact support,
which requires to get probability arbitrarily close to~$1$ in 
a target set~$T$ while all the probability mass is contained in a given set $U$.
Note that the limit-sure eventually synchronizing objective is
a special case where the support is the state space of the MDP.
Consider the MDP in \figurename~\ref{fig:almost-limit-eventually-differ}
which is limit-sure eventually synchronizing in $\{q_2\}$, 
as shown in Lemma~\ref{lem:dif-in-def}. For $i=0,1,\dots$, the sequence $\Pre^i(T)$ 
of predecessors of $T = \{q_2\}$ is ultimately periodic: 
$\Pre^0(T) = \{q_2\}$, and $\Pre^i(T) = \{q_1\}$ for all $i \geq 1$. 
Given $\epsilon > 0$, a strategy to get probability $1-\epsilon$ in $q_2$
first accumulates probability mass in the \emph{periodic} subsequence of predecessors (here $\{ q_1 \}$),
and when the probability mass is greater than $1-\epsilon$ in $q_1$, the
strategy injects the probability mass in $q_2$ (through the aperiodic prefix of
the sequence of predecessors). This is the typical shape of a limit-sure eventually 
synchronizing strategy. Note that in this scenario, the MDP is also limit-sure eventually synchronizing 
in every set $\Pre^i(T)$ of the sequence of predecessors.
A special case is when it is possible to get probability~$1$
in the sequence of predecessors after finitely many steps. 
In this case, the probability mass injected in $T$ is $1$ and the MDP is even sure-winning. 
The algorithm for deciding limit-sure eventually synchronizing relies
on the above characterization, generalized in Lemma~\ref{lem:lse-pre} 
to limit-sure eventually synchronizing with exact support, saying that 
limit-sure eventually synchronizing in~$T$ with support in~$U$ 
is equivalent to either 
sure eventually synchronizing in~$T$ (and therefore also in~$U$), or 
limit-sure eventually synchronizing in $\Pre^k(T)$ with support in $\Pre^k(U)$ (for arbitrary~$k$).
The intuition of the proof is that if an MDP is limit-sure eventually synchronizing 
in~$T$ with support in~$U$, then either a bounded number of steps is sufficient
to get probability $1-\epsilon$ in $T$ (and then we argue that the MDP is sure
eventually synchronizing), or unbounded number of steps is required,
which means that $k$ steps before getting probability $1-\epsilon$ in $T$, 
the probability mass in $\Pre^k(T)$ must also be close to $1$ (and arbitrarily 
close to $1$ as $\epsilon$ tends to $0$).

\begin{samepage}
\begin{lemma} \label{lem:lse-pre}
For all $T \subseteq U$ and all $k\geq 0$, we have 
$$\winlim{event}(\fsum_T,U) = \winsure{event}(\fsum_T) \cup \winlim{event}(\fsum_R,Z)$$
where $R = \Pre^{k}(T)$ and $Z = \Pre^{k}(U)$.  
\end{lemma}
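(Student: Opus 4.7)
The plan is to prove the two inclusions $\supseteq$ and $\subseteq$ separately, where the $\subseteq$ direction will be the technical heart.

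For the easy inclusion $\supseteq$, I would argue the two summands separately. If $q_{\init} \in \winsure{event}(\fsum_T)$, then a sure-winning strategy yields $\M^{\alpha}_{n}(T)=1$ at some step $n$, and since $T\subseteq U$ this immediately gives $\M^{\alpha}_{n}(U)=1$, so $q_{\init}\in\winlim{event}(\fsum_T,U)$. For the other summand, observe that if $q\in \Pre^{k}(X)$ then by iterating the definition of $\Pre$ there is a pure $k$-step strategy $\sigma_q^X$ from $q$ whose support at step $k$ is contained in $X$. Starting from a limit-sure witnessing strategy for $(R,Z)$ that achieves $\M^{\alpha_{\epsilon}}_{n}(R)\geq 1-\epsilon$ with $\M^{\alpha_{\epsilon}}_{n}(Z)=1$, I would concatenate at step $n$ the appropriate pure strategy $\sigma_q^T$ for states $q\in R$ and $\sigma_q^U$ for states $q\in Z\setminus R$ (possible because general strategies may depend on the current state). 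At step $n+k$ this produces mass at least $1-\epsilon$ in $T$ and total mass $1$ in $U$.

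For the hard inclusion $\subseteq$, fix $d_0\in\winlim{event}(\fsum_T,U)$ with witnessing strategies $\alpha_{\epsilon}$ and step indices $n_{\epsilon}$. I would split on whether $\{n_{\epsilon}\mid \epsilon>0\}$ is bounded or not. In the bounded case, by pigeonhole some fixed $n\leq N$ occurs for a sequence $\epsilon_i\to 0$, so $\sup_i \M^{\alpha_{\epsilon_i}}_{n}(T)=1$. Since a finite-horizon reachability value in an MDP is maximized by one of the finitely many pure strategies of depth $n$, the supremum $1$ is attained by some strategy, proving $d_0\in\winsure{event}(\fsum_T)$. In the unbounded case, for every $k$ I can select $\alpha_{\epsilon}$ with $n_{\epsilon}>k$, so the question is to compare the distribution at step $m=n_{\epsilon}-k$ to the sets $R$ and $Z$.

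The main obstacle is this last quantitative step. My plan is to use the operator $\mu_k(q)=\sup_{\alpha}\Pr^{\alpha}_{q}(\Diamond^{k}\, T)$ and establish by induction on $k$ (using the argument of Lemma~\ref{lem:sure-ss-pre} applied pointwise) that $\mu_k(q)=1$ iff $q\in\Pre^{k}(T)=R$; the analogous statement holds for $U$ and $Z$. By finiteness of $Q$ there exists $\delta>0$ with $\mu_k(q)\leq 1-\delta$ for every $q\notin R$. For each history $\rho$ of length $m$ ending at $q$, the conditional probability of reaching $T$ at step $m+k$ under $\alpha_{\epsilon}$ is bounded by $\mu_k(q)$, so
\[
1-\epsilon \;\leq\; \M^{\alpha_{\epsilon}}_{m+k}(T) \;\leq\; \sum_{q\in Q} \M^{\alpha_{\epsilon}}_{m}(q)\,\mu_k(q) \;\leq\; 1-\delta\bigl(1-\M^{\alpha_{\epsilon}}_{m}(R)\bigr),
\]
giving $\M^{\alpha_{\epsilon}}_{m}(R)\geq 1-\epsilon/\delta$. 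Moreover, since from each state $q$ in the support of $\M^{\alpha_{\epsilon}}_m$ the continuation of $\alpha_{\epsilon}$ certifies a $k$-step strategy whose support lies in $U$, we have $q\in\Pre^{k}(U)=Z$, hence $\M^{\alpha_{\epsilon}}_{m}(Z)=1$. Letting $\epsilon\to 0$ yields $d_0\in\winlim{event}(\fsum_R,Z)$, closing the argument.
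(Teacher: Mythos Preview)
Your proof is correct and follows essentially the same approach as the paper's: the same $\supseteq$ argument via concatenation of $k$-step sure strategies, and the same $\subseteq$ case split on boundedness of the witnessing horizons. The only cosmetic difference is in the unbounded case, where the paper uses the explicit lower bound $\eta^k$ (with $\eta$ the smallest positive transition probability) for the probability of escaping $T$ from any $q\notin R$, whereas you phrase it via $\delta=1-\max_{q\notin R}\mu_k(q)$; since $\delta\geq\eta^k$ these are equivalent, and your support argument for $Z$ is exactly the paper's ``analogous to Lemma~\ref{lem:sure-ss-pre}'' remark made explicit.
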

\end{samepage}

\begin{proof}
We establish the equality in the lemma by showing inclusions in the two directions. 
First we show that 
$$\winsure{event}(\fsum_T) \cup \winlim{event}(\fsum_R,Z) \subseteq \winlim{event}(\fsum_T,U).$$
Since $T\subseteq U$, it follows from the definitions that 
$\winsure{event}(\fsum_T) \subseteq \winlim{event}(\fsum_T,U)$;
to show that $\winlim{event}(\fsum_R, Z) \subseteq \winlim{event}(\fsum_T, U)$
in an MDP $\M$,
let $\epsilon > 0 $ and
consider an initial distribution $d_0$ and a strategy $\alpha$ such that for 
some $i \geq 0$ we have $\M^{\alpha}_{i}(R)\geq 1-\epsilon$ and $\M^{\alpha}_i(Z) = 1$.
We construct a strategy~$\beta$ that plays like~$\alpha$ for the first~$i$
steps, and then since $R = \Pre^{k}(T)$ and $Z = \Pre^{k}(U)$ plays
from states in~$R$ according to a sure eventually synchronizing strategy with target~$T$,
and from states in $Z \setminus R$ according to a sure eventually synchronizing strategy with target~$U$
(such strategies exist by Lemma~\ref{lem:sure-ss-pre} since $R = \Pre^{k}(T)$).
The strategy $\beta$ ensures from $d_0$ that $\M^{\beta}_{i+k}(T) \geq 1-\epsilon$ 
and $\M^{\beta}_{i+k}(U) = 1$, showing that $\M$ is limit-sure eventually synchronizing 
in~$T$ with support in~$U$. 

Second we show the converse inclusion, namely that 
$$\winlim{event}(\fsum_T,U) \subseteq \winsure{event}(\fsum_T) \cup \winlim{event}(\fsum_R,Z).$$
Consider an initial distribution $d_0 \in \winlim{event}(\fsum_T,U)$ in the MDP $\M$
and for $\epsilon_i = \frac{1}{i}$ ($i \in \nat$) let $\alpha_i$ be a strategy 
and $n_i \in \nat$ such that $\M^{\alpha_i}_{n_i}(T)\geq 1-\epsilon_i$
 and $\M^{\alpha_i}_{n_i}(U) = 1$.
We consider two cases. 

\begin{itemize}
\item[$(a)$] If the set $\{n_i \mid i \geq 0\}$ is bounded,
then there exists a number $n$ that occurs \emph{infinitely often} in the sequence
$(n_i)_{i \in \nat}$. It follows that for all $i \geq 0$, there exists
a strategy $\beta_i$ such that $\M^{\beta_i}_{n}(T) \geq 1-\epsilon_i$ and $\M^{\beta_i}_{n}(U) = 1$.
Since $n$ is fixed, we can assume w.l.o.g. that the strategies $\beta_i$
are pure, and since there is a finite number of pure strategies
over paths of length at most $n$, it follows that there is a strategy $\beta$
that occurs infinitely often among the strategies $\beta_i$ and
such that for all $\epsilon > 0$ we have 
$\M^{\beta}_{n}(T) \geq 1-\epsilon$, hence $\M^{\beta}_{n}(T) = 1$,
showing that $\M$ is sure winning for eventually synchronizing in $T$, 
that is $d_0 \in \winsure{event}(\fsum_T)$.
\item[$(b)$] otherwise, the set $\{n_i \mid i \geq 0\}$ is unbounded
and we can assume w.l.o.g. that $n_i \geq k$ for all $i \geq 0$.
We claim that the family of strategies $\alpha_i$ ensures 
limit-sure eventually synchronizing in $R = \Pre^{k}(T)$ with 
support in $Z = \Pre^{k}(U)$. Essentially this is because if the probability
in $T$ is close to $1$ after $n_i$ steps, then $k$ steps before 
the probability in $\Pre^{k}(T)$ must be close to $1$ as well.
Formally, we show that $\alpha_i$ is such that 
$\M^{\alpha_i}_{n_i - k}(R)\geq 1- \frac{\epsilon_i}{\eta^k}$ and $\M^{\alpha_i}_{n_i - k}(Z) = 1$
where $\eta$ is the smallest positive probability in the transitions of $\M$.
Towards contradiction, assume that $\M^{\alpha_i}_{n_i - k}(R)< 1- \frac{\epsilon_i}{\eta^k}$.
Then $\M^{\alpha_i}_{n_i - k}(Q \setminus R) > \frac{\epsilon_i}{\eta^k}$ and from every
state $q \in Q \setminus R$, no matter which sequence of actions is played
by $\alpha_i$ for the next $k$ steps, there is a path from $q$ to a state
outside $T$ (by Lemma~\ref{lem:sure-ss-pre} since $R = \Pre^{k}(T)$), thus with probability at least $\eta^k$. 
Hence, the probability in $Q \setminus T$ after $n_i$ steps is greater than 
$\frac{\epsilon_i}{\eta^k} \cdot \eta^k$, and therefore 
$\M^{\alpha_i}_{n_i}(T)< 1-\epsilon_i$, in contradiction with the definition of $\alpha_i$.
This shows that $\M^{\alpha_i}_{n_i - k}(R)\geq 1- \frac{\epsilon_i}{\eta^k}$,
and an argument analogous to the proof of Lemma~\ref{lem:sure-ss-pre}
shows that $\M^{\alpha_i}_{n_i - k}(Z) = 1$.
It follows that $d_0 \in \winlim{event}(\fsum_R,Z)$ and the proof is complete.~\qed 
\end{itemize}
\end{proof}\medskip

Thanks to Lemma~\ref{lem:lse-pre}, since sure-winning is already
solved in Section~\ref{sec:sure-event-sync}, it suffices to solve the
limit-sure eventually synchronizing problem for target $R = \Pre^k(T)$
and support $Z = \Pre^k(U)$ with arbitrary $k$, instead of $T$ and
$U$. We choose $k$ such that both $\Pre^k(T)$ and $\Pre^k(U)$ lie in
the periodic part of the sequence of pairs of predecessors
$(\Pre^i(T),\Pre^i(U))$.  

Note that $\Pre^i(T) \subseteq \Pre^i(U) \subseteq Q$ for all $i \geq 0$,
and there are at most $3^{\abs{Q}}$ different pairs $(A,B)$ with $A \subseteq B \subseteq Q$
(each state $q \in Q$ belongs either to $A$, or to $B \setminus A$,
or to $Q \setminus B$). Hence, we can assume that $k \leq 3^{\abs{Q}}$.

For such value of $k$ the limit-sure problem is conceptually simpler:
once some probability is injected in $R = \Pre^k(T)$, it can loop
through the sequence of predecessors and visit $R$ infinitely often
(every $r$ steps, where $r \leq 3^{\abs{Q}}$ is the period of the
sequence of pairs of predecessors).
It follows that if a strategy ensures with probability $1$
that the set $R$ can be reached by finite paths whose lengths are 
congruent modulo $r$, then the whole probability mass can 
indeed synchronously accumulate in $R$ in the limit.

Therefore, limit-sure eventually synchronizing in $R$ 
reduces to standard limit-sure reachability (in the state-based semantics) with target set $R$ 
and the additional requirement that the numbers of steps at which the target set is reached 
be congruent modulo $r$.
In the case of limit-sure eventually synchronizing with support in~$Z$, we also need to 
ensure that no mass of probability leaves the sequence $\Pre^i(Z)$. 
In a state $q \in \Pre^i(Z)$, we say that an action $a \in \Act$ is \emph{$Z$-safe}
at position $i$ if\footnote{Since $\Pre^r(Z) = Z$ and $\Pre^r(R) = R$, 
we assume a modular arithmetic for exponents of $\Pre$, that is $\Pre^x(\cdot)$ 
is defined as $\Pre^{x \!\!\mod r}(\cdot)$. For example $\Pre^{-1}(Z)$ is $\Pre^{r-1}(Z)$.}
$\post(q,a)\subseteq \Pre^{i-1}(Z)$. In states 
$q \not\in \Pre^i(Z)$ there is no $Z$-safe action at position~$i$.

To encode the above requirements, we construct an MDP $\M_Z \times[r]$ 
that allows only $Z$-safe actions to be played (and then mimics the original
MDP), and tracks the position (modulo $r$) in the sequence of predecessors,
thus simply decrementing the position on each transition since all successors
of a state $q \in \Pre^i(Z)$ on a safe action are in $\Pre^{i-1}(Z)$.

Formally, if $\M = \tuple{Q, \Act, \delta}$ then 
$\M_Z \times[r] = \tuple{Q', \Act, \delta'}$ where 
\begin{itemize}
\item $Q' = Q \times \{r-1, \dots, 1, 0\} \cup \{\sink\}$; 
a state $\tuple{q,i}$ consisting of a state $q$ of $\M$ and 
a \emph{position}~$i$ in the predecessor sequence corresponds
to the promise that $q \in \Pre^i(Z)$;

\item $\delta'$ is defined as follows for all $\tuple{q,i} \in Q'$ and $a \in \Act$ 
      (assuming an arithmetic modulo $r$ on positions):
      if $a$ is a $Z$-safe action in $q$ at position $i$, 
      then $$\delta'(\tuple{q,i},a)(\tuple{q',i-1}) = \delta(q,a)(q'),$$
      otherwise $\delta'(\tuple{q,i},a)(\sink) = 1$ 
      (and $\sink$ is absorbing).
\end{itemize}

Note that the size of the MDP $\M_Z \times[r]$ is exponential in the size of $\M$
(since $r$ is at most $3^{\abs{Q}}$).

\begin{lemma}\label{lem:lssr-assr}
Let $\M$ be an MDP and $R \subseteq Z$ be two sets of states such that   
$\Pre^{r}(R)=R$ and $\Pre^{r}(Z)=Z$ where~$r>0$. 
Then a state $q_{\init}$ is limit-sure eventually synchronizing in $R$ with support in $Z$ 
($q_{\init} \in \winlim{event}(\fsum_R, Z)$) if and only if  
there exists $0 \leq t < r$ such that $\tuple{q_{\init},t}$ 
is limit-sure winning for the reachability objective 
$\Diamond (R \times \{0\})$ in the MDP $\M_Z \times [r]$.
\end{lemma}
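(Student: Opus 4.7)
The plan is to use the product MDP $\M_Z\times[r]$ as an exact encoding of the following invariant: starting from $\tuple{q_{\init},t}$ and playing only $Z$-safe actions, after $j$ steps the state of $\M_Z\times[r]$ is of the form $\tuple{q,(t-j)\bmod r}$ with $q\in\Pre^{(t-j)\bmod r}(Z)$. Since every path that avoids $\sink$ projects to a path in $\M$ of the same length, reaching $R\times\{0\}$ at time $n$ corresponds exactly to being in $R$ at some time $n\equiv t\pmod r$ while the intermediate support is trapped in the appropriate predecessor sequence of $Z$. Both directions will exploit this correspondence.

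\smallskip

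\noindent\emph{Forward direction.} Assume $q_{\init}\in\winlim{event}(\fsum_R,Z)$, and for each $\epsilon>0$ fix a witness strategy $\alpha_\epsilon$ and time $n_\epsilon$ with $\M^{\alpha_\epsilon}_{n_\epsilon}(R)\geq 1-\epsilon$ and $\M^{\alpha_\epsilon}_{n_\epsilon}(Z)=1$. I would first show by backward induction on $j\leq n_\epsilon$ that the support of $\M^{\alpha_\epsilon}_j$ is contained in $\Pre^{n_\epsilon-j}(Z)$: whenever $q$ is in the support at time $j$ and $a$ is played by $\alpha_\epsilon$ with positive probability, $\post(q,a)$ lies in the support at time $j+1$ and hence in $\Pre^{n_\epsilon-j-1}(Z)$, so $a$ is $Z$-safe at position $n_\epsilon-j$. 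A pigeonhole argument over the finite set $\{0,\ldots,r-1\}$ then yields a residue $t$ such that $n_{\epsilon_i}\equiv t\pmod r$ for some subsequence $\epsilon_i\to 0$. Each $\alpha_{\epsilon_i}$ lifts to a strategy on $\M_Z\times[r]$ starting from $\tuple{q_{\init},t}$ that plays, on positive-probability histories, only $Z$-safe actions, so $\sink$ is avoided with probability one; the lift then reaches $R\times\{0\}$ at time $n_{\epsilon_i}$ with probability $\geq 1-\epsilon_i$, establishing limit-sure reachability.

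\smallskip

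\noindent\emph{Converse direction.} Assume $\tuple{q_{\init},t}$ is limit-sure winning for $\Diamond(R\times\{0\})$. Because almost-sure and limit-sure reachability coincide in MDPs, fix an almost-sure reaching strategy $\beta$; the standard geometric tail bound on hitting times in finite-state MDPs supplies, for each $\epsilon>0$, a bound $N_\epsilon$ such that $R\times\{0\}$ is hit within $N_\epsilon$ steps with probability at least $1-\epsilon$. By construction of $\M_Z\times[r]$ the position strictly decreases modulo $r$ at every step, so every hitting time is $\equiv t\pmod r$. I pick $T=t+Kr\geq N_\epsilon$ and modify $\beta$ so that, once a run first enters $R\times\{0\}$ at some time $t+kr$, it subsequently plays a fixed $Z$-safe looping strategy routing $R$ through $\Pre^{r-1}(R),\ldots,\Pre^1(R)$ back to $R$ in exactly $r$ steps; such a strategy exists because $R=\Pre^r(R)$, and each action used is $Z$-safe by monotonicity of $\Pre$ and the inclusion $R\subseteq Z$. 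Projecting the result back to $\M$ (the position is a deterministic function of the step count) yields a strategy under which, at time $T$, the mass in $R$ is at least $1-\epsilon$ while the support lies in $\Pre^0(Z)=Z$ since $T\equiv t\pmod r$; hence $q_{\init}\in\winlim{event}(\fsum_R,Z)$. The delicate point, and the main obstacle, is precisely this synchronization step: limit-sure reachability distributes the reaching probability across many different times, whereas eventually synchronizing insists on a single global time $T$, and it is the periodicity $\Pre^r(R)=R$ that both forces all hitting times to share a residue modulo $r$ and supplies the looping mechanism enabling early arrivals to wait inside $R$ until the common deadline~$T$.
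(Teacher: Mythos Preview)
Your proof is correct and follows essentially the same approach as the paper: both directions exploit the bijection between paths in $\M$ and paths in $\M_Z\times[r]$ avoiding $\sink$, with the converse direction passing through almost-sure reachability and a switching strategy that loops through the $\Pre^i(R)$ sequence once the target is hit. Your forward direction is in fact slightly more careful than the paper's, which leaves the pigeonhole argument over residues $t\in\{0,\dots,r-1\}$ implicit; your converse strategy switches upon hitting $R\times\{0\}$ whereas the paper switches as soon as any $\Pre^{k_n}(R)$ is entered, but both variants work for the same reason.
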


\begin{proof}
For the first direction of the lemma, assume
that $q_{\init}$ is limit-sure eventually synchronizing in $R$ with support in
$Z$, and for $\epsilon >0$ let $\beta$ be a strategy such that
$\M^{\beta}_{k}(Z)=1$ and $\M^{\beta}_k(R) \geq 1-\epsilon$ for some
number $k$ of steps.  Let $0 \leq t \leq r$ such that $t=k \mod r$.
Let $R_0 = R \times \{0\}$.
We show that from initial state $(q_{\init}, t)$ the strategy $\alpha$ in $\M_Z \times [r]$ that
mimics (copies) the strategy $\beta$ is limit-sure winning for the
reachability objective $\Diamond R_0$: it follows from
Lemma~\ref{lem:sure-ss-pre} that $\alpha$ plays only $Z$-safe actions,
and since $Pr^{\alpha}(\Diamond R_0)\geq Pr^{\alpha}(\Diamond^k R_0) =
\M^{\beta}_k(R)\geq 1-\epsilon$, the result follows. 

For the converse direction, assuming
that there exists $0 \leq t < r$ such that $\tuple{q_{\init},t}$ is
limit-sure winning for the reachability objective $\Diamond R_0$ in
$\M_Z\times[r]$, we show that $q_{\init}$ is limit-sure synchronizing in
target set $R$ with exact support in $Z$.  
Since the winning region of limit-sure and almost-sure reachability coincide 
for MDPs~\cite{AHK07}, there exists a (pure) strategy $\alpha$ in 
$\M_Z \times [r]$ with initial state $\tuple{q_{\init},t}$ such that
$\Pr^{\alpha}(\Diamond R_0) = 1$.

Given $\epsilon>0$, we construct from $\alpha$ a pure strategy $\beta$ in $\M$ 
that is $(1-\epsilon)$-synchronizing in $R$ with support in~$Z$. 
Given a finite path $\rho = q_0 a_0 q_1 a_1 \dots q_n$ in $\M$ (with $q_0 = q_{\init}$), there 
is a corresponding path $\rho' = \tuple{q_0,k_0} a_0 \tuple{q_1,k_1} a_1 \dots \tuple{q_n, k_n}$ in 
$\M_Z \times [r]$ where $k_0 = t$ and $k_{i+1} = k_i - 1$ for all $i \geq 0$.
Since the sequence $k_0, k_1, \dots$ is uniquely determined from $\rho$, 
there is a clear bijection between the paths in $\M$ starting in $q_{\init}$
and the paths in $\M_Z \times [r]$ starting in $\tuple{q_{\init},t}$.
In the sequel, we freely omit to apply and mention this bijection.
Define the strategy $\beta$ as follows:
if $q_n \in \Pre^{k_n}(R)$, then there exists an action $a$ such that 
$\post(q_n,a) \subseteq \Pre^{k_n-1}(R)$ and we define $\beta(\rho) = a$, 
otherwise let $\beta(\rho) = \alpha(\rho')$.
Thus $\beta$ mimics $\alpha$ (thus playing only $Z$-safe actions) unless a 
state $q$ is reached at step $n$ such that $q \in \Pre^{t-n}(R)$, 
and then $\beta$ switches to always playing actions
that are $R$-safe (and thus also $Z$-safe since $R \subseteq Z$).
We now prove that $\beta$ is limit-sure eventually synchronizing in
target set $R$ with support in $Z$. First since $\beta$ plays only
$Z$-safe actions, it follows for all $k$ such that $t-k = 0$ (modulo $r$), all
states reached from $q_{\init}$ with positive probability after $k$ steps are in $Z$. 
Hence, $\M^{\beta}_{k}(Z)=1$ for all such $k$. 
Second, we show that given $\epsilon>0$ there exists $k$ such that
$t-k = 0$ and $\M^{\beta}_{k}(R) \geq 1-\epsilon$, thus also
$\M^{\beta}_{k}(Z)=1$ and $\beta$ is limit-sure eventually
synchronizing in target set $R$ with support in~$Z$.  To show this,
recall that $\Pr^{\alpha}(\Diamond R_0) = 1$, and therefore
$\Pr^{\alpha}(\Diamond^{\leq k} R_0) \geq 1-\epsilon$ for all
sufficiently large $k$. Without loss of generality, consider such a
$k$ satisfying $t-k = 0$ (modulo $r$). For $i = 1, \dots, r-1$, let
$R_i = \Pre^i(R) \times \{i\}$.  Then trivially
$\Pr^{\alpha}(\Diamond^{\leq k} \bigcup_{i=0}^{r} R_i) \geq
1-\epsilon$ and since $\beta$ agrees with $\alpha$ on all finite paths
that do not (yet) visit $\bigcup_{i=0}^{r} R_i$, given a path $\rho$
that visits $\bigcup_{i=0}^{r} R_i$ (for the first time), only
$R$-safe actions will be played by $\beta$ and thus all continuations
of $\rho$ in the outcome of $\beta$ will visit $R$ after $k$ steps (in
total).  It follows that $\Pr^{\beta}(\Diamond^{k} R_0) \geq
1-\epsilon$, that is $\M^{\beta}_k(R) \geq 1-\epsilon$. Note that we
used the same pure strategy $\beta$ for all $\epsilon >0$ and thus $\beta$
is also almost-sure eventually synchronizing in $R$.
\qed
\end{proof}\medskip

From the proof of Lemma~\ref{lem:lssr-assr} (last sentence), it follows that if the MDP $\M$
is limit-sure eventually synchronizing in $R$ with support in $Z$,
then $\M$ is also almost-sure eventually synchronizing in $R$. 
Since almost-sure synchronization implies limit-sure synchronization by
definition, the two notions coincide in this case.

\begin{corollary}\label{cor:as-same-as-ls-for-R}
Given $R \subseteq Z$ two sets of states in an MDP such that   
$\Pre^{r}(R)=R$ and $\Pre^{r}(Z)=Z$ where~$r>0$, we have
$\winlim{event}(\fsum_R, Z) = \winas{event}(\fsum_R)$.
\end{corollary}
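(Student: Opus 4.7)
The plan is to prove the corollary via the two inclusions, the first of which carries the essential content and the second of which is a short observation.

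For the essential inclusion $\winlim{event}(\fsum_R, Z) \subseteq \winas{event}(\fsum_R)$, I would extract the argument from the proof of Lemma~\ref{lem:lssr-assr}. Given $q_{\init} \in \winlim{event}(\fsum_R, Z)$, that lemma yields some $t \in \{0,\dots,r-1\}$ such that $\tuple{q_{\init},t}$ is limit-sure winning for the reachability event $\Diamond(R \times \{0\})$ in the product MDP $\M_Z \times [r]$. Since limit-sure and almost-sure reachability coincide on MDPs, $\tuple{q_{\init},t}$ is in fact almost-sure winning, giving a pure strategy $\alpha$ with $\Pr^{\alpha}(\Diamond(R \times \{0\})) = 1$. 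From this $\alpha$ I would construct the strategy $\beta$ in $\M$ exactly as in the proof of Lemma~\ref{lem:lssr-assr}: $\beta$ mimics $\alpha$ (playing only $Z$-safe actions) until the mass reaches the appropriate $R$-predecessor, then switches to $R$-safe actions. The central point is that $\beta$ is defined once and for all, independently of $\epsilon$; the proof establishes that for every $\epsilon > 0$ there exists a step $k$ (a multiple of $r$) simultaneously satisfying $\M^{\beta}_k(R) \geq 1-\epsilon$ and $\M^{\beta}_k(Z) = 1$. Letting $\epsilon \to 0$ yields $\sup_n \M^{\beta}_n(R) = 1$, which witnesses $q_{\init} \in \winas{event}(\fsum_R)$.

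For the reverse inclusion $\winas{event}(\fsum_R) \subseteq \winlim{event}(\fsum_R, Z)$, I would invoke Remark~\ref{rem:expressiveness}: almost-sure eventually synchronizing in $R$ is by definition stronger than limit-sure eventually synchronizing in $R$. The only point that needs care is the support constraint in $Z$. Here I would exploit the hypotheses $R \subseteq Z$ and $\Pre^{r}(Z) = Z$: at each synchronization time $n_\epsilon$ where an almost-sure strategy concentrates mass at least $1-\epsilon$ in $R$, the residual $\epsilon$ of mass can be steered into $Z$ by an initial sure phase into $Z$ (available thanks to periodicity of the predecessor sequence) followed by $Z$-safe play, without damaging the $(1-\epsilon)$-concentration on $R$.

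The main obstacle is the uniform-in-$\epsilon$ argument in the first inclusion: one must check that the pure strategy $\beta$ built from the almost-sure reachability strategy $\alpha$ in $\M_Z \times [r]$ serves as a witness for all $\epsilon > 0$ simultaneously, rather than only for a single $\epsilon$. This uniformity is exactly the observation made at the end of the proof of Lemma~\ref{lem:lssr-assr}, and it is what forces almost-sure and limit-sure eventually synchronizing in $R$ with support in $Z$ to coincide under the periodicity assumption $\Pre^{r}(R) = R$, $\Pre^{r}(Z) = Z$.
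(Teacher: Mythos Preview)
Your argument for the inclusion $\winlim{event}(\fsum_R, Z) \subseteq \winas{event}(\fsum_R)$ is exactly the paper's: it is the observation, recorded in the last sentence of the proof of Lemma~\ref{lem:lssr-assr}, that the single pure strategy $\beta$ built there works for every $\epsilon > 0$ simultaneously.

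For the reverse inclusion, your proposed fix does not go through, and in fact $\winas{event}(\fsum_R) \subseteq \winlim{event}(\fsum_R, Z)$ fails under the stated hypotheses. Take $Q = \{q_0, q_1\}$ with a single action, $q_0$ going to each of $q_0,q_1$ with probability $\tfrac12$, and $q_1$ absorbing; set $R = Z = \{q_1\}$. Then $\Pre(R) = R$, so the periodicity hypothesis holds with $r = 1$. Now $q_0 \in \winas{event}(\fsum_R)$ since the mass in $q_1$ tends to $1$, yet $q_0 \notin \winlim{event}(\fsum_R, Z)$ because the requirement $\M^\alpha_n(Z) = 1$ would force all the mass into $q_1$ at some finite step, which is impossible from $q_0$. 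Your idea of ``steering the residual $\epsilon$ of mass into $Z$ by an initial sure phase'' breaks exactly here: the residual mass sits in $q_0 \notin \Pre^i(Z)$ for any $i$, so no such sure phase is available. The paper's own sentence for this direction (``almost-sure implies limit-sure by definition'') literally yields only $\winas{event}(\fsum_R) \subseteq \winlim{event}(\fsum_R)$, not the version with support in $Z$; fortunately only the forward inclusion is used downstream (in the proof of Theorem~\ref{theo:weakly-ls-is-as}), and there $Z = Q$, for which the support constraint is vacuous and the equality genuinely holds.
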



Since deciding limit-sure reachability is PTIME-complete, it follows from
Lemma~\ref{lem:lssr-assr} that limit-sure eventually synchronizing (with exact support)
can be decided in EXPTIME. 

We show in Lemma~\ref{lem:pspace-mi} that the problem can be solved in PSPACE
by exploiting the special structure of the exponential MDP used in Lemma~\ref{lem:lssr-assr}.
We conclude this section by Lemma~\ref{lem:limit-event-pspace-hard} showing that limit-sure eventually synchronizing 
with exact support is PSPACE-complete (even in the special case where the 
support is the whole state space).

\begin{lemma}\label{lem:pspace-mi}
The membership problem for limit-sure eventually synchronizing with exact 
support is in PSPACE.
\end{lemma}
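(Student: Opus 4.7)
The plan is to leverage Lemma~\ref{lem:lssr-assr} together with Corollary~\ref{cor:as-same-as-ls-for-R}: deciding limit-sure eventually synchronizing in $R$ with support in $Z$ from $q_{\init}$ is equivalent to deciding, for some $t\in\{0,\ldots,r-1\}$, that the state $(q_{\init},t)$ is almost-sure (equivalently, limit-sure) winning for the reachability objective $\Diamond(R\times\{0\})$ in $\M_Z\times[r]$. Although this product MDP is of exponential size (since $r$ may be as large as $3^{\abs{Q}}$), every state $(q,i)$ admits a polynomial-size binary encoding. The algorithm therefore guesses $t$ (a polynomial number of bits) and then certifies almost-sure reachability in the implicit exponential MDP $\M_Z\times[r]$ within polynomial space.

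First I observe that plain graph reachability in $\M_Z\times[r]$ is in PSPACE: a witnessing path from $(q_{\init},t)$ to $R\times\{0\}$ has length at most $\abs{Q}\cdot r$, which has polynomial bit-length, and such a path can be guessed state-by-state while storing only the current state and a step counter, yielding an NPSPACE procedure that collapses to PSPACE by Savitch. The standard algorithm for almost-sure reachability in MDPs consists of a greatest fixpoint $W$ that alternates \emph{safety} (from every state of $W$ some action has all successors in $W$) with \emph{positive reachability within $W$} (from every state of $W$, the target is reachable using $W$-safe actions). Implemented naively on $\M_Z\times[r]$, this fixpoint may need to remove exponentially many states and store a set of exponential size, hence the naive bound is EXPTIME. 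To bring it down to PSPACE, I express membership ``$(q,i)\in W$'' as a property decidable by an alternating polynomial-time algorithm: existentially guess an action $a$ that is $Z$-safe at phase $i$, universally pick a successor $q'$ and recursively certify $(q',i-1)\in W$, and interleave this with an existential check of a witness $W$-safe path from $(q,i)$ to the target that can be guessed in polynomial space. Since $\text{APTIME}=\text{PSPACE}$, this yields the desired bound.

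The hard part is keeping the recursive certification well-founded and of polynomial depth in the presence of an exponential state space: the fixpoint defining $W$ could a priori require unwinding through exponentially many refinements. I would address this by exploiting the structural properties of $\M_Z\times[r]$ already used in Lemma~\ref{lem:lssr-assr}, namely $\Pre^r(R)=R$ and $\Pre^r(Z)=Z$, which imply that the sequence $(\Pre^k(R),\Pre^k(Z))_{k\ge 0}$ is purely periodic and that the ``phase'' coordinate evolves deterministically. Consequently, the relevant attractor stabilises after refinements indexed by subsets of $Q$ rather than by states of $\M_Z\times[r]$, so the quantifier alternation can be organised around objects of polynomial size, giving a PSPACE algorithm and completing the proof.
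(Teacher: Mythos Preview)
Your reduction via Lemma~\ref{lem:lssr-assr} to almost-sure reachability in $\M_Z\times[r]$ is the right starting point, and you correctly identify the central difficulty: $\M_Z\times[r]$ has exponentially many states, so the standard attractor computation is too expensive. However, the resolution you sketch does not actually close the gap.

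Your proposed APTIME certification of ``$(q,i)\in W$'' unfolds the closure condition by recursing into $(q',i-1)$; the recursion depth is the number of phases visited, which is $\Theta(r)$ and hence exponential, so this is not alternating polynomial time. You acknowledge this and propose to fix it by arguing that ``the relevant attractor stabilises after refinements indexed by subsets of $Q$,'' but this claim is neither proved nor made into a concrete algorithm. Note that the winning region $W$ in $\M_Z\times[r]$ is described by $r$ slices $W_0,\dots,W_{r-1}\subseteq Q$, so describing $W$ explicitly already takes $r\cdot\abs{Q}$ bits, which is exponential; you cannot store it, and you do not explain what polynomial-size object the alternation should be organised around instead. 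A further omission is the disjunct $\winsure{event}(\fsum_T)$ coming from Lemma~\ref{lem:lse-pre}: you jump directly to periodic $(R,Z)$ without handling the case where the limit-sure witness is in fact sure.

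The paper's argument avoids alternation entirely and gives a direct NPSPACE procedure that exploits the layered structure of $\M_Z\times[r]$: since every transition decreases the phase by one, $W$ decomposes into slices $W_i\subseteq Q$. The algorithm guesses $W_0$, then generates $W_1,W_2,\ldots$ on the fly, storing only $W_0$, the current slice $W_i$, a companion set $P_i$ tracking states that still need a path to $R\times\{0\}$, and the index $i$ in binary. After $\abs{Q}\cdot r$ steps it checks $W_{\abs{Q}\cdot r}\subseteq W_0$ (closure), $P_{\abs{Q}\cdot r}=\emptyset$ (every state reaches the target), and that $q_{\init}$ occurred in some slice. Each stored object is polynomial in $\abs{Q}$, and the number of iterations has polynomial bit-length, so this is NPSPACE and hence PSPACE by Savitch. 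The key idea you are missing is precisely this slice-by-slice guess-and-check: it replaces storing $W$ by streaming through it one phase at a time.
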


\begin{proof}
We present a (nondeterministic) PSPACE algorithm to decide, 
given an MDP $\M=\tuple{Q,\Act,\delta}$, a state $q_{\init}$, and two sets $T \subseteq U$, 
whether $q_{\init}$ is limit-sure eventually synchronizing in~$T$ with support in~$U$. 

First, the algorithm computes numbers $k\geq 0$ and $r>0$ such that
for $R = \Pre^k(T)$ and $Z = \Pre^k(U)$ we have $\Pre^{r}(R)=R$ and
$\Pre^{r}(Z)=Z$. As discussed before, this can be done by guessing
$k,r \leq 3^{\abs{Q}}$.  By Lemma~\ref{lem:lse-pre}, we have
$$\winlim{event}(\fsum_T,U) = \winlim{event}(\fsum_R,Z) \cup \winsure{event}(\fsum_T),$$
and since sure eventually synchronizing in $T$ can be decided in PSPACE (by
Theorem~\ref{theo:sure-eventually-pspace-c}), it suffices to decide
limit-sure eventually synchronizing in $R$ with support in $Z$ in
PSPACE.  According to Lemma~\ref{lem:lssr-assr}, it is therefore
sufficient to show that deciding limit-sure winning for the (standard)
reachability objective $\Diamond (R \times \{0\})$ in the MDP $\M_Z
\times [r]$ can be done in polynomial space.  As we cannot afford to
construct the exponential-size MDP $\M_Z \times [r]$, the algorithm
relies on the following characterization of the limit-sure winning set
for reachability objectives in MDPs. It is known that the winning
region for limit-sure and almost-sure reachability
coincide~\cite{AHK07}, and pure memoryless strategies are sufficient.
Therefore, we can see that the almost-sure winning set $W$ for the
reachability objective $\Diamond (R \times \{0\})$ satisfies the
following property: 
there exists a
memoryless strategy $\alpha: W \to \Act$ such that:
\begin{itemize}
\item[$(1)$] $W$ is closed, that is $\post(q,\alpha(q)) \subseteq W$ for all $q \in W$, and 
\item[$(2)$] in the graph of the Markov chain $M(\alpha)$, for every
state $q \in W$, there is a path (of length at most $\abs{W}$) from
$q$ to some state in $R \times \{0\}$.
\end{itemize}
This property ensures that from every state in $W$, the target set $R
\times \{0\}$ is reached within $\abs{W}$ steps with positive (and
bounded) probability, and since $W$ is closed it ensures that $R
\times \{0\}$ is reached with probability~$1$ in the long run.  Thus
any set $W$ satisfying the above property is almost-sure winning.

Our algorithm will guess and explore on the fly a set $W$ to ensure
that it satisfies this property, and contains the state
$\tuple{q_{\init},t}$ for some $t < r$.  As we cannot afford to explicitly
guess $W$ (remember that $W$ could be of exponential size), we
decompose $W$ into \emph{slices} $W_0, W_1,\dots$ such that $W_i
\subseteq Q$ and $W_i \times \{-i \mod r\} = W \cap (Q \times \{-i
\mod r\})$.  We start by guessing $W_0$, and we use the property 
that in $\M_Z \times [r]$, from a state
$(q,j)$ under all $Z$-safe actions, all successors are of the form
$(\cdot, j-1)$.  It follows that the successors of the states in~$W_i
\times \{-i\}$ should lie in the slice $W_{i+1} \times \{-i-1\}$, and
we can guess on the fly the next slice $W_{i+1} \subseteq Q$ by
guessing for each state $q$ in a slice $W_i$ an action $a_q$ such that
$\bigcup_{q \in W_i} \post(q,a_q) \subseteq W_{i+1}$.  Moreover, we
need to check the existence of a path from every state in $W$ to $R
\times \{0\}$. As $W$ is closed, it is
sufficient to check that there is a path from every state in $W_0
\times \{0\}$ to $R \times \{0\}$.  To do this we guess along with the
slices $W_0, W_1, \dots$ a sequence of sets $P_0, P_1, \dots$ where
$P_i \subseteq W_i$ contains the states of slice $W_i$ that belong to
the guessed paths. Formally, $P_0 = W_0$, and for all $i\geq 0$, the
set $P_{i+1}$ is such that $\post(q,a_q) \cap P_{i+1} \neq \emptyset$
for all $q \in P'_i$ (where $P'_i = P_i \setminus R$ if $i$ is a
multiple of $r$, and $P'_i = P_i$ otherwise), that is $P_{i+1}$
contains a successor of every state in $P_i$ that is not already in
the target $R$ (at position $0$ modulo $r$).

We need polynomial space to store the first slice $W_0$, 
the current slice $W_i$ and the set $P_i$, and the value of $i$ (in binary).
As $\M_Z \times [r]$ has $\abs{Q}\cdot r$ states, the algorithm runs 
for $\abs{Q}\cdot r$ iterations and then checks that:
\begin{itemize}
\item[$(1)$] $W_{\abs{Q}\cdot r} \subseteq W_0$ to ensure that $W = \bigcup_{i \leq \abs{Q}\cdot r} W_i\times\{i \mod r\}$ is closed, 
\item[$(2)$] $P_{\abs{Q}\cdot r} = \emptyset$ showing that from every state in $W_0 \times \{0\}$ 
there is a path to $R \times \{0\}$ (and thus also from all states in $W$), and 
\item[$(3)$] the state $q_{\init}$ occurs in some slice $W_i$. 
\end{itemize}
The correctness of the algorithm follows from the characterization of the almost-sure 
winning set for reachability in MDPs: if some state $\tuple{q_{\init},t}$ is limit-sure winning, then 
the algorithm accepts by guessing (slice by slice) the almost-sure winning set $W$ 
and the paths from $W_0 \times \{0\}$ to $R \times \{0\}$ (at position $0$ modulo $r$), 
and otherwise any set (and paths) correctly 
guessed by the algorithm would not contain $q_{\init}$ in any slice.
\qed
\end{proof}\medskip

It follows from the proof of Lemma~\ref{lem:lssr-assr} that
all winning modes for eventually synchronizing
are independent of the numerical value of the positive transition probabilities.

\begin{corollary}\label{col:uniform-dist-limit}
Let $\mu\in \{sure, almost, limit\}$ and $T\subseteq U$ be two sets. 
For two  distributions $d,d' $ with $\Supp(d)=\Supp(d')$, we have 
$d \in \win{event}{\mu}(\fsum_T,U)$~if and only if~$d' \in \win{event}{\mu}(\fsum_T,U)$.
\end{corollary}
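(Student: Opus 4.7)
The plan is to handle the three winning modes separately and, in each case, to reformulate the winning condition so that it depends only on $\Supp(d)$ and not on the numerical values of $d$. Throughout, I use that $T \subseteq U$, so $\M^{\alpha}_n(U) = 1$ is the relevant support constraint while $\M^{\alpha}_n(T)$ carries the probability mass.

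For $\mu = \mathit{sure}$, I would first generalize Lemma~\ref{lem:sure-ss-pre} from Dirac initial distributions to arbitrary distributions, obtaining that $d \in \winsure{event}(\fsum_T,U)$ iff there exists $n \geq 0$ with $\Supp(d) \subseteq \Pre^n(T)$. One direction follows because a sure strategy from $d$ must drive every state of $\Supp(d)$ into $T$ after the \emph{same} number of steps; the other by playing in parallel the sure-winning strategies guaranteed for each $q \in \Supp(d)$ by Lemma~\ref{lem:sure-ss-pre}. Since $\Pre^n(T)$ is a purely graph-theoretic object, the condition $\Supp(d) \subseteq \Pre^n(T)$ depends only on $\Supp(d)$.

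For $\mu = \mathit{limit}$, choose $k \leq 3^{\abs{Q}}$ such that $R = \Pre^k(T)$ and $Z = \Pre^k(U)$ both lie in the periodic part of the predecessor sequence, with common period $r$. By Lemma~\ref{lem:lse-pre}, $\winlim{event}(\fsum_T,U) = \winsure{event}(\fsum_T) \cup \winlim{event}(\fsum_R,Z)$; the sure part is already handled. For the limit-sure part with $R, Z$ in the periodic regime, I would generalize Lemma~\ref{lem:lssr-assr} to arbitrary initial distributions: $d \in \winlim{event}(\fsum_R, Z)$ iff there exists a single $t < r$ such that $(q, t)$ lies in the almost-sure winning region $W$ for the reachability objective $\Diamond(R \times \{0\})$ in $\M_Z \times [r]$ for \emph{every} $q \in \Supp(d)$. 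The same $t$ is needed across the support because synchronization requires all mass to reach the target at the same step. Since the almost-sure winning region for reachability in an MDP is determined solely by the transition graph and since the set $W$ is closed (so a single memoryless strategy witnesses winning simultaneously from every state in $W$, giving limit-sure arrival from any distribution supported in $W \cap (Q \times \{t\})$), the characterization depends only on $\Supp(d)$.

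For $\mu = \mathit{almost}$, I would repeat the argument of Lemma~\ref{lem:lse-pre} with the $\sup_n$ replaced by the almost-sure requirement: either a bounded number of steps suffices (giving $d \in \winsure{event}(\fsum_T)$), or, along a subsequence of witnesses, the supports stabilize and we obtain $d \in \winas{event}(\fsum_R, Z)$ for $R, Z$ in the periodic part. By Corollary~\ref{cor:as-same-as-ls-for-R}, $\winas{event}(\fsum_R, Z) = \winlim{event}(\fsum_R, Z)$ in that regime, reducing to the limit-sure case. The main obstacle is verifying the distribution-level version of Lemma~\ref{lem:lssr-assr}: the construction there produces a pure strategy~$\beta$ from a pure almost-sure reachability strategy in $\M_Z \times [r]$, and I must check that the same $\beta$ works from any distribution supported in the slice $\Supp(d) \times \{t\}$, which follows because $\beta$ is defined path-by-path and the reachability argument $\Pr^{\alpha}(\Diamond^{\leq k}\bigcup_i R_i) \geq 1 - \epsilon$ applies uniformly to each starting state in $\Supp(d)$.
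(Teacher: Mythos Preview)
Your proposal is correct and follows the same route the paper intends: the paper does not give a separate proof but simply asserts that the corollary ``follows from the proof of Lemma~\ref{lem:lssr-assr}'', and your case analysis is precisely the unpacking of that assertion---Lemma~\ref{lem:sure-ss-pre} for the sure mode, Lemma~\ref{lem:lse-pre} to reduce to the periodic regime $(R,Z)$, and the distribution-level reading of Lemma~\ref{lem:lssr-assr} (the constructed strategy $\beta$ is defined path-by-path and the almost-sure reachability region in $\M_Z\times[r]$ is closed under a single memoryless witness) for the limit-sure and almost-sure modes. One minor point: Corollary~\ref{cor:as-same-as-ls-for-R} as stated equates $\winlim{event}(\fsum_R,Z)$ with $\winas{event}(\fsum_R)$ (no support constraint on the almost-sure side), but the proof of Lemma~\ref{lem:lssr-assr} in fact yields the stronger conclusion you need, namely that the single pure strategy $\beta$ keeps the support in $Z$ at the relevant steps, so $\winlim{event}(\fsum_R,Z)=\winas{event}(\fsum_R,Z)$ as you use it.
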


\begin{remark}\label{rmk:limit-sure-event-fixed-dist}
Corollary~\ref{col:uniform-dist-limit} ensures that knowing the support of the 
initial distribution is sufficient to establish that it is eventually 
synchronizing. However, this corollary should be used carefully in the case
of limit-sure eventually synchronizing: given a support \mbox{$S \subseteq Q$},
if for all $\epsilon > 0$ there exists a distribution $d_\epsilon$ with support $S$
that is eventually $(1-\epsilon)$-synchronizing, this does not imply that
the distributions with support $S$ are limit-sure eventually synchronizing.

For example, consider an MDP with set of states $Q= \{q_1,q_2\}$,
self-loops on both $q_1$ and $q_2$, and target set $T=\{q_1\}$ (with function $\fsum_T$).
For $\epsilon > 0$, the initial distribution $d$ defined by $d(q_1) = 1-\epsilon$ and $d(q_2) = \epsilon$
has support $S = Q$ and ensures probability $1-\epsilon$ in $T$. 
Thus for all $\epsilon > 0$, we have an initial distribution that
satisfies the requirement, but the uniform distribution over $S$ is obviously not 
limit-sure eventually synchronizing in $T$.
\end{remark}

To establish the PSPACE-hardness for limit-sure eventually synchronizing in MDPs, 
we use a reduction from the universal finiteness problem for 1L-AFAs.

\begin{figure}[t]
\begin{center}
\begin{picture}(105,55)(0,0)

\node[Nmarks=n, Nw=40, Nh=22, dash={0.2 0.5}0](m1)(20,12){}
\node[Nframe=n](label)(8,5){MDP $\M$}
\drawpolygon[dash={0.8 0.5}0](38,22)(23,22)(23,2)(38,2)
\node[Nframe=n](label)(30,5){$T\subseteq Q$}
\node[Nmarks=r](n1)(30,12){$q_2$}

\node[Nframe=n](label)(19,12){$\dots$}
\node[Nmarks=n](n2)(10,12){$q_1$}
\node[Nframe=n](arrow)(45,12){{\Large $\Rightarrow$}}

\node[Nmarks=n, Nw=40, Nh=22, dash={0.2 0.5}0](nm1)(80,12){}
\node[Nmarks=n, Nw=50, Nh=48, dash={0.4 1}0](m2)(80,24){}
\node[Nframe=n](label)(65,52){MDP $\N$}

\node[Nframe=n](label)(68,5){MDP $\M$}
\drawpolygon[dash={0.8 0.5}0](98,22)(83,22)(83,2)(98,2)
\node[Nframe=n](label)(90,5){$T\subseteq Q$}
\node[Nmarks=r](nn1)(90,12){$q_2$}
\node[Nframe=n](label)(79,12){$\dots$}
\node[Nmarks=n](nn2)(70,12){$q_1$}

\node[Nmarks=i](qq)(80,35){$q_{\init}$} 

\drawloop[ELside=l,loopCW=y, loopangle=90, loopdiam=5](qq){$\Act$}

\drawedge[ELpos=50, ELside=l](qq,nn1){$\sharp$}
\node[Nframe=n](label)(80.25,28){$\cdots$}
\drawedge[ELpos=50, ELside=r](qq,nn2){$\sharp$}

\drawedge[ELpos=50, ELside=r, curvedepth=-8](nn1,qq){$\sharp$}

\drawedge[ELpos=50, ELside=l, curvedepth=+8](nn2,qq){$\sharp$}

\end{picture}
\caption{Sketch of the reduction to show PSPACE-hardness of the membership problem
for limit-sure eventually and almost-sure weakly synchronizing.}\label{fig:lim-sure-reduction}
\end{center}
\end{figure}
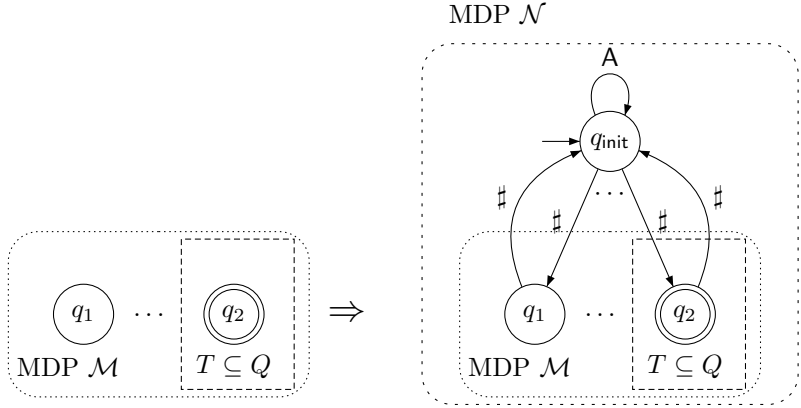

\begin{lemma}\label{lem:limit-event-pspace-hard}
The membership problem for $\winlim{event}(\fsum_T)$ is PSPACE-hard
even if $T$ is a singleton.
\end{lemma}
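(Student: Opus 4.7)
The plan is to reduce from the universal finiteness problem for 1L-AFAs, shown PSPACE-hard in Lemma~\ref{lem:universal-finiteness-pspace-hard}. Given a 1L-AFA $\A = \tuple{Q, \delta_\A, \F}$, I may assume without loss of generality that $\F = \{q_2\}$ is a singleton (a standard preprocessing preserves universal finiteness). Using the correspondence from Section~\ref{sec:1L-AFA}, I build an MDP $\M$ with state space $Q$ such that $\Pre_\M^n(\{q_2\}) = Acc_\A(n, \{q_2\})$ for all $n \geq 0$, then construct the MDP $\N$ of Figure~\ref{fig:lim-sure-reduction}: a fresh initial state $q_{\init}$ with self-loops on every action of $\Act$, together with a new action $\sharp$ that spreads uniformly from $q_{\init}$ over $Q$ and returns from each $q \in Q$ to $q_{\init}$. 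The target is the singleton $\{q_2\}$, so the reduction is polynomial-time. I will show that $q_{\init} \in \winlim{event}(\fsum_{\{q_2\}})$ in $\N$ if and only if universal finiteness fails for $\A$; closure of PSPACE under complement then yields PSPACE-hardness.

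The analysis relies on Lemma~\ref{lem:lse-pre}: for any $k$ in the (eventually periodic) sequence $(\Pre_\N^n(\{q_2\}))_n$ with period $r$, we have $\winlim{event}(\fsum_{\{q_2\}}) = \winsure{event}(\fsum_{\{q_2\}}) \cup \winlim{event}(\fsum_R)$ where $R = \Pre_\N^k(\{q_2\})$. A direct induction on $n$ shows $\Pre_\N^n(\{q_2\}) \cap Q = \Pre_\M^n(\{q_2\})$ and that $q_{\init} \in \Pre_\N^n(\{q_2\})$ iff $\Pre_\M^m(\{q_2\}) = Q$ for some $m < n$. In the forward direction, if $\A$ is universally finite then $\Pre_\M^n(\{q_2\}) = \emptyset$ eventually; since $Q$ is a fixed point of $\Pre_\M$, the equality $\Pre_\M^m(\{q_2\}) = Q$ is impossible, giving $q_{\init} \notin \winsure{event}(\fsum_{\{q_2\}})$. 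Moreover, the periodic part of the sequence is $\{\emptyset\}$, so $R = \emptyset$ and $\winlim{event}(\fsum_\emptyset) = \emptyset$. Lemma~\ref{lem:lse-pre} then yields $q_{\init} \notin \winlim{event}(\fsum_{\{q_2\}})$.

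For the reverse direction, assume universal finiteness fails; then every set in the periodic part of $(\Pre_\M^n(\{q_2\}))_n$ is non-empty, so $R \neq \emptyset$. By Lemma~\ref{lem:lssr-assr}, it suffices to exhibit an almost-sure winning strategy in the extended MDP $\N \times [r]$ for the reachability objective $\Diamond(R \times \{0\})$ from some $(q_{\init}, t)$. I propose a repeated-attempt strategy: starting at $(q_{\init}, r-1)$, play $\sharp$ to spread mass over $Q$ at position $r-2$, then follow a pure sub-strategy of $\M$ for $r-2$ steps to reach position $0$; mass landing in $R$ succeeds, and mass landing in $Q \setminus R$ plays $\sharp$, returning to $(q_{\init}, r-1)$ via the modular position arithmetic (position $0$ maps to $r-1$), to retry. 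Each attempt succeeds with probability at least $|\Pre_\M^{r-2}(R)|/|Q| > 0$ (non-empty by the periodic structure), and failed mass is safely recycled to the restart state after $r$ steps; by the standard closure-and-positive-probability characterization of almost-sure reachability in MDPs (used in the proof of Lemma~\ref{lem:pspace-mi}), $(q_{\init}, r-1)$ is almost-sure winning. The main obstacle is verifying this closure rigorously, namely that the almost-sure winning region in $\N \times [r]$ is closed under the chosen actions and contains $(q_{\init}, r-1)$ via a genuinely synchronized alignment of successive attempts with the periodic structure. Once this is in place, Lemma~\ref{lem:lse-pre} delivers $q_{\init} \in \winlim{event}(\fsum_{\{q_2\}})$, completing the reduction.
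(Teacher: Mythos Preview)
Your proposal is correct and follows essentially the same route as the paper: the identical construction of $\N$ from Figure~\ref{fig:lim-sure-reduction}, the same use of Lemma~\ref{lem:lse-pre} for both directions, and the same ``repeated attempts through $q_{\init}$'' idea for the hard direction. The only cosmetic difference is that you phrase the hard direction via Lemma~\ref{lem:lssr-assr} as almost-sure reachability of $R\times\{0\}$ in $\N\times[r]$ with an explicit restart strategy, whereas the paper builds the equivalent strategy directly in $\N$ (play $\sharp$ at $q_{\init}$, play $R$-safe actions inside the predecessor sequence, play $\sharp$ otherwise); both arguments encode the same geometric-decay analysis, and your closure/positive-probability verification is exactly what is needed (just take care with the degenerate cases $r\in\{1,2\}$, where the ``$r{-}2$ steps'' phase collapses but the argument still goes through).
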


\begin{proof}
We show the result by a reduction from the universal finiteness problem for one-letter
alternating automata (1L-AFA), which is PSPACE-complete (by Lemma~\ref{lem:universal-finiteness-pspace-hard}).
It is easy to see that this problem remains PSPACE-complete even if the set $T$ of accepting
states of the 1L-AFA is a singleton, and 
given the tight relation between 1L-AFA and MDP (see Section~\ref{sec:1L-AFA}),
it follows from the definition of the universal finiteness problem that 
deciding, in an MDP $\M$, whether the sequence $\Pre^n_{\M}(T) \neq \emptyset$ 
for all $n \geq 0$ is PSPACE-complete.

The reduction is as follows (see also \figurename~\ref{fig:lim-sure-reduction}). 
Given an MDP $\M=\tuple{Q, \Act,\delta}$ and a singleton $T \subseteq Q$,
we construct an MDP $\N =\tuple{Q', \Act',\delta'}$ with state space $Q' = Q \uplus \{q_{\init}\}$ such that 
$\Pre^n_{\M}(T) \neq \emptyset$ for all $n \geq 0$ if and only if 
$q_{\init}$ is limit-sure eventually synchronizing in $T$.
The MDP $\N$ is essentially a copy of $\M$ with alphabet $\Act \uplus \{\sharp\}$ and 
the transition function on action $\sharp$ is the uniform 
distribution on $Q$ from $q_{\init}$, and the Dirac distribution on 
$q_{\init}$ from the other states $q \in Q$. There are self-loops on $q_{\init}$ 
for all other actions $a \in \Act$. Formally, the
transition function $\delta'$ is defined as follows, for all $q \in Q$:

\begin{itemize} 
\item $\delta'(q,a) = \delta(q,a)$ for all $a \in \Act$ (copy of $\M$),
      and $\delta'(q,\sharp)(q_{\init}) = 1$;
\item $\delta'(q_{\init},a)(q_{\init}) = 1$ for all $a \in \Act$,
      and $\delta'(q_{\init}, \sharp)(q) = \frac{1}{\abs{Q}}$.
\end{itemize}

We establish the correctness of the reduction as follows.
For the first direction, assume that 
$\Pre^n_{\M}(T) \neq \emptyset$ for all~$n \geq 0$. 
Then since $\N$ embeds a copy of $\M$ it follows that 
$\Pre^n_{\N}(T) \neq \emptyset$ for all~$n \geq 0$ and
there exist numbers $k_0,r \leq 2^{\abs{Q}}$ 
such that $\Pre^{k_0 + r}_{\N}(T) = \Pre^{k_0}_{\N}(T) \neq \emptyset$.
Using Lemma~\ref{lem:lse-pre} with $k = k_0$ and $R = \Pre^{k_0}_{\N}(T)$ 
(and $U = Z = Q'$ is the trivial support), it is sufficient to
prove that $q_{\init} \in \winlim{event}(R)$ to get $q_{\init} \in \winlim{event}(T)$
(in $\N$).
We show the stronger statement that $q_{\init}$ is actually almost-sure 
eventually synchronizing in $R$ with the pure strategy $\alpha$ defined as follows,
for all play prefixes $\rho$ (let $m = \abs{\rho} \!\!\mod r$):

\begin{itemize} 
\item if $\Last(\rho) = q_{\init}$, then $\alpha(\rho) = \sharp$;
\item if $\Last(\rho) = q \in Q$, then 
	\begin{itemize} 
	\item if $q \in \Pre^{r-m}_{\N}(R)$, then $\alpha(\rho)$ plays a $R$-safe action at position $r-m$;
	\item otherwise, $\alpha(\rho) = \sharp$.
	\end{itemize}
\end{itemize}

The strategy $\alpha$ ensures that the probability mass that is not (yet) in 
the sequence of predecessors $\Pre^n_{\N}(R)$ goes to $q_{\init}$, where by playing $\sharp$
at least a fraction $\frac{1}{\abs{Q}}$ of it would reach the sequence of predecessors
(at a synchronized position). It follows that after $2i$ steps, the probability
mass in $q_{\init}$ is $(1 - \frac{1}{\abs{Q}})^i$ and the probability mass
in the sequence of predecessors is $1 - (1 - \frac{1}{\abs{Q}})^i$. For $i \to \infty$,
the probability in the sequence of predecessors tends to $1$ and since 
$\Pre^n_{\N}(R) = R$ for all positions $n$ that are a multiple of $r$,
we get $\sup_{n}\M^{\alpha}_{n}(R) = 1$ and $q_{\init} \in \winas{event}(R)$.

For the converse direction, assume that $q_{\init} \in \winlim{event}(T)$
is limit-sure eventually synchronizing in $T$. By Lemma~\ref{lem:lse-pre},
either $(1)$ $q_{\init}$ is limit-sure eventually synchronizing in $\Pre^n_{\N}(T)$
for all $n \geq 0$, and then it follows that $\Pre^n_{\N}(T) \neq \emptyset$ 
for all $n \geq 0$, or $(2)$ $q_{\init}$ is sure eventually synchronizing in $T$,
and then since only the action $\sharp$ leaves the state $q_{\init}$ (and $\post(q_{\init},\sharp) = Q$), 
the characterization of sure eventually synchronizing in Lemma~\ref{lem:sure-ss-pre} shows that 
$Q \subseteq \Pre^k_{\N}(T)$ for some $k \geq 0$, and since 
$Q \subseteq \Pre_{\N}(Q)$ and $\Pre_{\N}(\cdot)$ is a monotone operator, 
it follows that $Q \subseteq \Pre^n_{\N}(T)$ for all $n \geq k$ and thus 
$\Pre^n_{\N}(T) \neq \emptyset$ for all $n \geq 0$.
We conclude the proof by noting that $\Pre^n_{\M}(T) = \Pre^n_{\N}(T) \cap Q$
and therefore $\Pre^n_{\M}(T) \neq \emptyset$ for all $n \geq 0$.
\qed
\end{proof}\medskip

The example in the proof of Lemma~\ref{lem:inf-mmeory-almost-event} can be used
to show that the memory needed by a family of strategies to win 
limit-sure eventually synchronizing objective (in target $T= \{q_2\}$) 
is unbounded.

Observe that  given~$\epsilon>0$, the required memory to accumulate~$1-\epsilon$ in~$T$ is finite, 
but the memory size increases and cannot be bounded as $\epsilon$ tends to $0$. 

The following theorem summarizes the results for limit-sure eventually synchronizing. 

\begin{theorem}\label{theo:limit-sure-eventually}
For limit-sure eventually synchronizing (with or without exact support) in MDPs:

\begin{enumerate}
\item (Complexity). The membership problem is PSPACE-complete.

\item (Memory). Unbounded memory is required for both pure 
and randomized strategies, and pure strategies are sufficient.
\end{enumerate}

\end{theorem}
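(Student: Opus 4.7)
The plan is to assemble Theorem~\ref{theo:limit-sure-eventually} directly from the material already established in this section; no new technical machinery is needed, only a careful argument that each ingredient covers both versions of the problem (with and without exact support).

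For the PSPACE upper bound, I would invoke Lemma~\ref{lem:pspace-mi}, which already gives a nondeterministic polynomial-space algorithm for the more general problem of limit-sure eventually synchronizing with exact support. The plain version (without exact support) is obtained as the special case $U = Q$, so the same bound applies. For PSPACE-hardness, Lemma~\ref{lem:limit-event-pspace-hard} gives hardness already for the plain version with a singleton target $T$; hardness of the exact-support version then follows immediately, since instances of the plain problem can be encoded as exact-support instances by setting $U = Q$. Combining the two bounds yields PSPACE-completeness of both variants.

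For the memory statement, the unboundedness claim uses the MDP of Figure~\ref{fig:inf-mem} revisited from the limit-sure viewpoint: I would argue that, for every $\epsilon > 0$, any $(1-\epsilon)$-synchronizing strategy in $q_2$ must first play $a$ at least $\lceil \log_2(1/\epsilon) \rceil$ times to accumulate enough mass in $q_1$ before switching to $b$, so a finite-memory family cannot serve all $\epsilon$ simultaneously with a uniform bound on memory size. Sufficiency of pure strategies is already visible inside the proof of Lemma~\ref{lem:lssr-assr}: the constructed winning strategy $\beta$ in $\M$ is pure, because it is built from a pure memoryless witness $\alpha$ for limit-sure reachability in $\M_Z \times [r]$, which exists by the classical fact that pure memoryless strategies suffice for (almost-sure and hence limit-sure) reachability in MDPs. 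Combined with Lemma~\ref{lem:lse-pre}, this gives pure limit-sure eventually synchronizing strategies for both the plain and the exact-support versions.

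The main (minor) obstacle is simply to verify that the exact-support versus plain distinction is handled uniformly everywhere, in particular that hardness transfers to the exact-support version and that the pure-strategy construction in Lemma~\ref{lem:lssr-assr} is indeed valid when $U = Q$; both checks are routine, so the proof reduces to citing Lemmas~\ref{lem:lse-pre}, \ref{lem:lssr-assr}, \ref{lem:pspace-mi}, and \ref{lem:limit-event-pspace-hard} in the appropriate order.
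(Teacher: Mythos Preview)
Your proposal is correct and matches the paper's approach: the theorem is a summary statement, and the paper likewise assembles it from Lemma~\ref{lem:pspace-mi} (PSPACE upper bound), Lemma~\ref{lem:limit-event-pspace-hard} (PSPACE-hardness), the example of Figure~\ref{fig:inf-mem} together with the observation that the memory needed to reach probability $1-\epsilon$ in $q_2$ grows without bound as $\epsilon \to 0$, and the pure strategy $\beta$ constructed in the proof of Lemma~\ref{lem:lssr-assr} (combined with Lemma~\ref{lem:lse-pre} and the pure sure-winning strategies of Theorem~\ref{theo:sure-eventually-pspace-c}). Your additional remarks that the plain problem is the special case $U=Q$ and that hardness therefore transfers to the exact-support variant are correct and only make the argument more explicit than the paper does.
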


\section{Weakly Synchronizing}

We establish the complexity and memory requirement for weakly synchronizing 
objectives. We show that the membership problem is PSPACE-complete for 
sure and almost-sure winning, that exponential memory is necessary and sufficient
for sure winning while infinite memory is necessary for almost-sure winning,
and we show that limit-sure and almost-sure winning coincide. 
By Lemma~\ref{lem:weakly-max-sum}, the complexity results established in this 
section for function $\fsum_T$ hold for function $\max_T$ as well.

The weakly synchronizing objective is reminiscent of a B\"uchi 
objective in the distribution-based semantics: it requires that 
in the sequence of distributions of an MDP $\M$ under strategy $\alpha$
we have $\limsup_{n \to \infty} \M^{\alpha}_n(T) = 1$ (and that $\M^{\alpha}_n(T) = 1$
for infinitely many $n$ in the case of sure winning).

The sure winning mode can be solved by a technique similar to the search for a
lasso in B\"uchi automata~\cite{Vardi07} (Section~\ref{sec:sure-weakly-sync}). We show that the almost-sure winning 
mode can be solved by a reduction analogous to the case of eventually
synchronizing (Section~\ref{sec:almost-weakly}). For the limit-sure winning mode, we show that it coincides
with the almost-sure winning mode. The proof of this result is technical
and requires a careful characterization of the limit-sure winning mode. 
We present examples to provide intuitive illustration of the proof (Section~\ref{sec:limit-sure-weakly}).

\subsection{Sure weakly synchronizing} \label{sec:sure-weakly-sync}
The PSPACE upper bound of the membership problem for sure weakly synchronizing is
obtained by the following characterization.

\begin{lemma}\label{lem: suWS}
Let $\M$ be an MDP and $T$ be a target set. For all states $q_{\init}$, 
we have \vspace{-2pt}
$$
\begin{array}{c}
q_{\init} \in \winsure{weakly}(\fsum_T) \text{ if and only if there exists a set $S \subseteq T$ such that} \\[2pt]
\text{$q_{\init} \in \Pre^{m}(S)$ for some $m \geq 0$ and $S \subseteq \Pre^{n}(S)$ for some $n \geq 1$.}
\end{array}
$$
\end{lemma}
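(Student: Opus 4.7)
The plan is to prove the two directions of the characterization by leveraging Lemma~\ref{lem:sure-ss-pre}, which already handles sure eventually synchronizing in a fixed number of steps via iterated predecessors.

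For the \emph{if} direction, I would assume the existence of a set $S\subseteq T$ with $q_{\init}\in\Pre^{m}(S)$ and $S\subseteq\Pre^{n}(S)$, and concatenate two building blocks into a single weakly synchronizing strategy. First, using the inductive construction from the proof of Lemma~\ref{lem:sure-ss-pre}, I obtain a pure strategy $\sigma_0$ that from $q_{\init}$ puts the whole probability mass in $S$ at step $m$. Next, since $S\subseteq\Pre^{n}(S)$, for each $q\in S$ the same lemma gives a pure strategy $\sigma_q$ such that starting from the Dirac distribution on $q$ the mass is entirely in $S$ at step $n$. Composing $\sigma_0$ with repeated applications of the appropriate $\sigma_q$ (chosen according to the current state after each block of $n$ steps) yields a strategy $\alpha$ such that $\M^{\alpha}_{m+kn}(S)=1$ for every $k\geq 0$; since $S\subseteq T$, this establishes $q_{\init}\in\winsure{weakly}(\fsum_T)$.

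For the \emph{only if} direction, I would start from a sure weakly synchronizing strategy $\alpha$ and extract the combinatorial witness. Let $n_1<n_2<\cdots$ be the infinitely many indices with $\M^{\alpha}_{n_i}(T)=1$, and set $S_i=\Supp(\M^{\alpha}_{n_i})\subseteq T$. Since $2^{Q}$ is finite, some subset $S\subseteq T$ appears as $S_i$ infinitely often; pick any pair $n_i<n_j$ with $S_i=S_j=S$ and set $m=n_i$ and $n=n_j-n_i\geq 1$. The fact that the support after $m$ steps is $S$ gives, by the same induction as in Lemma~\ref{lem:sure-ss-pre}, that $q_{\init}\in\Pre^{m}(S)$. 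For the loop condition, fix any $q\in S$; since $q$ is reached with positive probability at step $n_i$, the suffix of $\alpha$ conditioned on that history defines a strategy $\alpha_q$ such that, starting from the Dirac distribution on $q$, every positive-probability path of length $n$ ends in $S$ (because the union of supports at step $n_j$ over all length-$n_i$ prefixes is exactly $S$). Invoking Lemma~\ref{lem:sure-ss-pre} once more yields $q\in\Pre^{n}(S)$, and since $q\in S$ was arbitrary, $S\subseteq\Pre^{n}(S)$.

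The only delicate step is the last one: one must argue that the support from a \emph{single} state $q\in S$ at time $n_j$ under the conditional strategy $\alpha_q$ is contained in $S$ (not merely in $T$), which is what allows the predecessor relation to close back on $S$. This follows because the overall support at step $n_j$ is a \emph{union} of the per-state supports (weighted by the probability of reaching each $q$ at step $n_i$), so each per-state support is necessarily a subset of $S=S_j$. Everything else reduces to a direct application of Lemma~\ref{lem:sure-ss-pre} and a pigeonhole argument on $2^{Q}$.
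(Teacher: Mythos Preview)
Your proposal is correct and follows essentially the same approach as the paper's proof: a pigeonhole argument on the supports $\Supp(\M^{\alpha}_{n_i})$ to extract a recurring $S\subseteq T$, then two applications of Lemma~\ref{lem:sure-ss-pre} for the predecessor conditions, and for the converse a concatenation of the $m$-step strategy into $S$ with repeated $n$-step loops. The paper is terser (it just asserts ``it follows that $S\subseteq\Pre^n(S)$''), while you spell out the per-state conditional strategies and the union-of-supports argument; this is a welcome clarification but not a different idea.
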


\begin{proof}
First, if $q_{\init} \in \winsure{weakly}(\fsum_T)$, then let $\alpha$ be
a sure winning weakly synchronizing strategy.
Then there are infinitely many positions $n$ such that
$\M^{\alpha}_n(T)=1$, and since the state space is finite, 
there is a set $S$ of states 
such that for infinitely many positions~$n$ we have 
$\Supp(\M^{\alpha}_n) = S$ and $\M^{\alpha}_n(T)=1$, and thus $S \subseteq T$.
By the characterization of sure eventually synchronizing in Lemma~\ref{lem:sure-ss-pre}, it follows that 
$q_{\init} \in \Pre^{m}(S)$ for some $m \geq 0$, and 
by considering two positions $n_1 < n_2$ 
where $\Supp(\M^{\alpha}_{n_1}) = \Supp(\M^{\alpha}_{n_2}) = S$,
it follows that $S \subseteq \Pre^{n}(S)$ for $n  = n_2 - n_1 \geq 1$.

The reverse direction is straightforward by considering
a strategy $\alpha$ that ensures $\M^{\alpha}_m(S)=1$
for some $m \geq 0$, and then ensures that the probability mass 
from all states in $S$ remains in $S$ after every multiple of 
$n$ steps where $n>0$ is such that $S \subseteq \Pre^{n}(S)$,
showing that $\alpha$ is a sure winning weakly synchronizing strategy
in $S$ (and thus in $T$) from $q_{\init}$, thus $q_{\init} \in \winsure{weakly}(\fsum_T)$.
Note that $\alpha$ is a pure strategy.
\qed
\end{proof}\medskip

The PSPACE upper bound follows from the characterization in 
Lemma~\ref{lem: suWS}. A (N)PSPACE algorithm is to guess the set $S \subseteq T$,
and the numbers $m,n$ (with $m,n \leq 2^{\abs{Q}}$ since the sequence $\Pre^{n}(S)$ of 
predecessors is ultimately periodic), and check that $q_{\init} \in \Pre^{m}(S)$ and
$S \subseteq \Pre^{n}(S)$.
We present a matching PSPACE lower bound in the following lemma.

\begin{lemma}\label{lem: sure-weakly-psapce-hradness}
The membership problem for $\winsure{weakly}(\fsum_T)$ is PSPACE-hard
even if $T$ is a singleton.
\end{lemma}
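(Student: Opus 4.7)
My plan is to reduce from the membership problem for sure eventually synchronizing with singleton target, which is PSPACE-complete by Theorem~\ref{theo:sure-eventually-pspace-c} (the lower bound in the proof of that theorem is exhibited using a singleton target $\{q_T\}$). Given an MDP $\M = \tuple{Q,\Act,\delta}$, an initial state $q_{\init}$, and a singleton target $\{\q\}$, I would construct an MDP $\N$ with a fresh singleton target $\{q^*\}$ such that $q_{\init} \in \winsure{event}(\q)$ in $\M$ iff $q_{\init} \in \winsure{weakly}(q^*)$ in $\N$.

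\textbf{Construction.} The MDP $\N$ is a copy of $\M$ augmented with two new states $q^*,\sink$ and one new action $\sharp$: the original transitions of $\M$ are preserved on actions in $\Act$; from $\q$ playing $\sharp$ leads deterministically to $q^*$; from any other state $q\in Q\setminus\{\q\}$, playing $\sharp$ leads to $\sink$; every action from $q^*$ leads to $q_{\init}$; and $\sink$ is absorbing. This has two key effects: $q^*$ is reachable only through $\q$ (via $\sharp$), and synchronizing in $q^*$ restarts execution from $q_{\init}$.

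\textbf{Correctness via Lemma~\ref{lem: suWS}.} For the forward direction, if $q_{\init} \in \winsure{event}(\q)$ in $\M$, then by Lemma~\ref{lem:sure-ss-pre} there exists $k$ with $q_{\init} \in \Pre_{\M}^k(\{\q\}) \subseteq \Pre_{\N}^k(\{\q\})$. Since $\q \in \Pre_{\N}(\{q^*\})$ via $\sharp$ and $q^* \in \Pre_{\N}(\{q_{\init}\})$ trivially, we get $q_{\init} \in \Pre_{\N}^{k+1}(\{q^*\})$ and $\{q^*\} \subseteq \Pre_{\N}^{k+2}(\{q^*\})$; Lemma~\ref{lem: suWS} applied with $S = \{q^*\}$ then yields $q_{\init} \in \winsure{weakly}(q^*)$ in $\N$. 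Conversely, if $q_{\init} \in \winsure{weakly}(q^*)$, Lemma~\ref{lem: suWS} forces $S = \{q^*\}$ and gives some $m$ with $q_{\init} \in \Pre_{\N}^m(\{q^*\})$; choose $m$ minimal. Because $\sink$ never reaches $q^*$ and $q^*$ is only a predecessor of $q_{\init}$, the alternating witness tree stays inside $Q \cup \{q^*\}$; if $q^*$ appeared at any level $k < m$ we would obtain $q_{\init} \in \Pre_{\N}^{m-k-1}(\{q^*\})$ with $m-k-1 < m$, contradicting minimality. Hence only level $m-1$ contains $\{\q\}$ (the unique predecessor of $q^*$), and levels $0,\dots,m-2$ stay in $Q$ using actions from $\Act$ (since $\sharp$ from $Q\setminus\{\q\}$ hits $\sink$). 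Thus $q_{\init} \in \Pre_{\M}^{m-1}(\{\q\})$, which by Lemma~\ref{lem:sure-ss-pre} gives $q_{\init} \in \winsure{event}(\q)$ in $\M$.

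\textbf{Main obstacle.} The delicate step is the converse direction: I must argue that a sure weakly synchronizing witness in $\N$ can be converted back to a sure eventually synchronizing witness in $\M$, specifically that the extracted alternating path truly lies in the $\M$-part of $\N$ and never loops through $q^*$ in an essential way. The minimality-of-$m$ argument above handles this cleanly, and the sink gadget guarantees that any premature use of $\sharp$ is immediately disqualifying, so the reduction is correct and clearly polynomial in the size of $\M$.
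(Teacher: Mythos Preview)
Your proposal is correct and uses essentially the same reduction as the paper: the MDP $\N$ you build (with $q^*$ playing the role of the paper's $\p$) is the same gadget, up to the harmless variation that you send $\sharp$ from $q^*$ back to $q_{\init}$ rather than to $\sink$. The only difference is in how correctness is argued: the paper reasons directly about strategies (for the converse, it locates the last occurrence of $\sharp$ before the synchronizing step and restarts from the resulting $q_{\init}$), whereas you work entirely through the $\Pre$-characterizations of Lemma~\ref{lem:sure-ss-pre} and Lemma~\ref{lem: suWS} together with a minimality-of-$m$ argument; both routes are clean and equivalent in strength.
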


\begin{proof}
We show the result by a reduction from the membership problem for
$\winsure{event}(\fsum_T)$ with a singleton~$T$, which is 
PSPACE-complete (Theorem~\ref{theo:sure-eventually-pspace-c}).
From an MDP~$\M=\tuple{Q,\Act,\delta}$ with initial state~$q_{\init}$
and target state~$\q$, 
we construct another MDP~$\N=\tuple{Q',\Act',\delta'}$ and a target state~$\p$ 
such that $q_{\init}\in \winsure{event}(\q)$ 
in~$\M$ if and only if $q_{\init}\in \winsure{weakly}(\p)$ in~$\N$.

The MDP~$\N$ is a copy of $\M$ with two new states~$\p$ and $\sink$ 
that are reachable only by a new action~$\sharp$ (see \figurename~\ref{fig:sure-ws-reduction}). 
Formally, $Q' = Q \cup \{\p, \sink\}$ and $\Act' = \Act \cup \{\sharp\}$. 
The transition function $\delta'$ is defined as follows:
$\delta'(q,a) = \delta(q,a)$ for all states $q \in Q$ and $a \in \Act$,
$\delta(q,\sharp)(\sink) = 1$ for all $q \in Q' \setminus \{\q\}$ and $\delta(\q,\sharp)(\p) = 1$.
The state $\sink$ is absorbing and from state~$\p$ all other transitions lead to 
the initial state, i.e. $\delta(\sink,a)(\sink)=1$ and 
$\delta(\p,a)(q_{\init})=1$ for all $a\in \Act$.

\begin{exclude}
\begin{figure}[t]
\begin{center}
    \begin{picture}(114,40)(0,2)

\node[Nmarks=n, Nw=40, Nh=22, dash={0.2 0.5}0](m1)(20,24){}
\node[Nframe=n](label)(8,31){MDP $\M$}

\node[Nmarks=i](n1)(10,22){$q_{\init}$}
\node[Nmarks=r](n1)(34,22){$\q$}
\node[Nmarks=n](n2)(22,22){$q$}
\node[Nframe=n](arrow)(50,24){{\Large $\Rightarrow$}}

\node[Nmarks=n, Nw=40, Nh=22, dash={0.2 0.5}0](nm1)(80,24){}
\node[Nmarks=n, Nw=57, Nh=40, dash={0.4 1}0](m2)(86,20){}
\node[Nframe=n](label)(67,37){MDP $\N$}

\node[Nframe=n](label)(68,31){MDP $\M$}
\node[Nmarks=i](n0)(70,22){$q_{\init}$}
\node[Nmarks=r](n1)(94,22){$\q$}
\node[Nmarks=n](n2)(82,22){$q$}

\node[Nmarks=n](end)(77,5){$\sink$}
\node[Nmarks=n](qq)(108,22){$\p$} 

\drawloop[ELside=l,loopCW=y, loopangle=180, loopdiam=5](end){$\Act'$}

\drawedge[ELpos=40, ELside=l, curvedepth=0](n0,end){$\sharp$}
\drawedge[ELpos=40, ELside=l, curvedepth=0](n2,end){$\sharp$}
\drawedge[ELpos=50, ELside=r, curvedepth=0](n1,qq){$\sharp$}
\drawedge[ELpos=50, ELside=r, curvedepth=-8](qq,n0){$\Act$}
\drawedge[ELpos=50, ELside=l, curvedepth=7](qq,end){$\sharp$}

\end{picture}
\end{center}
 \caption{The reduction sketch to show PSPACE-hardness of the membership problem 
for sure weakly synchronizing in MDPs.}
\label{fig:sure-ws-reduction}
\end{figure}
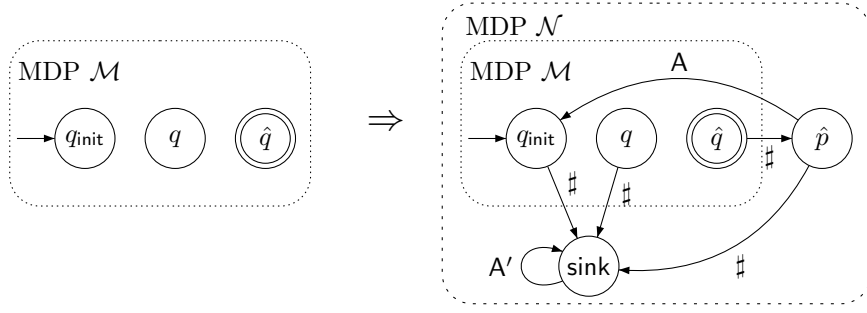
\end{exclude}

We establish the correctness of the reduction as follows.
First, if $q_{\init} \in \winsure{event}(\q)$ in~$\M$, then let $\alpha$
be a sure winning strategy in $\M$ for eventually synchronizing in $\{\q\}$.
A sure winning strategy in $\N$ for weakly synchronizing in $\{\p\}$ is 
to play according to $\alpha$ until the whole probability mass is in $\q$,
then play~$\sharp$ followed by some $a \in \A$ to visit $\p$ and get back to the initial state $q_{\init}$,
and then repeat the same strategy from $q_{\init}$. Hence, $q_{\init} \in \winsure{weakly}(\p)$
in $\N$.

Second, if $q_{\init} \in \winsure{weakly}(\p)$ in~$\N$, then consider 
a strategy $\alpha$ such that $\N^{\alpha}_n(\p) = 1$ for some $n \geq 0$.
By construction of $\N$, it follows that $\N^{\alpha}_{n-1}(\q)=1$,
that is all path-outcomes of $\alpha$ of length $n-1$ reach $\q$, and
$\alpha$ plays $\sharp$ in the next step. 
If $\alpha$ never plays $\sharp$ before position $n-1$, then $\alpha$
is a valid strategy in $\M$ up to step $n-1$ and it shows that 
$q_{\init} \in \winsure{event}(\q)$ is sure winning in $\M$ for eventually synchronizing in $\{\q\}$.
Otherwise let $m$ be the largest number such that 
there is a finite path-outcome $\rho$ of $\alpha$ of length $m < n-1$ with 
$\sharp \in \Supp(\alpha(\rho))$. %
Thus between position $m$ and $n-1$,
the strategy $\alpha$ does not play $\sharp$. 
Note that the action $\sharp$ can be played by $\alpha$ only in the state $\q$, 
and thus $\Last(\rho) = \q$. Hence two steps later, in the path-outcome $\rho'$
of length $m+2$ that extends $\rho$, we have $\Last(\rho') = q_{\init}$. %
Since the action $\sharp$ is not played by $\alpha$ until position $n-1$,
after position $m+2$ in $\rho'$ the strategy $\alpha$ corresponds to 
a valid strategy from $\Last(\rho')$ in $\M$ that brings all the probability mass of 
$\Last(\rho') = q_{\init}$ to $\q$, witnessing that $q_{\init} \in \winsure{event}(\q)$.
\qed
\end{proof}\medskip

The proof of Lemma~\ref{lem: suWS} suggests an exponential-memory pure strategy
for sure weakly synchronizing that in $q \in \Pre^{n}(S)$ plays
an action $a$ such that $\post(q,a)\subseteq \Pre^{n-1}(S)$, which  
can be realized with exponential memory since $n \leq 2^{\abs{Q}}$.
It can be shown that exponential memory is necessary in general,
using an argument similar to the proof of exponential memory 
lower bound for sure eventually synchronizing (Theorem~\ref{theo:sure-eventually-pspace-c}), 
and by modifying the MDPs $\M_n$ (illustrated in \figurename~\ref{fig:exp-mem}) as follows: 
let the transitions from state $q_T$ go to $q_{\init}$ (instead of the absorbing state $q_{\bot}$).

\begin{theorem} \label{theo:weakly-sure}
For sure weakly synchronizing in MDPs:
\begin{enumerate}
\item  (Complexity). The membership problem is PSPACE-complete.
\item  (Memory). Exponential memory is necessary and sufficient for both pure and
randomized strategies, and pure strategies are sufficient.
\end{enumerate}
\end{theorem}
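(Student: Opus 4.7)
The plan is to derive both parts of the theorem from the characterization in Lemma~\ref{lem: suWS} and the hardness reduction in Lemma~\ref{lem: sure-weakly-psapce-hradness}, plus a modification of the exponential-memory witness family from the proof of Theorem~\ref{theo:sure-eventually-pspace-c}. For the PSPACE upper bound, I would spell out the (N)PSPACE algorithm sketched after Lemma~\ref{lem: suWS}: guess a subset $S \subseteq T$ and two numbers $m,n$ written in binary with $m,n \le 2^{\abs{Q}}$, then verify that $q_{\init} \in \Pre^m(S)$ and $S \subseteq \Pre^n(S)$. The bound on $m,n$ is justified by the ultimate periodicity of the sequence $(\Pre^k(S))_{k \geq 0}$, and each $\Pre$-iterate can be computed on the fly in polynomial space (this is exactly the 1L-AFA reachability computation of Section~\ref{sec:1L-AFA}). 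The matching lower bound is already provided by Lemma~\ref{lem: sure-weakly-psapce-hradness}.

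For the memory upper bound, I would unfold the strategy implicit in the proof of Lemma~\ref{lem: suWS}. Given witnesses $S$, $m$, $n$, a pure winning strategy operates in two phases: a transient phase of $m$ steps that plays, at each state $q \in \Pre^i(S)$, an action $a$ with $\post(q,a) \subseteq \Pre^{i-1}(S)$ (Lemma~\ref{lem:sure-ss-pre}); and then a periodic phase that, from any state $q \in S$, replays such an eventually synchronizing strategy of length $n$ into $S$ and loops. Since $m,n \leq 2^{\abs{Q}}$, the mode set of the corresponding transducer is exponential, giving an exponential-memory pure strategy.

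The key remaining task is the exponential memory lower bound. I would reuse the family $\M_n$ from the proof of Theorem~\ref{theo:sure-eventually-pspace-c} (see \figurename~\ref{fig:exp-mem}) with one modification: replace the absorbing sink reached from $q_T$ by a deterministic transition back to $q_{\init}$. Then the only way to produce $1$-synchronized distributions infinitely often is to play $a$ exactly $p_n^{\#} = \prod_{i=1}^n p_i$ times between successive hits of $q_T$, since after the initial uniform split among the cycles of coprime lengths $p_1,\dots,p_n$ the probability is all in $q_T$ only at multiples of $p_n^{\#}$ steps. A standard pumping argument shows that any finite-memory strategy with fewer than $p_n^{\#}$ modes must either play $b$ at a non-synchronized moment (losing mass to $q_\bot$) or fail to synchronize at all, so memory at least $p_n^{\#} > 2^n$ is required while $\abs{\M_n} \in O(n^2 \log n)$. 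The main subtlety here is handling randomized strategies: I would argue that if a randomized finite-memory strategy were winning, the induced Markov chain $\M_n(\alpha)$ must have $\M_n(\alpha)^\alpha_k(q_T)=1$ for infinitely many $k$, which forces the support of reachable belief states to cycle with period at least $p_n^{\#}$ in the mode component, hence exponentially many modes. Sufficiency of pure strategies follows because the strategy extracted from the witnesses $(S,m,n)$ is itself pure.
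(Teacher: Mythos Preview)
Your proposal is correct and follows essentially the same approach as the paper: the PSPACE upper bound via guessing $S,m,n$ and checking $q_{\init}\in\Pre^m(S)$, $S\subseteq\Pre^n(S)$ is exactly the (N)PSPACE algorithm the paper sketches after Lemma~\ref{lem: suWS}; the lower bound comes from Lemma~\ref{lem: sure-weakly-psapce-hradness}; the exponential-memory pure strategy is the one the paper extracts from Lemma~\ref{lem: suWS}; and your memory lower bound construction (redirecting $q_T$ to $q_{\init}$ in the family $\M_n$ of \figurename~\ref{fig:exp-mem}) is precisely the modification the paper describes just before the theorem. Your added discussion of the randomized case is more explicit than the paper's, which simply defers to the pumping argument from Theorem~\ref{theo:sure-eventually-pspace-c}.
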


\subsection{Almost-sure weakly synchronizing}\label{sec:almost-weakly}

We present a characterization of almost-sure weakly synchronizing that 
gives a PSPACE upper bound for the membership problem. Our characterization,
similar to Lemma~\ref{lem: almost-limit-reduce-limit-event} for almost-sure eventually synchronizing,
uses the limit-sure eventually synchronizing objectives with exact support
introduced in Section~\ref{sec:almost-eventually}. 
We show that 
an MDP is almost-sure weakly synchronizing in target $T$ if (and only if),
for some set $U$, there is a sure eventually synchronizing strategy in target $U$,
and from the probability distributions with support $U$ there is a limit-sure
winning strategy for eventually synchronizing in $\Pre(T)$ with support in $\Pre(U)$.
This ensures that from the initial state we can have the whole probability mass in~$U$, and from~$U$ 
have probability $1-\epsilon$ in $\Pre(T)$ (and in $T$ in the next step), 
while the whole probability mass is back in $\Pre(U)$ (and in $U$ in the next step), 
allowing to repeat the strategy for $\epsilon \to 0$, thus ensuring infinitely
often probability at least $1-\epsilon$ in $T$ (for all $\epsilon > 0$).

\begin{lemma}\label{lem: almost-weak-reduce-limit-event}
Let $\M$ be an MDP and $T$ be a target set. For all states $q_{\init}$, 
we have $q_{\init}\in \winas{weakly}(\fsum_T)$
if and only if there exists a set~$U$ of states such that
\begin{itemize}
\item $q_{\init} \in \winsure{event}(\fsum_U)$, and \smallskip
\item $d_U \in \winlim{event}(\fsum_{\Pre(T)},\Pre(U))$
where $d_U$ is the uniform distribution over~$U$.
\end{itemize}
\end{lemma}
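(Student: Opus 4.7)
The plan is to prove both directions of the biconditional. For the forward implication, assume $q_{\init}\in\winas{weakly}(\fsum_T)$ with a witness strategy $\alpha$, so $\limsup_{n}\M^\alpha_n(T)=1$ and there exists an increasing sequence $(n_i)_{i\geq 1}$ with $\M^\alpha_{n_i}(T)\geq 1-2^{-i}$. Since $Q$ has finitely many subsets, pigeonhole on the supports $\Supp(\M^\alpha_{n_i})$ yields, after passing to a subsequence, a fixed set $U$ with $\Supp(\M^\alpha_{n_i})=U$ for all $i$; the equality $\M^\alpha_{n_i}(U)=1$ immediately gives $q_{\init}\in\winsure{event}(\fsum_U)$. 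To establish the second condition, I would argue analogously to Lemma~\ref{lem:lse-pre} that if $\M^\alpha_{n_{i+1}}(T)\geq 1-2^{-(i+1)}$ with support $U$, then one step earlier the distribution has support contained in $\Pre(U)$ and probability at least $1-2^{-(i+1)}/\eta$ in $\Pre(T)$, where $\eta$ is the smallest positive transition probability of $\M$; the point is that any probability mass sitting outside $\Pre(T)$ leaks an $\eta$-fraction into $Q\setminus T$ under every action. The substrategy of $\alpha$ starting at time $n_i$ therefore witnesses $\M^\alpha_{n_i}\in\winlim{event}(\fsum_{\Pre(T)},\Pre(U))$, and Corollary~\ref{col:uniform-dist-limit} transfers this to $d_U$ since both distributions share the support $U$.

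For the converse direction, given a set $U$ satisfying both conditions, I build an almost-sure weakly synchronizing strategy by iterating a three-phase construction with tolerances $\epsilon_i=2^{-i}$: first, from $q_{\init}$ play a sure eventually synchronizing strategy for $U$ to bring all probability into $U$ (this phase is executed only once, since every subsequent iteration will already start with support in $U$); second, from the current distribution with support in $U$ play the strategy witnessing $d_U\in\winlim{event}(\fsum_{\Pre(T)},\Pre(U))$ at tolerance $\epsilon_i$, reaching a distribution with support in $\Pre(U)$ and probability at least $1-\epsilon_i$ in $\Pre(T)$; third, play one more action chosen pointwise, sending each $q\in\Pre(T)$ into $T$ via some $a_q$ with $\post(q,a_q)\subseteq T$, and each $q\in\Pre(U)\setminus\Pre(T)$ into $U$ via some $a_q$ with $\post(q,a_q)\subseteq U$. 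Using $T\subseteq U$ (which is forced by the requirement $\Pre(T)\subseteq\Pre(U)$ in the exact-support definition of $\winlim{event}$), the third phase produces a distribution with support in $U$ and probability at least $1-\epsilon_i$ in $T$, so the iteration resumes with $\epsilon_{i+1}$. This gives $\M^\alpha_n(T)\geq 1-\epsilon_i$ at infinitely many times $n$, whence $\limsup_n\M^\alpha_n(T)=1$, and the resulting strategy is pure.

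The main obstacle is the careful interplay between $T$ and $U$ in the forward direction: the pigeonhole choice $U=\Supp(\M^\alpha_{n_i})$ need not satisfy $T\subseteq U$, whereas the reverse direction genuinely relies on this inclusion in the third phase. I expect to resolve this by either enlarging $U$ to $\Supp(\M^\alpha_{n_i})\cup T$ and verifying that the substrategy argument still establishes the limit-sure property from $d_U$ (perhaps by a separate construction covering the added states in $T\setminus\Supp(\M^\alpha_{n_i})$), or by observing that at times $n_i$ the mass in $T$ already coincides with the mass in $T\cap U$ and then relating the two sets via monotonicity of $\Pre$. Ensuring that the corollary about equal supports applies uniformly across both directions is the delicate step.
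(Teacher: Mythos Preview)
Your approach matches the paper's: pigeonhole on supports for the forward direction, and an iterated three-phase strategy for the converse. The obstacle you flag is genuine, and your parenthetical claim that $\Pre(T)\subseteq\Pre(U)$ forces $T\subseteq U$ is false. After phase three the support lies only in $T\cup U$, so the next invocation of Corollary~\ref{col:uniform-dist-limit} is unjustified unless $T\subseteq U$; the paper's own proof glosses over exactly this point. Concretely, take $T=\{q_1\}$, $U=\{q_2\}$ in the MDP with $q_0\to q_2$ deterministically, $q_2\xrightarrow{a} q_1$, $q_2\xrightarrow{b} q_2$, and $q_1\to q_3$ with $q_3$ absorbing. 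Then $\Pre(T)=\{q_2\}\subseteq\Pre(U)=\{q_0,q_2\}$ and both right-hand conditions hold (with $n_\epsilon=0$), yet $q_0\notin\winas{weakly}(\fsum_T)$ since any mass reaching $q_1$ is immediately lost to~$q_3$.

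Of your two proposed repairs, enlarging $U$ to $U\cup T$ does not work in general: states of $T\setminus U$ may be dead ends (as $q_1$ is above), destroying the limit-sure property for $d_{U\cup T}$. Your second idea is the right one. Since $\Supp(\M^\alpha_{n_i})=U$, the mass in $T$ at time $n_i$ lies entirely in $T\cap U$, so the forward argument actually establishes the stronger statement $d_U\in\winlim{event}(\fsum_{\Pre(T\cap U)},\Pre(U))$. Use this in the converse: from each $q\in\Pre(T\cap U)$ play an action with $\post(q,a)\subseteq T\cap U\subseteq U$, so the single step both puts probability at least $1-\epsilon_i$ in $T$ and keeps the support inside $U$. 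The iteration then goes through cleanly, and the corrected characterization (with $\Pre(T\cap U)$ in place of $\Pre(T)$) still yields the intended PSPACE upper bound.
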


\begin{proof}
First, if $q_{\init} \in \winas{weakly}(\fsum_T)$, then 
there exists a strategy $\alpha$ such that for all $i \geq 0$ 
there exists $n_i \in \nat$ such that $\M^{\alpha}_{n_i}(T) \geq 1-2^{-i}$,
and moreover $n_{i+1} > n_i$ for all $i \geq 0$.
Let $s_i = \Supp(\M^{\alpha}_{n_i})$ be the support of~$\M^{\alpha}_{n_i}$.
Since the state space is finite, there is a set~$U$ that 
occurs infinitely often in the sequence~$s_0 s_1 \dots$,
thus for all $k>0$ there exists $m_k \in \nat$ such that 
$\M^{\alpha}_{m_k}(T) \geq 1-2^{-k}$ and 
$\M^{\alpha}_{m_k}(U) = 1$.
It follows that $\alpha$ is sure eventually synchronizing in $U$ 
from $q_{\init}$, i.e. $q_{\init} \in \winsure{event}(\fsum_U)$.
Moreover, we can assume that $m_{k+1} > m_k$ for all $k>0$
and thus $\M$ is also limit-sure eventually synchronizing in $\Pre(T)$
with exact support in $\Pre(U)$ from the initial distribution\footnote{%
Note that the initial distribution $d_1 = \M^{\alpha}_{m_1}$ can be fixed 
before the other quantifications in the statement that we want to prove, 
namely: $\exists d_1 \in \dist(U) \cdot \forall \epsilon > 0 \cdot \exists \alpha \cdot
\exists m_k: \M^{\alpha}_{m_k}(T) \geq 1-2^{-k}$ where we compute $\M^{\alpha}$ with 
initial distribution $d_1$. This is because we fixed the strategy $\alpha$
in the first step of the proof, and this is why we need that $q_{\init}$ is
almost-sure weakly synchronizing. 
Otherwise, if $q_{\init}$ is only limit-sure 
weakly synchronizing, we would get a possibly different initial distribution $d_1$
for each $\epsilon > 0$ (induced by a possibly different strategy $\alpha$
for each $\epsilon$) which would be problematic (see the example in Remark~\ref{rmk:limit-sure-event-fixed-dist}, p.\pageref{rmk:limit-sure-event-fixed-dist}). 
%
} $d_1 = \M^{\alpha}_{m_1}$.
Since $\Supp(d_1) = U = \Supp(d_U)$ and since 
only the support of the initial probability distributions is relevant for the limit-sure
eventually synchronizing objective (Corollary~\ref{col:uniform-dist-limit}),
it follows that $d_U \in \winlim{event}(\fsum_{\Pre(T)},\Pre(U))$.

To establish the converse, note that 
since $d_U \in \winlim{event}(\fsum_{\Pre(T)},\Pre(U))$, it follows
from Corollary~\ref{col:uniform-dist-limit} that
from all initial
distributions with support in~$U$, for all $\epsilon > 0$ there exists
a strategy $\alpha_{\epsilon}$ and a position $n_{\epsilon}$ such that
$\M^{\alpha_{\epsilon}}_{n_{\epsilon}}(T) \geq 1-\epsilon$ and
$\M^{\alpha_{\epsilon}}_{n_{\epsilon}}(U) = 1$.  We construct an
almost-sure weakly synchronizing strategy $\alpha$ as
follows:
\begin{itemize}
\item Since $q_{\init} \in \winsure{event}(\fsum_U)$, play according to
a sure eventually synchronizing strategy from $q_{\init}$ until all the
probability mass is in~$U$.  
\item Then for $i=1,2, \dots$ and $\epsilon_i = 2^{-i}$, repeat the following procedure: 
 \begin{itemize}
 \item given the current probability distribution, play according to $\alpha_{\epsilon_i}$ for $n_{\epsilon_i}$
 steps (ensuring probability mass at least $1-2^{-i}$ in $\Pre(T)$ and support of the probability mass in $\Pre(U)$);
 \item then from states in $\Pre(T)$, play an action to ensure reaching $T$ in the next step, and from states
       in $\Pre(U)$ ensure reaching $U$. 
 \item continue playing according to $\alpha_{\epsilon_{i+1}}$ for $n_{\epsilon_{i+1}}$ steps, etc.  
 \end{itemize}
\end{itemize}
 Since $n_{\epsilon_i} + 1 > 0$ for all $i \geq 0$, this strategy ensures that $\limsup_{n \to \infty}
\M^{\alpha}_n(T) = 1$ from $q_{\init}$, hence $q_{\init} \in \winas{weak}(\fsum_T)$. 
Note that this strategy is pure.
\qed
\end{proof}\medskip

Since the membership problems for sure eventually synchronizing
and for limit-sure eventually synchronizing with exact support are PSPACE-complete 
(Theorem~\ref{theo:sure-eventually-pspace-c} and Theorem~\ref{theo:limit-sure-eventually}), 
the membership problem for almost-sure weakly synchronizing
is in PSPACE by  
guessing the set $U$, and checking that $q_{\init} \in \winsure{event}(\fsum_U)$, and that 
$d_U \in \winlim{event}(\fsum_{\Pre(T)},\Pre(U))$.
We establish a matching PSPACE lower bound.

\begin{lemma}\label{lem:almost-weakly-pspace-hard}
The membership problem for $\winas{weakly}(\fsum_T)$ is PSPACE-hard
even if $T$ is a singleton.
\end{lemma}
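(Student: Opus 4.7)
The plan is to establish PSPACE-hardness by reusing the construction $\M \mapsto \N$ from Lemma~\ref{lem:limit-event-pspace-hard} (illustrated in \figurename~\ref{fig:lim-sure-reduction}, whose caption already announces that it caters for almost-sure weakly synchronizing), which reduces from the universal finiteness problem for 1L-AFA (PSPACE-complete by Lemma~\ref{lem:universal-finiteness-pspace-hard}, and still PSPACE-complete when the accepting set is a singleton). Recall that $\N$ adds to $\M$ a fresh state $q_{\init}$ that self-loops on every $a \in \Act$ and goes to the uniform distribution over $Q$ on a new action $\sharp$, while every $q \in Q$ transitions to $q_{\init}$ on $\sharp$. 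I would show that $\Pre^n_{\M}(T) \neq \emptyset$ for all $n \geq 0$ if and only if $q_{\init} \in \winas{weakly}(\fsum_T)$ in $\N$.

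One direction is immediate: by Remark~\ref{rem:expressiveness}, almost-sure weakly synchronizing implies limit-sure eventually synchronizing, so the $(\Leftarrow)$ part follows from the converse direction already proved in Lemma~\ref{lem:limit-event-pspace-hard}. For $(\Rightarrow)$, assume the predecessor sequence never empties and choose $k_0, r \leq 2^{\abs{Q}}$ with $R := \Pre^{k_0}_\N(T) = \Pre^{k_0+r}_\N(T)$. I would exhibit a single pure strategy~$\alpha$ producing a sequence of times $n_1 < n_2 < \dots$ with $\N^\alpha_{n_i}(T) \geq 1 - 2^{-i}$, hence $\limsup_n \N^\alpha_n(T) = 1$. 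The strategy runs in phases indexed by $i\geq 1$, each starting from a distribution carrying mass at least $1 - \eta_i$ on $q_{\init}$, with $\eta_i \to 0$. Phase $i$ has three sub-phases:
\emph{(a)} run the accumulation strategy $\alpha^{*}$ from the proof of Lemma~\ref{lem:limit-event-pspace-hard} (which plays $\sharp$ at $q_{\init}$ and $R$-safe actions in $Q$) until, at some synchronized time $jr + k_0$, the mass in $R$ exceeds $1 - \delta_i$ for a chosen $\delta_i \ll 2^{-i}$;
\emph{(b)} switch to playing $T$-safe actions on $Q$ (and any $a \in \Act$ at $q_{\init}$, which self-loops and so keeps its mass) so as to route the $R$-mass onto $T$ in exactly $k_0$ steps, reaching a time $n_i$ with $\N^\alpha_{n_i}(T) \geq 1 - 2^{-i}$;
\emph{(c)} play $\sharp$ once: from every $q \in Q$ all mass is cleanly pulled to $q_{\init}$, while from $q_{\init}$ only the small residual mass is re-spread over $Q$, so phase $i+1$ starts with $\eta_{i+1} \leq 2^{-i}$.

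The main technical care is in bookkeeping the error mass across the three sub-phases and across phase boundaries, and especially in justifying step~(c): after~(b) at most $2^{-i}$ of the total mass sits on $q_{\init}$, so $\sharp$ re-injects at most that much back into $Q$, while the $\geq 1 - 2^{-i}$ fraction residing in $T \subseteq Q$ is entirely gathered at $q_{\init}$. Choosing $\delta_i$ sufficiently smaller than $2^{-i}$ and making the accumulation sub-phase long enough (using the geometric bound $(1 - 1/\abs{Q})^j$ established in the proof of Lemma~\ref{lem:limit-event-pspace-hard}) makes the residual mass in $q_{\init}$ and in the intermediate predecessor sets negligible compared to $2^{-i}$, so that the three stated inequalities follow by a routine calculation. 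The hardness bound then transfers without loss of generality to the case where $T$ is a singleton, since the starting PSPACE-completeness of the universal finiteness problem already holds for singleton accepting sets.
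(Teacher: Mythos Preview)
Your reduction and the backward direction $(\Leftarrow)$ match the paper exactly. For the forward direction $(\Rightarrow)$, your argument is correct but takes a different route from the paper's.

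The paper does not construct an almost-sure weakly synchronizing strategy by hand. Instead it invokes the characterization of Lemma~\ref{lem: almost-weak-reduce-limit-event}: to conclude $q_{\init}\in\winas{weakly}(\fsum_T)$ in $\N$, it suffices to exhibit a set $U$ with $q_{\init}\in\winsure{event}(\fsum_U)$ and $d_U\in\winlim{event}(\fsum_{\Pre(T)},\Pre(U))$. Taking $U=Q$, the first condition is immediate (play $\sharp$ once from $q_{\init}$), and the second reduces---after one $\sharp$ from $d_Q$, which lands on $q_{\init}$---to exactly the statement $q_{\init}\in\winlim{event}(\fsum_T,Q)$ already established inside the proof of Lemma~\ref{lem:limit-event-pspace-hard}. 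This is a two-line argument once Lemma~\ref{lem: almost-weak-reduce-limit-event} is available.

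Your approach instead builds the witnessing strategy explicitly in phases (accumulate in $R$, push to $T$, reset via $\sharp$, repeat with shrinking error). This is essentially re-deriving, for this particular $\N$, the backward direction of Lemma~\ref{lem: almost-weak-reduce-limit-event}. The advantage is that your proof is self-contained and does not rely on that characterization lemma; the cost is the bookkeeping you mention (phase boundaries, resetting the modular counter for $\alpha^{*}$ at each phase start, and specifying what to play on $Q\setminus R$ during sub-phase~(b), where $T$-safe actions need not exist---any action in $\Act$ suffices there since that mass is already below $\delta_i$). None of these are gaps, just details to spell out. If Lemma~\ref{lem: almost-weak-reduce-limit-event} is already in hand, the paper's route is shorter; if one wanted a proof independent of that lemma, yours would be the natural choice.
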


\begin{proof}
We use the same reduction and construction as in the PSPACE-hardness proof of 
Lemma~\ref{lem:limit-event-pspace-hard} where from an MDP~$\M$ and a singleton~$T$, 
we constructed~$N$ and $q_{\init}$. Referring to that construction, we 
show that $\Pre^n_{\M}(T) \neq \emptyset$ for all~$n \geq 0$ if and only if 
$q_{\init} \in \winas{weakly}(T)$. 

First, if $\Pre^n_{\M}(T) \neq \emptyset$ for all~$n \geq 0$, then by 
Lemma~\ref{lem: almost-weak-reduce-limit-event} we need
to show that $(i)$~$q_{\init} \in \winsure{event}(\fsum_Q)$, and 
$(ii)$ $d_Q \in \winlim{event}(\fsum_{\Pre(T)},\Pre(Q))$
where $d_Q$ is the uniform distribution over~$Q$.
To show $(i)$, we can play $\sharp$ from $q_{\init}$ to get
the probability mass synchronized in $Q$. To show $(ii)$, since playing
$\sharp$ from $d_Q$ ensures to reach $q_{\init}$, it suffices to prove 
that $q_{\init} \in \winlim{event}(\fsum_{T},Q)$, which is done in the
proof of Lemma~\ref{lem:limit-event-pspace-hard}.

For the converse direction, if $q_{\init}$ is almost-sure weakly synchronizing in~$T$,
then $q_{\init}$ is also limit-sure eventually synchronizing in~$T$,
and we can directly use that argument in the proof of Lemma~\ref{lem:limit-event-pspace-hard}
to show that $\Pre^n_{\M}(T) \neq \emptyset$ for all~$n \geq 0$.

It follows from this reduction that the membership problem for almost-sure weakly 
synchronization is PSPACE-hard.
\qed
\end{proof}\medskip

It is easy to show that winning strategies require infinite memory for almost-sure 
weakly synchronizing in the same example that we used in 
the proof of Lemma~\ref{lem:inf-mmeory-almost-event} to show that 
infinite memory may be necessary for almost-sure eventually synchronizing (\figurename~\ref{fig:inf-mem}),


\begin{theorem}\label{theo:weakly-almost}
For almost-sure weakly synchronizing in MDPs:

\begin{enumerate}
\item (Complexity). The membership problem is PSPACE-complete.

\item (Memory). Infinite memory is necessary in general for both pure and 
randomized strategies, and pure strategies are sufficient.
\end{enumerate}
\end{theorem}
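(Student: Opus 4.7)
The plan is to assemble the two parts of the theorem directly from the lemmas already established, so most of the work is bookkeeping.

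For the PSPACE upper bound, I will invoke the characterization of Lemma~\ref{lem: almost-weak-reduce-limit-event}: $q_{\init} \in \winas{weakly}(\fsum_T)$ iff there exists a set $U$ with $q_{\init} \in \winsure{event}(\fsum_U)$ and $d_U \in \winlim{event}(\fsum_{\Pre(T)}, \Pre(U))$. The algorithm nondeterministically guesses $U \subseteq Q$ (polynomial space) and then performs two checks. The first check, whether $q_{\init} \in \winsure{event}(\fsum_U)$, is in PSPACE by Theorem~\ref{theo:sure-eventually-pspace-c}. The second, whether $d_U$ is limit-sure eventually synchronizing in $\Pre(T)$ with support in $\Pre(U)$, is in PSPACE by Theorem~\ref{theo:limit-sure-eventually} (which covers the exact-support version). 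Savitch's theorem then collapses NPSPACE to PSPACE. The matching lower bound is already proved in Lemma~\ref{lem:almost-weakly-pspace-hard}, even for singleton $T$.

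For the memory upper bound (pure strategies suffice), I will point to the second direction of the proof of Lemma~\ref{lem: almost-weak-reduce-limit-event}: from the witness $U$, one obtains a pure sure eventually synchronizing strategy to bring all probability mass into $U$, and then alternates, for $\epsilon_i = 2^{-i}$, phases of pure strategies $\alpha_{\epsilon_i}$ (that achieve probability $\geq 1-\epsilon_i$ in $\Pre(T)$ with support in $\Pre(U)$) followed by a one-step pure move into $T$/$U$. All ingredients are pure, the concatenation is pure, and the resulting strategy witnesses almost-sure weakly synchronizing.

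For the memory lower bound (infinite memory is necessary), I will re-use the MDP of \figurename~\ref{fig:inf-mem} already analyzed in the proof of Lemma~\ref{lem:inf-mmeory-almost-event}. That example is almost-sure eventually synchronizing in $q_2$ via a repeated pumping scheme (play $a$ for $n_i$ steps to accumulate $1-2^{-i}$ in $q_1$, then play $b$ to reach $q_2$ with that probability, then reset via $q_{\init}$), and iterating this pumping scheme produces arbitrarily high synchronization in $q_2$ at infinitely many times, hence almost-sure weakly synchronizing. The impossibility argument is essentially unchanged: any finite-memory (randomized) strategy induces a finite Markov chain $\M(\alpha)$; since on every action $q_2$ leads back to $q_{\init}$, whenever a $q_2$-state is recurrent there is a recurrent $q_{\init}$-state in the same bottom strongly connected component, forcing a bounded-away-from-zero probability mass in $q_{\init}$-states in every periodic class of the stationary behavior, so $\limsup_n \M^{\alpha}_n(q_2) < 1$, contradicting almost-sure weakly synchronizing. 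The main subtlety is to note that this argument works for weakly (not just eventually) synchronizing because the $\limsup$ bound in the recurrent regime is what matters.

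Putting these three ingredients together yields Theorem~\ref{theo:weakly-almost}. The only delicate point is the second-direction construction of Lemma~\ref{lem: almost-weak-reduce-limit-event}: one must be careful that between successive phases $\alpha_{\epsilon_i}$ and $\alpha_{\epsilon_{i+1}}$ the intermediate one-step move from $\Pre(U)$ back into $U$ preserves the hypothesis needed to restart with distribution supported on $U$, which is guaranteed by Corollary~\ref{col:uniform-dist-limit} (only the support matters). No further argument is required.
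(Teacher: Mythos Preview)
Your proposal is correct and follows essentially the same route as the paper: the PSPACE upper bound via guessing $U$ and invoking Theorem~\ref{theo:sure-eventually-pspace-c} and Theorem~\ref{theo:limit-sure-eventually} through the characterization of Lemma~\ref{lem: almost-weak-reduce-limit-event}, the lower bound from Lemma~\ref{lem:almost-weakly-pspace-hard}, pure sufficiency from the explicit construction in Lemma~\ref{lem: almost-weak-reduce-limit-event}, and the infinite-memory lower bound by reusing the MDP of \figurename~\ref{fig:inf-mem} with the recurrent-class argument of Lemma~\ref{lem:inf-mmeory-almost-event}. Your added remark that the $\limsup$ bound in the recurrent regime is exactly what is needed for the weakly (rather than eventually) case is a useful clarification the paper leaves implicit.
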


\subsection{Limit-sure weakly synchronizing}\label{sec:limit-sure-weakly}

\begin{figure}[t]

\begin{center}
    \begin{picture}(79,40)

\node[Nmarks=i](n0)(10,32){$q_{\init}$}
\node[Nmarks=n](n1)(10,7){$q_1$}
\node[Nmarks=n](n2)(35,32){$q_2$}
\node[Nmarks=n](n3)(55,32){$q_3$}
\node[Nmarks=r](n4)(75,17){$q_4$}
\node[Nmarks=n](n5)(55,7){$q_5$}
\node[Nmarks=n](n6)(35,7){$q_6$}

\drawedge[ELpos=40, ELside=l, curvedepth=4](n0,n1){$a,b:\frac{1}{2}$}
\drawedge[ELpos=50, ELside=l](n0,n2){$a,b:\frac{1}{2}$}
\drawedge[ELpos=50, ELside=l, curvedepth=4](n1,n0){$a,b$}

\drawedge[ELpos=50, ELside=l, curvedepth=4](n2,n3){$a,b$}
\drawedge[ELpos=50, ELside=l, curvedepth=4](n3,n2){$a$}

\drawedge[ELpos=55](n3,n4){$b$}
\drawedge[ELpos=55](n4,n5){$a,b$}
\drawedge[ELpos=55](n5,n3){$a,b:\frac{1}{2}$}

\drawedge[ELpos=50, ELside=r, curvedepth=-4](n5,n6){$a,b:\frac{1}{2}$}
\drawedge[ELpos=50, ELside=r, curvedepth=-4](n6,n5){$a,b$}


\end{picture}
\end{center}
 \caption{An example to show $q_{\init}\in \winlim{weakly}(q_4)$ implies $q_{\init}\in \winas{weakly}(q_4)$.\label{fig:weak-limit}}

\end{figure}

We show that the winning regions for almost-sure
and limit-sure weakly synchronizing coincide. The result
is not intuitively obvious (recall that it does not
hold for eventually synchronizing, by Lemma~\ref{lem:dif-in-def}$(ii)$) 
and requires a careful analysis of the structure of limit-sure winning strategies
to show that they always imply the existence of an almost-sure 
winning strategy. The construction of an almost-sure
winning strategy from a family of limit-sure winning strategies
is illustrated in the following example.

Consider the MDP $\M_{\ref{fig:weak-limit}}$ in~\figurename~\ref{fig:weak-limit}
with initial state $q_{\init}$ and target set $T = \{q_4\}$.
Note that there is a relevant strategic choice only in $q_3$,
where we can either loop through $q_2$, or go to the target $q_4$. 
First we argue that $\M_{\ref{fig:weak-limit}}$ is limit-sure weakly synchronizing,
then we explain why limit-sure weakly synchronizing implies that we can construct  
an almost-sure weakly synchronizing strategy in this example, using the same
line of arguments as in our proof of the general result (that limit-sure winning 
implies almost-sure winning) presented further as Lemma~\ref{lem:weak-without-vanishing}, 
Lemma~\ref{lem:weak-event-uniform}, and Theorem~\ref{theo:weakly-ls-is-as}.
 
\subsubsection{The MDP $\M_{\ref{fig:weak-limit}}$ is limit-sure weakly synchronizing}\label{sec:M10-limit-sure-weakly}
To show that $\M_{\ref{fig:weak-limit}}$ (\figurename~\ref{fig:weak-limit}) is limit-sure weakly synchronizing,
we rely on the following claims:
\begin{itemize}
\item $q_{\init}$ is limit-sure eventually synchronizing with target $T = \{q_4\}$; 

\item $q_4$, which can be viewed as a uniform distribution over $T$, 
is also limit-sure eventually synchronizing with target $T = \{q_4\}$ (even after at least one step). 
\end{itemize}

The above claims hold since for arbitrarily small $\epsilon > 0$, from both 
$q_{\init}$ and $q_4$, we can inject probability mass $1-\epsilon$ 
in $q_3$ (by playing $a$ long enough in $q_3$),
and then switching to playing $b$ in $q_3$ gets probability $1 - \epsilon$ 
in $T$. 

Now, these two claims are sufficient to show that $q_{\init}$
is limit-sure weakly synchronizing in $T = \{q_4\}$,
and to define a family $\alpha_{\epsilon}$ of limit-sure winning strategies as follows: 
given $\epsilon > 0$, let $\alpha_{\epsilon}$ play from $q_{\init}$ a strategy 
to ensure probability at least $p_1 = 1-\frac{\epsilon}{2}$ in $q_4$ (in finitely many steps),
and then play according to a strategy that ensures from $q_4$ 
probability $p_2  = p_1 - \frac{\epsilon}{4}$ in $q_4$ (in finitely many, and at least one step), 
and repeat this process using strategies that ensure, if the probability mass in $q_4$ 
is at least $p_i$, that (in at least one step) the probability in $q_4$ is at least
$p_{i+1}  = p_i - \frac{\epsilon}{2^{i+1}}$.
It follows that $p_i = 1 - \frac{\epsilon}{2} - \frac{\epsilon}{4} - \dots - \frac{\epsilon}{2^i} > 1 - \epsilon$ for all $i \geq 1$, and thus $\limsup_{i \to \infty} p_i \geq 1 - \epsilon$, thus 
$\alpha_{\epsilon}$ is weakly $(1-\epsilon)$-synchronizing. 
Therefore $q_{\init}$ is limit-sure weakly synchronizing for target~$\{q_4\}$.

\paragraph{Illustration of Lemma~\ref{lem:weak-event-uniform}} 
We show in Lemma~\ref{lem:weak-event-uniform} that in general the above two claims
hold in a limit-sure weakly synchronizing MDP (and it is easy to generalize the argument
we used for $\M_{\ref{fig:weak-limit}}$ to show that the
converse implication of Lemma~\ref{lem:weak-event-uniform} holds as well, although
we do not need to prove this for our purpose). Hence, Lemma~\ref{lem:weak-event-uniform} shows 
that limit-sure weakly synchronizing strategies can always be decomposed as a repetition
of eventually $(1-\epsilon)$-synchronizing strategies, played for finitely many steps
(and with decreasing $\epsilon$).

\subsubsection{The MDP $\M_{\ref{fig:weak-limit}}$ is almost-sure weakly 
synchronizing}\label{sec:M10-almost-sure-weakly}

The following claims are central to show that $\M_{\ref{fig:weak-limit}}$ (\figurename~\ref{fig:weak-limit})
is almost-sure weakly synchronizing (note the slight difference with the
claims in Section~\ref{sec:M10-limit-sure-weakly}):
\begin{itemize}
\item $q_{\init}$ is limit-sure eventually synchronizing with target $\{q_3\}$; 

\item $q_4$, which can be viewed as a uniform distribution over $T$, 
is also limit-sure eventually synchronizing with target $\{q_3\}$. 
\end{itemize}

The above claims hold by the exact same argument as in Section~\ref{sec:M10-limit-sure-weakly} 
(and follow directly from the fact that $\M_{\ref{fig:weak-limit}}$ is limit-sure weakly synchronizing).
Intuitively, an almost-sure weakly synchronizing strategy in $\M_{\ref{fig:weak-limit}}$
repeats the following phases (informally):
\begin{enumerate}
\item accumulate probability mass (arbitrarily close to $1$, say $1-\epsilon_0$) in $q_3$; \label{item:as-w-one}

\item transfer the probability mass from $q_3$ to $q_4$; \label{item:as-w-two}

\item given the current distribution, decrease $\epsilon_0$ by half and repeat from~(\ref{item:as-w-one}.). \label{item:as-w-three}
 
\end{enumerate}
 
Such a strategy would ensure, for all $\epsilon > 0$, probability mass at least 
$1-\epsilon$ in $q_4$ infinitely often, and thus it is almost-sure weakly synchronizing.
To show that such a strategy exists and is well defined, we need to show that at every iteration
from the distribution at the beginning of step~(\ref{item:as-w-one}.)~we can indeed
accumulate probability mass in $q_3$. This is true in the first iteration, as we start
from $q_{\init}$. After one iteration, the distribution has support $S = \{q_1,q_2,q_4\} = \{q_1,q_2\} \cup \{q_4\}$
where the distributions over $\{q_1,q_2\}$ are limit-sure eventually synchronizing to $\{q_3\}$ (by analogous argument as the first 
claim above), and $q_4$ is also limit-sure eventually synchronizing to $\{q_3\}$ (by the second claim above).
In the next iterations, from the distribution at step~\ref{item:as-w-one} 
the situation is similar (as in fact all states are limit-sure weakly synchronizing with target $\{q_3\}$).

\paragraph{Illustration of Theorem~\ref{theo:weakly-ls-is-as} (claim 1 of the proof)}
The state $q_3$ plays a crucial role here because $\{q_3\} = \Pre(T)$
and $\{q_3\} = \Pre^2(\{q_3\})$, thus $R = \{q_3\}$ occurs infinitely
often in the sequence $\Pre^i(T)$ (for $i \geq 0$), which is ultimately 
periodic with period $r=2$. It follows from the general result established in Claim~1 
of the proof of Theorem~\ref{theo:weakly-ls-is-as} that limit-sure weakly synchronizing
with target $T = \{q_4\}$ implies limit-sure eventually (and even almost-sure weakly) 
synchronizing with target $\{q_3\}$.

Intuitively, from the fact that the distributions over both $\{q_1,q_2\}$ and $\{q_4\}$
are limit-sure eventually synchronizing to $\{q_3\}$, it may not be obvious
that the distributions over $\{q_1,q_2,q_4\}$ are limit-sure eventually synchronizing to $\{q_3\}$.
For instance in the example of $\M_{\ref{fig:weak-limit}}$ (\figurename~\ref{fig:weak-limit}), 
(almost all) the probability mass in $T = \{q_4\}$ can move
to $q_3$ in an even number of steps, while from $\{q_1,q_2\}$ an odd number
of steps is required, resulting in a \emph{shift} of the probability mass. 

\paragraph{Illustration of Theorem~\ref{theo:weakly-ls-is-as} (claim 2 of the proof)}
Although, the simplest strategy accumulates probability mass in $q_3$ after 
\emph{even} number of steps from $\{q_4\}$, by repeating the same strategy two times from $q_4$ 
(injecting large probability mass in $q_3$, moving to $q_4$, and injecting
in $q_3$ again), we can accumulate probability mass in $q_3$ after 
\emph{odd} number of steps from $\{q_4\}$, thus in synchronization with the
probability mass accumulated in $q_3$ from $\{q_1,q_2\}$. However, by doing that,
we also hit several other states and the remaining (small) probability mass
is distributed over support $\{q_1,q_2,q_3,q_4,q_5,q_6\}$ when the next iteration
starts. By a similar argument, we can again construct a strategy to implement the 
phases described above, and this can be done for all iterations and for $\epsilon \to 0$.
Indeed, the result of Claim~2 in the proof of Theorem~\ref{theo:weakly-ls-is-as}
shows that by repeating strategies with shifting, we can eventually synchronize
all the shifts.

\paragraph{Vanishing states and Lemma~\ref{lem:weak-without-vanishing}}
In the example of $\M_{\ref{fig:weak-limit}}$ (\figurename~\ref{fig:weak-limit}),
the target $T$ is a singleton, which makes the result easier to prove than
for an arbitrary set $T$. In particular, the second claim at the beginning
of this section (that $q_4$ is limit-sure eventually synchronizing with 
target $\{q_3\}$) follows from the fact that $q_4$ is limit-sure weakly synchronizing
to itself: it is easy to argue that if the probability is infinitely often
arbitrarily close to $1$ in $T = \{q_4\}$, then (starting with probability~$1$) 
from $q_4$ there must be a way to inject (almost all) the probability mass back to $q_4$ (via $q_3$).
However, if $T$ is not a singleton, the same argument is more difficult because
when the probability mass is $1-\epsilon$ in $T$, it may still be that some state $q$ in $T$ holds 
only a tiny (less than $\epsilon$) probability mass, which makes it more difficult
to argue that we must be able to inject (almost all) the probability mass from $q$ back to $T$
(because if the tiny probability in $q$ could not be injected at all in $T$,
there would be no contradiction to the fact that probability $1-\epsilon$ is in $T$
infinitely often).

Therefore, given an arbitrary target set $T$, we need
to get rid of the states in $T$ that do not contribute a significant (i.e., bounded away from $0$)
probability mass in the limit, that we call the vanishing states.
We show that the vanishing states can be removed from $T$ without changing the winning 
region for limit-sure winning. When the target set has no vanishing state, we can 
construct an almost-sure winning strategy similarly to the case of a singleton target set.
\smallskip

Given an MDP $\M$ with initial state $q_{\init} \in \winlim{weakly}(\fsum_T)$ that is 
limit-sure winning for the weakly synchronizing objective in target set $T$,
let~$(\alpha_i)_{i\in\nat}$ be a family of limit-sure winning strategies
such that $\limsup_{n \to \infty}\M^{\alpha_i}_n(T)\geq 1-\epsilon_i$
where $\lim_{i \to \infty} \epsilon_i = 0$. Hence, by definition of $\limsup$,
for all $i \geq 0$ there exists a strictly increasing sequence
$k_{i,0} < k_{i,1} < \cdots$ of positions such that $\M^{\alpha_i}_{k_{i,j}}(T) \geq 1-2 \epsilon_i$ 
for all $j \geq 0$.
A state $q \in T$ is \emph{vanishing} if $\liminf_{i\to \infty} \liminf_{j\to \infty} \M^{\alpha_i}_{k_{i,j}}(q)=0$
for some family of limit-sure weakly synchronizing strategies~$(\alpha_i)_{i\in\nat}$.
Intuitively, the contribution of a vanishing state $q$ to the probability in $T$
tends to $0$ and therefore $\M$ is also limit-sure winning for the weakly 
synchronizing objective in target set $T \setminus \{q\}$.

\begin{figure}[t]
\begin{center}
    \begin{picture}( 40,35)

\node[Nmarks=i,iangle=180](n0)(10,25){$q_{\init}$}
\node[Nmarks=n](n1)(35,25){$q_1$}
\node[Nmarks=n](n2)(10,5){$q_2$}
\node[Nmarks=n](n3)(35,5){$q_3$}

\drawedge[ELdist=.5](n0,n1){$a: \frac{1}{2}$}
\drawloop[ELside=l,loopCW=y, loopangle=90, loopdiam=4](n0){$a,b:\frac{1}{2}$}
\drawedge[ELdist=.5](n0,n2){$b: \frac{1}{2}$}

\drawloop[ELside=l,loopCW=y, loopangle=90, loopdiam=4](n1){$a$}
\drawedge[ELdist=.5,ELside=r](n1,n3){$b$}

\drawedge[ELpos=50, ELdist=.5, ELside=l, curvedepth=10](n2,n0){$a,b$}
\drawedge[ELpos=50, ELdist=.5, ELside=r, curvedepth=-10](n3,n1){$a,b$}
\end{picture}
\end{center}
 \caption{The state~$q_2$ is  vanishing  for target set~$T=\{q_2,q_3\}$ and 
strategies~$(\alpha)_{i\in\nat}$ where~$\alpha_i$ repeats playing~$i$ times~$a$, then playing~$b$ forever.
\label{fig:vanishing-state}}
\end{figure}
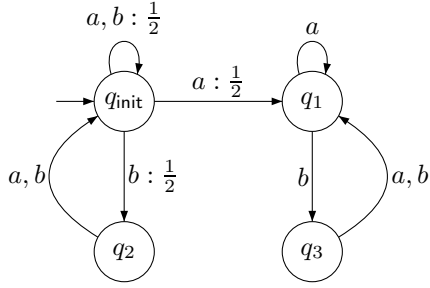

\paragraph{Example}
Consider the MDP in~\figurename~\ref{fig:vanishing-state} where 
all transitions are deterministic except from the initial state~$q_{\init}$.
The state~$q_{\init}$ has two successors on all actions:
$$\delta(q_{\init},a)(q_{\init})=\delta(q_{\init},a)(q_1)=\frac{1}{2} 
\quad\text{ and }\quad \delta(q_{\init},b)(q_{\init})=\delta(q_{\init},b)(q_2)=\frac{1}{2}.$$
Let~$T=\{q_2,q_3\}$ be the target set and for all $i\in \nat$, let 
$\alpha_i$ be the strategy 
that repeats forever the following template in every state:
playing~$i$ times~$a$ and then playing~$b$. 
The family of strategies~$(\alpha_i)_{i\in\nat}$ is a witness to show that
$q_{\init} \in \winlim{weakly}(\fsum_T)$ where
 the state~$q_2$ is a vanishing state. 
The contribution of~$q_2$ in accumulating the probability
mass in $\{q_2,q_3\}$ tends to~$0$ when~$i\to\infty$.
As a result, $q_{\init} \in \winlim{weakly}(q_3)$ too.  

\subsubsection{Proof that limit-sure weakly and almost-sure weakly coincide}

We present the formal proof of the main result (Theorem~\ref{theo:weakly-ls-is-as})
along with the intermediate lemmas that we illustrated in Section~\ref{sec:M10-limit-sure-weakly}
and Section~\ref{sec:M10-almost-sure-weakly}.

\begin{lemma}\label{lem:weak-without-vanishing}
If an MDP $\M$ is limit-sure weakly synchronizing in target set~$T$, 
then there exists a set $T' \subseteq T$ such that 
$\M$ is limit-sure weakly synchronizing in~$T'$
without vanishing states.
\end{lemma}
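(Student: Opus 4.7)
The approach is to peel off vanishing states from the target set one at a time, until none remain. Set $T_0 = T$ and, for $k \geq 0$, if $T_k$ still contains a state $q$ that is vanishing with respect to the objective of weakly synchronizing in $T_k$, let $T_{k+1} = T_k \setminus \{q\}$. The invariant to maintain is $q_{\init} \in \winlim{weakly}(\fsum_{T_k})$. Since $T$ is finite and each $T_{k+1}$ is strictly smaller than $T_k$, the procedure terminates after at most $\abs{T}$ steps at a set $T' \subseteq T$ which, by the termination condition, contains no vanishing state.

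\textbf{Key step (one removal).} The heart of the proof is showing: if $\M$ is limit-sure weakly synchronizing in a set $S$ and $q \in S$ is vanishing, then $\M$ is limit-sure weakly synchronizing in $S \setminus \{q\}$. Let $(\alpha_i)_{i \in \nat}$ be a family witnessing both the limit-sure weakly synchronizing objective for $S$ (so $\limsup_n \M^{\alpha_i}_n(S) \geq 1 - \epsilon_i$ with $\epsilon_i \to 0$) and the vanishing condition for $q$; let $k_{i,0} < k_{i,1} < \cdots$ be the associated positions satisfying $\M^{\alpha_i}_{k_{i,j}}(S) \geq 1 - 2\epsilon_i$. Fix $\epsilon > 0$. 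Using that the nested $\liminf$ vanishes and that $\epsilon_i \to 0$, one can pick a single $i$ large enough so that both $\epsilon_i < \epsilon/3$ and $\liminf_{j \to \infty} \M^{\alpha_i}_{k_{i,j}}(q) < \epsilon/3$ hold. The second inequality yields infinitely many $j$ with $\M^{\alpha_i}_{k_{i,j}}(q) < \epsilon/3$, and for each such $j$,
\[
  \M^{\alpha_i}_{k_{i,j}}(S \setminus \{q\}) \;=\; \M^{\alpha_i}_{k_{i,j}}(S) - \M^{\alpha_i}_{k_{i,j}}(q) \;\geq\; (1 - 2\epsilon_i) - \epsilon/3 \;>\; 1 - \epsilon .
\]
Hence $\limsup_{n \to \infty} \M^{\alpha_i}_n(S \setminus \{q\}) \geq 1 - \epsilon$, i.e.\ $\alpha_i$ is weakly $(1-\epsilon)$-synchronizing in $S \setminus \{q\}$. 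As $\epsilon$ was arbitrary, $\M$ is limit-sure weakly synchronizing in $S \setminus \{q\}$.

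\textbf{Termination and nonemptiness.} Applying the key step inductively preserves the invariant $q_{\init} \in \winlim{weakly}(\fsum_{T_k})$, so the final $T'$ inherits the limit-sure weakly synchronizing property and, by the stopping condition, admits no vanishing state with respect to the new target $T'$. Note that $T'$ cannot be empty: for any strategy $\alpha$ and any $n$ one has $\sum_{q \in \emptyset} \M^{\alpha}_n(q) = 0$, so $\emptyset$ is never limit-sure weakly synchronizing, whereas the invariant forces $T'$ to be.

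The main obstacle I expect is the key step, specifically the simultaneous extraction of an index $i$ that controls both $\epsilon_i$ and the vanishing contribution $\liminf_j \M^{\alpha_i}_{k_{i,j}}(q)$; once the nested $\liminf$ is correctly unwound, the arithmetic on $\M^{\alpha_i}_{k_{i,j}}(S \setminus \{q\})$ and the finite outer induction are routine. A secondary point worth care is that the notion of \emph{vanishing} is redefined at each step relative to the current target $T_k$ (since the admissible families $(\alpha_i)$ change when the target shrinks), so the induction is on targets rather than on a fixed witnessing family.
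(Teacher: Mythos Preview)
Your proof is correct and follows essentially the same approach as the paper: iteratively remove vanishing states using the nested-$\liminf$ definition to extract, for each $\epsilon$, an index $i$ with both $\epsilon_i$ small and the contribution of $q$ small along a subsequence of positions. Your treatment is in fact slightly more careful than the paper's in two respects—the explicit observation that $T'\neq\emptyset$, and the remark that the notion of vanishing is relative to the current target at each step—but the core argument is the same.
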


\begin{proof}
If there is no vanishing state for~$(\alpha_i)_{i\in\nat}$, then take $T' = T$
and the proof is complete. 
Otherwise, let $(\alpha_i)_{i\in\nat}$ be a 
family of limit-sure winning strategies such that $\limsup_{n \to \infty}\M^{\alpha_i}_n(T)\geq 1-\epsilon_i$
where $\lim_{i \to \infty} \epsilon_i = 0$ and let $q$ be a vanishing state 
for $(\alpha_i)_{i\in\nat}$.
We show that $(\alpha_i)_{i\in\nat}$ is limit-sure weakly synchronizing in~$T \setminus \{q\}$.
For every $i \geq 0$ let $k_{i,0} < k_{i,1} < \cdots$ be a strictly increasing sequence 
such that $(a)$ $\M^{\alpha_i}_{k_{i,j}}(T) \geq 1-2\epsilon_i$ for all $i,j \geq 0$, and
$(b)$ $\liminf_{i\to \infty} \liminf_{j\to \infty} \M^{\alpha_i}_{k_{i,j}}(q)=0$.

It follows from $(b)$ that for all $\epsilon > 0$ and all~$x > 0$ there exists~$i>x$ such that
for all~$y > 0$ there exists~$j>y$ such that~$\M^{\alpha_i}_{k_{i,j}}(q)<\epsilon$,
and thus 
$$\M^{\alpha_i}_{k_{i,j}}(T \setminus \{q\}) \geq 1-2\epsilon_i - \epsilon$$
by $(a)$. Since this holds for infinitely many $i$'s, we can choose $i$ such that
$\epsilon_i < \epsilon$ and we have 
$$\limsup_{j \to \infty} \M^{\alpha_i}_{k_{i,j}}(T \setminus \{q\}) \geq 1- 3 \epsilon$$
and thus 
$$\limsup_{n \to \infty} \M^{\alpha_i}_{n}(T \setminus \{q\}) \geq 1- 3 \epsilon$$
since the sequence $(k_{i,j})_{j\in\nat}$ is strictly increasing.
This shows that $(\alpha_i)_{i\in\nat}$ is limit-sure weakly synchronizing in~$T \setminus \{q\}$.

By repeating this argument as long as there is a vanishing state (thus at most $\abs{T}-1$ times), 
we can construct the desired set $T' \subseteq T$ without vanishing state.
\qed 
\end{proof}\medskip

For a limit-sure weakly synchronizing MDP in target set $T$ (without vanishing 
states), we show that from a probability distribution with support $T$,
a probability mass arbitrarily close to $1$ can be injected synchronously
back in $T$ (in at least one step), that is $d_T \in \winlim{event}(\fsum_{\Pre(T)})$.
The same holds from the initial state $q_{\init}$ of the MDP. This property
is the key to construct an almost-sure weakly synchronizing strategy.

\begin{lemma}\label{lem:weak-event-uniform}
If an MDP $\M$ with initial state $q_{\init}$ is limit-sure weakly synchronizing in a target set~$T$ without 
vanishing states, then we have $q_{\init} \in \winlim{event}(\fsum_{\Pre(T)})$ and
\mbox{$d_T \in \winlim{event}(\fsum_{\Pre(T)})$}
where $d_T$ is the uniform distribution over~$T$.
\end{lemma}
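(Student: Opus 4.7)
The plan is to fix once and for all a family $(\alpha_i)_{i \in \nat}$ of limit-sure weakly synchronizing strategies from~$q_{\init}$ for target~$T$ together with strictly increasing sequences of times $k_{i,0}< k_{i,1}< \cdots$ satisfying $\M^{\alpha_i}_{k_{i,j}}(T) \geq 1 - 2\epsilon_i$ and $\epsilon_i \to 0$, chosen so that no state in~$T$ is vanishing. By the definition of vanishing states, this yields a constant $c > 0$ and indices beyond which $\M^{\alpha_i}_{k_{i,j}}(q) \geq c$ for every $q \in T$. I write $\eta$ for the smallest positive transition probability of $\M$.

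For the first claim, $q_{\init} \in \winlim{event}(\fsum_{\Pre(T)})$, I would carry out a single-step backward argument. For any $q \notin \Pre(T)$ and any action $a$, we have $\post(q,a) \not\subseteq T$ by definition of $\Pre$, so $\delta(q,a)(Q\setminus T) \geq \eta$. Summing this leakage over $q$ gives $\M^{\alpha_i}_{k_{i,0}}(Q \setminus T) \geq \eta \cdot \M^{\alpha_i}_{k_{i,0}-1}(Q \setminus \Pre(T))$, and since the left side is at most $2\epsilon_i$, we obtain $\M^{\alpha_i}_{k_{i,0}-1}(\Pre(T)) \geq 1 - 2\epsilon_i/\eta$, which tends to~$1$ as $i \to \infty$.

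For the second claim, $d_T \in \winlim{event}(\fsum_{\Pre(T)})$, I plan to combine the backward step with a linearity-and-averaging decomposition at a large synchronization time. Pick $i$ and $j_0$ large enough that $\M^{\alpha_i}_{k_{i,j_0}}(q) \geq c$ for every $q \in T$, and for each such $q$ define the randomized strategy $\beta_i^q$ from~$q$ that initially samples a history $\rho$ of length $k_{i,j_0}$ with $\Last(\rho) = q$ with probability proportional to $\Pr^{\alpha_i}[\rho]$, and thereafter plays $\alpha_i$ continued after $\rho$. By the tower property,
\[
\M^{\alpha_i}_{k_{i,j_0}+m}(X) = \sum_{q\in Q}\M^{\alpha_i}_{k_{i,j_0}}(q)\cdot \M^{\xi^q,\beta_i^q}_m(X).
\]
Applied with $X = \Pre(T)$ and $m = k_{i,j}-1-k_{i,j_0}$ for some $j > j_0$, the first claim applied at time $k_{i,j}$ bounds the left side by $1 - 2\epsilon_i/\eta$ from below; the total contribution of $q \notin T$ on the right is at most $2\epsilon_i$, and then isolating each $q' \in T$ against the lower bound $c$ yields $\M^{\xi^{q'},\beta_i^{q'}}_m(\Pre(T)) \geq 1 - O(\epsilon_i/c)$. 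The strategy $\beta_i$ that plays $\beta_i^q$ once the initial state sampled under $d_T$ equals $q$ then satisfies $\M^{d_T,\beta_i}_m(\Pre(T)) \geq 1 - O(\epsilon_i/c)$, which tends to~$1$.

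The main obstacle is the second claim: individual states of~$T$ may admit limit-sure eventually synchronizing strategies toward $\Pre(T)$ with incompatible time horizons, and one must combine them into a single synchronous witness from $d_T$. The key technical lever is that defining every $\beta_i^q$ from the same underlying $\alpha_i$ at the common time $k_{i,j_0}$ makes the horizon $m = k_{i,j}-1-k_{i,j_0}$ valid simultaneously for every $q \in T$, while the no-vanishing hypothesis is precisely what prevents the $\epsilon_i$-error from being amplified by an unbounded factor when the $T$-mass is redistributed to the uniform distribution $d_T$.
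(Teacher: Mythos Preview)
Your proof is correct and follows essentially the same approach as the paper: both arguments fix a synchronization time at which every $q\in T$ carries mass bounded below by a positive constant (your $c$, the paper's $B/2$), then continue $\alpha_i$ from that time to a later synchronization time and use the lower bound to control the per-state leakage out of $T$ (equivalently, into $Q\setminus\Pre(T)$ one step earlier). Your formalization via the randomized continuation strategies $\beta_i^q$ that sample a length-$k_{i,j_0}$ history ending in $q$ is precisely the rigorous version of what the paper phrases informally as ``$\beta$ plays like $\alpha_i$ plays from position $n_1$''. The one minor difference is the treatment of the $q_{\init}$ claim: the paper derives it indirectly from the $d_T$ claim (routing through a $T$-supported distribution and Corollary~\ref{col:uniform-dist-limit}), whereas you give the direct one-step backward estimate $\M^{\alpha_i}_{k-1}(\Pre(T)) \geq 1 - 2\epsilon_i/\eta$, which is cleaner and avoids the detour.
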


\begin{proof}
Since $q_{\init} \in \winlim{weakly}(\fsum_T)$ and 
$\winlim{weakly}(\fsum_T) \subseteq \winlim{event}(\fsum_T)$,
we have $q_{\init} \in \winlim{event}(\fsum_T)$ and thus it suffices
to prove that $d_T \in \winlim{event}(\fsum_{\Pre(T)})$. This is because
then from $q_{\init}$, probability arbitrarily close to $1$ can be injected
in $\Pre(T)$ through a distribution with support in $T$ (since 
by Corollary~\ref{col:uniform-dist-limit} only the support of the 
initial probability distribution is important for limit-sure 
eventually synchronizing).

Let~$(\alpha_i)_{i\in\nat}$ be a family of limit-sure winning strategies
such that $$\limsup_{n \to \infty}\M^{\alpha_i}_n(T)\geq 1-\epsilon_i
\text{ where } \lim_{i \to \infty} \epsilon_i = 0,$$ and such that
there is no vanishing state. 
For every $i \geq 0$ let $k_{i,0} < k_{i,1} < \cdots$ be a strictly increasing sequence 
such that $\M^{\alpha_i}_{k_{i,j}}(T) \geq 1 - 2\epsilon_i$ for all $i,j \geq 0$, and
let 
$$B = \min_{q \in T}\, \liminf_{i\to \infty} \, \liminf_{j\to \infty} \M^{\alpha_i}_{k_{i,j}}(q)
    = \liminf_{i\to \infty} \, \liminf_{j\to \infty}\, \min_{q \in T} \M^{\alpha_i}_{k_{i,j}}(q).$$
Note that $B > 0$ since there is no vanishing state.
It follows that there  exists~$x > 0$ such that for all~$i > x$ 
there exists~$y_i > 0$ such that for all~$j > y_i$ and all $q \in T$
we have~$\M^{\alpha_i}_{k_{i,j}}(q) \geq \frac{B}{2}$.

Given~$\nu > 0$, let $i>x$ such that $\epsilon_i <  \frac{\nu B}{4}$,
and for $j > y_i$, consider the positions $n_1 = k_{i,j}$ and $n_2 = k_{i,j+1}$.
We have $n_1 < n_2$ and $\M^{\alpha_i}_{n_1}(T) \geq 1 - 2\epsilon_i$
and $\M^{\alpha_i}_{n_2}(T) \geq 1 - 2\epsilon_i$, and 
$\M^{\alpha_i}_{n_1}(q) \geq \frac{B}{2}$ for all $q \in T$. 
Consider the strategy $\beta$ that plays 
like $\alpha_i$ plays from position $n_1$ and thus transforms the
distribution $\M^{\alpha_i}_{n_1}$ into $\M^{\alpha_i}_{n_2}$. 
For all states $q \in T$, from the Dirac distribution on $q$ under strategy $\beta$, the probability 
to reach $Q \setminus T$ in $n_2 - n_1$ steps is thus at most 
$\frac{\M^{\alpha_i}_{n_2}(Q \setminus T)}{\M^{\alpha_i}_{n_1}(q)} \leq \frac{2\epsilon_i}{B/2} < \nu$.

Therefore, from an arbitrary probability distribution with support $T$
we have $\M^{\beta}_{n_2-n_1}(T) > 1 - \nu$, showing that $d_T$ is
limit-sure eventually synchronizing in $T$ and thus in $\Pre(T)$ 
since $n_2 - n_1 > 0$ (it is easy to show that if the mass of probability in
$T$ is at least $1-\nu$, then the mass of probability in $\Pre(T)$ one step
before is at least $1-\frac{\nu}{\eta}$ where $\eta$ is the smallest positive
probability in $\M$).
\qed
\end{proof}\medskip

To show that limit-sure and almost-sure winning coincide for weakly synchronizing
objectives, from a family of limit-sure winning strategies we construct 
an almost-sure winning strategy that uses the eventually synchronizing 
strategies of Lemma~\ref{lem:weak-event-uniform}. The construction consists
in using successively strategies that ensure probability mass $1-\epsilon_i$
in the target $T$, for a decreasing sequence $\epsilon_i \to 0$. Such strategies
exist by Lemma~\ref{lem:weak-event-uniform}, both from the initial state and
from the set $T$. However, the mass of probability that can be guaranteed to
be synchronized in~$T$ by the successive strategies is always smaller than $1$, 
and therefore we need to argue that the remaining mass of probability 
(of total size $\epsilon_i$) scattered in the state space can also get synchronized in $T$, 
despite the variable shifts with the main mass of probability.

Two main
key arguments are needed to establish the correctness of the construction:
$(1)$ eventually synchronizing implies that a finite number of steps is sufficient
to obtain a probability mass of $1-\epsilon_i$ in $T$, and thus the construction
of the strategy is well defined, and $(2)$ by the finiteness of the period $r$
(such that $R = \Pre^r(R)$ where $R = \Pre^k(T)$ for some $k$) from every state,
we can accumulate shifts such that their sum is $0 \mod r$, 
and thus the probability mass from every state contributes
(synchronously) to the probability accumulated in the target.

\begin{theorem}\label{theo:weakly-ls-is-as}
$\winlim{weakly}(\fsum_T) = \winas{weakly}(\fsum_T)$ for all MDPs and target sets~$T$.
\end{theorem}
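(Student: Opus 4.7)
The inclusion $\winas{weakly}(\fsum_T) \subseteq \winlim{weakly}(\fsum_T)$ is immediate from Remark~\ref{rem:expressiveness}, so the real work is to show that a limit-sure winning MDP is already almost-sure winning. Assume $q_{\init} \in \winlim{weakly}(\fsum_T)$. By Lemma~\ref{lem:weak-without-vanishing}, there exists $T' \subseteq T$ such that $q_{\init} \in \winlim{weakly}(\fsum_{T'})$ and no vanishing state remains for $T'$. Since $T' \subseteq T$, any almost-sure weakly synchronizing strategy in $T'$ is also almost-sure weakly synchronizing in $T$, so it suffices to work with $T'$; by relabeling I simply write $T$ for $T'$ from now on.

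The plan is to iterate a one-shot synchronization gadget given by Lemma~\ref{lem:weak-event-uniform}: from both $q_{\init}$ and any distribution supported on $T$, for every $\nu > 0$ there is a strategy and a horizon $n \geq 1$ that puts mass $\geq 1 - \nu$ in $\Pre(T)$, hence mass $\geq 1 - \nu$ in $T$ one step later. The natural attempt is thus, for a chosen sequence $\epsilon_i \to 0$: play a limit-sure eventually synchronizing strategy to push mass $\geq 1-\epsilon_1$ into $T$ at some time $n_1$; then, from the sub-distribution now supported on $T$, invoke Lemma~\ref{lem:weak-event-uniform} again to push mass $\geq 1-\epsilon_2$ into $T$ at some later time $n_2 > n_1$; and so on. At each stage the mass that lies in $T$ behaves well, and $\limsup_n \M^{\alpha}_n(T) \geq 1-\epsilon_i$ for each $i$, hence $= 1$, giving almost-sure weakly synchronization.

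The subtle point — and the main obstacle — is the $\epsilon_i$-mass that is \emph{not} in $T$ at step $n_i$. In the next stage this stray mass is scattered over the state space, and when we apply the gadget from Lemma~\ref{lem:weak-event-uniform} to the part sitting on $T$, the stray mass evolves along a different schedule. There is no reason a priori for it to land in $T$ synchronously with the main mass at time $n_{i+1}$, because the stray states may reach $T$ only after a number of steps of a different residue modulo the period of the predecessor sequence $(\Pre^j(T))_{j \geq 0}$. To control this I will use the two-claim structure sketched before the statement. Let $r > 0$ and $k \geq 0$ be such that $R := \Pre^k(T)$ satisfies $\Pre^{r}(R) = R$. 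Claim~1 is that limit-sure weakly synchronizing in $T$ implies limit-sure eventually synchronizing in $R$ (from $q_{\init}$ and from $d_T$), obtained by pulling back the synchronization by $k$ steps and using that in the ultimately periodic tail, mass in $T$ at time $n$ entails mass in $R$ at time $n - k$ up to a loss of $\eta^k$ (the same pre-image trick as in Lemma~\ref{lem:lse-pre}). Claim~2 then handles the shifts: by repeating a strategy that produces mass in $R$ along one residue class mod $r$, we can over $r$ repetitions align contributions from all residues, synchronizing every leftover fragment into $R$ at a common time; pushing the mass from $R$ to $T$ in $k$ more steps then gives synchronization in $T$.

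Putting these pieces together, I construct the almost-sure strategy $\alpha$ by induction on $i$: at stage $i$, starting from the current distribution $d^{(i)}$ (whose support lies in a fixed finite set, since only finitely many states are ever reached), invoke the Claim~2 construction to obtain a finite-horizon strategy $\beta_i$ and a time $m_i$ with $\M^{\beta_i}_{m_i}(T) \geq 1 - \epsilon_i$ from $d^{(i)}$; concatenate $\beta_i$ into $\alpha$, set $d^{(i+1)} = \M^{\alpha}_{n_i}$ where $n_i = n_{i-1} + m_i$, and continue with $\epsilon_{i+1} = \epsilon_i / 2$. Well-definedness at every stage reduces to the observation that for any support $S$ actually reached, each state of $S \cap T$ is itself limit-sure weakly synchronizing in $T$ (because no vanishing state remains), and each state of $S \setminus T$ inherits limit-sure eventually synchronizing into $\Pre(T)$ from the original family $(\alpha_j)_j$ restricted to histories passing through it; hence Lemma~\ref{lem:weak-event-uniform} and the shift-alignment of Claim~2 apply uniformly. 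The resulting $\alpha$ is pure and satisfies $\M^{\alpha}_{n_i}(T) \geq 1 - \epsilon_i$ for all $i$ with $n_i \to \infty$, so $\limsup_n \M^{\alpha}_n(T) = 1$, i.e.\ $q_{\init} \in \winas{weakly}(\fsum_T)$.
\qed
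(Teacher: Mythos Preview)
Your outline matches the paper's: reduce to a non-vanishing target via Lemma~\ref{lem:weak-without-vanishing}, pass to $R=\Pre^k(T)$ with $\Pre^r(R)=R$, and iterate finite-horizon synchronizations while controlling shifts modulo $r$. But the inductive step has a real gap.

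The problem is how you justify synchronizing the stray mass. After stage $i$ the distribution $d^{(i+1)}$ has at most $\epsilon_i$ mass on $Q\setminus T$, and you claim each such state ``inherits limit-sure eventually synchronizing into $\Pre(T)$ from the original family $(\alpha_j)_j$ restricted to histories passing through it''. This does not follow: those states are reached via your constructed $\beta_1;\dots;\beta_i$, not via any $\alpha_j$, and a state reached with small probability under some limit-sure winning strategy need not itself be limit-sure winning. Even if each stray state were individually limit-sure eventually synchronizing in $R$, that would give each one its own shift, and nothing in your ``over $r$ repetitions align all residues'' sketch explains how a \emph{single} strategy at stage $i{+}1$ synchronizes the entire support of $d^{(i+1)}$ to $R$ at a \emph{common} time.

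The paper closes this gap by maintaining the invariant that the current distribution is \emph{almost-sure eventually synchronizing in $R$ with a specific shift}. Claim~2 in the paper is not a cycling-through-residues argument; it is the precise statement that the shift can always be moved from $h$ to $h+k+t \bmod r$ (where $t$ is the fixed shift of $d_T$). In the construction, the stray mass after $h+k$ steps of $\alpha_h$ has shift $h-(h+k)\equiv -k$ by part~($\star$), and then by~($\star\star$) it also has shift $-k+k+t=t$, the \emph{same} shift as the mass on $T$. Hence the whole $d'$ is almost-sure eventually synchronizing in $R$ with shift $t$, and the next stage can use a single strategy $\alpha_t$. Your proof becomes correct once you replace the ``inheritance'' and ``$r$ repetitions'' arguments by this invariant-with-shift mechanism.
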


\begin{proof}
Since $\winas{weakly}(\fsum_T) \subseteq \winlim{weakly}(\fsum_T)$ holds
by the definition, it is sufficient to prove that $\winlim{weakly}(\fsum_T) \subseteq \winas{weakly}(\fsum_T)$
and by Lemma~\ref{lem:weak-without-vanishing} it is sufficient to prove
that if $q_{\init} \in \winlim{weakly}(\fsum_T)$ is limit-sure weakly synchronizing 
in $T$ without vanishing state, then $q_{\init}$ is almost-sure weakly synchronizing 
in $T$. If $T$ has vanishing states, then consider $T' \subseteq T$ as in 
Lemma~\ref{lem:weak-without-vanishing} and it will follows that 
$q_{\init}$ is almost-sure weakly synchronizing in $T'$ and thus also in $T$.
We proceed with the proof that $q_{\init} \in \winlim{weakly}(\fsum_T)$ implies 
$q_{\init} \in \winas{weakly}(\fsum_T)$.

For $i=1,2,\dots$ consider the sequence of predecessors~$\Pre^{i}(T)$,
which is ultimately periodic: let $1 \leq k, r \leq 2^{\abs{Q}}$ such
that $\Pre^{k}(T) = \Pre^{k+r}(T)$, and let $R = \Pre^{k}(T)$.
Thus $R = \Pre^{k+r}(T) = \Pre^{r}(R)$.

\paragraph{Claim 1} {\it We have $q_{\init} \in \winas{event}(\fsum_R)$ and 
$d_T \in \winas{event}(\fsum_R)$.}

\paragraph{Proof of Claim 1}
By Lemma~\ref{lem:weak-event-uniform}, since there is no vanishing state in $T$
we have $q_{\init} \in \winlim{event}(\fsum_{\Pre(T)})$
and $d_T \in \winlim{event}(\fsum_{\Pre(T)})$. 
The characterization of the winning region for limit-sure eventually synchronizing 
given by Lemma~\ref{lem:lse-pre}, and the fact that almost-sure and limit-sure
coincide for eventually synchronizing in the set $R$ (Corollary~\ref{cor:as-same-as-ls-for-R}) 
give the following:\smallskip

\begin{tabular}{ll}
either $(1)$ $q_{\init} \in \winsure{event}(\fsum_{\Pre(T)})$ or & $\!\!\!\!(2)$ $q_{\init} \in \winas{event}(\fsum_R)$, and  \\
either $(a)$ $d_T \in \winsure{event}(\fsum_{\Pre(T)})$ or & $\!\!\!\!(b)$ $d_T \in \winas{event}(\fsum_R)$.
\end{tabular} \medskip

We show that $(a)$ implies $(b)$, hence that $(b)$ holds. Then we show that $(1)$ implies $(2)$,
hence that $(2)$ holds. We conclude that both $(2)$ and $(b)$, which establishes Claim~1.

To show that $(a)$ implies $(b)$: by the characterization of sure eventually synchronizing (Lemma~\ref{lem:sure-ss-pre}),
if $(a)$ holds, then $T \subseteq \Pre^{i}(T)$ for some $i \geq 1$, and thus  
$T \subseteq \Pre^{n \cdot i}(T)$ for all $n \geq 0$ by monotonicity of $\Pre^{i}(\cdot)$.
This entails for $n \cdot i \geq k$ that $T \subseteq \Pre^m(R)$ where $m = (n \cdot i - k) \mod r$
and thus $d_T$ is sure (and almost-sure) winning for the eventually synchronizing 
objective in target~$R$ (by Lemma~\ref{lem:sure-ss-pre}), hence $(b)$ holds.

To show that $(1)$ implies $(2)$: if $(1)$ holds, then we can play a sure-winning
strategy from $q_{\init}$ to ensure in finitely many steps probability~$1$ in $\Pre(T)$ 
and in the next step probability~$1$ in $T$, and by $(b)$ play an almost-sure 
winning strategy for eventually synchronizing in $R$. Hence, $q_{\init} \in \winas{event}(\fsum_R)$, i.e. $(2)$ holds.
The proof of Claim~1 is done.
\smallskip

\begin{figure}[!tbp]
  \centering
  \hrule
  \subfigure[From state $q$ with shift $h$]{
    \label{fig:shift-strategy}
    \begin{picture}(60,25)

\gasset{Nframe=n, Nh=2, Nw=2, Nmr=0}

\node[Nmarks=n, Nw=3](n0)(2,10){$q$}

\node[Nmarks=n, ExtNL=y, NLangle=90](n1)(17,15){$R$}
\node[Nmarks=n, ExtNL=y, NLangle=270](n2)(17,5){$Q \!\setminus\! R$}
\node[Nmarks=n, ExtNL=y, NLangle=90](n3)(32,15){$R$}
\node[Nmarks=n, ExtNL=y, NLangle=270](n4)(32,5){$Q \!\setminus\! R$}
\node[Nmarks=n, ExtNL=y, NLangle=90](n5)(47,15){$R$}
\put(47,14){\makebox(0,0)[l]{\scalebox{.75}{$\geq 1\!-\epsilon\! \to 1$}}}
\node[Nmarks=n, ExtNL=y, NLangle=270](n6)(47,5){$Q \!\setminus\! R$}
\put(47,6){\makebox(0,0)[l]{\scalebox{.75}{$\leq \epsilon \to 0$}}}

\put(8,10){\makebox(0,0)[l]{$\alpha_h$}}

\drawline[AHnb=0](2,22)(47,22)
\drawline[AHnb=0, dash={.5 .8}0](47,22)(53,22)
\drawline[AHnb=0](2,23)(2,21)
\drawline[AHnb=0](17,23)(17,21)
\drawline[AHnb=0](32,23)(32,21)
\drawline[AHnb=0](47,23)(47,21)
\node[Nmarks=n](label)(9.5,24){$h$}
\node[Nmarks=n](label)(24.5,24){$r$}
\node[Nmarks=n](label)(39.5,24){$r$}

\drawedge[ELpos=40, ELside=l, curvedepth=0](n0,n1){}  
\drawedge[ELpos=40, ELside=r, curvedepth=0](n0,n2){}  
\drawedge[ELpos=40, ELside=l, curvedepth=0](n1,n3){}
\drawedge[ELpos=40, ELside=l, curvedepth=0](n2,n3){}
\drawedge[ELpos=40, ELside=l, curvedepth=0](n2,n4){}
\drawedge[ELpos=40, ELside=l, curvedepth=0](n3,n5){}
\drawedge[ELpos=40, ELside=l, curvedepth=0](n4,n5){}
\drawedge[ELpos=40, ELside=l, curvedepth=0](n4,n6){}


\end{picture}}
  \hfill
  \subfigure[From uniform distribution $d_T$ with shift $t$]{
    \label{fig:shift-strategy2}
    \begin{picture}(62,25)(-2,0)

\gasset{Nframe=n, Nh=2, Nw=2, Nmr=0}

\node[Nmarks=n, Nw=4](n0)(2,10){$d_T$}

\node[Nmarks=n, ExtNL=y, NLangle=90](n1)(17,15){$R$}
\node[Nmarks=n, ExtNL=y, NLangle=270](n2)(17,5){$Q \!\setminus\! R$}
\node[Nmarks=n, ExtNL=y, NLangle=90](n3)(32,15){$R$}
\node[Nmarks=n, ExtNL=y, NLangle=270](n4)(32,5){$Q \!\setminus\! R$}
\node[Nmarks=n, ExtNL=y, NLangle=90](n5)(47,15){$R$}
\put(47,14){\makebox(0,0)[l]{\scalebox{.75}{$\geq 1\!-\eta\! \to 1$}}}
\node[Nmarks=n, ExtNL=y, NLangle=270](n6)(47,5){$Q \!\setminus\! R$}
\put(47,6){\makebox(0,0)[l]{\scalebox{.75}{$\leq \eta \to 0$}}}

\put(8,10){\makebox(0,0)[l]{$\alpha_t$}}

\drawline[AHnb=0](2,22)(47,22)
\drawline[AHnb=0, dash={.5 .8}0](47,22)(53,22)
\drawline[AHnb=0](2,23)(2,21)
\drawline[AHnb=0](17,23)(17,21)
\drawline[AHnb=0](32,23)(32,21)
\drawline[AHnb=0](47,23)(47,21)
\node[Nmarks=n](label)(9.5,24){$t$}
\node[Nmarks=n](label)(24.5,24){$r$}
\node[Nmarks=n](label)(39.5,24){$r$}

\drawedge[ELpos=40, ELside=l, curvedepth=0](n0,n1){}  
\drawedge[ELpos=40, ELside=r, curvedepth=0](n0,n2){}  
\drawedge[ELpos=40, ELside=l, curvedepth=0](n1,n3){}
\drawedge[ELpos=40, ELside=l, curvedepth=0](n2,n3){}
\drawedge[ELpos=40, ELside=l, curvedepth=0](n2,n4){}
\drawedge[ELpos=40, ELside=l, curvedepth=0](n3,n5){}
\drawedge[ELpos=40, ELside=l, curvedepth=0](n4,n5){}
\drawedge[ELpos=40, ELside=l, curvedepth=0](n4,n6){}


\end{picture}}
  \hrule
  \caption{Sketch of the outcome of almost-sure eventually synchronizing strategies (with shifts). \label{fig:shift-strat}}
\end{figure}
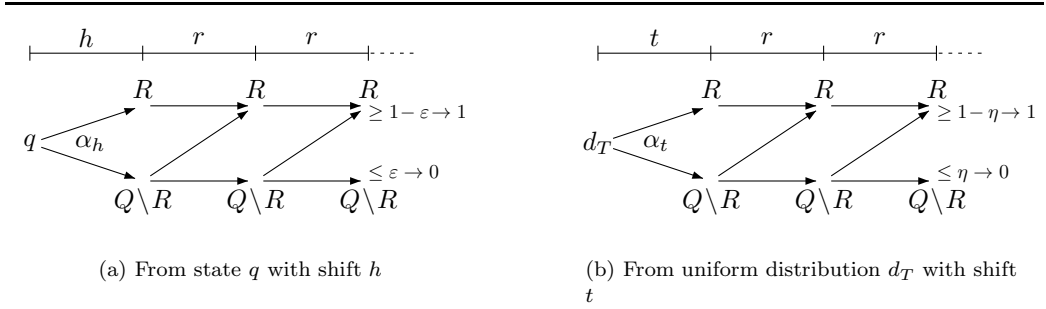

We now show that 
there exists an almost-sure winning strategy for the weakly synchronizing 
objective in target~$T$. 
Recall that $\Pre^r(R) = R$ and thus once some probability mass $p$ is in $R$,
it is possible to ensure that the probability mass in $R$ after $r$ steps
is at least $p$, and thus that (with period $r$) the probability in $R$ does not
decrease. 
By the result of Lemma~\ref{lem:lssr-assr}, almost-sure winning for eventually 
synchronizing in $R$ implies that there exists a strategy $\alpha$ such that the 
probability in $R$ tends to $1$ at periodic positions: 
for some $0 \leq h < r$ the strategy $\alpha$ is \emph{almost-sure 
eventually synchronizing in $R$ with shift $h$}, that is 
$\forall \epsilon > 0 \cdot \exists N \cdot \forall n \geq N: n \equiv h \mod r \implies 
\M^{\alpha}_{n}(R) \geq 1 - \epsilon$. We also say that the initial distribution
$d_0 = \M^{\alpha}_{0}$ is almost-sure eventually synchronizing in $R$ with shift $h$.
Almost-sure eventually synchronizing strategies with shift are illustrated in 
\figurename~\ref{fig:shift-strat}.

\paragraph{Claim 2} 
{\it
\begin{itemize}
\item[($\star$)] If $\M^{\alpha}_{0}$ is almost-sure eventually synchronizing 
in $R$ with some shift $h$, then $\M^{\alpha}_{i}$ 
is almost-sure eventually synchronizing in $R$ with shift $h-i \mod r$.
\item[($\star\star$)] Let $t$ such that $d_T$ is almost-sure eventually synchronizing in $R$ with shift $t$.
If a distribution is almost-sure eventually synchronizing in $R$ with some shift $h$, then 
it is also almost-sure eventually synchronizing in $R$ with shift $h + k + t \mod r$
(where we chose $k$ such that $R = \Pre^{k}(T)$).
\end{itemize}
}

\begin{figure}[t]
\begin{center}
    \begin{picture}(120,80)(0,2)

\gasset{Nframe=n, Nh=2, Nw=2, Nmr=0}

\node[Nmarks=n, Nw=3, Nh=3](n0)(2,5){$q$}

\node[Nmarks=n, ExtNL=y, NLangle=90](na1)(17,60){$R$}
\node[Nmarks=n, ExtNL=y, NLangle=270](nd1)(17,5){$Q \!\setminus\! R$}
\node[Nmarks=n, ExtNL=y, NLangle=90](nb1)(37,40){$R$}
\node[Nmarks=n, ExtNL=y, NLangle=270](nd2)(37,5){$Q \!\setminus\! R$}
\node[Nmarks=n, ExtNL=y, NLangle=90](nc1)(57,20){$R$}
\node[Nmarks=n, ExtNL=y, NLangle=270](nd3)(57,5){$Q \!\setminus\! R$}

\drawpolygon[Nframe=n, AHnb=0, arcradius=0, fillcolor=gray](62,5)(92,1)(92,9)

\node[Nmarks=n, ExtNL=y, NLangle=90](na2)(32,60){$T$}
\node[Nmarks=n, ExtNL=y, NLangle=90](nb2)(52,40){$T$}
\node[Nmarks=n, ExtNL=y, NLangle=90](nc2)(72,20){$T$}

\drawline[Nframe=y, AHnb=0, arcradius=3, linecolor=red](0,0)(0,10)(16,68)(64,20)(64,0)(0,0)
\drawpolygon[Nframe=y, AHnb=0, arcradius=3, linecolor=red](23,61)(47,70)(92,70)(92,50)(44,50)
\drawpolygon[Nframe=y, AHnb=0, arcradius=3, linecolor=red](43,41)(67,49.3)(92,49.3)(92,30)(64,30)
\drawpolygon[Nframe=y, AHnb=0, arcradius=3, linecolor=red](63,21)(87,29.5)(92,29.5)(92,10)(84,10)

\drawline[Nframe=y, AHnb=0, arcradius=3, linecolor=blue, dash={1.1 1.1}0](67,75)(47,69)
\drawline[Nframe=y, AHnb=0, arcradius=3, linecolor=blue, dash={1.1 1.28}0](67,75)(67,69)
\drawline[Nframe=y, AHnb=0, arcradius=3, linecolor=blue, dash={1.1 1.1}0](67,75)(87,69)
\node[Nmarks=n](shift)(67,77){shift $h+k+t$}

\node[Nmarks=n, ExtNL=y, NLangle=90](na2r)(47,65){\blue{$R$}}
\node[Nmarks=n, ExtNL=y, NLangle=270](na2q)(47,55){$Q \!\setminus\! R$}
\node[Nmarks=n, ExtNL=y, NLangle=90](na3r)(67,65){\blue{$R$}}
\node[Nmarks=n, ExtNL=y, NLangle=270](na3q)(67,55){$Q \!\setminus\! R$}
\node[Nmarks=n, ExtNL=y, NLangle=90](na4r)(87,65){\blue{$R$}}
\node[Nmarks=n, ExtNL=y, NLangle=270](na4q)(87,55){$Q \!\setminus\! R$}

\node[Nmarks=n, ExtNL=y, NLangle=90](nb2r)(67,45){\blue{$R$}}
\node[Nmarks=n, ExtNL=y, NLangle=270](nb2q)(67,35){$Q \!\setminus\! R$}
\node[Nmarks=n, ExtNL=y, NLangle=90](nb3r)(87,45){\blue{$R$}}
\node[Nmarks=n, ExtNL=y, NLangle=270](nb3q)(87,35){$Q \!\setminus\! R$}

\node[Nmarks=n, ExtNL=y, NLangle=90](nc2r)(87,25){\blue{$R$}}
\node[Nmarks=n, ExtNL=y, NLangle=270](nc2q)(87,15){$Q \!\setminus\! R$}

\node[Nmarks=n, ExtNL=n](dummy)(10,56){\rotatebox{75}{\red{Fig.~\ref{fig:shift-strategy}}}}
\node[Nmarks=n, ExtNL=n](dummy)(33,66.5){\rotatebox{21}{{\scriptsize \red{Fig.~\ref{fig:shift-strategy2}}}}}
\node[Nmarks=n, ExtNL=n](dummy)(53,46.5){\rotatebox{21}{{\scriptsize \red{Fig.~\ref{fig:shift-strategy2}}}}}
\node[Nmarks=n, ExtNL=n](dummy)(73,26.5){\rotatebox{21}{{\scriptsize \red{Fig.~\ref{fig:shift-strategy2}}}}}

\node[Nmarks=n, ExtNL=n](dummy)(20,55){\red{$p_1$}}
\node[Nmarks=n, ExtNL=n](dummy)(40,35){\red{$p_2$}}
\node[Nmarks=n, ExtNL=n](dummy)(60,15){\red{$p_3$}}

\put(94,65){\makebox(0,0)[l]{$\geq (1-\eta)\cdot p_1$}}
\put(94,55){\makebox(0,0)[l]{$\leq \eta\cdot p_1$}}
\put(94,45){\makebox(0,0)[l]{$\geq (1-\eta)\cdot p_2$}}
\put(94,35){\makebox(0,0)[l]{$\leq \eta\cdot p_2$}}
\put(94,25){\makebox(0,0)[l]{$\geq (1-\eta)\cdot p_3$}}
\put(94,15){\makebox(0,0)[l]{$\leq \eta\cdot p_3$}}

\put(94,5){\makebox(0,0)[l]{\red{$\leq \epsilon$}}}

\drawedge[ELpos=50, ELside=l, curvedepth=0](n0,na1){$h$}
\drawedge[ELpos=50, ELside=l, curvedepth=0](n0,nd1){$h$}
\drawedge[ELpos=50, ELside=l, curvedepth=0](nd1,nb1){$r$}
\drawedge[ELpos=50, ELside=l, curvedepth=0](nd1,nd2){$r$}
\drawedge[ELpos=50, ELside=l, curvedepth=0](nd2,nc1){$r$}
\drawedge[ELpos=50, ELside=l, curvedepth=0](nd2,nd3){$r$}

\drawedge[ELpos=50, ELside=l, curvedepth=0](na1,na2){$k$}
\drawedge[ELpos=50, ELside=r, curvedepth=0](na1,na2){sure}
\drawedge[ELpos=50, ELside=l, curvedepth=0](nb1,nb2){$k$}
\drawedge[ELpos=50, ELside=r, curvedepth=0](nb1,nb2){sure}
\drawedge[ELpos=50, ELside=l, curvedepth=0](nc1,nc2){$k$}
\drawedge[ELpos=50, ELside=r, curvedepth=0](nc1,nc2){sure}

\drawedge[ELpos=50, ELside=l, curvedepth=0](na2,na2r){$t$}
\drawedge[ELpos=50, ELside=r, curvedepth=0](na2,na2q){$t$}
\drawedge[ELpos=50, ELside=l, curvedepth=0](na2r,na3r){$r$}
\drawedge[ELpos=50, ELside=l, curvedepth=0](na2q,na3r){$r$}
\drawedge[ELpos=50, ELside=r, curvedepth=0](na2q,na3q){$r$}
\drawedge[ELpos=50, ELside=l, curvedepth=0](na3r,na4r){$r$}
\drawedge[ELpos=50, ELside=l, curvedepth=0](na3q,na4r){$r$}
\drawedge[ELpos=50, ELside=r, curvedepth=0](na3q,na4q){$r$}

\drawedge[ELpos=50, ELside=l, curvedepth=0](nb2,nb2r){$t$}
\drawedge[ELpos=50, ELside=r, curvedepth=0](nb2,nb2q){$t$}
\drawedge[ELpos=50, ELside=l, curvedepth=0](nb2r,nb3r){$r$}
\drawedge[ELpos=50, ELside=l, curvedepth=0](nb2q,nb3r){$r$}
\drawedge[ELpos=50, ELside=r, curvedepth=0](nb2q,nb3q){$r$}

\drawedge[ELpos=50, ELside=l, curvedepth=0](nc2,nc2r){$t$}
\drawedge[ELpos=50, ELside=r, curvedepth=0](nc2,nc2q){$t$}


\end{picture}
\end{center}
    \caption{Proof of Claim~2($\star\star$) for Theorem~\ref{theo:weakly-ls-is-as}.\label{fig:shift-claim}}
\end{figure}
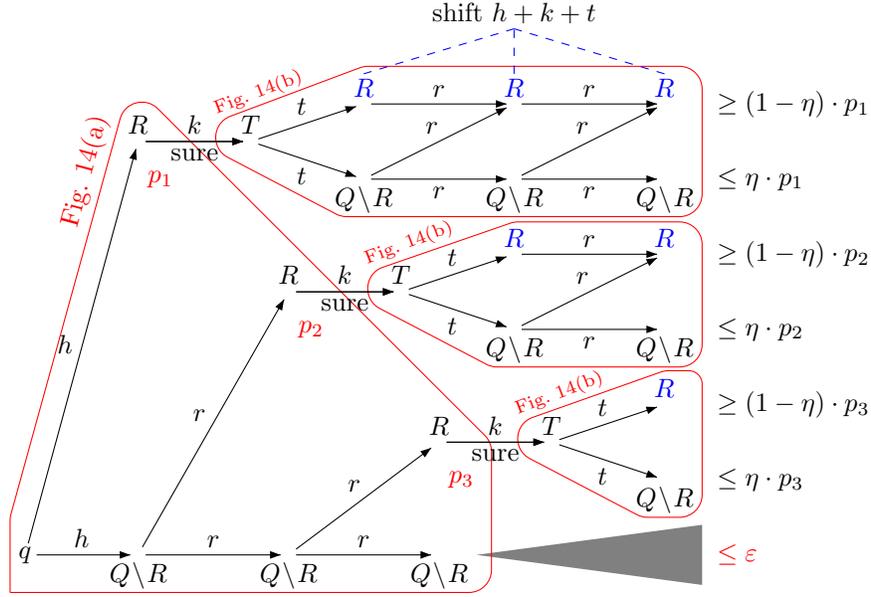

\paragraph{Proof of Claim~2}
The result ($\star$) immediately follows from the definition of shift, 
and we prove ($\star\star$) as follows.
We show that almost-sure eventually synchronizing in $R$ with shift $h$
implies almost-sure eventually synchronizing in $R$ with shift $h + k + t \mod r$.
The argument is illustrated in \figurename~\ref{fig:shift-claim}. 
Intuitively, the probability mass
that is in $R$ with shift $h$ can be injected in $T$ in $k$ steps, and then from $T$
we can play an almost-sure eventually synchronizing strategy in target $R$ with shift $t$,
thus a total shift of $h + k + t \mod r$.
Precisely, an almost-sure winning strategy $\alpha$ is constructed as follows 
(\figurename~\ref{fig:shift-claim}): 
\begin{itemize}
\item given a finite prefix of play~$\rho$, if there is no state $q \in R$ that 
occurs in $\rho$ at a position $n \equiv h \mod r$, then play in $\rho$ according to 
the almost-sure winning strategy $\alpha_h$ for eventually synchronizing in $R$ with shift $h$;
\item otherwise, 
  \begin{itemize}
  \item if there is no $q \in T$ that occurs in $\rho$ at a position $n \equiv h+k \mod r$, then
we play according to a sure winning strategy $\alpha_{sure}$ for eventually synchronizing in $T$,
  \item and otherwise we play according to an almost-sure winning strategy $\alpha_t$ from $T$ for 
eventually synchronizing in $R$ with shift $t$.
  \end{itemize}
\end{itemize}
To show that $\alpha$ is almost-sure eventually synchronizing in $R$ with shift $h + k + t$,
note that $\alpha_h$ ensures with probability~$1$ that $R$ is reached at positions
$n \equiv h \mod r$ (see \figurename~\ref{fig:shift-claim}). Consider positions
$h,h+r,h+2r,\dots$ and the probability mass $p_i$ in $R$ at position $h+ir$.
Then for all $\epsilon >0$, by considering sufficiently long sequence of positions, 
we have $\sum_i p_i \geq 1-\epsilon$ (\figurename~\ref{fig:shift-strategy}). Since $\alpha_{sure}$ is sure eventually synchronizing in $T$,
we also have probability mass at least $p_i$ in $T$ at position $h+k+ir$.
From the states in $T$ the strategy $\alpha_t$ ensures with probability~$1$ 
that $R$ is reached at positions $h+k+t \mod r$, thus for all $\eta >0$,
by considering sufficiently long sequence of positions (and \figurename~\ref{fig:shift-strategy2}), we have probability
mass at least $\sum_i p_i \cdot (1-\eta) \geq (1-\epsilon)\cdot(1-\eta)$ at
in $R$ at some position $h+k+t+ir$, thus with shift $h+k+t$ (see also \figurename~\ref{fig:shift-claim}). This concludes the proof of Claim~2.
\smallskip

\begin{figure}[t]
\begin{center}
    \begin{picture}(135,32)(5,0)

\gasset{Nframe=n, Nh=4, Nw=8, Nmr=0}

\node[Nmarks=n, Nw=6](n0)(7,20){$q_{\init}$}

\node[Nmarks=n, ExtNL=n, NLangle=90](n1)(27,28){$R$}
\nodelabel[ExtNL=y, NLangle=270, NLdist=1](n1){{\small $1\!-\!\epsilon$}}
\node[Nmarks=n, ExtNL=n, NLangle=270](n2)(27,12){$Q \!\setminus\!\! R$}
\nodelabel[ExtNL=y, NLangle=270, NLdist=1](n2){{\small $\epsilon$}}
\node[Nmarks=n, ExtNL=n, NLangle=90](n3)(52,28){$T$}
\nodelabel[ExtNL=y, NLangle=270, NLdist=1](n3){{\quad \small $\geq 1\!-\!\epsilon$}}
\node[Nmarks=n, ExtNL=n, NLangle=270](n4)(52,12){$Q \!\setminus\!\! T$}

\put(13,20){\makebox(0,0)[l]{$\alpha_{\epsilon}$}}

\node[Nmarks=n, ExtNL=n, NLangle=270](label)(27,5){$\underbrace{\phantom{Q \!\setminus\!\! R}}_{d}$}
\node[Nmarks=n, ExtNL=n, NLangle=270](label)(52,5){$\underbrace{\phantom{Q \!\setminus\!\! T}}_{d'}$}

\put(57,26){\makebox(0,0)[l]{\begin{tabular}{ll}$\cdots$ & \!\!\!\!\!almost-sure eventually synchronizing in $R$ \\ 
& \!\!\!\!\!with shift $t$ \end{tabular}}}
\put(57,10){\makebox(0,0)[l]{\begin{tabular}{llr}$\cdots$ & \multicolumn{2}{l}{\!\!\!\!\!almost-sure eventually synchronizing in $R$} \\ 
& \!\!\!\!\!with shift $h- (h+k)$ & by ($\star$)\end{tabular}}}
\put(57,0){\makebox(0,0)[l]{\begin{tabular}{llr}\phantom{$\cdots$} & \multicolumn{2}{l}{\!\!\!\!\!almost-sure eventually synchronizing in $R$} \\ 
& \!\!\!\!\!with shift $h- (h+k) + k + t = t$ & by ($\star\star$)\end{tabular}}}

\drawedge[ELpos=50, ELside=l, curvedepth=0](n0,n1){{\small $h$}}
\drawedge[ELpos=50, ELside=r, curvedepth=0](n0,n2){{\small $h$}}
\drawedge[ELpos=50, ELside=l, curvedepth=0](n1,n3){{\small $k$}}
\drawedge[ELpos=50, ELside=r, curvedepth=0](n1,n3){{\small sure}}
\drawedge[ELpos=40, ELside=l, curvedepth=0](n2,n3){{\small $k$}}
\drawedge[ELpos=38, ELside=r, curvedepth=0](n2,n4){{\small $k$}}


\end{picture}
\end{center}
    \caption{Construction of an almost-sure weakly synchronizing strategy.\label{fig:as-weakly-strategy}}
\end{figure}
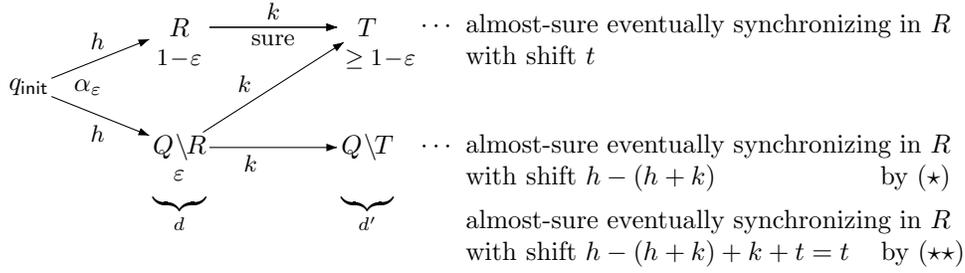

\paragraph{Construction of an almost-sure winning strategy}
We construct strategies $\alpha_\epsilon$ for $\epsilon > 0$ that ensure,
from a distribution that is almost-sure eventually 
synchronizing in $R$ (with some shift $h$), that after finitely many steps,
a distribution $d'$ is reached such that $d'(T) \geq 1-\epsilon$ and 
$d'$ is almost-sure eventually synchronizing in $R$ (with some shift $h'$).
Since $q_{\init}$ is almost-sure eventually synchronizing in $R$ (with some shift $h$),
it follows that the strategy $\alpha_{as}$ that plays successively 
the strategies (each for finitely many steps) $\alpha_{\frac{1}{2}}$, 
$\alpha_{\frac{1}{4}}$, $\alpha_{\frac{1}{8}}, \dots$ is 
almost-sure winning from $q_{\init}$ for the weakly synchronizing objective in target~$T$. 

We define the strategies $\alpha_\epsilon$ as follows (the construction is 
illustrated in \figurename~\ref{fig:as-weakly-strategy}).
Given an initial
distribution that is almost-sure eventually synchronizing in $R$ with a shift 
$h$ and given $\epsilon > 0$, let $\alpha_\epsilon$ be the strategy that plays
according to the almost-sure winning strategy $\alpha_h$ for eventually 
synchronizing in $R$ with shift $h$ for a number of steps $n \equiv h \mod r$ until a distribution 
$d$ is reached such that $d(R) \geq 1-\epsilon$, and then from $d$ it plays
according to a sure winning strategy $\alpha_{sure}$ for eventually synchronizing in $T$
from the states in $R$ (for $k$ steps), and keeps playing according to $\alpha_h$ from 
the states in $Q \setminus R$ (for $k$ steps). The distribution
$d'$ reached from $d$ after $k$ steps is such that $d'(T) \geq 1-\epsilon$ and we claim
that it is almost-sure eventually synchronizing in $R$ with shift $t$. 
This holds by definition of $\alpha_t$ from the states in $\Supp(d') \cap T$,
and by ($\star$) the states in $\Supp(d') \setminus T$ are almost-sure eventually synchronizing 
in $R$ with shift $h-(h+k) \mod r$, and by ($\star\star$) with shift $h-(h+k)+k+t = t$.

It follows that the strategy $\alpha_{as}$ is well-defined and ensures,
for all $\epsilon >0$, that the probability mass in $T$ is infinitely
often at least $1-\epsilon$, thus is almost-sure weakly synchronizing in $T$. 
This concludes the proof of Theorem~\ref{theo:weakly-ls-is-as}.
\qed
\end{proof}

\section{Strongly Synchronizing}\label{sec:strongly-synch}

The strongly synchronizing objective is reminiscent of a coB\"uchi 
objective in the distribution-based semantics: with function $\fsum_T$
it requires that in the sequence of distributions of an MDP $\M$ under strategy $\alpha$
we have $\liminf_{n \to \infty} \M^{\alpha}_n(T) = 1$ (and that $\M^{\alpha}_n(T) = 1$
from some point on in the case of sure winning).

We show that the membership problem
for strongly synchronizing objectives can be solved in polynomial time,
for all winning modes, and both with function $\fmax_T$ (Section~\ref{sec:strongly-max}) and
function $\fsum_T$ (Section~\ref{sec:strongly-sum}). 
We show that linear-size memory is necessary in general
for $\fmax_T$, and memoryless strategies are sufficient for $\fsum_T$.
It follows from our results that the limit-sure and almost-sure winning
modes coincide for strongly synchronizing.

\subsection{Strongly synchronizing with function $\fmax$}\label{sec:strongly-max}

First, note
that for strongly synchronizing the membership problem with function $\fmax_T$ 
reduces to the membership problem with function $\fmax_Q$ where $Q$
is the entire state space, by a construction similar to the proof of 
Lemma~\ref{lem:weakly-max-sum}: states in $Q \setminus T$ are duplicated,
ensuring that only states in $T$ are used to accumulate probability.

The strongly synchronizing objective with function $\fmax_Q$ requires that from 
some point on, almost all the probability mass is at every step in a single state.
Intuitively, the sequence of states that contain almost all the probability 
corresponds to a sequence of deterministic transitions in the MDP, and thus 
eventually to a cycle of deterministic transitions.

\begin{figure}[t]

\begin{center}
    \begin{picture}(60,50)

\node[Nmarks=i,iangle=90](n0)(25,20){$q_{\init}$}
\node(n1)(55,20){$q_1$}
\node(n2)(55,35){$q_2$}
\node(n4)(63,5){$q_4$}
\node(n3)(47,5){$q_3$}

\node(n5)(20,5){$q_5$}
\node(n6)(1,5){$q_6$}
\node(n7)(-4,20){$q_7$}
\node(n8)(11,32){$q_8$}

\drawedge[ELdist=.5](n0,n1){$a: \frac{1}{2}$}
\drawedge[ELdist=.5,ELside=l, curvedepth=3](n0,n5){$a: \frac{1}{2}$}

\drawedge[ELdist=.5,ELside=l, curvedepth=5](n1,n2){$a$}
\drawedge[ELdist=.5,ELside=l, curvedepth=5](n2,n1){$a,b$}
\drawedge[ELdist=.5,ELside=r, curvedepth=-5](n1,n3){$b$}
\drawedge[ELdist=.5,ELside=r, curvedepth=-5](n3,n4){$a,b$}
\drawedge[ELdist=.5,ELside=r, curvedepth=-5](n4,n1){$a,b$}

\drawedge[ELdist=.5,ELside=l, curvedepth=4](n5,n6){$a,b$}
\drawedge[ELdist=.5,ELside=l, curvedepth=4](n6,n7){$a,b$}
\drawedge[ELdist=.5,ELside=l, curvedepth=4](n7,n8){$a,b$}
\drawedge[ELdist=.5,ELside=l, curvedepth=4](n8,n0){$a,b$}
\end{picture}
\end{center}
 \caption{An example to show $q_{\init} \in \winas{strong}(max_Q)$ 
reduces to synchronized reachability of a state in a simple deterministic cycle.
\label{fig:strong-almost-max}}

\end{figure}
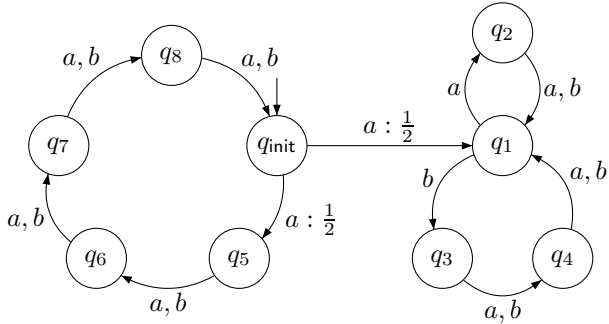

Consider the MDP in~\figurename~\ref{fig:strong-almost-max} with initial state $q_{\init}$:
all transitions are deterministic except from~$q_{\init}$ 
where on both actions $a$ and $b$,
the successors are~$q_1$ and~$q_5$ with probability~$\frac{1}{2}$.
The strategic choice is only relevant in~$q_1$ where
$\delta(q_1,a)(q_2)=1$
and 
$\delta(q_1,b)(q_3)=1$. 
We present a strategy such that
the sequence of states that contain almost all the probability is 
the cycle~$q_1 q_2 q_1 q_3 q_4 q_1$ of deterministic transitions.

The state $q_{\init}$ is almost-sure strongly synchronizing (according to function $\fmax$)
with the strategy~$\alpha$ defined as follows, 
for all paths $\rho$ such that $\Last(\rho) = q_1$:
\begin{itemize}
\item if the number of occurrences of $q_1$ in $\rho$ is odd (i.e., the length
of $\rho$ is $1$ modulo $5$), then play action $a$;

\item if the number of occurrences of $q_1$ in $\rho$ is even (i.e., the length
of $\rho$ is $3$ modulo $5$), then play action $b$.
\end{itemize}
The strategy~$\alpha$ ensures the probability mass injected from $q_{\init}$ in 
$q_1$ after every other~$5$ steps loops in the cycle~$q_1 q_2 q_1 q_3 q_4  q_1$ (with length~$5$).  
Hence, the probability mass from $q_{\init}$ is always injected
in $q_1$ synchronously (i.e., when the probability mass in the cycle 
is also in $q_1$).

It follows that after $5i$ steps, the probability mass in $q_{\init}$ 
is $\frac{1}{2^i}$ and the probability mass in $q_1$ is $1 - \frac{1}{2^i}$.
Considering $i \to \infty$, we then get $\liminf_{n\to \infty}\norm{\M^{\alpha}_{n}} = 1$ 
and $q_{\init} \in \winas{strong}(\fmax)$.
Note  that only the states in the cycle~$q_1 q_2 q_1 q_3 q_4  q_1$ (of deterministic transitions)
are used to accumulate the probability mass tending to~$1$.

Cycles consisting of deterministic transitions are keys to decide strongly synchronizing.
A \emph{deterministic cycle} of length $\ell \geq 1$ in an MDP $\M$ is a finite sequence 
$\q_0 \q_1 \dots \q_{\ell}$ of states such that $\q_0 = \q_{\ell}$ and 
for all $0 \leq i < \ell$, there exists an action $a_i$ such that $\delta(q_{i},a_i)(q_{i+1}) = 1$.
The cycle is \emph{simple} if $\q_i \neq \q_j$ for all $1 \leq i < j \leq \ell$. 

We show that sure (resp., almost-sure and limit-sure) strongly synchronizing 
is equivalent to sure (resp., almost-sure and limit-sure) reachability to a 
state in a \emph{simple} deterministic cycle, with the requirement that the 
state can be reached in a synchronized way 
(i.e., by finite paths whose lengths are congruent modulo the length $\ell$ of the cycle). 

In the MDP of \figurename~\ref{fig:strong-almost-max}, we can construct
an almost-sure strongly synchronizing strategy $\beta$ that accumulates 
the probability mass only in the simple cycle~$q_1 q_3 q_4 q_1$. 
The strategy~$\beta$ is defined as follows, 
for all paths $\rho$ such that $\Last(\rho) = q_1$:
\begin{itemize}
\item if the length of $\rho$ is $0$ modulo $3$, then play action $b$;

\item if the length of $\rho$ is $1$ or $2$ modulo $3$, then play action $a$.
\end{itemize}

Note that if the length of $\rho$ is a multiple of $3$ and the action $b$
is played, then on the next visit to $q_1$ the length of the path is
also a multiple of $3$, and the action $b$ is played again. Hence, once a probability
mass follows the cycle~$q_1 q_3 q_4 q_1$, it will follow this cycle forever.
Whenever probability mass is injected in $q_1$ (from $q_{\init}$) on a path $\rho$ of length $1$ or $2$ modulo $3$,
the action $a$ is played to visit the other cycle~$q_1 q_2 q_1$ until getting
back to $q_1$ with a path whose length is a multiple of $3$. The probability mass is then
injected (synchronously) into the cycle~$q_1 q_3 q_4 q_1$ where eventually the probability
mass tends to $1$, thus the strategy $\beta$ is almost-sure strongly synchronizing
and it ensures with probability~$1$ that $q_1$ is reached with by paths whose 
length is a multiple of $3$.

We show in Lemma~\ref{lem: p-strong} that simple deterministic cycles are always 
sufficient for strongly synchronizing in MDPs, and that strongly synchronizing
reduces to a synchronized reachability problem of reaching a state $q_1$
of a simple deterministic cycle by paths of length that is a multiple of the length
$\ell$ of the cycle.
To check synchronized reachability, 

we keep track of a modulo-$\ell$ counter along the path.
Define the MDP $\M \times [\ell] = \tuple{Q',\Act,\delta'}$
where $Q' = Q \times \{0,1,\dots,\ell-1\}$ and 
$\delta'(\tuple{q,i},a)(\tuple{q',i-1}) = \delta(q,a)(q')$ 
(where $i-1$ is $\ell-1$ for $i=0$)
for all states $q,q'\in Q$, actions $a \in \Act$, and $0 \leq i \leq \ell-1$. 
Note that given a finite path $\rho = q_0 a_0 q_1 a_1\dots q_n$ in $\M$, 
there is a corresponding path $\rho' = \tuple{q_0,k_0} a_0 \tuple{q_1,k_1} a_1 \dots \tuple{q_n, k_n}$ 
in $\M \times [\ell]$ where $k_i = -i~mod~\ell$.
Since the sequence $k_0 k_1 \dots$ is uniquely defined, 
there is a clear bijection between the paths in $\M$ (starting from $q_0$)
and the paths in $\M \times [\ell]$ (starting from $\tuple{q_0,0}$)
that we often omit to apply and mention in the sequel.

\begin{lemma}\label{lem: p-strong}
Let $\eta$ be the smallest positive probability in the transitions of $\M$,
and let $\frac{1}{1+\eta}< p \leq 1$. 
There exists a strategy~$\alpha$ such that 
$\liminf_{n \to \infty} \norm{\M^{\alpha}_n}\geq p$
from an initial state~$q_{\init}$
if and only if 
there exist a simple deterministic 
cycle~$\q_0 \q_1 \dots \q_{\ell}$ in $\M$ 
and a strategy~$\beta$ in~$\M\times[\ell]$ 
such that $\Pr^{\beta}(\Diamond \{\tuple{\q_0,0}\}) \geq p$
from~$\tuple{q_{\init},0}$.
\end{lemma}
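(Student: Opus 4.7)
The plan is to prove the two implications separately, the harder one being the forward direction where one must extract a simple deterministic cycle from a strongly synchronizing strategy.

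For the $(\Rightarrow)$ direction, I would proceed as follows. Fix $\epsilon > 0$ small enough that $p - \epsilon > \frac{1}{1+\eta}$. Since $\liminf_n \norm{\M^{\alpha}_n} \geq p$, there exists $N$ such that $\norm{\M^{\alpha}_n} \geq p - \epsilon$ for all $n \geq N$. Because $p - \epsilon > \frac{1}{2}$, at each such time $n$ there is a \emph{unique} state $s_n \in Q$ with $\M^{\alpha}_n(s_n) \geq p - \epsilon$. The key step is to show that for all $n \geq N$ there is an action $a_n$ with $\delta(s_n,a_n)(s_{n+1}) = 1$. To see this, let $m = \M^{\alpha}_n(s_n)$ and let $P$ be the (path-averaged) probability with which the chosen action mass sends $s_n$ to $s_{n+1}$. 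The mass at $s_{n+1}$ is then bounded by $mP + (1-m)$, so $m(1-P) \leq 1 - \M^{\alpha}_{n+1}(s_{n+1}) \leq 1 - (p-\epsilon)$. If every action $a$ played with positive probability satisfied $\delta(s_n,a)(s_{n+1}) < 1$, then $\delta(s_n,a)(s_{n+1}) \leq 1-\eta$ for each, whence $P \leq 1-\eta$, giving $m\eta \leq 1-(p-\epsilon)$, hence $(p-\epsilon)(1+\eta) \leq 1$, contradicting the choice of $\epsilon$.

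Thus the sequence $s_N, s_{N+1}, \dots$ consists of states connected by deterministic transitions; since $Q$ is finite, this sequence eventually enters a simple deterministic cycle. After relabelling, call this cycle $q_0 q_1 \cdots q_{\ell}$ with $q_{\ell} = q_0$, and choose the origin so that $s_n = q_{n \bmod \ell}$ for all sufficiently large $n$. Consequently, for every $\epsilon > 0$ there is some $n_0 \geq N$ with $n_0 \equiv 0 \pmod \ell$ such that $\M^{\alpha}_{n_0}(q_0) \geq p - \epsilon$. Using the path bijection between $\M$ (from $q_{\init}$) and $\M \times [\ell]$ (from $\langle q_{\init},0\rangle$), the strategy $\alpha$ induces a strategy $\beta$ in $\M \times [\ell]$, and $\Pr^{\beta}(\Diamond\{\langle q_0,0\rangle\}) \geq \sup_{n \equiv 0 \bmod \ell} \M^{\alpha}_n(q_0) \geq p$.

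For the $(\Leftarrow)$ direction, given a simple deterministic cycle $q_0 \cdots q_{\ell}$ with deterministic actions $a_0, \dots, a_{\ell-1}$, and a strategy $\beta$ in $\M \times [\ell]$ with $\Pr^{\beta}(\Diamond\{\langle q_0,0\rangle\}) \geq p$ from $\langle q_{\init},0\rangle$, define $\alpha$ in $\M$ by: on a finite path $\rho$ whose bijective image $\rho'$ in $\M \times [\ell]$ has already visited $\langle q_0,0\rangle$, play the cycle action $a_i$ when the current state is $q_i$; otherwise, play $\beta(\rho')$. Let $P_n$ denote the $\beta$-probability that $\langle q_0,0\rangle$ is visited within the first $n$ steps; then $P_n$ is nondecreasing with $\lim_n P_n \geq p$. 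Any path that has entered the cycle synchronously by step $n$ stays in the cycle forever and is at state $q_{n \bmod \ell}$ at step $n$; therefore $\M^{\alpha}_n(q_{n \bmod \ell}) \geq P_n$, which yields $\liminf_n \norm{\M^{\alpha}_n} \geq \liminf_n P_n \geq p$.

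The main technical obstacle is the deterministic-cycle extraction in step 2 of the forward direction: the precise inequality forcing $\delta(s_n,a_n)(s_{n+1}) = 1$ is exactly where the assumption $p > \frac{1}{1+\eta}$ is used, and it must be done with care because strategies are randomized so one needs the average-action argument rather than a single-action argument. Once this is in place, the rest of both directions is a bookkeeping exercise on the product $\M \times [\ell]$.
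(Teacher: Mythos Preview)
Your overall approach matches the paper's, and the backward direction is correct (the paper's version switches to cycle actions slightly earlier, at any $\langle \hat q_i,\ell-i\rangle$, but your version switching only at $\langle \hat q_0,0\rangle$ works just as well). Your averaging argument for the existence of a deterministic transition $s_n\to s_{n+1}$ is also fine and a bit more careful about randomized strategies than the paper.

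There is, however, a real gap in the forward direction. You claim that since $Q$ is finite the sequence $(s_n)_{n\ge N}$ ``eventually enters a simple deterministic cycle'' with $s_n=q_{n\bmod \ell}$ for all large $n$; this is an assertion of eventual periodicity, and it need not hold. You have only shown that for each $n$ \emph{some} deterministic action sends $s_n$ to $s_{n+1}$, not that $\alpha$ plays it, nor that $s_{n+1}$ is determined by $s_n$. The sequence $(s_n)$ is therefore an arbitrary walk in the deterministic-transition graph, and at successive visits to the same state it may choose different deterministic successors. For instance, with two deterministic $2$-cycles sharing a vertex the walk can alternate between them in an aperiodic pattern. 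Without eventual periodicity you cannot conclude that a \emph{fixed} $q_0$ occurs at positions $\equiv 0\pmod\ell$ beyond every $N_\epsilon$, which is exactly what you use to get $\sup_{n\equiv 0}\M^{\alpha}_n(q_0)\ge p$.

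The fix is a finiteness argument, as in the paper: for each $\epsilon>0$, extract from $(s_n)_{n\ge N_\epsilon}$ one simple cycle (the first repetition already gives a simple one) of some length $\ell_\epsilon$ and pick the state $q_0^{\epsilon}$ on it occurring at a position $\equiv 0\pmod{\ell_\epsilon}$; this yields $\Pr^{\beta}\bigl(\Diamond\{\langle q_0^{\epsilon},0\rangle\}\bigr)\ge p-\epsilon$ in $\M\times[\ell_\epsilon]$ with $\beta$ copying $\alpha$. Since there are only finitely many pairs (simple cycle, state on it), one pair $(\ell,q_0)$ recurs for $\epsilon\to 0$, and for that pair $\Pr^{\beta}\bigl(\Diamond\{\langle q_0,0\rangle\}\bigr)\ge p$.
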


\begin{proof}
For the first direction of the lemma, assume that there exists a strategy~$\alpha$ 
such that $\liminf_{n \to \infty} \norm{\M^{\alpha}_n} \geq p$ from~$q_{\init}$.
Thus for all $\epsilon > 0$ (in particular, we consider $\epsilon < p - \frac{1}{1+\eta}$), 
there exists $k \in \nat$ such that for all $n \geq k$ we have $\norm{\M^{\alpha}_{n}} \geq p-\epsilon$,
and let $\p_n$ be a state such that $\M^{\alpha}_n(\p_n) \geq p-\epsilon$.
We claim that for all $n \geq k$, there exists an action $a \in \Act$ 
such that $\post(\p_{n},a) = \{\p_{n+1}\}$ i.e., there is a deterministic transition from 
$\p_{n}$ to $\p_{n+1}$. Assume towards contradiction that for some $n \geq k$,
for all~$a \in \Act$ there exists $q_a \neq \p_{n+1}$ such that $q_a \in \post(\p_{n},a)$.
Then no matter the actions played by $\alpha$ at step $n$, we have
$\M^{\alpha}_{n+1}(\{q_a \mid a \in \Act\}) \geq \M^{\alpha}_{n}(\p_n) \cdot \eta \geq (p-\epsilon) \cdot \eta$,
and since $\p_{n+1} \neq q_a$ for all $a \in \Act$, it follows that 
$$\M^{\alpha}_{n+1}(\p_{n+1}) \leq 1 - \M^{\alpha}_{n+1}(\{q_a \mid a \in \Act\}) 
\leq 1 - (p-\epsilon) \cdot \eta \leq 1 - \frac{\eta}{1+\eta} < p - \epsilon,$$
in contradiction with the fact that $\p_{n+1}$ is a state such that  
$\M^{\alpha}_{n+1}(\p_{n+1}) \geq p-\epsilon$. This concludes the argument
showing that for all $n\geq k$, there exists an action $a \in \Act$ 
such that $\post(\p_{n},a) = \{\p_{n+1}\}$. 

Now in the sequence $\p_k \p_{k+1} \dots$, we can extract a simple (and deterministic)
cycle $\C = \p_i \p_{i+1} \dots \p_{i + \ell}$ since the state space  
is finite. Let $\q_0 = \p_{i+j}$ where $j \leq \ell$ is such that 
$i+j~mod~\ell = 0$. Then $\q_0$ is on a simple deterministic cycle, and
is reachable after a multiple of $\ell$ steps with probability at least $p - \epsilon$
by a strategy $\beta$ in~$\M\times[\ell]$ that copies the strategy $\alpha$.
Hence, we have $\Pr^{\beta}(\Diamond\{\tuple{\q_0,0}\}) \geq p - \epsilon$ from~$\tuple{q_{\init},0}$.
Since for every $\epsilon > 0$, we can find such a cycle and state $\q_0$, 
and since the state space is finite (as well as the number of simple cycles),
it follows that there is a cycle $\C$ and state $\q_0$ in $\C$ such that 
for all $\epsilon > 0$ we have $\Pr^{\beta}(\Diamond\{\tuple{\q_0,0}\}) \geq p - \epsilon$,
and thus $\Pr^{\beta}(\Diamond\{\tuple{\q_0,0}\}) \geq p$.

For the second direction of the lemma, assume that there exist a simple deterministic 
cycle~$\q_0 \q_1 \dots \q_{\ell}$ and a strategy~$\beta$ in $\M \times [\ell]$ 
that ensures the target set $\{\tuple{\q_0,0}\}$ is reached with probability at least $p$  
from~$\tuple{q_{\init},0}$. Since randomization is not necessary for 
reachability objectives in MDPs, we can assume that $\beta$ is a pure strategy.
We show that there exists a strategy~$\alpha$ such that 
$\liminf_{n \to \infty} \norm{\M^{\alpha}_n}\geq p$ from~$q_{\init}$.
From $\beta$, we construct a pure strategy $\alpha$ in $\M$. 
Given $\rho = q_0 a_0 q_1 a_1 \dots q_n$, we define $\alpha(\rho)$ as follows: 
if $q_n = \q_{n~mod~\ell}$, then there exists an action $a$ such that 
$\post(q_n,a) = \{\q_{n+1~mod~\ell}\}$ and we define 
$\alpha(\rho) = a$, otherwise let $\alpha(\rho) = \beta(\rho)$.	
Thus $\alpha$ mimics $\beta$ until a state $\q_{k}$ of the cycle is reached at step $n$ such that 
$k = n~mod~\ell$, and then $\alpha$ switches to always playing actions
that keeps $\M$ in the simple deterministic cycle~$\q_0 \q_1 \dots \q_{\ell}$.
Note that $\alpha$ is a pure strategy.

We claim that given $\epsilon>0$ there exists $k$ such that 
for all $n\geq k$, we have $\norm{\M^{\alpha}_{n}} \geq p-\epsilon$,
which entails that $\liminf_{n \to \infty} \norm{\M^{\alpha}_n}\geq p$ from $q_{\init}$
and concludes the proof. To show the claim, since $\Pr^{\beta}(\Diamond \{\tuple{\q_0,0}\})\geq p$,
consider $k$ such that $\Pr^{\beta}(\Diamond^{\leq k} \{\tuple{\q_0,0}\}) \geq p-\epsilon$,
and for $i = 1, 2, \dots, \ell$, let $R_i = \{\tuple{\q_{i}, \ell - i}\}$. 
Note that $R_{\ell} = \{\tuple{\q_0,0}\}$.
Then trivially $\Pr^{\beta}(\Diamond^{\leq k} \bigcup_{i=1}^{\ell} R_i) \geq p-\epsilon$
and since $\alpha$ agrees with $\beta$ on all finite paths  
that do not (yet) visit $\bigcup_{i=1}^{\ell} R_i$, given a path $\rho$ that
visits $\bigcup_{i=1}^{\ell} R_i$ (for the first time), only  actions that keep $\M$
in the simple cycle $\q_0 \q_1 \dots \q_{\ell}$
are played by $\alpha$ and thus all continuations of $\rho$ in the outcome of $\alpha$ 
will visit $\q_0$ after a multiple of $\ell$ steps, say $j \cdot \ell$ steps (in total).  
Since next, $\alpha$ will always play actions that keeps $\M$ looping through the  cycle 
$\q_0 \q_1 \dots \q_{\ell}$, we have $\M^{\alpha}_{j \cdot \ell + i}(\q_{i}) \geq p-\epsilon$
for all $0 \leq i < \ell$, and thus $\norm{\M^{\alpha}_n} \geq p-\epsilon$ for all $n \geq j \cdot \ell$. 
\qed
\end{proof}\medskip

It follows directly from Lemma~\ref{lem: p-strong} with $p=1$ that 
almost-sure strongly synchronizing is equivalent to almost-sure
reachability to a deterministic cycle in~$\M \times [\ell]$. The same equivalence
holds for the sure and limit-sure winning modes.

\begin{lemma}\label{lem: strong}
A state $q_{\init}$ is sure (resp., almost-sure or limit-sure) winning
for the strongly synchronizing objective (according to $\fmax_Q$) in $\M$
if and only if there exists a simple deterministic cycle~$\q_0 \q_1 \dots \q_{\ell}$ 
such that $\tuple{q_{\init},0}$ is sure (resp., almost-sure or limit-sure) 
winning for the reachability objective~$\Diamond \{\tuple{\q_0,0}\}$
in $\M \times [\ell]$.
\end{lemma}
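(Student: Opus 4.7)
}

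The plan is to deduce all three equivalences from Lemma~\ref{lem: p-strong}, which already establishes the key structural fact (that any strategy concentrating sufficient mass in a single state eventually must follow a deterministic cycle) and to supplement it with a short direct argument for the sure mode. Throughout, the bijection between paths in $\M$ and in $\M\times[\ell]$ described before Lemma~\ref{lem: p-strong} is used freely to lift winning strategies back and forth.

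For the \emph{almost-sure} mode, I would simply instantiate Lemma~\ref{lem: p-strong} with $p=1$: the existence of a strategy $\alpha$ with $\liminf_{n\to\infty}\norm{\M^{\alpha}_n}\geq 1$ is equivalent to the existence of a simple deterministic cycle $\q_0 \q_1 \dots \q_\ell$ and a strategy $\beta$ in $\M\times[\ell]$ with $\Pr^{\beta}(\Diamond\{\tuple{\q_0,0}\})\geq 1$, i.e.\ almost-sure reachability of $\tuple{\q_0,0}$.

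For the \emph{limit-sure} mode, suppose $q_{\init}$ is limit-sure winning, so for every $\epsilon>0$ there is $\alpha_\epsilon$ with $\liminf_{n\to\infty}\norm{\M^{\alpha_\epsilon}_n}\geq 1-\epsilon$. Taking $\epsilon$ small enough that $1-\epsilon > \frac{1}{1+\eta}$, Lemma~\ref{lem: p-strong} supplies a simple deterministic cycle $\C_\epsilon$ and a distinguished state $\q_0^\epsilon$ on it together with a strategy $\beta_\epsilon$ reaching $\tuple{\q_0^\epsilon,0}$ with probability at least $1-\epsilon$ in $\M\times[\ell_\epsilon]$. Since there are only finitely many simple deterministic cycles in $\M$, a pigeonhole argument along a sequence $\epsilon_k \to 0$ yields a single pair $(\C,\q_0)$ that works for infinitely many $\epsilon_k$, hence for all $\epsilon>0$. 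This gives limit-sure reachability of $\tuple{\q_0,0}$ in $\M\times[\ell]$, and as limit-sure and almost-sure reachability coincide in MDPs~\cite{AHK07}, the stated equivalence follows. The converse direction again uses Lemma~\ref{lem: p-strong} by copying the reachability witness into a strongly synchronizing strategy for each $p=1-\epsilon$.

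For the \emph{sure} mode, I would argue directly, since Lemma~\ref{lem: p-strong} is phrased in terms of $\liminf$ and does not yield sure winning. If $q_{\init}\in \winsure{strongly}(\fmax_Q)$, there is $\alpha$ and $N$ such that $\norm{\M^{\alpha}_n}=1$ for all $n\geq N$; hence from step~$N$ on, all distributions $\M^\alpha_n$ are Dirac on some state $\p_n$, and the transition $\p_n\to \p_{n+1}$ must be deterministic on the action chosen by $\alpha$. The infinite deterministic sequence $\p_N\p_{N+1}\dots$ eventually enters a simple deterministic cycle $\q_0\q_1\dots\q_\ell$, and picking the offset so that $\q_0$ is visited at positions congruent to $0$ modulo $\ell$ yields a pure strategy in $\M\times[\ell]$ that surely reaches $\tuple{\q_0,0}$. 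Conversely, from sure reachability of $\tuple{\q_0,0}$ we mimic the reachability strategy until the cycle is entered and then follow the cycle forever, giving a sure strongly synchronizing strategy.

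The main obstacle is the limit-sure case: one must be careful that the cycle produced by Lemma~\ref{lem: p-strong} may depend on $\epsilon$, so the finite-cardinality-of-cycles argument (plus the collapse of limit-sure and almost-sure reachability in finite MDPs) is what makes the equivalence go through with a single cycle; the sure case is routine once one observes that sure strongly synchronizing forces Dirac distributions from some point on.
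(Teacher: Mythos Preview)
Your proposal is correct and follows essentially the same approach as the paper's proof: the almost-sure case is Lemma~\ref{lem: p-strong} with $p=1$; the limit-sure case uses Lemma~\ref{lem: p-strong} along a sequence $\epsilon_i\to 0$ together with a pigeonhole argument over the finitely many simple deterministic cycles (the paper phrases the converse via the collapse to almost-sure reachability and case~(2), while you invoke Lemma~\ref{lem: p-strong} directly for each $1-\epsilon$, but both are valid); and the sure case is handled by the direct Dirac-sequence argument you describe, matching the paper's treatment.
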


\begin{proof}
We consider the three winning modes:

{\bf (1) sure winning mode.}
The proof is similar to the proof of Lemma~\ref{lem: p-strong}.
For the first direction, given a strategy~$\alpha$ and $k$
such that for all $n \geq k$ we have $\norm{\M^{\alpha}_n}=1$
from the initial state~$q_{\init}$, we can construct a sequence
$\p_k \p_{k+1} \dots$ of states where there is deterministic transition
from $\p_n$ to $\p_{n+1}$ for all $n \geq k$ (let $\p_n$ be the state such 
that  $\M^{\alpha}_n(\p_n) = 1$). This sequence contains a 
simple deterministic cycle and a state $\q_0$ in this cycle occurs in 
the sequence at a position $\p_{j \cdot \ell}$ that is a multiple of 
the length $\ell$ of the cycle. Hence, the strategy $\alpha$ played in $\M\times[\ell]$
ensures to reach $\tuple{\q_0,0}$ surely from~$\tuple{q_{\init},0}$.

For the second direction, if a strategy $\beta$ ensures to
reach a state $\tuple{\q_0,0}$ in $\M\times[\ell]$ where $\q_0$ belongs
to a simple deterministic cycle of length $\ell$, then a strategy $\alpha$
that mimics $\beta$ until $\tuple{\q_0,0}$ is reached, and then switches
to playing actions to follow the simple cycle, ensures sure strongly 
synchronizing with function $\fmax_Q$ in $\M$. Note that $\alpha$ is a pure strategy.

{\bf (2) almost-sure winning mode.} 
This case follows from Lemma~\ref{lem: p-strong} with $p=1$.

{\bf (3) limit-sure winning mode:} 
For the first direction, if $q_{\init}$ is limit-sure
winning for the strongly synchronizing objective, then 
for all $\epsilon > 0$, there exists a strategy $\alpha$ such that 
$\liminf_{n \to \infty} \norm{\M^{\alpha_i}_n} \geq 1-\epsilon$.
By Lemma~\ref{lem: p-strong}, for a decreasing sequence $\epsilon_i \to 0$
such that $\epsilon_i < 1 - \frac{1}{1+\eta}$ there exist
a simple deterministic cycle~$\C_i$ of length $\ell_i$, 
a state $\q^i_0$ in $\C_i$, and a strategy~$\beta_i$ in~$\M \times[\ell_i]$ 
such that $\Pr^{\beta_i}(\Diamond \{\tuple{\q_0,0}\}) \geq 1-\epsilon_i$
from~$\tuple{q_{\init},0}$.
Since there is a finite number of simple deterministic cycles in $\M$,
some simple cycle $\C = \q_0 \q_1 \dots \q_{\ell}$ and state $\q_0$ 
occurs infinitely often in the sequence of $(\C_i, \q^i_0)$, and thus 
$\tuple{\q_{\init},0}$ is limit-sure winning for
the reachability objective~$\Diamond \{\tuple{\q_0,0}\})$ in $\M \times [\ell]$.

For the second direction, since limit-sure winning implies almost-sure winning
for reachability objectives in MDPs, it follows from case {\bf (2)} that 
$q_{\init}$ is almost-sure (and thus also limit-sure) winning
for the strongly synchronizing objective in $\M$.
\qed
\end{proof}\medskip

Since the winning regions of almost-sure and limit-sure winning
coincide for reachability objectives in MDPs~\cite{AHK07},
the next corollary follows from  Lemma~\ref{lem: strong}. 

\begin{corollary}\label{col: almost-limit-strong}
$\winlim{strongly}(\fmax_T)= \winas{strongly}(\fmax_T)$
for all target sets~$T$.
\end{corollary}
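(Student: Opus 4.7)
The plan is to derive the corollary as a short consequence of Lemma~\ref{lem: strong} together with the classical fact~\cite{AHK07} that the almost-sure and limit-sure winning regions coincide for reachability objectives in finite MDPs. One inclusion is free: by Remark~\ref{rem:expressiveness} we have $\winas{strongly}(\fmax_T)\subseteq\winlim{strongly}(\fmax_T)$ from the definitions of the winning modes. So the work is to establish the reverse inclusion $\winlim{strongly}(\fmax_T)\subseteq\winas{strongly}(\fmax_T)$.

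First, I would reduce the general case $\fmax_T$ to the special case $\fmax_Q$ used in Lemma~\ref{lem: strong}. This uses the state-duplication construction sketched at the beginning of Section~\ref{sec:strongly-max} (analogous to the construction in the proof of Lemma~\ref{lem:weakly-max-sum}): duplicate every state outside $T$ and split the incoming probability mass evenly between the copies, so that in the modified MDP $\M'$ no state outside $T$ can ever carry mass larger than $1/2$. Consequently, any sequence of distributions $(\M'^{\alpha}_n)_n$ with $\liminf_n\lVert\M'^{\alpha}_n\rVert\geq 1-\epsilon$ (for $\epsilon<1/2$) must have its accumulating mass located in $T$, and we obtain $d_0\in\win{strongly}{\mu}(\fmax_T)$ in $\M$ iff $d'_0\in\win{strongly}{\mu}(\fmax_{Q'})$ in $\M'$ for $\mu\in\{almost, limit\}$.

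Having reduced to $\fmax_{Q'}$, I invoke Lemma~\ref{lem: strong} in both winning modes: $q_{\init}\in\winlim{strongly}(\fmax_{Q'})$ iff there exists a simple deterministic cycle $\q_0\q_1\dots\q_\ell$ in $\M'$ such that $\tuple{q_{\init},0}$ is limit-sure winning for the reachability objective $\Diamond\{\tuple{\q_0,0}\}$ in $\M'\times[\ell]$, and analogously for almost-sure. Since the almost-sure and limit-sure winning regions for reachability objectives coincide in finite MDPs~\cite{AHK07}, the two characterizations collapse to the same set of initial states, yielding $q_{\init}\in\winas{strongly}(\fmax_{Q'})$. Unwinding the reduction gives $q_{\init}\in\winas{strongly}(\fmax_T)$, establishing the desired equality.

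There is essentially no hard step here; the only point requiring a little care is that the equivalence in Lemma~\ref{lem: strong} is witnessed by the \emph{same} simple deterministic cycle on both sides, so that after replacing limit-sure reachability by almost-sure reachability in $\M'\times[\ell]$ we can legitimately apply the almost-sure direction of Lemma~\ref{lem: strong} to conclude almost-sure strongly synchronizing with the same cycle. This is already built into the statement of Lemma~\ref{lem: strong}, so no further argument is needed.
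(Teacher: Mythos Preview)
Your proposal is correct and follows essentially the same route as the paper: reduce $\fmax_T$ to $\fmax_Q$ via state duplication (which the paper does once at the start of Section~\ref{sec:strongly-max}), then combine Lemma~\ref{lem: strong} with the coincidence of almost-sure and limit-sure reachability in MDPs~\cite{AHK07}. The paper's derivation is even terser because the limit-sure case of the proof of Lemma~\ref{lem: strong} already upgrades limit-sure reachability to almost-sure reachability internally, so the corollary is just read off; but the logical content is the same as what you wrote.
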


If there exists a cycle $\C$ satisfying the condition in Lemma~\ref{lem: strong},
then all cycles reachable from $\C$ in the graph $G$ of deterministic transitions
also satisfies the condition. Hence, it is sufficient to check the condition
for an arbitrary simple cycle in each strongly connected component (SCC)
of $G$. As shown in the next theorem, it follows that strongly synchronizing can be 
decided in polynomial time 
and the length of the cycle gives a linear bound on the memory needed to win.

\begin{theorem}\label{theo:strongly-max}
For the three winning modes of strongly synchronizing 
according to $\fmax_T$:

\begin{enumerate}
\item (Complexity). The membership problem is PTIME-complete.

\item (Memory). Linear memory is necessary and sufficient for both pure 
and randomized strategies, and pure strategies are sufficient.
\end{enumerate}
\end{theorem}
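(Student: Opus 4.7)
}

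The plan is to exploit Lemma~\ref{lem: strong}, which reduces strongly synchronizing (in each of the three winning modes) to a standard reachability question in the product $\M\times[\ell]$ for some simple deterministic cycle of length $\ell$. For the \textbf{PTIME upper bound}, first I would build in polynomial time the subgraph $G$ of $\M$ consisting of those transitions $(q,a,q')$ with $\delta(q,a)(q')=1$, and observe that if some simple deterministic cycle $\C$ witnesses the condition of Lemma~\ref{lem: strong}, then so does any deterministic cycle reachable from $\C$ in $G$: indeed, once the probability mass has been synchronized on $\C$ we can always follow deterministic transitions to any successor cycle, keeping the mass synchronized. Consequently it suffices to enumerate, for each strongly connected component of $G$ containing a deterministic cycle, one representative simple cycle (of length $\ell\le\abs{Q}$), build $\M\times[\ell]$ (of polynomial size), and test sure, almost-sure, or limit-sure reachability from $\tuple{q_{\init},0}$ to $\tuple{\q_0,0}$ using the classical polynomial-time algorithms for MDP reachability from~\cite{deAlfaro97,CH12,AHK07}. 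This gives a polynomial-time procedure for all three winning modes.

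For the \textbf{PTIME lower bound}, I would reduce from a known PTIME-complete problem such as sure reachability in AND-OR graphs (equivalently, sure reachability in MDPs). Given an instance $(\M,q_{\init},T)$, construct $\M'$ by redirecting all transitions from states in $T$ to a single fresh absorbing state $\q^{\star}$ equipped with a deterministic self-loop (so that $\{\q^{\star}\}$ is itself a simple deterministic cycle of length~$1$), and all other transitions from non-target states of $\M$ are preserved. Then $q_{\init}$ is sure winning for reaching $T$ in $\M$ iff $q_{\init}$ is sure (equivalently almost-sure or limit-sure) strongly synchronizing according to $\fmax_{\{\q^{\star}\}}$ in $\M'$; by Lemma~\ref{lem:weakly-max-sum}-style duplication of non-target states, the same hardness transfers to function $\fmax_T$ with arbitrary~$T$. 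Analogous reductions from the PTIME-complete almost-sure and limit-sure reachability problems give hardness for the remaining two modes.

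For the \textbf{memory upper bound}, the strategy built in Lemma~\ref{lem: p-strong} plays a memoryless pure reachability strategy in $\M\times[\ell]$ until a state of the simple deterministic cycle $\C=\q_0\q_1\dots\q_\ell$ is reached at the correct position modulo $\ell$, and then cycles through $\C$ by a memoryless strategy on $\C$. Since $\ell\le\abs{Q}$, translating this back to $\M$ yields a pure finite-memory strategy whose memory consists of the $\ell$ positions of the cycle plus a ``pre-cycle'' mode, hence linear in $\abs{Q}$.

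For the \textbf{memory lower bound}, I would exhibit a family of MDPs $(\M_n)_{n\ge 1}$ in which the only simple deterministic cycle has length~$n$ (for instance, take a single deterministic cycle $\q_0\q_1\dots\q_{n-1}\q_0$ on which at each $\q_i$ a ``wrong'' action leads to a non-winning sink, so that the strategy must play the unique ``good'' action $a_i$ at $\q_i$; initial probability is split uniformly between $\q_0$ and a side state that only re-enters the cycle at an arbitrary position), and argue by a standard pumping argument that any strategy with fewer than $n$ memory modes must repeat the same action at two distinct cycle positions, so some probability mass escapes to the sink at every loop iteration and $\liminf_n \norm{\M^\alpha_n}<1$. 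The main obstacle I anticipate is getting the memory lower-bound MDP exactly right so that (i) the required cycle length is indeed $\Theta(n)$ and (ii) the pumping argument rules out randomized finite-memory strategies of sublinear size, not merely pure ones; randomization is handled by observing that any mode that is reached with positive probability at two different cycle positions must, under \emph{any} mixture of actions played there, leak a constant fraction of mass off the cycle at infinitely many steps.
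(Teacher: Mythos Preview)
Your PTIME upper bound and memory upper bound are correct and essentially match the paper. For PTIME-hardness the paper uses the monotone Boolean circuit value problem rather than MDP reachability, but your reduction via an absorbing target $\q^{\star}$ is equally valid; note that hardness for a singleton target already gives hardness for arbitrary $T$ (singletons are a special case), so the duplication remark is unnecessary.

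The memory lower bound, however, has a genuine gap. In your proposed family $\M_n$, the ``good'' action at each cycle state $\q_i$ is determined by the state $\q_i$ itself, so the memoryless pure strategy that simply plays the good action at the current state keeps all mass on the cycle forever. Your pumping argument (``repeat the same action at two distinct cycle positions'') does not bite, because the correct action depends only on the current state, not on the step count modulo $n$. The side state does not help either: if it enters the cycle at a fixed position different from $\q_0$, the two halves of the mass circulate out of phase forever and the MDP is not strongly synchronizing at all; if it can choose its entry point, a single memoryless choice synchronizes it with the mass at $\q_0$.

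What your construction is missing is a mechanism that forces the strategy to \emph{count steps at a single state}. The paper achieves this (see Figure~\ref{fig:strong-max-memory}) with a ``waiting room'' state $q_1$ outside the cycle: from $q_{\init}$, half the mass enters $q_1$ (which has a self-loop on $b$ and a deterministic transition to the cycle on $a$) and the other half enters the length-$n$ cycle directly. To synchronize, the mass in $q_1$ must play $b$ exactly $n-1$ times before playing $a$, so that it enters the cycle precisely when the circulating mass returns to the entry point. Since all these decisions are taken in the \emph{same} state $q_1$, the strategy must use its memory to distinguish the $n$ visits, and a finite-memory strategy with fewer than $n$ modes necessarily plays $a$ at a wrong step (or never), failing to synchronize.
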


\begin{proof}
First, we prove the PTIME upper bound.
Given an MDP~$\M=\tuple{Q,\Act,\delta}$ and a state~$q_{\init}$, we say that 
a simple deterministic cycle~$\C = \q_0 \q_1 \dots \q_{\ell}$ is sure 
(resp., almost-sure, and limit-sure) winning from~$q_{\init}$ 
if $\tuple{q_{\init},0}$ is sure (resp., almost-sure, and limit-sure) winning 
for the reachability objective~$\Diamond \{\tuple{\q_0,0}\}$ in $\M \times [\ell]$.

We claim that if $\C$ is sure (resp., almost-sure, and limit-sure) winning 
from~$q_{\init}$, then so are all simple cycles $\C'$ 
reachable from $\C$ in the graph of deterministic transitions induced by $\M$.  
Given a strategy to reach a state $\q_0$ of $\C$ surely (resp., with probability $p$), 
we can use the path of deterministic transitions from $\C$ to $\C'$ to obtain
a strategy to reach a state $\q'_0$ of $\C'$ surely (resp., with probability $p$): since $\q_0$
is reached after a multiple of $\ell$ steps ($\{\tuple{\q_0,0}\}$ is reached in $\M \times [\ell]$),
we can let the probability mass loop through the cycle $\C$, and transfer it to $\C'$
after a number of steps that is also a multiple of $\ell'$, and then let it loop in $\C'$,
ensuring that $\tuple{\q'_0,0}$ is reached surely (resp., with probability $p$) in $\M \times [\ell']$.
This establishes the claim for the three winning modes. 

Using this claim and Lemma~\ref{lem: strong}, it suffices to decide sure 
(resp., almost-sure, and limit-sure) winning for one simple cycle in each bottom
SCC (reachable from~$q_{\init}$) of the graph of deterministic transitions. 
Since SCC decomposition for graphs, as well as sure, almost-sure, and limit-sure reachability for MDPs
can be computed in polynomial time, and the number of bottom SCCs is at most the size~$\abs{Q}$ 
of the graph, the PTIME upper bound for the membership problem follows.

For PTIME-hardness, the proof is by a reduction from 
the monotone Boolean circuit value problem, which is PTIME-complete~\cite{G77}.
This problem is to compute the output value of a given Boolean circuit consisting of
AND-gates, OR-gates, and fixed Boolean input values. 
From a circuit, we construct an MDP~$\M$ with actions $L$ and $R$, where the states 
correspond to the gates and input values of the circuit, and with three new absorbing states 
$q_1$, $q_2$, and $\sync$. The successors of an AND-gate $n_1 \land n_2$ 
are $n_1$ and $n_2$ with probability~$\frac{1}{2}$ on all actions, the successors of an OR-gate $n_1 \lor n_2$ 
are $n_1$ on action $L$, and $n_2$ on action $R$. On all actions, a node defining 
input value $1$ has unique successor $\sync$, and a node defining input value $0$ 
has successors $q_1$ and $q_2$ with probability~$\frac{1}{2}$.
Let $q_{\init}$ be the state corresponding to the output node.
Then $\M$ is sure (resp., almost-sure, limit-sure) strongly synchronizing 
(in $\sync$) from $q_{\init}$ if and only if the value of the circuit is $1$,
which establishes PTIME-hardness of strongly synchronizing in the three winning modes.

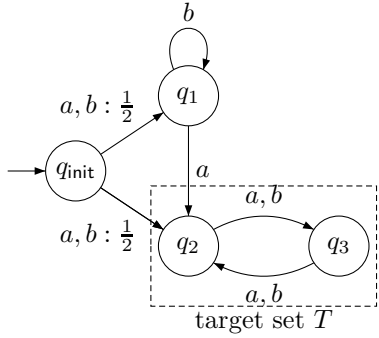
\begin{figure}[t]
\begin{center}
    \begin{picture}(50,42)

\node[Nmarks=i](n0)(10,21){$q_{\init}$}
\node[Nmarks=n](n1)(25,31){$q_1$}
\node[Nmarks=n](n2)(25,11){$q_2$}
\node[Nmarks=n](n3)(45,11){$q_3$}

\drawpolygon[dash={0.8 0.5}0](20,19)(50,19)(50,3)(20,3)
\node[Nframe=n](label)(35,1){target set $T$}

\drawedge[ELpos=40, ELside=l](n0,n1){$a,b: \frac{1}{2}$}
\drawedge[ELpos=40, ELside=r](n0,n2){$a,b: \frac{1}{2}$}
\drawedge(n0,n2){}
\drawloop[ELside=l,loopCW=y, loopangle=90, loopdiam=5](n1){$b$}
\drawedge(n1,n2){$a$}

\drawedge[ELpos=50, ELside=l, curvedepth=4](n2,n3){$a,b$}
\drawedge[ELpos=50, ELside=l, curvedepth=4](n3,n2){$a,b$}

\end{picture}
\end{center}
 \caption{An MDP where all strategies to win sure strongly synchronizing with 
function $\fmax_{\{q_2,q_3\}}$ require memory.\label{fig:strong-max-memory}}
\end{figure}

Finally, the result on memory requirement is established as follows.
Since memoryless strategies are sufficient for reachability objectives
in MDPs, it follows from the proof of Lemma~\ref{lem: p-strong} and 
Lemma~\ref{lem: strong} that the (memoryless) winning strategies in $\M \times[\ell]$ 
can be transferred to winning strategies with memory $\{0,1,\dots,\ell-1\}$ in $\M$.
Since $\ell \leq \abs{Q}$, linear-size memory is sufficient to win strongly 
synchronizing objectives. We present a family of MDPs $\M_n$ ($n \in \nat$)
that are sure winning for strongly synchronizing (according to $\fmax_Q$), 
and where the sure winning strategies require linear memory. 
The MDP $\M_2$ is shown in \figurename~\ref{fig:strong-max-memory},
and the MDP $\M_n$ is obtained from $\M_2$ by replacing the cycle $q_2 q_3$ of deterministic
transitions by a simple cycle of length $n$. Note that only in $q_1$ there is a relevant
strategic choice. Since both $q_1$ and $q_2$ contain probability mass after one step, we need to wait
in $q_1$ (by playing $b$) until the probability mass in $q_2$ comes back to $q_2$ 
through the cycle. It is easy to show that to ensure strongly synchronizing,
we need to play $n-1$ times $b$ in $q_1$ before playing $a$, and this requires linear memory.
\qed
\end{proof}

\subsection{Strongly synchronizing with function $\fsum$}\label{sec:strongly-sum}

The strongly synchronizing objective with function $\fsum_{T}$ requires
that eventually all the probability mass remains in $T$. We show that this
is equivalent to a traditional reachability objective with target defined
by the set $S$ of sure winning initial distributions for the safety objective $\Box T$.

It follows that almost-sure (and limit-sure) winning for strongly synchronizing
is equivalent to almost-sure (or equivalently limit-sure) winning for
the coB\"uchi objective $\Diamond \Box T = \{q_0 a_0 q_1 \dots \in \Paths(\M) 
\mid \exists j \cdot \forall i > j: q_i \in T\}$ in the state-based semantics. 
However, sure strongly synchronizing
is not equivalent to sure winning for the coB\"uchi objective, as shown by
the MDP in \figurename~\ref{fig:coBuchi} which is:
\begin{itemize}
\item sure winning for the coB\"uchi objective $\Diamond \Box \{q_{\init},q_2\}$
from $q_{\init}$ (because in \emph{all} possible infinite paths from $q_{\init}$, 
there is a point from which only states in $\{q_{\init},q_2\}$ are visited), but 
\item not sure winning for the reachability objective $\Diamond S$
where $S = \{q_2\}$ is the winning region for the safety objective $\Box \{q_{\init},q_2\}$, 
thus not sure strongly synchronizing (the probability mass assigned to $q_1$ is always positive
after the first step). 
\end{itemize}

Note that this MDP is almost-sure
strongly synchronizing in target $T = \{q_{\init},q_2\}$ from $q_{\init}$, and almost-sure
winning for the coB\"uchi objective $\Diamond \Box T$, as well as almost-sure
winning for the reachability objective $\Diamond S$.

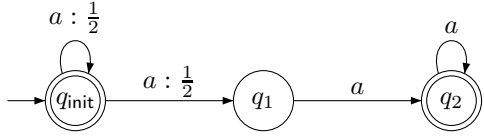
\begin{figure}[t]
\begin{center}
    \begin{picture}(62,17)(0,0)

\node[Nmarks=ir](q0)(8,4){$q_{\init}$}
\node[Nmarks=n](q1)(33,4){$q_1$}
\node[Nmarks=r](q2)(58,4){$q_2$}

\drawloop[ELside=l, loopangle=90, loopdiam=4](q0){$a:\frac{1}{2}$}
\drawedge[ELside=l, ELdist=.5](q0,q1){$a:\frac{1}{2}$}
\drawedge[ELside=l, ELdist=1](q1,q2){$a$}
\drawloop[ELside=l, loopangle=90, loopdiam=4](q2){$a$}

\end{picture}
\end{center}
 \caption{An MDP such that $q_{\init}$ is sure-winning for 
coB\"uchi objective in $T=\{q_{\init},q_2\}$ but not for 
strongly synchronizing according to $\fsum_T$. 
\label{fig:coBuchi}}
\end{figure}

\begin{lemma}\label{lem: strong-sum}
Given a target set~$T$, an MDP $\M$ is sure (resp., almost-sure or limit-sure) winning 
for the strongly synchronizing objective according to $\fsum_T$
if and only if $\M$ is sure (resp., almost-sure or limit-sure) winning 
for the reachability objective~$\Diamond S$  
where $S$ is the sure winning region for the safety objective~$\Box T$.
\end{lemma}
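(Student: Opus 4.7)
The plan is to use two structural facts about $S$. First, since $S$ is the sure safety winning region for $\Box T$, it is the greatest fixpoint of the operator $X \mapsto T \cap \Pre(X)$, so $S \subseteq T$ and there is a pure memoryless strategy $\sigma_S$ on $S$ under which every path from $S$ stays in $S$ (hence in $T$) forever. Second, I would establish a leakage bound: for every $q \notin S$ and every strategy, the probability of visiting $Q \setminus T$ within $\abs{Q}$ steps is at least $\eta^{\abs{Q}}$, where $\eta$ is the smallest positive transition probability in $\M$. The proof goes through the stratification $S_0 = T$, $S_{k+1} = S_k \cap \Pre(S_k)$, which stabilizes at $S$ by step $\abs{Q}$: for $q \notin S$, let $k(q) = \min\{k : q \notin S_k\} \leq \abs{Q}$; if $k(q) = 0$ then $q \notin T$, and if $k(q) \geq 1$ then every action has a successor $q'$ with $k(q') < k(q)$, reached with probability at least $\eta$, so induction on $k(q)$ yields the bound.

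For the sure mode, the $(\Leftarrow)$ direction combines a pure memoryless strategy that sure-reaches $S$ (necessarily within $\abs{Q}$ steps, by finiteness and purity) with $\sigma_S$ upon entering $S$; the resulting strategy $\beta$ satisfies $\M^\beta_n(T) = 1$ for all $n \geq \abs{Q}$. For $(\Rightarrow)$, if $\alpha$ is sure strongly synchronizing with threshold $N$, then every state in $\Supp(\M^\alpha_N)$ can sure-remain in $T$ forever via the continuation of $\alpha$, hence belongs to $S$; therefore $\alpha$ sure-reaches $S$ within $N$ steps.

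For the almost-sure mode, $(\Leftarrow)$ uses the same combination: since the almost-sure reach strategy hits $S$ with probability $1$, switching to $\sigma_S$ at that moment keeps every subsequent state in $T$, so $\M^\beta_n(T) \geq \Pr^\beta(\Diamond^{\leq n} S) \to 1$. For $(\Rightarrow)$, given $\alpha$ with $\liminf_n \M^\alpha_n(T) = 1$, I claim that $\alpha$ itself satisfies $\Pr^\alpha(\Diamond S) = 1$. If not, set $p = \Pr^\alpha(\neg \Diamond S) > 0$; since never visiting $S$ forces $q_n \notin S$, we get $\Pr^\alpha(q_n \notin S) \geq p$ for every $n$. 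Applying the leakage bound to the continuation of $\alpha$ after each history with $q_n \notin S$ gives $\Pr^\alpha(\exists m \in (n, n+\abs{Q}] : q_m \notin T) \geq p \cdot \eta^{\abs{Q}}$, and a union bound over the window produces some $m \in (n, n+\abs{Q}]$ with $\M^\alpha_m(T) \leq 1 - p \eta^{\abs{Q}}/\abs{Q}$, contradicting $\liminf_n \M^\alpha_n(T) = 1$.

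The limit-sure mode follows by applying the leakage bound uniformly in the strategy: if $\sup_\beta \Pr^\beta(\Diamond S) \leq 1 - \delta$ for some $\delta > 0$, then every $\alpha$ has $\Pr^\alpha(\neg \Diamond S) \geq \delta$, hence $\liminf_n \M^\alpha_n(T) \leq 1 - \delta \eta^{\abs{Q}}/\abs{Q}$, contradicting limit-sure strongly synchronizing. For the converse, limit-sure reachability coincides with almost-sure reachability in MDPs~\cite{AHK07}, so limit-sure reach of $S$ yields an almost-sure reach strategy, which by the almost-sure $(\Leftarrow)$ is almost-sure (hence limit-sure) strongly synchronizing. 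The main obstacle is the almost-sure forward implication: it requires converting a qualitative failure of reachability into a quantitative, uniform lower bound on the probability mass escaping $T$, which is exactly what the fixpoint stratification of $S$ supplies; the sure and limit-sure directions are then small adaptations of this core argument.
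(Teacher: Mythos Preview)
Your proof is correct and follows the same approach as the paper: the forward direction uses the leakage bound (the paper phrases it as ``from $q\notin S$ there is a path of length $\leq \abs{Q}$ to $Q\setminus T$ with probability at least $\eta^{\abs{Q}}$'' rather than via your explicit fixpoint stratification), and the converse combines a reachability strategy with the safety strategy on $S$. The only difference is organizational: the paper handles the almost-sure forward implication by first reducing to the limit-sure case and then invoking the coincidence of limit-sure and almost-sure reachability in MDPs, whereas you argue directly that the given strategy $\alpha$ reaches $S$ almost surely via your window/union-bound argument.
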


\begin{proof}
First, assume that a state $q_{\init}$ of $\M$ is sure (resp., almost-sure 
or limit-sure) winning for the strongly synchronizing objective according to $\fsum_T$,
and show that $q_{\init}$ is sure (resp., almost-sure or limit-sure) 
winning for the reachability objective~$\Diamond S$.

$(i)$ \emph{Limit-sure winning}. For all $\epsilon >0$, 
let $\epsilon' = \frac{\epsilon}{\abs{Q}} \cdot \eta^{\abs{Q}}$ where
$\eta$ is the smallest positive probability in the transitions of $\M$.
By the assumption, from $q_{\init}$ there exists a strategy~$\alpha$ and $N \in \nat$
such that for all $n \geq N$, we have $\M^{\alpha}_n(T) \geq 1-\epsilon'$.
We claim that at step $N$, all non-safe states have probability at most $\frac{\epsilon}{\abs{Q}}$,
that is $\M^{\alpha}_N(q) \leq \frac{\epsilon}{\abs{Q}}$ for all $q \in Q \setminus S$.
Towards contradiction, assume that $\M^{\alpha}_N(q) > \frac{\epsilon}{\abs{Q}}$
for some non-safe state $q \in Q \setminus S$. Since $q \not\in S$ is not safe, 
there is a path of length $\ell \leq \abs{Q}$ from $q$ to a state in $Q \setminus T$,
thus with probability at least $\eta^{\abs{Q}}$. It follows that after $N + \ell$ steps we have
$\M^{\alpha}_{N+\ell}(Q \setminus T) > \frac{\epsilon}{\abs{Q}} \cdot \eta^{\abs{Q}} = \epsilon'$,
in contradiction with the fact $\M^{\alpha}_n(T) \geq 1-\epsilon'$ for all $n \geq N$.
Now, since all non-safe states have probability at most $\frac{\epsilon}{\abs{Q}}$
at step $N$, it follows that $\M^{\alpha}_N(Q \setminus S) \leq \frac{\epsilon}{\abs{Q}}\cdot \abs{Q} = \epsilon$
and thus $\Pr^{\alpha}(\Diamond S) \geq 1-\epsilon$. Therefore, $\M$ is limit-sure
winning for the reachability objective $\Diamond S$ from $q_{\init}$.

$(ii)$ \emph{Almost-sure winning}. Since almost-sure strongly synchronizing
implies limit-sure strongly synchronizing, it follows from $(i)$ that
$\M$ is limit-sure (and thus also almost-sure) winning for the reachability objective 
$\Diamond S$, as limit-sure and almost-sure reachability coincide for MDPs~\cite{AHK07}.

$(iii)$ \emph{Sure winning}. From $q_{\init}$ there exists a strategy~$\alpha$ and $N \in \nat$ such 
that for all $n \geq N$, we have $\M^{\alpha}_n(T)=1$. Hence, $\alpha$ is sure
winning for the reachability objective $\Diamond \Supp(\M^{\alpha}_N)$,
and from all states in $\Supp(\M^{\alpha}_N)$ the strategy $\alpha$ ensures
that only states in $T$ are visited. It follows that $\Supp(\M^{\alpha}_N) \subseteq S$
is sure winning for the safety objective $\Box T$, and thus $\alpha$ is sure winning
for the reachability objective $\Diamond S$ from $q_{\init}$.

For the converse direction of the lemma, assume that a state $q_{\init}$ is sure (resp., almost-sure 
or limit-sure) winning for the reachability objective~$\Diamond S$.
We construct a winning strategy for strongly synchronizing in $T$ as follows: 
play according to a sure (resp., almost-sure or limit-sure) winning
strategy for the reachability objective~$\Diamond S$, and whenever a state
of $S$ is reached, then switch to a winning
strategy for the safety objective~$\Box T$. The constructed 
strategy is sure (resp., almost-sure or limit-sure) winning
for strongly synchronizing according to $\fsum_T$ because for
sure winning, after finitely many steps all paths from $q_{\init}$
end up in $S \subseteq T$ and stay in $S$ forever, and for almost-sure
(or equivalently limit-sure) winning, for all $\epsilon > 0$, after sufficiently 
many steps, the set $S$ is reached with probability at least $1 - \epsilon$,
showing that the outcome is strongly ($1 - \epsilon$)-synchronizing in $S \subseteq T$,
thus the strategy is almost-sure (and also limit-sure) strongly synchronizing.
\qed
\end{proof}

\begin{corollary}\label{col: almost-limit-strong-sum}
$\winlim{strongly}(\fsum_T) = \winas{strongly}(\fsum_T)$
for all target sets~$T$.
\end{corollary}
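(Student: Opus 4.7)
The plan is to simply chain together Lemma~\ref{lem: strong-sum} with the well-known collapse of almost-sure and limit-sure winning modes for reachability objectives in MDPs~\cite{AHK07}, which has already been invoked several times in the paper (e.g.\ in the proofs of Lemma~\ref{lem:lssr-assr} and Lemma~\ref{lem: strong-sum} itself).

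First, let $S$ be the sure winning region for the safety objective $\Box T$ in $\M$, which is a purely combinatorial object depending only on the graph underlying~$\M$ and on the set~$T$. By Lemma~\ref{lem: strong-sum}, an initial distribution $d_0$ lies in $\winas{strongly}(\fsum_T)$ if and only if $d_0$ is almost-sure winning for the reachability objective $\Diamond S$ in~$\M$; and similarly $d_0 \in \winlim{strongly}(\fsum_T)$ if and only if $d_0$ is limit-sure winning for~$\Diamond S$. Note that the set~$S$ is the \emph{same} in both equivalences, since its definition does not depend on the winning mode.

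Second, it is classical that for reachability objectives in MDPs the almost-sure and limit-sure winning regions coincide~\cite{AHK07}. Applying this with target~$S$ yields the equality of the two reachability winning regions, and therefore of $\winas{strongly}(\fsum_T)$ and $\winlim{strongly}(\fsum_T)$.

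There is essentially no obstacle: the whole content of the corollary is already packaged in Lemma~\ref{lem: strong-sum} (whose statement conveniently uses the same set~$S$ for all three winning modes) together with the cited reachability fact. The only minor point worth noting is that Lemma~\ref{lem: strong-sum} is phrased for a fixed initial state, but since the membership problem reduces to Dirac initial distributions via Remark~\ref{rmk:Dirac-initial-suffices} (and the general case follows by the standard one-step construction), the equality of winning regions holds for arbitrary initial distributions as well.
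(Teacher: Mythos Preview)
Your proposal is correct and matches the paper's own reasoning exactly: the corollary is stated without a separate proof, as it follows immediately from Lemma~\ref{lem: strong-sum} together with the coincidence of almost-sure and limit-sure reachability in MDPs~\cite{AHK07}. Your observation that the same set~$S$ is used for all winning modes is precisely the point, and the remark about Dirac initial distributions is a harmless technicality already handled by Remark~\ref{rmk:Dirac-initial-suffices}.
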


The following result follows from Lemma~\ref{lem: strong-sum} and the fact 
that the winning region for sure safety, sure reachability, and almost-sure 
reachability can be computed in polynomial time for MDPs~\cite{AHK07}. 
Moreover, memoryless strategies are sufficient for these objectives.

\begin{theorem}\label{theo:strongly-sum}
For the three winning modes of strongly synchronizing 
according to $\fsum_T$ in MDPs:

\begin{enumerate}
\item (Complexity). The membership problem is PTIME-complete.

\item (Memory). Pure memoryless strategies are sufficient.
\end{enumerate}
\end{theorem}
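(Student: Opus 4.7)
The plan is to derive both items directly from Lemma~\ref{lem: strong-sum}, which reduces each winning mode for strongly synchronizing according to $\fsum_T$ to the corresponding winning mode for a plain reachability objective $\Diamond S$, where $S$ is the sure winning region for the safety objective $\Box T$.

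For the PTIME upper bound, I would first compute $S$ in polynomial time using the standard attractor-based algorithm for sure safety in MDPs (iteratively remove states from which every action has a successor already removed, starting from $Q \setminus T$). Then, in the three winning modes:
\begin{itemize}
\item \emph{Sure winning} is reduced to sure reachability of $S$, computable in polynomial time by a standard controllable-predecessor fixpoint.
\item \emph{Almost-sure} and \emph{limit-sure winning} are reduced to almost-sure reachability of $S$ (using Corollary~\ref{col: almost-limit-strong-sum}), which is known to be computable in polynomial time for MDPs~\cite{AHK07}.
\end{itemize}
Putting these two polynomial-time computations in sequence yields the PTIME upper bound in each mode.

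For PTIME-hardness, I would reuse verbatim the reduction from the monotone Boolean circuit value problem used in the proof of Theorem~\ref{theo:strongly-max}: the same MDP works with target set $T = \{\sync\}$, since $\sync$ is absorbing and therefore trivially safe, so that the sure winning region $S$ for $\Box T$ is exactly $\{\sync\}$, and sure/almost-sure/limit-sure reachability of $\sync$ coincides with the circuit evaluating to $1$. This gives PTIME-hardness in all three modes.

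For the memory bound, I would combine two memoryless strategies. Since memoryless pure strategies are sufficient both for sure/almost-sure reachability~\cite{AHK07} and for sure safety in MDPs, let $\alpha_{\mathrm{reach}}$ be a memoryless winning strategy for $\Diamond S$ and $\alpha_{\mathrm{safe}}$ a memoryless winning strategy for $\Box T$ from $S$. Define a single memoryless strategy $\alpha$ by $\alpha(q) = \alpha_{\mathrm{safe}}(q)$ if $q \in S$ and $\alpha(q) = \alpha_{\mathrm{reach}}(q)$ otherwise; this is well-defined because the choice depends only on the current state, and by closedness of $S$ under $\alpha_{\mathrm{safe}}$, once the probability mass enters $S$ it never leaves $T$. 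This strategy is winning for strongly synchronizing in the corresponding mode by the converse direction of Lemma~\ref{lem: strong-sum}. The only subtlety, which I would verify carefully, is that $S$ is indeed closed under $\alpha_{\mathrm{safe}}$ (immediate from the definition of the sure safety winning strategy) so that the ``switch'' at the moment of entering $S$ causes no loss of probability mass out of $T$.
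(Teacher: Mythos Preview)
Your proposal is correct and follows essentially the same approach as the paper: both derive the PTIME upper bound and the memoryless sufficiency directly from Lemma~\ref{lem: strong-sum} combined with the standard polynomial-time algorithms and memoryless strategies for sure safety and sure/almost-sure reachability in MDPs. Your write-up is simply more explicit than the paper's two-sentence justification---in particular you spell out the memoryless combined strategy $\alpha$ and you supply a PTIME-hardness argument by reusing the circuit-value reduction from Theorem~\ref{theo:strongly-max} with the singleton target $T=\{\sync\}$ (where $\fmax_T=\fsum_T$), which the paper leaves implicit.
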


\section{Conclusion}\label{sec:conclusion}

We studied synchronizing properties for Markov decision processes and
presented comprehensive expressiveness and decidability results, identifying
the expressively equivalent winning modes (Lemma~\ref{lem:always}, Theorem~\ref{theo:weakly-ls-is-as}, Corollary~\ref{col: almost-limit-strong-sum}),
and showing, in all winning modes, PSPACE-completeness for eventually and weakly synchronizing, 
and PTIME-completeness for always and strongly synchronizing (Table~\ref{tab:complexity}). 
We showed that pure strategies are sufficient for all synchronizing objectives
and winning modes, and the memory requirements are given in Table~\ref{tab:memory}.

The $p$-synchronizing objectives we considered are qualitative in the sense that 
they are defined for $p = 1$ (sure-winning) or for $p \to 1$ (almost-sure and limit-sure winning).
A natural generalization is to consider the same objectives with $p < 1$.
However, the quantitative problem, which is to decide, 
given a rational number $p < 1$ whether an MDP is eventually $p$-synchronizing (in a given target state)
is at least as hard as the Skolem problem (which is to decide whether a linear recurrence 
sequence over the integers has a zero) whose decidability is a long-standing open question~\cite{OW14}.
The proof is by a reduction that can even be carried out for the special case of Markov chains~\cite[Theorem~3]{AAOW15}.
A variant of the problem where it is asked whether there exists $p' > p$ such that the given 
MDP is eventually $p'$-synchronizing is also Skolem-hard~\cite[Corollary~4]{AAOW15}.
An interesting direction for future research is to consider approximation problems
such as deciding, given $p$ and $\epsilon > 0$, whether an MDP is eventually $p'$-synchronizing 
for some $p' \in [p-\epsilon,p+\epsilon]$. 
In another direction, the qualitative problem can be generalized to multiple 
synchronizing objectives (e.g., conjunctions of objectives, in the flavor of
limit-sure winning with exact support), and to Boolean combinations
of synchronizing objectives, which is completely open. 

As we mention in the paragraph on related work (Section~\ref{sec:intro}),
synchronizing properties have been considered in several other models
of computation, such as weighted automata, register automata, timed systems, 
and partial-observation systems. An intriguing question is to consider
two-player stochastic games and to determine if (or which) synchronizing objectives are
decidable. In two-player stochastic games, some states are controlled by an adversary
and the synchronizing objectives need to be achieved no matter the choice
of the adversary at their state. The presence of an adversary makes 
the problem significantly more challenging, as it incurs an alternation of quantifiers
over the strategies. 

Finally, given the previous works, it is also interesting to extend the
results of this article to continuous-time Markov decision processes, 
pushdown Markov decision processes, and the 
subclass of one-counter Markov decision processes.

\paragraph{{\bf Acknowledgment}}
We are grateful to Winfried Just and German A. Enciso for helpful discussions
on Boolean networks and for the gadget in the proof of Lemma~\ref{lem:universal-finiteness-pspace-hard}.

\bibliographystyle{elsarticle-harv}  

\bibliography{biblio} 

\begin{thebibliography}{63}
\expandafter\ifx\csname natexlab\endcsname\relax\def\natexlab#1{#1}\fi
\expandafter\ifx\csname url\endcsname\relax
  \def\url#1{\texttt{#1}}\fi
\expandafter\ifx\csname urlprefix\endcsname\relax\def\urlprefix{URL }\fi

\bibitem[{Agrawal et~al.(2012)Agrawal, Akshay, Genest, and
  Thiagarajan}]{AAGT12}
Agrawal, M., Akshay, S., Genest, B., Thiagarajan, P.~S., 2012. Approximate
  verification of the symbolic dynamics of {M}arkov chains. In: Proc. of LICS:
  Logic in Computer Science. IEEE, pp. 55--64.

\bibitem[{Akshay et~al.(2015)Akshay, Antonopoulos, Ouaknine, and
  Worrell}]{AAOW15}
Akshay, S., Antonopoulos, T., Ouaknine, J., Worrell, J., 2015. Reachability
  problems for {M}arkov chains. Inf. Process. Lett. 115~(2), 155--158.

\bibitem[{Aspnes and Herlihy(1990)}]{AspnesH90}
Aspnes, J., Herlihy, M., 1990. Fast randomized consensus using shared memory.
  J.~Algorithm 11~(3), 441--461.

\bibitem[{Babari et~al.(2016)Babari, Quaas, and Shirmohammadi}]{BQS16}
Babari, P., Quaas, K., Shirmohammadi, M., 2016. Synchronizing data words for
  register automata. In: Proc. of {MFCS}: Mathematical Foundations of Computer
  Science. Vol.~58 of LIPIcs. Schloss Dagstuhl - Leibniz-Zentrum fuer
  Informatik, pp. 15:1--15:15.

\bibitem[{Bach and Shallit(1996)}]{BS96}
Bach, E., Shallit, J., 1996. Algorithmic Number Theory, Vol. 1: Efficient
  Algorithms. MIT Press.

\bibitem[{Baier et~al.(2006)Baier, Bertrand, and Schnoebelen}]{BBS06}
Baier, C., Bertrand, N., Schnoebelen, P., 2006. On computing fixpoints in
  well-structured regular model checking, with applications to lossy channel
  systems. In: Proc. of LPAR: Logic for Programming, Artificial Intelligence,
  and Reasoning. LNCS 4246. Springer, pp. 347--361.

\bibitem[{Baier et~al.(2012)Baier, Gr{\"{o}}{\ss}er, and Bertrand}]{BGB12}
Baier, C., Gr{\"{o}}{\ss}er, M., Bertrand, N., 2012. Probabilistic
  {\(\omega\)}-automata. J. {ACM} 59~(1), 1--52.

\bibitem[{Baldoni et~al.(2008)Baldoni, Bonnet, Milani, and Raynal}]{BBMR08}
Baldoni, R., Bonnet, F., Milani, A., Raynal, M., 2008. On the solvability of
  anonymous partial grids exploration by mobile robots. In: Proc. of OPODIS:
  Principles of Distributed Systems. LNCS 5401. Springer, pp. 428--445.

\bibitem[{Beauquier et~al.(2002)Beauquier, Rabinovich, and Slissenko}]{BRS02}
Beauquier, D., Rabinovich, A.~M., Slissenko, A., 2002. A logic of probability
  with decidable model-checking. In: Proc. of CSL: Computer Science Logic. LNCS
  2471. Springer, pp. 306--321.

\bibitem[{Berlinkov(2016)}]{Ber16}
Berlinkov, M.~V., 2016. On the probability of being synchronizable. In: Proc.
  of {CALDAM}: Algorithms and Discrete Applied Mathematics. LNCS 9602.
  Springer, pp. 73--84.

\bibitem[{Bertrand et~al.(2017)Bertrand, Dewaskar, Genest, and
  Gimbert}]{BDGG17}
Bertrand, N., Dewaskar, M., Genest, B., Gimbert, H., 2017. Controlling a
  population. In: Proc. of {CONCUR}: Concurrency Theory. Vol.~85 of LIPIcs.
  Schloss Dagstuhl - Leibniz-Zentrum fuer Informatik, pp. 12:1--12:16.

\bibitem[{Bianco and de~Alfaro(1995)}]{BA95}
Bianco, A., de~Alfaro, L., 1995. Model checking of probabalistic and
  nondeterministic systems. In: Proc. of FSTTCS: Foundations of Software
  Technology and Theoretical Computer Science. LNCS 1026. Springer, pp.
  499--513.

\bibitem[{Burkhard(1976)}]{Burkhard76a}
Burkhard, H.-D., 1976. Zum {L}{\"a}ngenproblem homogener {E}xperimente an
  determinierten und nicht-deterministischen {A}utomaten. Elektronische
  Informationsverarbeitung und Kybernetik 12~(6), 301--306.

\bibitem[{Cern\'{y}(1964)}]{Cer64}
Cern\'{y}, J., 1964. Pozn\'{a}mka k. homog\'{e}nnym experimentom s konecnymi
  automatmi. In: Matematicko-fyzik\'{a}lny \v{c}asopis. Vol. 14(3). pp.
  208--216.

\bibitem[{Chadha et~al.(2011)Chadha, Korthikanti, Viswanathan, Agha, and
  Kwon}]{CKVAK11}
Chadha, R., Korthikanti, V.~A., Viswanathan, M., Agha, G., Kwon, Y., 2011.
  Model checking {MDP}s with a unique compact invariant set of distributions.
  In: Proc. of QEST: Quantitative Evaluation of Systems. IEEE Computer Society,
  pp. 121--130.

\bibitem[{Chandra et~al.(1981)Chandra, Kozen, and Stockmeyer}]{CKS81}
Chandra, A.~K., Kozen, D., Stockmeyer, L.~J., 1981. Alternation. J. ACM 28~(1),
  114--133.

\bibitem[{Chatterjee and Doyen(2016)}]{CD16}
Chatterjee, K., Doyen, L., 2016. Computation tree logic for synchronization
  properties. In: Proc. of {ICALP}: Automata, Languages, and Programming.
  Vol.~55 of LIPIcs. Schloss Dagstuhl - Leibniz-Zentrum fuer Informatik, pp.
  98:1--98:14.

\bibitem[{Chatterjee et~al.(2011)Chatterjee, Henzinger, Joglekar, and
  Shah}]{CHJS11}
Chatterjee, K., Henzinger, M., Joglekar, M., Shah, N., 2011. Symbolic
  algorithms for qualitative analysis of {M}arkov decision processes with
  {B}{\"{u}}chi objectives. In: Proc. of CAV: Computer Aided Verification. LNCS
  6806. Springer, pp. 260--276.

\bibitem[{Chatterjee and Henzinger(2012)}]{CH12}
Chatterjee, K., Henzinger, T.~A., 2012. A survey of stochastic $\omega$-regular
  games. J. Comput. Syst. Sci. 78~(2), 394--413.

\bibitem[{Chistikov et~al.(2016)Chistikov, Martyugin, and
  Shirmohammadi}]{CMS16}
Chistikov, D., Martyugin, P., Shirmohammadi, M., 2016. Synchronizing automata
  over nested words. In: Proc. of {FoSSaCS}: Foundations of Software Science
  and Computation Structures. LNCS 9634. Springer, pp. 252--268.

\bibitem[{Courcoubetis and Yannakakis(1995)}]{CY95}
Courcoubetis, C., Yannakakis, M., 1995. The complexity of probabilistic
  verification. J. ACM 42~(4), 857--907.

\bibitem[{de~Alfaro(1997)}]{deAlfaro97}
de~Alfaro, L., 1997. Formal verification of probabilistic systems. Ph.D.
  thesis, Stanford University.

\bibitem[{de~Alfaro and Henzinger(2000)}]{ConcOmRegGames}
de~Alfaro, L., Henzinger, T.~A., 2000. Concurrent omega-regular games. In:
  Proc. of LICS: Logic in Computer Science. IEEE, pp. 141--154.

\bibitem[{de~Alfaro et~al.(2007)de~Alfaro, Henzinger, and Kupferman}]{AHK07}
de~Alfaro, L., Henzinger, T.~A., Kupferman, O., 2007. Concurrent reachability
  games. Theor. Comput. Sci. 386~(3), 188--217.

\bibitem[{Doyen et~al.(2014)Doyen, Juhl, Larsen, Markey, and
  Shirmohammadi}]{DJLMS14}
Doyen, L., Juhl, L., Larsen, K.~G., Markey, N., Shirmohammadi, M., 2014.
  Synchronizing words for weighted and timed automata. In: Proc. of {FSTTCS}:
  Foundation of Software Technology and Theoretical Computer Science. Vol.~29
  of LIPIcs. Schloss Dagstuhl - Leibniz-Zentrum fuer Informatik, pp. 121--132.

\bibitem[{Doyen et~al.(2011{\natexlab{a}})Doyen, Massart, and
  Shirmohammadi}]{DMS11b}
Doyen, L., Massart, T., Shirmohammadi, M., 2011{\natexlab{a}}. Infinite
  synchronizing words for probabilistic automata. In: Proc. of MFCS:
  Mathematical Foundations of Computer Science. LNCS 6907. Springer, pp.
  278--289.

\bibitem[{Doyen et~al.(2011{\natexlab{b}})Doyen, Massart, and
  Shirmohammadi}]{DMS11a}
Doyen, L., Massart, T., Shirmohammadi, M., 2011{\natexlab{b}}. Synchronizing
  objectives for {M}arkov decision processes. In: Proc. of iWIGP: Interactions,
  Games and Protocols. EPTCS 50. pp. 61--75.

\bibitem[{Doyen et~al.(2012)Doyen, Massart, and Shirmohammadi}]{DMS11Err}
Doyen, L., Massart, T., Shirmohammadi, M., 2012. Infinite synchronizing words
  for probabilistic automata ({E}rratum). CoRR abs/1206.0995.

\bibitem[{Etessami et~al.(2008)Etessami, Kwiatkowska, Vardi, and
  Yannakakis}]{EKVY08}
Etessami, K., Kwiatkowska, M.~Z., Vardi, M.~Y., Yannakakis, M., 2008.
  Multi-objective model checking of {M}arkov decision processes. Logical
  Methods in Computer Science 4~(4).

\bibitem[{Fijalkow et~al.(2016)Fijalkow, Kiefer, and Shirmohammadi}]{FKS16}
Fijalkow, N., Kiefer, S., Shirmohammadi, M., 2016. Trace refinement in labelled
  markov decision processes. In: Proc. of {FoSSaCS}: Foundations of Software
  Science and Computation Structures. LNCS 9634. Springer, pp. 303--318.

\bibitem[{Filar and Vrieze(1997)}]{FV97}
Filar, J., Vrieze, K., 1997. Competitive {Markov} Decision Processes. Springer.

\bibitem[{Fokkink and Pang(2006)}]{FokkinkP06}
Fokkink, W., Pang, J., 2006. Variations on {I}tai-{R}odeh leader election for
  anonymous rings and their analysis in {PRISM}. Journal of Universal Computer
  Science 12~(8), 981--1006.

\bibitem[{Forejt et~al.(2011)Forejt, Kwiatkowska, Norman, and Parker}]{FKNP11}
Forejt, V., Kwiatkowska, M., Norman, G., Parker, D., 2011. Automated
  verification techniques for probabilistic systems. In: Proc. of SFM: Formal
  Methods for Eternal Networked Software Systems. LNCS 6659. Springer, pp.
  53--113.

\bibitem[{Futcher(1999)}]{Futcher99}
Futcher, B., 1999. Cell cycle synchronization. Methods in Cell Science 21~(2),
  79--86.

\bibitem[{Gast et~al.(2012)Gast, Gaujal, and {Le Boudec}}]{GGB12}
Gast, N., Gaujal, B., {Le Boudec}, J.-Y., 2012. Mean field for {M}arkov
  decision processes: From discrete to continuous optimization. {IEEE} Trans.
  Automat. Contr. 57~(9), 2266--2280.

\bibitem[{Gimbert and Oualhadj(2010)}]{GO10}
Gimbert, H., Oualhadj, Y., 2010. Probabilistic automata on finite words:
  Decidable and undecidable problems. In: Proc. of ICALP (2): Automata,
  Languages and Programming. LNCS 6199. Springer, pp. 527--538.

\bibitem[{Goldschlager(1977)}]{G77}
Goldschlager, L.~M., 1977. The monotone and planar circuit value problems are
  log space complete for {P}. SIGACT News 9~(2), 25--29.

\bibitem[{Gr{\"a}del et~al.(2002)Gr{\"a}del, Thomas, and Wilke}]{automata}
Gr{\"a}del, E., Thomas, W., Wilke, T. (Eds.), 2002. Automata, Logics, and
  Infinite Games: A Guide to Current Research. LNCS 2500. Springer.

\bibitem[{Henzinger et~al.(2009)Henzinger, Mateescu, and Wolf}]{HMW09}
Henzinger, T.~A., Mateescu, M., Wolf, V., 2009. Sliding window abstraction for
  infinite {M}arkov chains. In: Proc. of CAV. LNCS 5643. Springer, pp.
  337--352.

\bibitem[{Hermanns et~al.(2014)Hermanns, Krc{\'{a}}l, and
  Kret{\'{\i}}nsk{\'{y}}}]{HKK14}
Hermanns, H., Krc{\'{a}}l, J., Kret{\'{\i}}nsk{\'{y}}, J., 2014. Probabilistic
  bisimulation: Naturally on distributions. In: Proc. of CONCUR: Concurrency
  Theory. LNCS 8704. Springer, pp. 249--265.

\bibitem[{Holzer(1995)}]{Holzer95}
Holzer, M., 1995. On emptiness and counting for alternating finite automata.
  In: Developments in Language Theory. pp. 88--97.

\bibitem[{Imreh and Steinby(1999)}]{IS99}
Imreh, B., Steinby, M., 1999. Directable nondeterministic automata. Acta
  Cybern. 14~(1), 105--115.

\bibitem[{Iv{\'{a}}n(2014)}]{Iva14}
Iv{\'{a}}n, S., 2014. Synchronizing weighted automata. In: Proc. of {AFL}:
  Automata and Formal Languages. Vol. 151 of {EPTCS}. pp. 301--313.

\bibitem[{Jancar and Sawa(2007)}]{AFA1}
Jancar, P., Sawa, Z., 2007. A note on emptiness for alternating finite automata
  with a one-letter alphabet. Inf. Process. Lett. 104~(5), 164--167.

\bibitem[{Kattenbelt et~al.(2010)Kattenbelt, Kwiatkowska, Norman, and
  Parker}]{KKNP10}
Kattenbelt, M., Kwiatkowska, M.~Z., Norman, G., Parker, D., 2010. A game-based
  abstraction-refinement framework for {M}arkov decision processes. Formal
  Methods in System Design 36~(3), 246--280.

\bibitem[{Kfoury(1970)}]{Kfo70}
Kfoury, D.~J., 1970. Synchronizing sequences for probabilistic automata.
  Studies in Applied Mathematics 29, 101--103.

\bibitem[{Korthikanti et~al.(2010)Korthikanti, Viswanathan, Agha, and
  Kwon}]{KVAK10}
Korthikanti, V.~A., Viswanathan, M., Agha, G., Kwon, Y., 2010. Reasoning about
  {MDP}s as transformers of probability distributions. In: Proc. of QEST:
  Quantitative Evaluation of Systems. IEEE Computer Society, pp. 199--208.

\bibitem[{Kret{\'{\i}}nsk{\'{y}} et~al.(2015)Kret{\'{\i}}nsk{\'{y}}, Larsen,
  Laursen, and Srba}]{KLLS15}
Kret{\'{\i}}nsk{\'{y}}, J., Larsen, K.~G., Laursen, S., Srba, J., 2015.
  Polynomial time decidability of weighted synchronization under partial
  observability. In: Proc. of {CONCUR}: Concurrency Theory. Vol.~42 of LIPIcs.
  Schloss Dagstuhl - Leibniz-Zentrum fuer Informatik, pp. 142--154.

\bibitem[{Larsen et~al.(2014)Larsen, Laursen, and Srba}]{LLS14}
Larsen, K.~G., Laursen, S., Srba, J., 2014. Synchronizing strategies under
  partial observability. In: Proc. of {CONCUR}: Concurrency Theory. LNCS 8704.
  Springer, pp. 188--202.

\bibitem[{Madani et~al.(2003)Madani, Hanks, and Condon}]{MHC03}
Madani, O., Hanks, S., Condon, A., 2003. On the undecidability of probabilistic
  planning and related stochastic optimization problems. Artif. Intell.
  147~(1-2), 5--34.

\bibitem[{Martyugin(2014)}]{Martyugin14}
Martyugin, P., 2014. Computational complexity of certain problems related to
  carefully synchronizing words for partial automata and directing words for
  nondeterministic automata. Theory Comput. Syst. 54~(2), 293--304.

\bibitem[{Mukovskiy et~al.(2010)Mukovskiy, Slotine, and Giese}]{MSG10}
Mukovskiy, A., Slotine, J.~J., Giese, M.~A., 2010. Design of the dynamic
  stability properties of the collective behavior of articulated bipeds. In:
  Proc. of Humanoids: Conference on Humanoid Robots. {IEEE}, pp. 66--73.

\bibitem[{Nicaud(2016)}]{Nic16}
Nicaud, C., 2016. Fast synchronization of random automata. In: Proc. of
  {RANDOM}: Workshop on Randomization and Computation. Vol.~60 of LIPIcs.
  Schloss Dagstuhl - Leibniz-Zentrum fuer Informatik, pp. 43:1--43:12.

\bibitem[{Ouaknine and Worrell(2014)}]{OW14}
Ouaknine, J., Worrell, J., 2014. Positivity problems for low-order linear
  recurrence sequences. In: Proc. of SODA: Symposium on Discrete Algorithms.
  {SIAM}, pp. 366--379.

\bibitem[{Paz(1971)}]{PAZBook}
Paz, A., 1971. Introduction to probabilistic automata. Academic Press, Inc.
  Orlando, FL, USA.

\bibitem[{Pogosyants et~al.(2000)Pogosyants, Segala, and Lynch}]{PSL00}
Pogosyants, A., Segala, R., Lynch, N.~A., 2000. Verification of the randomized
  consensus algorithm of {A}spnes and {H}erlihy: a case study. Distributed
  Computing 13~(3), 155--186.

\bibitem[{Puterman(1994)}]{Puterman}
Puterman, M.~L., 1994. {M}arkov Decision Processes. John Wiley and Sons.

\bibitem[{Rabin(1963)}]{Rabin63}
Rabin, M.~O., 1963. Probabilistic automata. Information and Control 6,
  230--245.

\bibitem[{Shirmohammadi(2014)}]{Shi14}
Shirmohammadi, M., 2014. Qualitative analysis of probabilistic synchronizing
  systems. Ph.D. thesis, Universit\'e Libre de Bruxelles.

\bibitem[{Szykula(2018)}]{Szy18}
Szykula, M., 2018. Improving the upper bound on the length of the shortest
  reset word. In: Proc. of {STACS}: Symposium on Theoretical Aspects of
  Computer Science. LIPIcs. Schloss Dagstuhl - Leibniz-Zentrum fuer Informatik,
  pp. 56:1--56:13.

\bibitem[{Vardi(1985)}]{Vardi-focs85}
Vardi, M.~Y., 1985. Automatic verification of probabilistic concurrent
  finite-state programs. In: Proc. of FOCS: Foundations of Computer Science.
  IEEE Computer Society, pp. 327--338.

\bibitem[{Vardi(2007)}]{Vardi07}
Vardi, M.~Y., 2007. Automata-theoretic model checking revisited. In: Proc. of
  VMCAI: Verification, Model Checking, and Abstract Interpretation. LNCS 4349.
  Springer, pp. 137--150.

\bibitem[{Volkov(2008)}]{Volkov08}
Volkov, M.~V., 2008. Synchronizing automata and the {C}erny conjecture. In:
  Proc. of LATA: Language and Automata Theory and Applications. LNCS 5196.
  Springer, pp. 11--27.

\end{thebibliography}
\end{document}